\pgfplotsset{compat=1.16}
\numberwithin{equation}{section}
\theoremstyle{definition}
\newtheorem{definition}{Definition}[section]
\theoremstyle{plain}
\newtheorem{theorem}{Theorem}[section]
\newtheorem{proposition}[theorem]{Proposition}
\newtheorem{lemma}[theorem]{Lemma}
\newtheorem{corollary}[theorem]{Corollary}
\newtheorem{conjecture}[theorem]{Conjecture}
\providecommand{\customgenericname}{}
\newcommand{\newcustomtheorem}[2]{%
  \newenvironment{#1}[1]
  {%
   \renewcommand\customgenericname{#2}%
   \renewcommand\theinnercustomgeneric{##1}%
   \innercustomgeneric
  }
  {\endinnercustomgeneric}
}
\theoremstyle{remark}
\newtheorem*{remark}{Remark}
\newcommand{\R}{\mathbb{R}}
\newcommand{\N}{\mathbb{N}}
\DeclareFontFamily{U}{mathx}{}
\DeclareFontShape{U}{mathx}{m}{n}{<-> mathx10}{}
\DeclareSymbolFont{mathx}{U}{mathx}{m}{n}
\DeclareMathAccent{\widehat}{0}{mathx}{"70}
\DeclareMathAccent{\widecheck}{0}{mathx}{"71}
\title{BKL bounces outside homogeneity: \\ Einstein--Maxwell--scalar field in surface symmetry}
\author{Warren Li}
\affil{\small Princeton University, Department of Mathematics, Fine Hall, Washington Road, Princeton, NJ 08544, USA}
\begin{document}

\maketitle

\begin{abstract}
    We study the phenomenon of bounces, as predicted by Belinski, Khalatnikov and Lifshitz (BKL) in the study of singularities arising from Einstein's equations, as an instability mechanism within the setting of the (inhomogeneous) Einstein--Maxwell--scalar field system in surface symmetry. This article can be viewed as a companion to our other article \cite{MeGowdyPaper}, where we study bounces for the Einstein vacuum equations in Gowdy symmetry. That is, we show many features of such bounces generalize to the matter model described, albeit in a different symmetry class. The articles may be read independently.

    In analogy to \cite{MeGowdyPaper}, we describe a wide class of inhomogeneous initial data which permit formation of a spacelike singularity, but such that the dynamics towards different spatial points at the singularity are well--described by independent nonlinear ODEs reminiscent of BKL bounces. A major ingredient is the proof of so-called Asymptotically Velocity Term Dominated behaviour even in the presence of such bounces, though one difference from \cite{MeGowdyPaper} is that our model does not permit the existence of so-called ``spikes''. 

    One particular application is the study of (past) instability of certain generalized Kasner spacetimes with no electromagnetic field present. Perturbations of such spacetimes are such that the singularity persists, but for perturbations with electromagnetism turned on the intermediate dynamics -- between data and the singularity -- features up to one BKL-like bounce. This is in analogy with the instability of polarized Gowdy spacetimes due to non-polarized perturbations in \cite{MeGowdyPaper}.
\end{abstract}

\setcounter{tocdepth}{2}
\tableofcontents
\section{Introduction} \label{sec:intro}

\subsection{The Einstein--Maxwell--scalar field system} \label{sub:intro_einstein}

We study the structure and (in)stability properties of spacelike singularities arising within the context of the Einstein--Maxwell--scalar field system in surface symmetry. The study of spacelike singularities has a long tradition in the context of the Einstein vacuum equations, particularly in Gowdy symmetry, and we refer the reader to our companion paper \cite{MeGowdyPaper} whose results are somewhat in parallel. See also the ``dictionary'' in Section~\ref{sub:intro_gowdy}. In the present paper, we show that many of the phenomena associated to spacelike singularities in vacuum are also present in the study of the Einstein--Maxwell--scalar field equations with unknowns $(\mathcal{M}, \mathbf{g}, \phi, \mathbf{F})$.

Here $\mathbf{g}$ is a Lorentzian metric solving the Einstein equations on $\mathcal{M}^{1+3}$, sourced by two matter fields: a scalar field $\phi: \mathcal{M} \to \R$ solving the usual wave equation and an electromagnetic field given by the $2$-form $\mathbf{F} \in \Omega^2(\mathcal{M})$ solving the source free Maxwell equations. If $\mathbf{D}$ is the Levi-Civita connection associated to $\mathbf{g}$, with $\mathbf{Ric}[\mathbf{g}]$ and $\mathbf{R}[\mathbf{g}]$ its associated Ricci and scalar curvature, then the evolution equations are:
\begin{gather} 
    \mathbf{Ric}_{\mu\nu}[\mathbf{g}] - \frac{1}{2} \mathbf{R}[\mathbf{g}] \mathbf{g}_{\mu\nu} = 2 \, \mathbf{T}_{\mu\nu}[\mathbf{\Phi}], \label{eq:einstein}  \\[0.3em]
    \label{eq:wave}
    \square_{\mathbf{g}} \phi = (\mathbf{g}^{-1})^{\mu\nu} \mathbf{D}_{\mu} \mathbf{D}_{\nu} \phi = 0, \\[0.3em] \label {eq:maxwell}
    d\mathbf{F} = 0, \qquad \mathbf{D}^{\mu} \mathbf{F}_{\mu\nu} = 0.
\end{gather}

The quantity $\mathbf{T}[\mathbf{\Phi}]$ on the right hand side of the Einstein equations \eqref{eq:einstein} is the energy-momentum tensor associated to the matter fields $\mathbf{\Phi}$, and in our case is given by $\mathbf{T}_{\mu\nu} [\mathbf{\Phi}] = \mathbf{T}_{\mu\nu} [\phi] + \mathbf{T}_{\mu\nu} [\mathbf{F}]$, where
\begin{gather} \label{eq:wave_em}
    \mathbf{T}_{\mu\nu} [\phi] = \mathbf{D}_{\mu} \phi \, \mathbf{D}_{\nu} \phi - \frac{1}{2} \mathbf{D}_{\rho} \phi \, \mathbf{D}^{\rho} \phi \, \mathbf{g}_{\mu \nu}, \\[0.3em] \label{eq:maxwell_em}
    \mathbf{T}_{\mu\nu} [\mathbf{F}] = \mathbf{F}_{\mu}^{\phantom{\mu}\rho} \, \mathbf{F}_{\nu \rho} - \frac{1}{4} \mathbf{F}_{\rho \sigma} \, \mathbf{F}^{\rho \sigma} \, \mathbf{g}_{\mu\nu},
\end{gather}
represent the energy-momenta of scalar and electromagnetic matter respectively.

The Einstein--Maxwell--scalar field system \eqref{eq:einstein}--\eqref{eq:maxwell}, particularly in spherical symmetry, has long been used as a model system to study properties of spacetimes arising in General Relativity. See for instance \cite{dr_priceslaw, LukOh1} regarding the black hole exterior region of asymptotically flat spacetimes and Price's law, and \cite{dafermos03, LukOh2} considering the black hole interiors of such spacetimes. 
Together these works provide a proof of the Strong Cosmic Censorship conjecture\footnote{See \cite{penrose_cc} for an introduction to the Strong Cosmic Censorship conjecture, or \cite{Christodoulou_cc} for a more modern account.} for the spherically symmetric Einstein--Maxwell--scalar field system; the inextendibility of spacetime is due to the presence of a (weakly) singular null future boundary.

The Einstein--Maxwell--scalar field model above is admittedly a somewhat restrictive model. For instance, rigidity of the electromagnetic field in spherical symmetry means that for $\mathbf{F} \neq 0$, $(\mathcal{M}, \mathbf{g})$ cannot have a regular center of symmetry and thus precludes \emph{one-ended} gravitational collapse. Thus the works above and the Penrose diagrams in Figure~\ref{fig:twoended} must be \emph{two-ended}. A generalization of this model where $\mathbf{F}$ has nontrivial dynamics and thereby allows the study of one-ended complete initial data is the Einstein--Maxwell--\emph{charged} scalar field model, where $\mathbf{F}$ is nonlinearly sourced by $\phi$. Here there is also partial progress on Strong Cosmic Censorship and singular null boundaries due to Van de Moortel and Kehle--Van de Moortel \cite{Moortel18, MoortelChristoph, MoortelChristoph2}.

In any case such \emph{weak null singularities} are not the focus of the present paper. Instead, we consider \emph{spacelike singularities}. 
Such singularities are well studied in the Einstein--scalar field system in spherical symmetry (i.e.~with $\mathbf{F} = 0$), see for instance \cite{Christodoulou_formation, Christodoulou_BV, Dafermos_trapped} which show that spherically symmetric solutions to the Einstein--scalar field system containing a trapped surface typically terminate in a spacelike singularity. In fact Christodoulou showed that such solutions arising from \emph{generic} initial data either disperse\footnote{Dispersion here means having a Penrose diagram akin to that of Minkowski space.} or have the Penrose diagram of Figure~\ref{fig:christodoulou_collapse}, providing a complete resolution of the Weak and Strong Cosmic Censorship Conjectures in this model \cite{Christodoulou_wcc, Christodoulou_cc}.

\begin{figure}[ht] 
    \centering
    \begin{tikzpicture}[scale=0.7]
        \node (s) at (0,-0.5) [circle, draw, inner sep=0.5mm, fill=black] {};
        \node [below left=0.2mm of s] {$p_0$};
        \node (i0) at (8, 0) [circle, draw, inner sep=0.5mm] {};
        \node [below right=0.2mm of i0] {$i^0$};
        \node (i+) at (3, 5) [circle, inner sep=0.5mm, draw] {};
        \node [above right=0.2mm of i+] {$i^+$};
        \node (ss) at (0, 4.5) [circle, inner sep=0.5mm, draw] {};
        \node [above left=0.2mm of ss] {$b_0$};

        \path[fill=lightgray, opacity=0.5] (s) .. controls (2.7, 0.4) and (5.3, -0.5) .. (8, 0)
            -- (3, 5) .. controls (2, 5.5) and (1, 4.0) .. (0, 4.5) 
            .. controls (0.1, 2) .. (s);
        \path[fill=lightgray] (i+) .. controls (1.5, 4) and (0.5, 3.8)
            .. (0, 4.5)
            .. controls (1, 4.0) and (2, 5.5) .. (i+);

        \draw [thick] (s) .. controls (2.7, 0.4) and (5.3, -0.5) ..  (i0)
            node (sigma) [midway, below] {$\Sigma$};
        \draw [dashed] (i0) -- (i+) node (nullinf) [midway, above right] {$\mathcal{I}^+$};
        \draw [thick] (ss) .. controls (0.1, 2) .. (s) node [midway, left] {$\Gamma$};
        \draw (i+) -- (0.1, 2.1) node [midway, below right] {$\mathcal{H}^+$};
        \draw (i+) .. controls (1.5, 4) and (0.5, 3.8) .. (0, 4.5)
            node [midway, below] {$\mathcal{A}$};
        \draw [dashed] (i+) .. controls (2, 5.5) and (1, 4.0) .. (ss)
            node [midway, above] {$\mathcal{S}$};
    \end{tikzpicture}

    \captionsetup{justification = centering}
    \caption{Penrose diagram representation of a gravitational collapse solution to the Einstein-scalar field equations. This solution possesses a complete future null infinity $\mathcal{I}^+$ as well as a black hole region bounded to the past by the \textit{event horizon} $\mathcal{H}^+$. The black hole interior contains an \textit{apparent horizon} $\mathcal{A}$ and a (darkly shaded) \textit{trapped region} in which $\partial_v r < 0$ and which culminates at a spacelike singularity $\mathcal{S} = \{ r = 0 \}$.}
    \label{fig:christodoulou_collapse}
\end{figure}
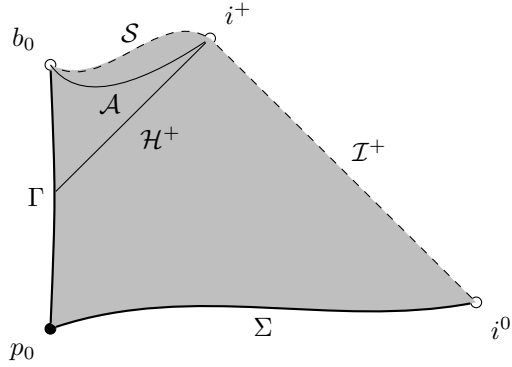

Returning to the Einstein--Maxwell--scalar field system, spacelike singularities remain relevant for spacetimes arising from the evolution of ``large'' spherically symmetric initial data, where one expects the maximal globally hyperbolic development to have a future boundary consisting of portions which are null and portions which are spacelike, see Figure~\ref{fig:twoended}. 

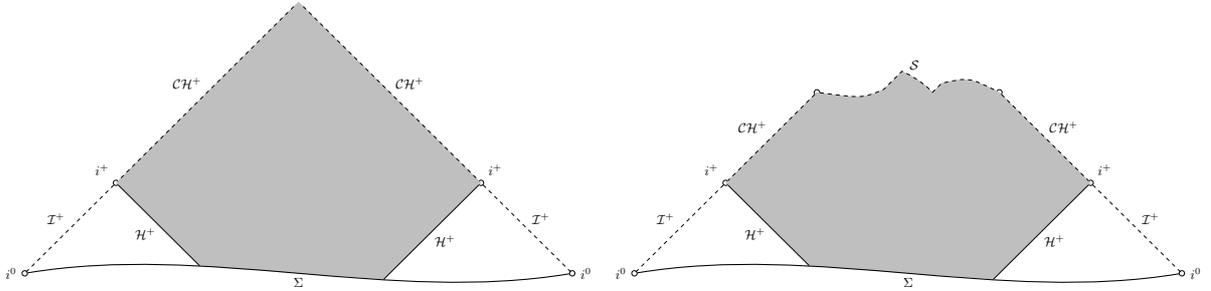
\begin{figure}[ht]
    \centering
    \begin{minipage}{.45\textwidth}
        \scalebox{0.5}{
    \begin{tikzpicture}[scale=0.8]
        \node (il) at (-6, 0) [circle, draw, inner sep=0.5mm] {};
        \node (ir) at (+6, 0) [circle, draw, inner sep=0.5mm] {};
        \node (e) at (0, 6) {};
        \node (i0l) at (-9, -3) [circle, draw, inner sep=0.5mm] {};
        \node (i0r) at (+9, -3) [circle, draw, inner sep=0.5mm] {};

        \node [above right=0.2mm of ir] {$i^+$}; 
        \node [above left=0.2mm of il] {$i^+$};
        \node [right=0.2mm of i0r] {$i^0$}; 
        \node [left=0.2mm of i0l] {$i^0$};
        \path[fill=lightgray, opacity=0.5] (0, 6) -- (-6, 0) -- (-3.25, -2.75)
            .. controls (0, -3) .. (2.8, -3.2) -- (6, 0) --  (0, 6);
        \path[fill=lightgray] (0, 6) -- (-6, 0) 
            .. controls (-4.5, -1.5) and (-1.6, -1.6) .. (-1.5, -2.85)
            .. controls (0, -3) .. (1.45, -3.1)
            .. controls (1.6, -2) and (4.5, -1.5) .. (6, 0) --  (0, 6);

        \draw (-3.25, -2.75) -- (il) node [midway, below left=-0.5mm] {$\mathcal{H}^+$};
        \draw (2.8, -3.2) -- (ir) node [midway, below right=-0.5mm] {$\mathcal{H}^+$};
        \draw [dashed] (e) -- (il) node [midway, above left] {$\mathcal{CH}^+$};
        \draw [dashed] (e) -- (ir) node [midway, above right] {$\mathcal{CH}^+$};
        \draw [dashed] (il) -- (-9, -3) node [midway, above left] {$\mathcal{I}^+$};
        \draw [dashed] (ir) -- (+9, -3) node [midway, above right] {$\mathcal{I}^+$};
        \draw [thick] (i0l) .. controls (-3, -2) and (3, -4)  .. (i0r)
            node [midway, below] {$\Sigma$};
    \end{tikzpicture}}
\end{minipage} \hspace{10pt}
\begin{minipage}{.45\textwidth}
    \scalebox{0.5}{
    \begin{tikzpicture}[scale=0.8]
        \node (il) at (-6, 0) [circle, draw, inner sep=0.5mm] {};
        \node (ir) at (+6, 0) [circle, draw, inner sep=0.5mm] {};
        \node (e) at (0, 6) {};
        \node (i0l) at (-9, -3) [circle, draw, inner sep=0.5mm] {};
        \node (i0r) at (+9, -3) [circle, draw, inner sep=0.5mm] {};
        \node (sl) at (-3, 3) [circle, draw, inner sep=0.5mm] {};
        \node (sr) at (+3, 3) [circle, draw, inner sep=0.5mm] {};

        \node [above right=0.2mm of ir] {$i^+$}; 
        \node [above left=0.2mm of il] {$i^+$};
        \node [right=0.2mm of i0r] {$i^0$}; 
        \node [left=0.2mm of i0l] {$i^0$};
        \path[fill=lightgray, opacity=0.5] (-6, 0) -- (-3, 3)
            .. controls (-1.5,2.8) .. (-0.8, 3.1) -- (-0.2, 3.7)
            .. controls (0.2, 3.5) .. (0.6, 3.2)
            -- (0.8, 3.0) -- (1.1, 3.3)
            .. controls (2, 3.5) .. (+3, +3) -- (6, 0) -- (2.8, -3.2)
            .. controls (0, -3) .. (-3.25, -2.75) -- (-6, 0);
        \path[fill=lightgray] (-6, 0) 
            .. controls (-4.5, -1.5) and (-1.6, -1.6) .. (-1.5, -2.85)
            .. controls (0, -3) .. (1.45, -3.1)
            .. controls (1.6, -2) and (4.5, -1.5) .. (6, 0)
            -- (+3, +3) .. controls (2, 3.5) .. (1.1, 3.3)
            -- (0.8, 3.0) -- (0.6, 3.2)
            .. controls (0.2, 3.5) .. (-0.2, 3.7)
            -- (-0.8, 3.1) .. controls (-1.5, 2.8) .. (-3, 3)
            -- (-6, 0);

        \draw (-3.25, -2.75) -- (il) node [midway, below left=-0.5mm] {$\mathcal{H}^+$};
        \draw (2.8, -3.2) -- (ir) node [midway, below right=-0.5mm] {$\mathcal{H}^+$};
        \draw [dashed] (sl) -- (il) node [midway, above left] {$\mathcal{CH}^+$};
        \draw [dashed] (sr) -- (ir) node [midway, above right] {$\mathcal{CH}^+$};
        \draw [dashed] (il) -- (-9, -3) node [midway, above left] {$\mathcal{I}^+$};
        \draw [dashed] (ir) -- (+9, -3) node [midway, above right] {$\mathcal{I}^+$};
        \draw [dashed] (sl) .. controls (-1.5,2.8) .. (-0.8, 3.1) -- (-0.2, 3.7)
            .. controls (0.2, 3.5) .. (0.6, 3.2)
            node [midway, above=0.8mm] {$\mathcal{S}$}
            -- (0.8, 3.0) -- (1.1, 3.3)
            .. controls (2, 3.5) .. (sr);
        \draw [thick] (i0l) .. controls (-3, -2) and (3, -4)  .. (i0r)
            node [midway, below] {$\Sigma$};
    \end{tikzpicture}}
\end{minipage}
\captionsetup{justification=centering}
\caption{Penrose diagrams representing two-ended spherically symmetric gravitational collapse spacetimes for the Einstein--Maxwell--scalar field model. The shaded regions correspond to the black hole interiors; note that in some cases (e.g.~perturbations of Reissner-Nordstr\"om, see \cite{LukOh1, LukOh2}), the interior terminates in a (weakly) singular null boundary, while for ``large data'' there may still be a portion of the boundary given by $\mathcal{S} = \{ r = 0 \}$. A priori, $\mathcal{S}$ could contain both spacelike and null pieces.}  
\label{fig:twoended}
\end{figure}

In this article we consider not just spherically symmetric but more general \emph{surface symmetric} spacetimes, where $(\mathcal{M}^{1+3}, \mathbf{g})$ can be written as a warped product of a $1+1$-dimensional Lorentzian manifold $\mathcal{Q}^{1+1}$, and a complete Riemannian surface $\Sigma^2$ of constant sectional curvature $\kappa \in \{-1, 0, +1\}$:
\[
    \mathcal{M} = \mathcal{Q} \times \Sigma, \quad \mathbf{g} = g_{\mathcal{Q}} + r^2 d \sigma_{\Sigma}.
\]
Here $g_{\mathcal{Q}}$ is a Lorentzian metric on $\mathcal{Q}$ while $r: \mathcal{Q} \to \R_{>0}$ is a positive real-valued function on $\mathcal{Q}$ called the \emph{area-radius function}\footnote{Indeed, when $\Sigma$ is compact this is a characteristic length scale associated to each copy of $\Sigma$ in the spacetime.}. 

In the (local) study of spacelike singularities, one may restrict attention to regions where the area-radius function $r$ is a \emph{timelike} coordinate. For instance, one considers only the trapped region of the Penrose diagram in Figure~\ref{fig:christodoulou_collapse}. In such trapped regions, we choose to write the spacetime metric in the following gauge:
\begin{equation} \label{eq:surface_sym_intro}
    \mathbf{g} = - e^{2 \mu} dr^2 + e^{2 \lambda} d x^2 + r^2 d \sigma_{\Sigma}.
\end{equation}
Level sets of the coordinate $x$ are orthogonal to hypersurfaces of constant $r$, while $\mu, \lambda$ are real-valued functions of $r$ and $x$. With a suitable coordinate representation for $\phi$ and $\mathbf{F}$, the PDE system corresponding to the Einstein--Maxwell--scalar field system \eqref{eq:einstein}--\eqref{eq:maxwell} in this gauge is recorded in Section~\ref{sub:emsfss}.

The gauge \eqref{eq:surface_sym_intro} has previously been considered for the surface symmetric Einstein--Vlasov--scalar field system in \cite{TegankongGlobal, TegankongRendall, TegankongNoutcheguemeRendall}, following work related to the surface symmetric Einstein--Vlasov system in \cite{ReinEinsteinVlasov}. To the author's knowledge, the gauge \eqref{eq:surface_sym} has not previously been used for other matter systems, including Maxwell; thus we choose to include a more thorough description of the gauge in Sections~\ref{sub:eqns_sss} and \ref{sub:emsfss}. We also record a local well-posedness result and a continuation criterion in Section~\ref{sub:ivp}, differing slightly from that given in \cite{TegankongNoutcheguemeRendall}.

In the sequel we shall assume that $\mathcal{Q}$ is diffeomorphic to $I \times \mathbb{S}^1$, where $I \subset (0, + \infty)$ is an interval and we periodise the $x$-variable to lie in $\mathbb{S}^1$, exactly as in \cite{TegankongRendall}. This is purely for convenience; upon localising to suitable causal subdomains one could also consider settings where $x$ lies in an interval rather than $\mathbb{S}^1$.

\subsection{Spacelike singularities in surface symmetry} \label{sub:intro_sing}

Our goal is to provide an in-depth description of the behaviour of solutions to the Einstein--Maxwell--scalar field system \eqref{eq:einstein}--\eqref{eq:maxwell} in the vicinity of a spacelike singularity. With the metric $\mathbf{g}$ written in the gauge \eqref{eq:surface_sym_intro}, the spacelike singularity corresponds to the boundary portion $\{ r = 0 \} \subset \partial \mathcal{Q}$. 

A priori, it is unclear that a solution arising from reasonable initial data even reaches $\{ r = 0 \}$. For example, the (subextremal) Reissner-Nordstr\"om black hole interior written in the gauge \eqref{eq:surface_sym_intro} is:
\[
    \mathbf{g}_{RN} = - \left( - 1 + \frac{2M}{r} - \frac{Q^2}{r^2} \right)^{-1} dr^2 + \left( - 1 + \frac{2M}{r} - \frac{Q^2}{r^2} \right) \, dx^2 + r^2 d \sigma_{\mathbb{S}^2},
\]
for constants $M$ and $Q$ obeying $0 < |Q| < M$. Hence the solution, at least in this gauge, is only well-defined for $r_- < r < r_+$, where $r_{\pm} = M \pm \sqrt{M^2 - Q^2}$. It is well-known that the boundary piece $\{ r = r_- \}$ is a null (and completely regular) Cauchy horizon as opposed to a spacelike singularity.

Nonetheless, we shall restrict attention to scenarios where $\{ r = 0 \}$ is reached. Indeed, our first main result, Theorem~\ref{thm:global_rough}, shows global existence towards $\{ r = 0 \}$ for suitably chosen initial data. See also previous results of \cite{Tchapnda, WeaverT2, RendallPlaneSym, TegankongRendall} concerning the Einstein--Vlasov and Einstein--scalar field system. We also make an analogy to our previous result \cite{Me_Kasner}, which similarly considers the Einstein--Maxwell--scalar field system, albeit in a different gauge known as the \emph{double-null gauge}.

Given existence up to $\{ r = 0 \}$, one should next ask how solutions behave near $r = 0$. For instance, one could consider the asymptotic behaviour of the metric and the matter quantities. This is also partially answered in \cite{ReinEinsteinVlasov, TegankongRendall, Me_Kasner} for the matter models they consider, where \emph{under certain conditions} it is shown that the metric behaves in a Kasner--like fashion (see Section~\ref{sub:intro_bkl} for a discussion of Kasner--like singularities), and that the scalar field $\phi$ has the following asymptotics as $r \to 0$:
\begin{equation} \label{eq:phi_intro}
    \phi(r, x) = \Psi (x) \log r + \Xi (x) + o(1),
\end{equation}
where the coefficients $\Psi(x)$ and $\Xi(x)$ are allowed to depend on the spatial coordinate $x \in \mathbb{S}^1$. Furthermore, the \emph{generalized Kasner exponents} of the near-singularity spacetime are:
\begin{equation} \label{eq:kasnerlike_intro}
    p_1(x) = \frac{\Psi(x)^2 - 1}{\Psi(x)^2 + 3}, \quad p_2(x) = p_3(x) = \frac{2}{\Psi(x)^2 + 3}.
\end{equation}

We return to the ``\emph{under certain conditions}'' alluded to above, at least in the context of the Einstein--Maxwell--scalar field system. We consider two different cases: for the pure Einstein--scalar field system with $\mathbf{F} = 0$ and $\kappa \in \{ 0, +1 \}$, there are no additional conditions; global existence towards $\{ r = 0 \}$ and the asymptotics \eqref{eq:phi_intro}--\eqref{eq:kasnerlike_intro} will always apply. Furthermore, the function $\Psi(x)$ is permitted to take any value. 

When either $\mathbf{F} \neq 0$ or $\kappa = -1$, two additional conditions are required:
\begin{itemize}
    \item
        \hypertarget{assump1}{Assumption 1}: Global existence towards $\{ r = 0 \}$ with $e^{2\mu} = - g (\nabla r, \nabla r) \to 0$ uniformly as $r \downarrow 0$.
    \item
        \hypertarget{assump2}{Assumption 2}: If $\mathbf{F} \neq 0$, then assume the \emph{subcriticality condition}, namely that for some $\alpha > 0$ and all $r$ sufficiently small, one has $|r \partial_r \mu| (r, x) \geq \frac{1}{2}( 2 + \alpha)$. 
\end{itemize}
As we see later, Assumption 2 implies that near $\{ r = 0 \}$ one has $\Psi(x)^2 \geq 1 + \alpha$ in \eqref{eq:phi_intro}, see e.g.~\eqref{eq:mu_asymp} below.
The reason that one does not require these additional assumptions when $\mathbf{F} = 0$ and $\kappa \in \{ 0, +1 \}$ is that global existence and an appropriately modified subcriticality condition follow immediately from the equations in Section~\ref{sub:emsfss} when restricted to such spacetimes.

In any case, if one satisfies these assumptions then asymptotics of the form \eqref{eq:phi_intro}--\eqref{eq:kasnerlike_intro} were found in \cite[Theorem 3.1 and Theorem 3.2]{Me_Kasner}, though note \cite{Me_Kasner} deals only with the spherically symmetric case $\kappa = +1$ and uses the double null gauge in place of \eqref{eq:surface_sym_intro}. One drawback of the double null gauge is that null coordinates tend to degenerate towards $\{ r = 0 \}$, and one only has $C^{1, \alpha}$ regularity for say $\Psi(x)$. In our current gauge \eqref{eq:surface_sym}, the asymptotics \eqref{eq:phi_intro}--\eqref{eq:kasnerlike_intro} can be given to any desired degree of regularity.

\setcounter{theorem}{-1}
\begin{theorem} \label{thm:asymp_smooth}
    Let $(\mathcal{M}, \mathbf{g}, \phi, \mathbf{F})$ be a surface symmetric solution to the Eintein--Maxwell--scalar field system \eqref{eq:einstein}--\eqref{eq:maxwell}, given in the gauge \eqref{eq:surface_sym_intro}, arising from regular initial data $(\phi, \mu, \lambda, \partial_r \phi, \partial_r \mu, \partial_r \lambda)|_{r = r_0} \in (C^{k+1})^3 \times (C^k)^3$ for some $k \geq 1$. Suppose that either $\mathbf{F} = 0$ and $\kappa \in \{0, + 1\}$, or that Assumptions 1 and 2 above hold. Then there exist $C^{k-1}$ functions $\Psi(x)$, $\Xi(x)$, $M(x)$ and $\Lambda(x)$ on $\mathbb{S}^1$ such that:
    \begin{gather}
        \phi(r, x) = \Psi(x) \log r + \Xi (x) + \mathrm{Err}_{\phi}(r, x), \label{eq:phi_asymp} \\[0.4em]
        \mu(r, x) = \tfrac{1}{2} (\Psi(x)^2 + 1) \log r + M (x) + \mathrm{Err}_{\mu}(r, x), \label{eq:mu_asymp} \\[0.4em]
        \lambda(r, x) = \tfrac{1}{2} (\Psi(x)^2 - 1) \log r + \Lambda(x) + \mathrm{Err}_{\lambda}(r, x), \label{eq:lambda_asymp}
    \end{gather}
    where the error terms tend to $0$ in the $C^{k-1}$ norm uniformly as $r \to 0$, i.e.~they obey $\| \mathrm{Err}_{*}(r, \cdot) \|_{C^{k-1}_x} \to 0$ where $*$ is any of $\phi, \mu, \lambda$. Furthermore, $M(x)$ obeys the asymptotic momentum constraint equation
    \begin{equation}
        d M (x) = \Psi(x) \cdot d \,\Xi (x).
    \end{equation}
\end{theorem}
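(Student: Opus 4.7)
The plan is to establish Theorem~\ref{thm:asymp_smooth} by an AVTD-type bootstrap, upgrading the baseline $C^{1,\alpha}$ asymptotics from \cite{Me_Kasner} to the optimal $C^{k-1}$ regularity afforded by the area-radius gauge \eqref{eq:surface_sym_intro}. The starting observation is that in this gauge the ``time'' coordinate $r$ does not degenerate at the singularity, so spatial differentiation commutes nicely with the evolution equations and one does not lose regularity in passing to the limit; this is the structural reason why the result can be stated at arbitrary order $k$. The cases $\mathbf{F}=0,\ \kappa\in\{0,+1\}$ versus the general case are handled uniformly once one has \hyperlink{assump1}{Assumption 1} (global existence with $e^{2\mu}\to 0$) and \hyperlink{assump2}{Assumption 2} (subcriticality), since in the former case both are automatic consequences of the monotonicity built into the equations of Section~\ref{sub:emsfss}.

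First, I would formally derive the expected leading order by discarding spatial derivatives in the wave equation \eqref{eq:wave} and the Einstein equations, producing an ODE system in $r$ whose solutions take precisely the Kasner-like form \eqref{eq:phi_asymp}--\eqref{eq:lambda_asymp}. Rewriting the unknowns as $\phi-\Psi(x)\log r$, $\mu-\tfrac{1}{2}(\Psi^2+1)\log r$, $\lambda-\tfrac{1}{2}(\Psi^2-1)\log r$ (with $\Psi(x)$ defined as the limit $\lim_{r\to 0} r\,\partial_r\phi$, which exists in $C^0$ by \cite{Me_Kasner}), the evolution equations become integrable ODEs in $r$ with forcing terms that are either (i) quadratic in the already-small renormalized quantities, (ii) contain powers of $r$ with strictly positive exponent coming from the Maxwell and curvature contributions, or (iii) contain spatial derivatives. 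The subcriticality assumption $\Psi^2\geq 1+\alpha$ is precisely what guarantees that the Maxwell source $|\mathbf{F}|^2 \sim e^{-2\lambda}\sim r^{-(\Psi^2-1)}$ and the curvature source $\kappa \sim r^{-2}$ (when $\kappa\neq 0$) are integrable in $r$ against the $r^{-1}$ scaling of the equations, closing the $k=1$ bootstrap for $\mathrm{Err}_\phi, \mathrm{Err}_\mu, \mathrm{Err}_\lambda$.

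For the higher regularity claim I would apply $\partial_x^j$ for $1 \le j \le k-1$ to the evolution equations and repeat the ODE bootstrap on the quantities $\partial_x^j \mathrm{Err}_\phi$, $\partial_x^j \mathrm{Err}_\mu$, $\partial_x^j \mathrm{Err}_\lambda$. At each order of differentiation the forcing picks up lower-order spatial derivatives of $\Psi, M, \Lambda$ together with lower-order error terms, which by induction have already been controlled. The main obstacle — and this is the step that deserves careful bookkeeping — is ensuring that commuting with $\partial_x$ does not spoil the integrability: one must verify that the worst term, schematically $(\partial_x \Psi)^2 (\log r)^2$ arising from differentiating the $\Psi(x)\log r$ factor inside nonlinearities, still has an integrable $r$-weight after multiplication by the appropriate power of $r$ dictated by subcriticality. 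This is where the strict inequality $\Psi^2 \geq 1+\alpha$ from \hyperlink{assump2}{Assumption 2}, rather than just $\Psi^2 > 1$, is used: it yields a polynomial gain $r^\alpha$ that absorbs logarithmic losses at every order.

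Finally, for the asymptotic momentum constraint $dM = \Psi \cdot d\Xi$, I would use the momentum constraint equation from the Einstein system in Section~\ref{sub:emsfss}, which in this gauge reads schematically $\partial_x \mu = -r\, \partial_r\phi\, \partial_x\phi + (\text{lower order})$ (up to contributions from $\lambda$ that vanish under $\partial_x$ of the renormalized form). Substituting the expansions \eqref{eq:phi_asymp} and \eqref{eq:mu_asymp} into this identity, the $\log r$ terms on both sides match automatically (because $\Psi^2$ is the coefficient of $\log r$ in $\mu$ minus something $x$-independent up to scaling), while taking $r\to 0$ in the remaining finite part yields $\partial_x M(x) = \Psi(x) \partial_x \Xi(x)$. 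The error contributions are controlled by the bounds established in the previous step, completing the proof.
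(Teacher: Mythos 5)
Your overall strategy (order-by-order transport/ODE analysis in $r$, using subcriticality to make the Maxwell and curvature sources integrable against $dr/r$, then reading off $dM=\Psi\,d\Xi$ from the momentum constraint $\partial_x\mu = r\partial_r\phi\,\partial_x\phi$) is aligned with the paper's Appendix~\ref{app:asymp}, and your treatment of the constraint is essentially the paper's. But there are two genuine gaps.

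First, you take the existence and continuity of $\Psi(x)=\lim_{r\to 0}r\partial_r\phi$ as input from \cite{Me_Kasner} and then renormalize by subtracting $\Psi(x)\log r$. That citation only covers $\kappa=+1$ in the double-null gauge, so it is unavailable in the general setting of the theorem ($\kappa\in\{0,-1\}$, the gauge \eqref{eq:surface_sym_intro}); and at higher orders the scheme is circular, since $\partial_x^j\Psi$ appears in your forcing before you have established that $r\partial_r\phi$ converges in $C^j$. The paper avoids this entirely: it never subtracts the putative limit, but instead proves uniform bounds $|r\partial_r\partial_x^j f|+|re^{\mu-\lambda}\partial_x^{j+1}f|\leq C_j$ from scratch (Lemma~\ref{lem:linfty_upper}), then shows $(r\partial_r)(r\partial_r\partial_x^j\phi)$ is integrable with respect to $dr/r$ (using $e^{2\mu},\,\frac{Q^2}{r^2}e^{2\mu}\lesssim r^\alpha$ from Lemma~\ref{lem:expdecay} and $e^{\mu-\lambda}\lesssim r$), so that the limit $\Psi$ and its derivatives are \emph{constructed} rather than assumed. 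Relatedly, your assertion that the Maxwell source behaves like $r^{-(\Psi^2-1)}$ has the sign of the exponent backwards: the term appearing in the equations is $\frac{Q^2}{r^2}e^{2\mu}\sim r^{\Psi^2-1}$, and it is precisely the positivity of $\Psi^2-1\geq\alpha$ that makes it integrable; as written your exponent would make the source non-integrable.

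Second, and more seriously, the ``ODE bootstrap treating spatial derivatives as forcing'' does not close at any fixed order. Commuting the wave equation with $\partial_x^j$ produces the principal term $r^2e^{2(\mu-\lambda)}\partial_x^{j+2}\phi$, which involves two more spatial derivatives than the quantity $\partial_x^j\mathrm{Err}_\phi$ you are estimating, and no amount of smallness of the weight removes the need for an a priori bound on it. The paper resolves this derivative loss with the characteristic estimate of Lemma~\ref{lem:waveupper}: the operator $(r\partial_r)^2-r^2e^{2(\mu-\lambda)}\partial_x^2$ factorizes through the null derivatives $(r\partial_r\pm re^{\mu-\lambda}\partial_x)$, and integrating along the integral curves of $r\partial_r\mp re^{\mu-\lambda}\partial_x$ controls the pair $\bigl((r\partial_r+re^{\mu-\lambda}\partial_x)\partial_x^jf,\ (r\partial_r-re^{\mu-\lambda}\partial_x)\partial_x^jf\bigr)$ simultaneously, so that the hierarchy closes with the degenerate weight $re^{\mu-\lambda}$ on the top spatial derivative. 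Your proposal needs either this mechanism or an energy estimate (as in Section~\ref{sec:l2}) to be complete; without one of them the induction cannot start, let alone reach order $k-1$.
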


By performing a change of coordinates to change the gauge \eqref{eq:surface_sym_intro} into a Kasner-like form (see Section~\ref{sub:intro_bkl}), this confirms the asymptotics \eqref{eq:phi_intro}--\eqref{eq:kasnerlike_intro}. To the extent of the author's knowledge, Theorem~\ref{thm:asymp_smooth} is new, but is not the focus of the present article; we therefore defer its proof to Appendix~\ref{app:asymp}.
Instead the starting point of our article is to address the following related questions:
\begin{quote}
    Can one characterise a wide class of initial Cauchy data at $r = r_0$, including data which are not at first glance subcritical in the sense of Assumption 2 (e.g.~data at $r = r_0$ such that $r \partial_r \mu|_{r = r_0} < \frac{1}{2}$ \underline{and} $\mathbf{F} \neq 0$), so that the corresponding solutions of the Einstein--Maxwell--scalar field system feature global existence towards $\{ r = 0 \}$, as well as the asymptotics \eqref{eq:phi_intro}--\eqref{eq:kasnerlike_intro}? 

    Further, for all data in this class can one fully understand the \emph{intermediate dynamics} of the spacetime between the regular Cauchy data at $r = r_0$ and the singularity at $r = 0$?
\end{quote}
    
Note that at $r = 0$ itself one expects that $\frac{1}{2}(\Psi^2 + 1) = \lim_{r \to 0} r \partial_r \mu(r, x) \geq \frac{1}{2}$, hence in answering these questions one expects that in some cases there must be a nonlinear transition between $r = r_0$ and $r = 0$ which nevertheless permits global existence. Later, we identify this nonlinear mechanism as a ``bounce'' such as that identified by Belinski, Khalatnikov and Lifshitz in \cite{bkl71, bk73}, see Section~\ref{sub:intro_bkl}.

Our main results provide both a description of such a class of initial data (Theorem~\ref{thm:global_rough}) as well as a description of the intermediate (nonlinear) dynamics (Theorem~\ref{thm:bounce_rough}), thereby answering the above questions in the affirmative. This class will moreover be open (in the $C^{\infty}$ topology on initial data sets), thus our main results will represent a qualitative stability result in the sense that in the regime considered, spacelike singularity formation, and related phenomena such as curvature blow-up, are stable to small perturbations. 

In the process of answering these questions we uncover an interesting corollary of Theorem~\ref{thm:bounce_rough} that we interpret as an \emph{instability result} -- see Corollary~\ref{cor:bounce} for a precise statement. To describe this, first consider a surface symmetric spacetime $(\mathcal{M}, \mathbf{g}_0, \phi_0)$ solving the Einstein--scalar field equations with $\kappa \in \{ 0, +1 \}$. Since this spacetime has $\mathbf{F}_0 = 0$, the asymptotic quantity $\Psi_0(x)$ associated to $\phi_0$ by \eqref{eq:phi_intro} is permitted to take any value and \underline{does not necessarily} obey $\Psi_0(x)^2 \geq 1$ for all $x \in \mathbb{S}^1$. 

Now consider a perturbation of $(\mathcal{M}, \mathbf{g}_0, \phi_0, \mathbf{F}_0 = 0)$ to $(\mathcal{M}, \mathbf{g}_1, \phi_1, \mathbf{F}_1 \neq 0)$. The instability result then says that though one has global existence towards $\{ r = 0 \}$ in $(\mathcal{M}, \mathbf{g}_1, \phi_1, \mathbf{F}_1)$, the new $\Psi_1(x)$ associated to $\phi_1(x)$ via \eqref{eq:phi_intro} will necessarily satisfy $\Psi_1(x)^2 \geq 1$. In fact we will show that $\Psi_1(x) \approx \max \{ \Psi_0(x), \Psi_0(x)^{-1} \}$, quantifying the ``bounce'' instability precisely.

Corollary~\ref{cor:bounce} requires neither the unperturbed spacetime $(\mathcal{M}, \mathbf{g}_0, \phi_0)$, nor the perturbation, to be spatially homogeneous, and together with the results of our companion article \cite{MeGowdyPaper} can be considered the first rigorous evidence of BKL-type bounces \emph{outside of homogeneity}. See Section~\ref{sub:intro_gowdy} for a more thorough comparison to \cite{MeGowdyPaper}. (BKL Bounces are much better understood in the spatially homogeneous setting, where the dynamics reduce to a system of finite dimensional ODEs \cite{Weaver_bianchi, RingstromBianchi, LiebscherRendallTchapnda, BeguinDutilleul}.)

\subsection{Our main theorems in rough form} \label{sub:intro_thm}

Our main theorems concern surface symmetric solutions of the Einstein--Maxwell--scalar field system \eqref{eq:einstein}--\eqref{eq:maxwell} as described in Section~\ref{sub:intro_einstein}. With metric $\mathbf{g}$ given in the gauge \eqref{eq:surface_sym_intro}, the Einstein--Maxwell--scalar field system reduces to a system of wave and transport equations for three scalar functions $\phi(r, x)$, $\mu(r, x)$ and $\lambda(r, x)$. This system is given in Section~\ref{sub:emsfss}.

\sloppy Local existence (see Proposition~\ref{prop:lwp}) for this system states that for suitably regular initial data $\phi_D, \mu_D, \lambda_D, \dot{\phi}_D, \dot{\mu}_D, \dot{\lambda}_D: \mathbb{S}^1 \to \R$ and some $r_0 > 0$, there exists a maximal interval $I \subset (0, + \infty)$ containing $r_0$ for which $(\phi, \mu, \lambda)$ solves the Einstein--Maxwell--scalar field system in $I \times \mathbb{S}^1$ and such that
\[
    (\phi, \mu, \lambda, r \partial_r \phi, r \partial_r \mu, r \partial_r \lambda) |_{r = r_0} = (\phi_D, \mu_D, \lambda_D, \dot{\phi}_D, \dot{\mu}_D, \dot{\lambda}_D).
\]

We have two main theorems. The first, Theorem~\ref{thm:global_rough}, which we refer to as a \emph{global existence theorem}, characterizes a class of initial data $(\phi_D, \mu_D, \lambda_D, \dot{\phi}_D, \dot{\mu}_D, \dot{\lambda}_D)$ and $r_0 > 0$ for the Einstein--Maxwell--scalar field system such that the maximal interval $I$ extends all the way to zero i.e.~$I = (0, T)$ for $T \in (0, + \infty]$. The second, Theorem~\ref{thm:bounce_rough}, which we refer to as the \emph{bounce theorem}, characterises the (nonlinear) dynamics in spacetimes arising from such initial data, including a relationship between the initial data at $r = r_0$ and the eventual asymptotics of $(\phi, \mu, \lambda)$ towards $r = 0$.

To define this class of initial data, we introduce two real-valued parameters $\upeta > 2$ and $\upzeta > 0$ are real-valued parameters. Our class of initial data is chosen to satisfy three conditions:
\begin{itemize}
    \item
        (\emph{Weak subcriticality}) The functions $\dot{\phi}_D$ and $\mu_D$ satisfy the following, for all $x \in \mathbb{S}^1$:
        \begin{equation} \label{eq:weak_subcriticality}
            \upeta^{-1} \leq \dot{\phi}_D(x) \leq \upeta, \qquad \frac{Q^2}{r_0^2} e^{2 \mu_D}(x) \leq 1.
        \end{equation}

    \item
        (\emph{Energy boundedness}) For some $N \in \N$ depending on $\upeta$ as well as another small constant $\upgamma > 0$, we have control of $L^2$ control of up to $N$ derivatives of the initial data:
        \begin{equation} \label{eq:energy_boundedness}
            \frac{1}{2} \sum_{f \in \{\phi, \mu, \lambda\}} \sum_{K = 0}^N \int_{\mathbb{S}^1} \left( (\partial_x^K \dot{f}_D)^2 + r_0^2 e^{2(\mu_D - \lambda_D)} (\partial_x^{K+1} f_D)^2 + r_0^{2\upgamma} (\partial_x^K f_D)^2 \right) \, dx \leq \upzeta.
        \end{equation}
        
    \item
        (\emph{Closeness to singularity}) The ``initial time'' $r_0$ is chosen to satisfy $0 < r_0 < r_*$ for some $r_*$ depending on $\upeta$ and $\upzeta$. Furthermore we have
        \begin{equation} \label{eq:closeness_singularity}
            e^{2 \mu_D} \leq \upzeta r_0, \qquad e^{2(\mu_D - \lambda_D)} \leq \upzeta r_0.
        \end{equation}
\end{itemize}
Note that taking the union of all such initial data as $\upeta > 2$ and $\upzeta > 0$ vary, one characterises the class of initial data to which our results apply as a subset of the ``moduli space of initial data'' for the surface symmetric Einstein--Maxwell--scalar field equations which is open in the $C^{\infty}$ topology.

We defer further discussion of these conditions (both regarding why they are necessary and the relevance of initial data satisfying them) to after the statement of the theorems. The first theorem, corresponding to global existence towards $r = 0$, is as follows:

\begin{theorem}[Global existence, rough version] \label{thm:global_rough}
    For some $\upeta > 2$ and $\upzeta > 0$, let $(\phi_D, \mu_D, \lambda_D, \dot{\phi}_D, \dot{\mu}_D, \dot{\lambda}_D)$ be initial data at $r_0 > 0$ satisfying \eqref{eq:weak_subcriticality}--\eqref{eq:closeness_singularity} and $r_0 < r_* = r_*(\upeta, \upzeta)$. Then the corresponding solution $(\phi, \lambda, \mu)$ to the surface symmetric Einstein--Maxwell--scalar field system of Section~\ref{sub:emsfss} is such that the maximal interval of existence $I$ extends all the way to $r = 0$ i.e.~$(0, r_0) \subset I$.
%
\end{theorem}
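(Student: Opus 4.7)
By the continuation criterion recorded in Section~\ref{sub:ivp}, global existence towards $r = 0$ will follow from a bootstrap argument on a sub-interval $(r_b, r_0] \subset I$ that provides uniform $r_b$-independent control of pointwise quantities and suitable Sobolev norms; unwinding the continuation criterion then forces $r_b = 0$. The bootstrap assumptions I would impose are of two flavors: (i) \emph{pointwise}, that $r\partial_r \phi$ stays in a slight enlargement of $[\upeta^{-1}, \upeta]$ with appropriate control on $r\partial_r\mu$, and crucially that the Maxwell coupling $\frac{Q^2}{r^2}e^{2\mu}(r,x)$ stays bounded by $1 + \delta$ for some small $\delta = \delta(\upeta)$; and (ii) \emph{energy}, that the $H^K_x$ norms appearing in \eqref{eq:energy_boundedness} are bounded by a multiple of $\upzeta$ for all $K \leq N = N(\upeta)$. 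Sobolev embedding on $\mathbb{S}^1$ couples the two families: the energy norms control spatial derivatives in $L^\infty_x$, which enter as perturbative error terms in the essentially ODE-like dynamics for the pointwise quantities.

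\textbf{Improving the pointwise bounds.} Restricted to a slice $x = \mathrm{const.}$, the wave equation for $\phi$ together with the Hamiltonian constraint for $\mu$ (which to leading order reads $r \partial_r \mu = \tfrac12((r\partial_r\phi)^2 + 1) + \tfrac12 \frac{Q^2}{r^2}e^{2\mu} + \text{spatial corrections}$) encodes, to leading order, a BKL-style bounce ODE for the triple $(r\partial_r\phi,\, r\partial_r\mu,\, \tfrac{Q^2}{r^2}e^{2\mu})$. The mechanism is standard: if $(r\partial_r\phi)^2 < 1$ then the Maxwell coupling can grow as $r$ decreases, but this growth drives $(r\partial_r\phi)^2$ upward, and once $(r\partial_r\phi)^2$ crosses $1$ (``the bounce'') one has $r\partial_r\mu > 1$ and the Maxwell term decays like a positive power of $r$. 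Accordingly I would partition $(r_b, r_0]$ into a possibly-empty pre-bounce sub-interval, on which the closeness hypothesis $e^{2\mu_D} \leq \upzeta r_0$ limits the transient growth of the Maxwell term below $1 + \delta$, and a post-bounce sub-interval on which the subcriticality-type monotonicity arguments underlying Theorem~\ref{thm:asymp_smooth} and its analog in \cite{Me_Kasner} improve the pointwise bootstrap.

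\textbf{Improving the energies.} Commuting the PDE system of Section~\ref{sub:emsfss} with $\partial_x^K$ for $0 \leq K \leq N$ yields energy identities whose principal quadratic form is equivalent to the $K$-th summand on the left of \eqref{eq:energy_boundedness}. The error terms are multilinear in derivatives of $(\phi, \mu, \lambda)$ with coefficients controlled in $L^\infty_x$ by the pointwise bounds established above (and, at top order, by the energy bounds via Sobolev embedding and the usual Moser product estimates). Induction on $K$ together with Gr\"onwall in $r$ then closes the energy bootstrap uniformly on $(r_b, r_0]$, where the $r_0$-smallness in \eqref{eq:closeness_singularity} is what absorbs the otherwise $r$-singular prefactors from the Maxwell and transport structure of the system.

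\textbf{Main obstacle.} The principal difficulty is that \hyperlink{assump2}{Assumption~2} of Section~\ref{sub:intro_sing} may \emph{fail} at $r = r_0$ (since $\upeta^{-1} < 1$ is allowed and the Maxwell term is only assumed $\leq 1$, not $\ll 1$), so at the initial time $\frac{Q^2}{r^2}e^{2\mu}$ is an order-one rather than a perturbative quantity. A careful ODE analysis of the bounce is needed to certify that $(r\partial_r\phi)^2$ crosses $1$ at a strictly positive value of $r$ and that no pointwise quantity blows up during the transient; this must be carried out in tandem with the energy estimates, since the spatial-derivative error terms fed into the ODE analysis must stay small throughout the bounce. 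Reconciling these two scales of smallness -- in particular choosing $N = N(\upeta)$ large enough that Sobolev embedding produces the needed $L^\infty_x$ control from the $H^N_x$-energy -- is what fixes the quantitative parameters in the statement.
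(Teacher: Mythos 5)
Your high-level skeleton (bootstrap on $(r_b, r_0]$ plus the continuation criterion, ODE analysis along constant-$x$ slices for the bounce, commuted energy estimates closed by Gr\"onwall, and Sobolev embedding to pass between the two) matches the paper's. However, both of the quantitative bootstrap assumptions at the heart of your argument are false as stated, and repairing them requires the two ideas the paper's proof actually turns on. First, the pointwise bound $\frac{Q^2}{r^2}e^{2\mu} \leq 1 + \delta$ cannot hold through the bounce: the model system \eqref{eq:ode_pq} conserves $\mathscr{K} = \mathscr{P} + \mathscr{P}^{-1} + \mathscr{Q}\mathscr{P}^{-1}$ and its orbits are the parabolas $\mathscr{Q} = -(\mathscr{P}-\alpha)(\mathscr{P}-\alpha^{-1})$ of Figure~\ref{fig:phase_portrait}, so starting from $\mathscr{P}(r_0)$ near $\upeta^{-1}$ the quantity $\mathscr{Q}$ genuinely climbs to order $\upeta^2$ before decaying, no matter how small $r_0$ is. The closeness condition \eqref{eq:closeness_singularity} controls $e^{2\mu}$, not the ratio $Q^2 r^{-2}e^{2\mu}$, whose transient growth is dictated by the autonomous dynamics. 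The correct uniform bound is $\mathscr{Q} \leq 16\upeta^2$ (Lemma~\ref{lem:ode_bounce}), and obtaining it uniformly over the bounce requires the approximately conserved quantity $\mathscr{K}$, which your proposal does not identify; the paper needs no pre-bounce/post-bounce case split for this reason.

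Second, the energy bootstrap ``bounded by a multiple of $\upzeta$'' cannot close. The zeroth-order Maxwell coupling contributes terms of the form $\frac{Q^2}{r^2}e^{2\mu}\,(r\partial_r\partial_x^K\phi + 2\, r\partial_r\phi\,\partial_x^K\mu)$ to the commuted wave equations, with coefficient of size $O(\upeta^2)$ throughout the bounce; Gr\"onwall then forces polynomial blow-up $\mathcal{E}^{(K)}(r) \lesssim r^{-2A_*}$ with $A_*$ depending on the pointwise constants, and this is not absorbed by $r_0$-smallness. One could hope to use that $\int \mathscr{Q}\,\frac{dr}{r}$ is uniformly bounded, but that integrated bound is itself an output of the ODE analysis, which in turn needs the $L^\infty$ control of spatial derivatives that you propose to extract from the bounded energies -- the argument is circular as written. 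Moreover, the equations of variation \eqref{eq:ode_mn} show that $\partial_x\mu$ and $r\partial_r\partial_x\phi$ genuinely grow through a bounce, so even $\mathcal{E}^{(1)}$ cannot remain bounded. The missing step is the paper's scheme of permitting $\mathcal{E}^{(K)}(r) \lesssim r^{-2A_* - 2K\upgamma}$ with $A_*$ independent of $K$ (Proposition~\ref{prop:energy_hierarchy}) and recovering the needed smallness $\|\partial_x^k f\|_{L^\infty} \leq \updelta\, r^{-1/2}$, $k \leq 3$, by interpolating against the bounded low-order bootstrap quantities with $N = N(\upeta)$ chosen large (Lemma~\ref{lem:low_order_linfty}); only then do the spatial terms become admissible errors in the bounce ODE.
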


Our second main theorem concerns the dynamical behaviour of particular quantities along causal curves. To fix notation, let $\gamma: I \to \mathcal{Q} = I \times \mathbb{S}^1$ be any \emph{causal curve} parameterised by its $r$-coordinate. Using the gauge for $\mathbf{g}$ in \eqref{eq:surface_sym_intro}, this means we let $\gamma(r) = (r, x(r))$ with $\left| \frac{dx}{dr} \right| \leq e^{\mu - \lambda}$. Then define:
\[
    \mathscr{P}_{\gamma}(r) \coloneqq r \partial_r \phi (\gamma(r)), \qquad \mathscr{Q}_{\gamma}(r) \coloneqq \frac{Q^2}{r^2} e^{2\mu} (\gamma(r)).
\]
It is important that our theorem holds \underline{uniformly} in the choice of causal curve $\gamma$, with the ODEs describing bounces being independent for different choices of $\gamma$ with distinct endpoints, up to error terms which reflect \emph{AVTD behaviour}, see Section~\ref{sub:intro_bkl}. This reflects that the result is \emph{spatially inhomogeneous}.

\begin{theorem}[Bounces, rough version]  \label{thm:bounce_rough}
    Let $(\phi, \mu, \lambda)$ be a solution to the surface symmetric Einstein--Maxwell--scalar field system of Section~\ref{sub:emsfss}, arising from initial data as in Theorem~\ref{thm:global_rough}. Then for $\gamma(r)$ any causal curve as above, the quantities $\mathscr{P}_{\gamma}(r)$ and $\mathscr{Q}_{\gamma}(r)$ obey the ODEs:
    \[
        r \frac{d}{dr} \mathscr{P}_{\gamma} = - \mathscr{P}_{\gamma} \mathscr{Q}_{\gamma} + \mathscr{E}_{\mathscr{P}}, \qquad r \frac{d}{dr} \mathscr{Q}_{\gamma} = \mathscr{Q}_{\gamma} ( \mathscr{P}_{\gamma}^2 - 1 - \mathscr{Q}_{\gamma} - \mathscr{E}_{\mathscr{Q}} ),
    \]
    where the error terms vanish quickly as $r \to 0$. Furthermore:
    \begin{enumerate}[(i)]
        \item
            If $Q = 0$, then $\mathscr{P}_{\gamma}(r)$ converges to some $\mathscr{P}_{\gamma, \infty}$ as $r \to 0$, satisfying $\mathscr{P}_{\gamma, \infty} \approx \mathscr{P}_{\gamma}(r_0)$.
            
        \item
            If $Q \neq 0$, then $\mathscr{Q}_{\gamma}(r)$ converges to $0$ as $r \to 0$, while $\mathscr{P}_{\gamma}(r)$ converges to some $\mathscr{P}_{\gamma, \infty}$ as $r \to 0$, necessarily satisfying $\mathscr{P}_{\gamma, \infty} \geq 1$ and in the particular case where $\mathscr{Q}_{\gamma}(r_0)$ is small,
            \[
                \mathscr{P}_{\gamma, \infty} \approx \max \{ \mathscr{P}_{\gamma}(r_0), \mathscr{P}_{\gamma}(r_0)^{-1} \} + O( \mathscr{Q}_{\gamma}(r_0)).
            \]
    \end{enumerate}
\end{theorem}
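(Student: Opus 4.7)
The plan is to reduce the dynamics along $\gamma$ to an autonomous ODE system with integrable perturbations, and then to exploit an explicit first integral of the limit system to extract the asymptotics. \emph{Derivation of the ODEs.} In the gauge \eqref{eq:surface_sym_intro} the scalar wave equation $\square_{\mathbf{g}} \phi = 0$ rearranges to
\begin{equation*}
    r \partial_r(r \partial_r \phi) + (r \partial_r \phi)\bigl[1 + r \partial_r \lambda - r \partial_r \mu\bigr] = r^2 e^{\mu - \lambda} \partial_x\!\bigl(e^{\mu - \lambda}\partial_x \phi\bigr),
\end{equation*}
while the Raychaudhuri/constraint-type Einstein equations (Section~\ref{sub:emsfss}) take the schematic form
\begin{equation*}
    r \partial_r \mu = \tfrac{1}{2}(\mathscr{P}_\gamma^2 + 1 - \mathscr{Q}_\gamma) + \mathcal{E}_\mu, \qquad r \partial_r \lambda = \tfrac{1}{2}(\mathscr{P}_\gamma^2 - 1 + \mathscr{Q}_\gamma) + \mathcal{E}_\lambda,
\end{equation*}
whence the bracketed factor above collapses to $\mathscr{Q}_\gamma + (\mathcal{E}_\lambda - \mathcal{E}_\mu)$. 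Since $\gamma$ is causal, $|dx/dr| \leq e^{\mu - \lambda}$, so expanding $r \, d/dr = r \partial_r + r (dx/dr) \partial_x$ yields the claimed ODE for $\mathscr{P}_\gamma$ with all spatial-derivative and transport contributions absorbed into $\mathscr{E}_{\mathscr{P}}$; direct differentiation of $\mathscr{Q}_\gamma = (Q^2/r^2) e^{2 \mu}$ together with the identity for $r \partial_r \mu$ yields the ODE for $\mathscr{Q}_\gamma$.

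\emph{Error estimates.} Every term in $\mathscr{E}_{\mathscr{P}}$ and $\mathscr{E}_{\mathscr{Q}}$ involves either a spatial derivative multiplied by $e^{2(\mu-\lambda)}$ or a transport contribution of size $e^{\mu - \lambda}$. The closeness-to-singularity hypothesis \eqref{eq:closeness_singularity} gives $e^{2(\mu-\lambda)} \leq \upzeta r$ at $r = r_0$, and the energy estimates driving Theorem~\ref{thm:global_rough} propagate (and in fact improve) this bound towards $r = 0$; the high-derivative bounds \eqref{eq:energy_boundedness}, combined with Sobolev embedding on $\mathbb{S}^1$, supply uniform $L^\infty_x$ control on $\partial_x^k$-derivatives of $\phi, \mu, \lambda$ for $k \leq N$. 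Together these yield $|\mathscr{E}_{\mathscr{P}}(r)| + |\mathscr{E}_{\mathscr{Q}}(r)| \lesssim r^{\alpha_0}$ for some $\alpha_0 > 0$ uniformly in $\gamma$, which is integrable against $dr/r$ near $r = 0$.

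\emph{Phase-plane analysis via a first integral.} Passing to $s = -\log r$ the autonomous reduction reads $\dot{\mathscr{P}} = \mathscr{P}\mathscr{Q}$, $\dot{\mathscr{Q}} = \mathscr{Q}(1 + \mathscr{Q} - \mathscr{P}^2)$, and a direct computation shows that
\begin{equation*}
    C_\gamma(s) \coloneqq \frac{\mathscr{P}_\gamma(s)^2 + 1 + \mathscr{Q}_\gamma(s)}{\mathscr{P}_\gamma(s)}
\end{equation*}
is exactly conserved. Under the perturbed dynamics $\dot C_\gamma$ is integrable in $s$ by Step 2, so $C_\gamma$ converges to $C_{\gamma, \infty}$. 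Case (i) is immediate since $\mathscr{Q}_\gamma \equiv 0$. For case (ii), I would prove $\mathscr{Q}_\gamma \to 0$ as follows: on regions where $\mathscr{P}_\gamma^2 > 1 + \varepsilon$ the factor $1 + \mathscr{Q} - \mathscr{P}^2$ is negative and forces $\mathscr{Q}$ to decay exponentially, while in the complementary regime $\mathscr{P}$ grows monotonically along level sets $\{\mathscr{Q} = C_\gamma \mathscr{P} - \mathscr{P}^2 - 1\}$ of $C_\gamma$, and the trajectory is forced to cross $\mathscr{P}^2 = 1 + \mathscr{Q}$ in bounded $s$-time. Passing to the limit in the near-conservation law gives $\mathscr{P}_{\gamma, \infty} + \mathscr{P}_{\gamma, \infty}^{-1} = \mathscr{P}_\gamma(r_0) + \mathscr{P}_\gamma(r_0)^{-1} + \mathscr{Q}_\gamma(r_0)/\mathscr{P}_\gamma(r_0) + o(1)$; since $t \mapsto t + t^{-1}$ is a bijection $[1, \infty) \to [2, \infty)$ whose pre-images under $C$ are $\{t, t^{-1}\}$, the stated formula $\mathscr{P}_{\gamma, \infty} \approx \max\{\mathscr{P}_\gamma(r_0), \mathscr{P}_\gamma(r_0)^{-1}\} + O(\mathscr{Q}_\gamma(r_0))$ follows.

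\emph{Main obstacle.} The hardest step is ensuring the error bounds of Step 2 are uniform in $\gamma$, so that the near-conservation of $C_\gamma$ holds at every spatial point and the bounce in case (ii) completes within a bounded $s$-interval during which the accumulated error stays small; this is where the pointwise phase-plane argument must be coupled to the high-derivative energy estimates underlying Theorem~\ref{thm:global_rough}.
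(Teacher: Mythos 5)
Your overall strategy coincides with the paper's: reduce to the planar ODE system along causal curves with integrable errors, and exploit the first integral $\mathscr{K} = \mathscr{P} + \mathscr{P}^{-1} + \mathscr{Q}\mathscr{P}^{-1}$, which is exactly your $C_\gamma$, to read off $\mathscr{P}_{\gamma,\infty}$ from the quadratic $t + t^{-1} = \mathscr{K}_\infty$. The one genuinely different sub-argument is your proof that $\mathscr{Q}_\gamma \to 0$. You argue via the phase portrait: exponential decay of $\mathscr{Q}$ where $\mathscr{P}^2 > 1 + \varepsilon$, plus a claim that the trajectory crosses $\mathscr{P}^2 = 1 + \mathscr{Q}$ in "bounded $s$-time". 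The paper instead divides the $\mathscr{P}$-equation by $\mathscr{P}$ and integrates, so that the two-sided a priori bounds $(4\upeta)^{-1} \leq \mathscr{P} \leq 4\upeta$ immediately give $\int_r^{r_0} \mathscr{Q}_\gamma \, \frac{d\tilde r}{\tilde r} \leq \log(16\upeta^2) + O(r_0^{1/2})$, uniformly in $\gamma$; convergence of $\mathscr{Q}_\gamma$ to $0$ then follows by combining this integrated bound with the $\mathscr{Q}$-equation, and $\mathscr{P}_{\gamma,\infty} \geq 1$ follows since otherwise $\log\mathscr{Q}_\gamma$ would diverge. This route is worth knowing: it sidesteps precisely the obstacle you flag at the end, because the crossing time out of the unstable regime is \emph{not} uniformly bounded (it degenerates as $\mathscr{Q}(r_0) \to 0$), whereas the integrated bound on $\mathscr{Q}$ is uniform and is all one actually needs.

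One claim in your error-estimate step is false as stated: the energy bounds of Theorem~\ref{thm:global_rough} do \emph{not} give uniform $L^\infty_x$ control of $\partial_x^k$-derivatives for $k \leq N$; the energies blow up like $r^{-2A_* - 2K\upgamma}$. The paper's mechanism is to interpolate the top-order (blowing-up) $L^2$ bounds against the bounded low-order $L^\infty$ bootstrap quantities, choosing $N$ large depending on the blow-up exponent $A_*$, to obtain the mild rate $\|\partial_x^k f\|_{L^\infty} \lesssim \updelta r^{-1/2}$ for $k \leq 3$; only after multiplying by the weights $r^2 e^{2(\mu-\lambda)} \lesssim r^3$ does one get errors of size $O(r^{1/2})$. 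Your final conclusion $|\mathscr{E}_{\mathscr{P}}| + |\mathscr{E}_{\mathscr{Q}}| \lesssim r^{\alpha_0}$ is correct, but the intermediate assertion of uniform derivative bounds needs to be replaced by this interpolation argument, which is the actual content of the AVTD step.
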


The final statement in Theorem~\ref{thm:bounce_rough} can be interpreted as an instability for $\mathscr{P}_{\gamma}$ along causal curves $\gamma$, in the following sense: imagine one had initial data as in Theorem~\ref{thm:global_rough}, chosen so that for some  $x \in \mathbb{S}^1$, $\dot{\phi}_D(x) = r \partial_r \phi(r_0, x)$ obeys $\upeta^{-1} < \dot{\phi}_D(x) < 1$ and $\frac{Q^2}{r_0^2} e^{2\mu_D}(x)$ small but nonzero. Then though existence towards $r = 0$ follows from Theorem~\ref{thm:global_rough}, the behaviour of $\mathscr{P}_{\gamma}(r) = r \partial_r \phi(\gamma(r))$ on any causal curve through $(r_0, x)$ is such that as $r \to 0$, $\mathscr{P}_{\gamma}(r)$ eventually transitions to something close to $\mathscr{P}_{\gamma}(r_0)^{-1}$. 

It is of particular interest to apply this instability in the context of \emph{perturbing} spacetimes where $Q = 0$ and thus by (i) one $\mathscr{P}_{\gamma, \infty} \approx r \partial_r \phi(r_0, x) < 1$ is allowed. By considering the specific case where $\gamma$ is a constant $x$-curve, then in the language of Theorem~\ref{thm:asymp_smooth}, this means $\Psi(x) < 1$ is allowed. 
Now consider perturbations such that charge is turned on i.e.~$Q$ is set to be nonzero (but small). The the resulting spacetime still exists up to $r = 0$, but the \emph{asymptotics change}; the new value of $\mathcal{P}_{\gamma, \infty}$ will now be such that $\mathcal{P}_{\gamma, \infty} \approx r \partial_r \phi(r_0, x)^{-1}$. Using familiar langauge that means $\Psi(x) \geq 1$ for the perturbed spacetime, and this demonstrates that the quantitative asymptotics of the original uncharged spacetime are themselves unstable. See Figure~\ref{fig:bounce_instability} and Corollary~\ref{cor:bounce}.

\begin{figure}[ht] 
    \centering

    \begin{tikzpicture}[scale = 0.8, every text node part/.style={align=center}]
        \coordinate (LLD) at (-9, 0);
        \coordinate (LLM) at (-9, 3);
        \coordinate (LLU) at (-9, 5);
        \coordinate (LRD) at (-2, 0);
        \coordinate (LRM) at (-2, 3);
        \coordinate (LRU) at (-2, 5);
    
        \begin{scope}[decoration=
            {markings, mark=at position 0.6 with {\arrow{>}}}
            ]
            \path[fill=lightgray, opacity=0.5] (LLU) -- (LLD) -- (LRD) -- (LRU) -- (LLU);
    
            \draw[decorate, decoration={brace, raise=5pt, amplitude=5pt}] (-9, 5.1) -- node[midway, above=10pt] {$x \in \mathbb{S}^1$} (-2, 5.1);
            \draw[dashed, postaction={decorate}] (LLU) -- node[midway, left] {$r$}  (LLD);  
            \draw[dashed, postaction={decorate}] (LRU) -- (LRD);  
            \draw[thick, dashed] (LLD) -- node[midway, below] {$\Psi(x) \approx {\color{blue}\dot{\phi}_D(x)}$, \\ $1/4 < \Psi(x) < 1/2$} (LRD);
            \draw[thick] (LLM) -- node[midway, above] {${\color{blue}\dot{\phi}_D(x)} \coloneqq r \partial_r \phi(r_0, x)$, \\ $1/4 < \dot{\phi}_D(x) < 1/2$} (LRM);

            \node (LC) at (-5.5, 1.5) {$Q = 0$};
        \end{scope}
    
        \coordinate (RRD) at (+9, 0);
        \coordinate (RRM) at (+9, 3);
        \coordinate (RRU) at (+9, 5);
        \coordinate (RLD) at (+2, 0);
        \coordinate (RLM) at (+2, 3);
        \coordinate (RLU) at (+2, 5);
    
        \begin{scope}[decoration=
            {markings, mark=at position 0.6 with {\arrow{>}}}
            ]

            \path[fill=lightgray, opacity=0.5] (RRU) -- (RRD) -- (RLD) -- (RLU) -- (RRU);
    
            \draw[decorate, decoration={brace, raise=5pt, amplitude=5pt}] (2, 5.1) -- node[midway, above=10pt] {$x \in \mathbb{S}^1$} (9, 5.1);
            \draw[dashed, postaction={decorate}] (RRU) -- node[midway, right] {$r$} (RRD);  
            \draw[dashed, postaction={decorate}] (RLU) -- (RLD);  
            \draw[thick, dashed] (RRD) -- node[midway, below] {$\Psi(x) \approx {\color{blue}\dot{\phi}_D(x)^{-1}}$, \\ $2 < \Psi(x) < 4$} (RLD);
            \draw[thick] (RRM) -- node[midway, above] {${\color{blue}\dot{\phi}_D(x)} \coloneqq r \partial_r \phi(r_0, x)$, \\ $1/4 < \dot{\phi}_D(x) < 1/2$} (RLM);

            \node (RC) at (5.5, 1.5) {$Q = \varepsilon$};
        \end{scope}

        \node at (0, 3.25) {\small data at $r = r_0$};
        \node at (0, -0.25) {\small asymptotics at $r = 0$};

        \draw[very thick, ->] (-1, 1.5) -- node[midway, above] {\small perturbation} (1, 1.5);
    \end{tikzpicture}

    \captionsetup{justification = centering}
    \caption{Example of a bounce instability arising from a charged perturbation of an uncharged spacetime. In the unperturbed spacetime (left) $Q = 0$ and the asymptotic quantity $\Psi(x)$ is close to the (inhomogeneous) data {\color{blue}$\dot{\phi}_D(x)$}, while in the charged spacetime (right) with $Q = \varepsilon$ the  bounce of Theorem~\ref{thm:bounce_rough} applied to constant $x$-curves $\gamma$ means that $\Psi(x)$ is instead close to {\color{blue}$\dot{\phi}_D(x)^{-1}$}. A similar diagram appears in \cite[Figure 1]{MeGowdyPaper}}
    \label{fig:bounce_instability}
\end{figure}
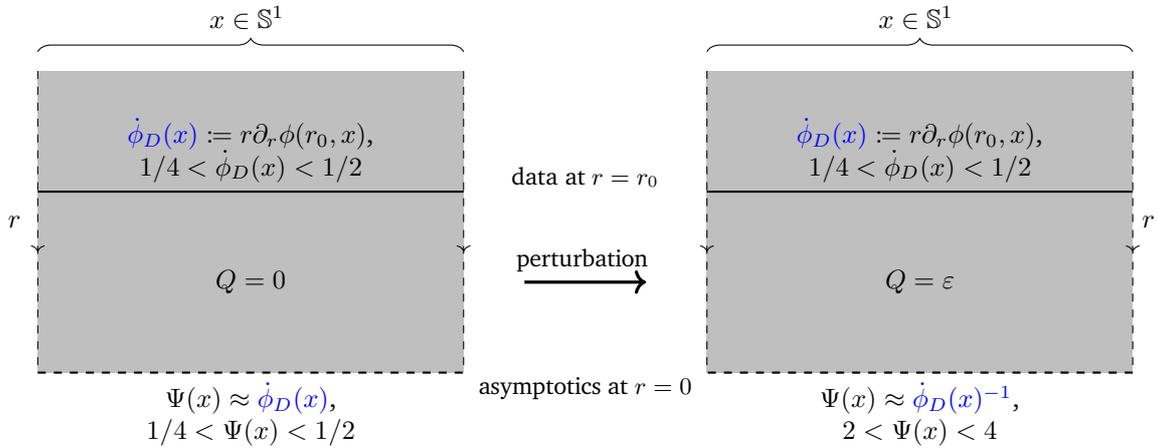

We return now to our conditions on initial data. The first condition, weak subcriticality, is engineered so that we can include spacetimes where the nonlinear instability above can occur, but also such that this nonlinear behaviour is not too wild. For instance, $r \partial_r \phi(r_0, x)$ is bounded below by $\upeta^{-1}$ else the transition from $r \partial_r \phi(r_0, x)$ to $r \partial_r \phi(r_0. x)^{-1}$ would be too drastic. In the proof, this means that $\mathscr{P}_{\gamma}$ and $\mathscr{Q}_{\gamma}$ as defined in Theorem~\ref{thm:bounce_rough} remain bounded.

The second condition of energy boundedness is natural as our proof will rely on energy estimates at top order. 
We also comment upon the form of the energy in \eqref{eq:energy_boundedness}. It is important that this energy is compatible with the asymptotics of Theorem~\ref{thm:asymp_smooth}, namely that for $(\phi, \lambda, \mu)$ as in that theorem
\[
    \frac{1}{2} \int_{\mathbb{S}^1} \left( (r \partial_r \partial_x^K f)^2 + r^2 e^{2(\mu - \lambda)} (\partial_x^{K+1} f)^2 + r^{2\upgamma} (\partial_x^K f)^2 \right) \, dx \to \frac{1}{2} \| \partial_x^K \Psi \|_{L^2(\mathbb{S}^1)}^2, \quad \text{ as } r \to 0
\]
Similar convergence holds for $\lambda$ and $\mu$. Thus there exists $\upzeta$ such that for $r_0 > 0$ sufficiently small the spacetimes of Theorem~\ref{thm:asymp_smooth} are compatible with \eqref{eq:energy_boundedness}. One the other hand, since $\upzeta$ can be chosen large, our results are not confined to near-homogeneous spacetimes. When $\upzeta$ is large, however, our final condition, closeness to singularity, is required to prevent spatial inhomogeneity from playing a major role in the dynamics. Note that $r_*$ and $\upzeta$ play a similar role to that of $\zeta_1$ and $\zeta_0$ in \cite[Theorem 12]{groeniger2023formation}.

\subsection{Relationship to BKL and the Kasner map} \label{sub:intro_bkl}

We relate our results to the physics and mathematical literature regarding spacelike singularities; this section is abridged from a similar in discussion in our companion paper \cite{MeGowdyPaper}, though modified to include discussion of the scalar field $\phi$ and electromagnetic field $\mathbf{F}$. 

In particular, we outline the approach of Belinski, Khalatnikov and Lifshitz (often abbreviated to BKL) in their heuristic investigation of near-singularity solutions to Einstein's equation \cite{kl63, bkl71, bk73, bk77, bkl82}. They propose the following ansatz for solutions to the Einstein--scalar field system near singularity: for a spacetime $\mathcal{M} = (0, T) \times \mathbf{\Sigma}^3$ with singularity located at $\{0\} \times \mathbf{\Sigma}$, they suggest the following leading order expansion for $\mathbf{g}$ and $\phi$:
\begin{equation} \label{eq:bkl}
    \mathbf{g} = - d \tau^2 + \sum_{I=1}^3 \tau^{2 p_I(x)} \mathbf{\omega}^I(x) \otimes \mathbf{\omega}^I(x) + \cdots, \qquad \phi = p_{\phi}(x) \log \tau + \cdots.
\end{equation}
Here $p_I(x), p_{\phi}(x)$ are functions on $\mathbf{\Sigma}$ known as \textit{generalized Kasner exponents}, and $\{\mathbf{\omega}^I(x)\}$ is a frame of $1$-forms on $\mathbf{\Sigma}$. By inserting these into \eqref{eq:einstein}--\eqref{eq:wave} and considering a formal power series in $t$, BKL determine that the generalised Kasner exponents must satisfy the \textit{generalized Kasner relations}:
\begin{equation} \label{eq:kasner_relation}
    \sum_{I = 1}^3 p_I(x) = 1, \quad \sum_{I = 1}^3 p_I^2(x) + 2 p_{\phi}^2(x) = 1.
\end{equation}

A further consistency check using the Einstein equations suggests that for the ansatz \eqref{eq:bkl} to remain valid up to $t = 0$, either one has $p_I > 0$ for all $I = 1,2,3$, or that whenever $p_I < 0$ is negative the associated one-form $\omega^I$ must be integrable in the sense of Frobenius, i.e.~$\omega^I \wedge d \omega^I = 0$. In vacuum (where $\phi = 0$ and thus $p_{\phi} = 0$), the Kasner relations \eqref{eq:kasner_relation} prohibits the $p_I$ all being positive, but once $\phi \neq 0$ then positivity of the $p_I$ is possible.

Next BKL \cite{bkl71} give heuristics explaining what happens if the above consistency check is violated. They first assume \emph{Asymptotically Velocity Term Dominated (AVTD)} behaviour, meaning that `spatial derivatives' occuring in the Einstein equations \eqref{eq:einstein} are subdominant in comparison to $\partial_{\tau}$-derivatives. Quantitatively, this means that the dynamics in the causal futures of distinct points $(0, p) \in \{0\} \times \mathbf{\Sigma}$ on the singularity are decoupled. 

Furthermore, their computation yields that in such a causal neighborhood, the ansatz \eqref{eq:bkl} is valid for $\tau \gg \tau_B$ for some critical time $\tau_B$, but for $\tau \ll \tau_B$ the metric and scalar field will transition to something that resembles \eqref{eq:bkl} but with $p_I(x)$, $\omega^I(x)$ and $p_{\phi}(x)$ replaced by some new $\acute{p}_I(x)$, $\acute{\mathbf{\omega}}^I(x)$ and $\acute{p}_{\phi}(x)$. While $\tau \sim \tau_B$, the spacetime undergoes a nonlinear transition often denoted in the literature as a \emph{BKL} or \emph{Kasner bounce}. These nonlinear transitions, or bounces, will continue to occur either indefinitely, or until the spacetime settles to something satisfying the consistency check.

Miraculously, BKL actually propose a formula related the generalized exponents post-transition and pre-transition: if $p_1 < 0$ and $\omega^1$ is the one-form responsible for the failure above, the
\begin{equation} \label{eq:kasner_relation_bounce}
    \acute{p}_1 = - \frac{p_1}{1 + 2 p_1}, \quad \acute{p}_2 = \frac{p_2 + 2 p_1}{1 + 2 p_1}, \quad \acute{p}_3 = \frac{p_3 + 2 p_1}{1 + 2 p_1}, \quad \acute{p}_{\phi} = \frac{p_{\phi}}{1 + 2 p_1}.
\end{equation}
In \cite{bk73}, it is commented that if $p_{\phi}$ is originally nonzero and the above map is precise, then there will only be finitely many such transitions before the spacetime settles to a regime where all of the $p_I$ are positive.

We relate this to our results in surface symmetry. 
As our choice of time variable in \eqref{eq:surface_sym_intro} is not $\tau$ in \eqref{eq:bkl} but instead the area-radius $r$, we need to make a change of variables,
such that one may interpret the above heuristics in terms of the function $\Psi(x)$ of Theorem~\ref{thm:asymp_smooth}. Due to \eqref{eq:mu_asymp}, we relate $\tau$ to $r$ by $d \tau = e^{\mu} dr \sim r^{\frac{\Psi^2 + 1}{2}} dr$. Roughly setting $\omega^1(x) = e^{\lambda}dx$, and $\omega^2(x), \omega^3(x)$ to correspond to the surface of symmetry $\Sigma$, we can make a formal correspondence between the gauge \eqref{eq:surface_sym_intro} and the ansatz \eqref{eq:bkl}, and the generalised Kasner exponents we get are exactly those in \eqref{eq:kasnerlike_intro}.

According to \eqref{eq:kasnerlike_intro}, the generalized Kasner exponents $p_I(x)$ are positive if and only if $\Psi(x)^2 > 1$, thus we expect $|\Psi(x)| = 1$ to be a threshold between stable and unstable behaviour. Note, however, that since $\omega^1(x) = e^{\lambda} dx$ is integrable in the sense of Frobenius, the BKL inconsistency described above is not triggered in the Einstein--scalar field model. In order to see $|\Psi(x)| = 1$ as a genuine threshold, we introduce more matter to source the right hand side of the Einstein equations, namely the Maxwell field $\mathbf{F}$.

It is a curious coincidence that upon introducing non-trivial $\mathbf{F}$, the BKL ansatz \eqref{eq:bkl} is once again inconsistent with the Einstein--Maxwell equations when one of the $p_I$ is negative. Furthermore, in \cite{bk77} Belinski and Khalatnikov suggest that the BKL bounce phenomenon described above can also occur due to the presence of $\mathbf{F}$, and moreover the transition map \eqref{eq:kasner_relation_bounce} for the exponents $\acute{p_I}, \acute{p}_{\phi}$ is identical in this case. We refer the reader to \cite[Section 1.5]{Me_Kasner} and \cite[Chapter 4]{BelinskiHenneaux} for further comments regarding the role of electromagnetism.

In any case, one expects that upon adding $\mathbf{F}$, our surface symmetric spacetimes will remain stable and self-consistent when $\Psi(x)^2 > 1$, but will be subject to an instability or inconsistency when $\Psi(x)^2 < 1$. Furthermore, via the correspondence \eqref{eq:kasnerlike_intro} and the transition map \eqref{eq:kasner_relation_bounce}, one can check that a bounce corresponds to a transition of the form $\Psi(x) \mapsto \acute{\Psi}(x) = \Psi(x)^{-1}$, after which $\acute{\Psi}^2 > 1$ and one returns to the stable regime. In particular, \emph{in surface symmetry one expects at most one bounce}. This is exactly what is described, at least along some causal curve, in Theorem~\ref{thm:bounce_rough}.

We comment upon how the BKL transition map \eqref{eq:kasner_relation_bounce} was found. The idea of BKL was to assume, due to AVTD behaviour, that in the future light cone of any idealized point on the singularity, the metric $\mathbf{g}$ and any matter fields $\mathbf{\Psi}$ are well-approximated by something spatially homogeneous. Their analysis thus reduces to the spatially homogeneous case, where the dynamics reduce to a system of finite dimensional nonlinear autonomous ODEs. The BKL bounce map \eqref{eq:kasner_relation_bounce} then arises from the instability of certain fixed points in this ODE system, or more precisely certain heteroclinic orbits that emanate from such fixed points. On the other hand, if say all the $p_I$ are positive then the associated fixed point is actually stable and the nonlinear behaviour of the ODE system is suppressed.

Finally, we briefly review the mathematical literature regarding the BKL ansatz \eqref{eq:bkl}. Most mathematical works are related to the stable case (i.e.~$p_I > 0$), the biggest breakthrough being the work of Fournodavlos--Rodnianski--Speck \cite{FournodavlosRodnianskiSpeck}, showing the nonlinear stability (outside of symmetry) of the generalized Kasner spacetimes with $p_I > 0$ in the Einstein--scalar field system, and the recent generalization by Oude Groeniger--Petersen--Ringstr\"om \cite{groeniger2023formation}. See also \cite{RodnianskiSpeck1, RodnianskiSpeck2, SpeckS3, BeyerOliynyk, FajmanUrban}. There are also results which involve prescribing the asymptotic data i.e.~$p_I(x), p_{\phi}(x), \omega^I(x)$ in \eqref{eq:bkl} and ``solving backwards'' to find a spacetime which achieves this near-singularity ansatz at leading order, see for instance \cite{AnderssonRendall, DHRW, FournodavlosLuk}. A recent result of the author \cite{li2024scattering} aligns these points of view, describing a Hilbert space isomorphism between regular Cauchy initial data and the asymptotic data in the context of the \emph{linearized} Einstein--scalar field system around Kasner.

In the unstable case, i.e.~in the study of nonlinear bounces, to the best of the authors knowledge all previous work concerns only spatially homogeneous spacetimes, where the dynamics reduce to ODEs as described above. Rigorous mathematical works include studies of solutions for various Einstein--matter systems in Bianchi symmetry \cite{Weaver_bianchi, RingstromBianchi}, as well as a recent work of the author together with Van de Moortel \cite{MeVdM} where we consider spherically symmetric spacetimes with an extra symmetry in the $\partial_x$ direction\footnote{Such spacetimes are often known as Kantowski-Sachs cosmologies.} solving the Einstein--Maxwell--\emph{charged} scalar field model. In particular \cite{MeVdM} introduces already the transition $\Psi(x) \mapsto \acute{\Psi}(x) = \Psi(x)^{-1}$.

Therefore the current article and our companion article \cite{MeGowdyPaper} are the first works to understand BKL bounces, albeit only a single such bounce, \emph{outside of the spatially homogeneous setting}. A natural conjecture would be to understand an analogue of this result for the full $1+3$-dimensional Einstein--scalar field outside of symmetry.

\begin{conjecture}
    There exists an open set of initial data for the Einstein scalar field equations with \underline{no symmetry assumptions} such that the maximal (past) development arising from such initial data terminates in a Kasner-like spacelike singularity at $t = 0$, but such that the intermediate dynamics (between data and the singularity) exhibit one, or potentially a finite number, of BKL bounces.
\end{conjecture}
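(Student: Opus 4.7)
The plan is to build on the strategy exemplified by Theorems~\ref{thm:global_rough} and \ref{thm:bounce_rough} by first identifying a family of \emph{reference spatially homogeneous spacetimes} which undergo a prescribed finite number of bounces, and then constructing an open set of initial data as inhomogeneous perturbations of the Cauchy data induced on some hypersurface $\{ t = t_0 \}$ of such a reference spacetime. The reference homogeneous dynamics are well understood: they reduce to a finite-dimensional autonomous ODE system on Bianchi~VIII/IX symmetry, in which individual bounces correspond to heteroclinic excursions away from the Kasner circle of fixed points, as described in \cite{Weaver_bianchi, RingstromBianchi, LiebscherRendallTchapnda, BeguinDutilleul}. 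I would first select a reference orbit that executes $n$ bounces and then, between bounces, spends a long time near each Kasner fixed point; the assumption that the scalar field is nontrivial guarantees that after finitely many such bounces the orbit enters the subcritical region $p_I > 0$ for all $I$, as already observed in \cite{bk73}.

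The core technical step is then to propagate \emph{AVTD behaviour} for the inhomogeneous solution through each bounce, ensuring that the dynamics in the causal past of distinct points on the singularity decouple and each such past is well-approximated by the reference Bianchi orbit. Following the blueprint of the present article and \cite{MeGowdyPaper}, I would set up a bootstrap argument in a Fermi-type gauge adapted to the reference spacetime, with bootstrap assumptions of two types: (i) pointwise control of low-order dynamical quantities playing the role of $\mathscr{P}_{\gamma}$ and $\mathscr{Q}_{\gamma}$ in Theorem~\ref{thm:bounce_rough}, now enlarged to include analogues for each of the three spatial directions of the frame and for each non-integrable (non-Frobenius) commutator that can trigger a bounce; and (ii) high-order $L^2$-based energy estimates with time-weights compatible with the Kasner-like asymptotics expected on either side of each bounce, in analogy with \eqref{eq:energy_boundedness}. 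Propagation of (i) reduces, at leading order, to an ODE analysis whose critical points mirror the BKL bounce map \eqref{eq:kasner_relation_bounce}, much as in Theorem~\ref{thm:bounce_rough}.

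The hard part will be closing the top-order energy estimates \emph{through} each bounce. In symmetry one exploits algebraic structure which forces certain top-order error terms to become total derivatives or drop to lower order; outside of symmetry, one expects the top-order energy to grow during a bounce by a factor governed by the bounce map itself, and the key point is to bound this growth uniformly in $n$ and with integrable-in-time cumulative effect. I would attempt this by identifying a commuting vector field structure adapted to the reference Bianchi orbit, in the spirit of the $Z$-type commutators of Fournodavlos--Rodnianski--Speck, together with a modified energy whose weights readjust at each bounce, and by exploiting smallness of the inhomogeneity in a suitable weighted Sobolev norm on the perturbation. A secondary obstacle is the absence of a natural geometric small parameter (the analogue of $r$ in \eqref{eq:closeness_singularity}) outside of symmetry, which I expect to handle by choosing the initial hypersurface $\{t = t_0\}$ far enough along the reference orbit that the last Kasner epoch dominates the norms of spatial derivatives.

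Once the bootstrap is closed up to a time $t_B$ at which the ``post-final-bounce'' reference exponents satisfy $p_I(x) > 0$ uniformly in $x$, the inhomogeneous solution has settled into the stable regime. At that point one may invoke a local-in-space version of the nonlinear stability theorem of Fournodavlos--Rodnianski--Speck \cite{FournodavlosRodnianskiSpeck}, or of its generalization \cite{groeniger2023formation}, to extend the solution up to $\{t = 0\}$ and extract a genuine Kasner-like spacelike singularity with the prescribed asymptotic data $(p_I(x), p_{\phi}(x), \omega^I(x))$, completing the proof. Openness of the constructed data set in the $C^{\infty}$ topology then follows from the quantitative nature of the bootstrap combined with the openness of the stability result invoked in the final step.
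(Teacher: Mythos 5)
The statement you are addressing is a \emph{conjecture}: the paper explicitly presents it as an open problem (``A resolution of this conjecture would present a key step\ldots'') and offers no proof of it, so there is nothing in the paper to compare your argument against line by line. What you have written is a research programme, not a proof, and its central step is precisely the part that remains open. You yourself flag it: ``The hard part will be closing the top-order energy estimates \emph{through} each bounce.'' Outside of symmetry there is currently no known mechanism that controls the loss of derivatives and the growth of top-order energies across a bounce; in the surface-symmetric and Gowdy settings of this paper and \cite{MeGowdyPaper} the closure relies on the reduction to a $1+1$-dimensional wave--transport system, on the exact conserved quantity $\mathscr{K}$ of the bounce ODE, and on the fact that each spatial derivative costs only $r^{-\upgamma}$ (Step~2 of Section~\ref{sub:intro_proof}) --- none of which is available, or even formulated, in $1+3$ dimensions without symmetry. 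Asserting that one will ``identify a commuting vector field structure'' and ``a modified energy whose weights readjust at each bounce'' names the desired output of the missing argument rather than supplying it.

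Two further points would need repair even as an outline. First, invoking the stability theorem of Fournodavlos--Rodnianski--Speck \cite{FournodavlosRodnianskiSpeck} (or \cite{groeniger2023formation}) after the last bounce is circular with the main difficulty: those results require the solution to be close to a subcritical Kasner solution in high-order norms at the post-bounce time, and producing exactly that closeness \emph{uniformly} after propagating through the bounces is the unclosed estimate. Second, the choice of reference dynamics matters: for the Einstein--scalar field system the finitely-many-bounces regime is not tied to Bianchi~VIII/IX, and one must also justify why the non-Frobenius one-forms $\omega^I$ responsible for triggering bounces in the BKL picture can be controlled dynamically (in the present paper this issue is sidestepped because $\omega^1 = e^{\lambda}dx$ is integrable and the bounce is instead sourced by the Maxwell field). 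In short, your proposal is a reasonable sketch of how one might attack the conjecture, but it does not prove it, and the paper does not claim to either.
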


A resolution of this conjecture would present a key step towards understanding the full heuristics of BKL in $1+3$-dimensional vacuum; the latter problem is substantially more difficult since \cite{bkl71} suggests that in vacuum there are infinitely many BKL bounces and that the ODE dynamics are chaotic. As mentioned above, addition of the scalar field allows study of a regime where there are only finitely many bounces.

\subsection{Sketch of the proof} \label{sub:intro_proof}

The global existence result Theorem~\ref{thm:global_rough} and the bounce result Theorem~\ref{thm:bounce_rough} are proven simultaneously and the proof consists of three major steps. We illustrate the key steps using two of the equations in the Einstein--Maxwell--scalar field system written in the gauge \eqref{eq:surface_sym_intro}, see Section~\ref{sub:emsfss}. These two equations are the $\mu$ evolution equation \eqref{eq:mu_evol} and the wave equation for $\phi$ \eqref{eq:phi_wave}. Letting $\kappa = 0$ for simplicity, these equations are:
\begin{gather}
    r \partial_r \mu = \frac{1}{2} \left( (r \partial_r \phi)^2 + r^2 e^{2(\mu - \lambda)} (\partial_x \phi)^2 + 1 - \frac{Q^2}{r^2}e^{2\mu} \right), \label{eq:mu_evol_intro} \\[0.3em]
    r \partial_r (r \partial_r \phi) = r^2 e^{2 (\mu - \lambda)} \partial_x^2 \phi + r^2 e^{2(\mu - \lambda)} \partial_x (\mu - \lambda) \partial_x \phi - \frac{Q^2}{r^2} e^{2\mu} \, r \partial_r \phi. \label{eq:phi_wave_intro}
\end{gather}

We now describe the three major steps as follows:
\begin{enumerate}[Step 1:]
    \item
        \textbf{Analysis in the spatially homogeneous case:  } Here, this involves setting all terms involving a $\partial_x$-derivative to zero, and understanding the resulting nonlinear ODE system.

    \item
        \textbf{Linearization of the ODE system: } In the second step, we consider the \emph{best possible behaviour} of the terms involving a $\partial_x$-derivative. We do this by taking commuting $\partial_x$ with the ODE system in Step 1, resulting in a new linear ODE system for certain $\partial_x$ derivatives whose coefficients are given by the solution (i.e.~some orbit) of the ODE system in Step 1.

    \item
        \textbf{Energy estimates: } Finally, one must ensure that we can close our argument without loss of derivatives (necessary due to the existence of top order terms such as $\partial_x^2 \phi$ in \eqref{eq:phi_wave_intro}.) One achieves this via $L^2$ energy estimates, which are allowed to blow up but only at a mild rate as $r \to 0$.
\end{enumerate}

We now explain in more detail how each of these steps applies to our simplified system \eqref{eq:mu_evol_intro}--\eqref{eq:phi_wave_intro}. For Step 1, we remove the terms with $\partial_x$-derivatives in these equations. Letting $\mathscr{P} = r \partial_r \phi$ and $\mathscr{Q} = \frac{Q^2}{r^2} e^{2\mu}$, upon removing these terms and rephrasing in terms of $\mathscr{P}$ and $\mathscr{Q}$ we yield the ODE system:
\begin{equation} \label{eq:ode_pq}
    r \partial_r \mathscr{P} = - \mathscr{P} \mathscr{Q}, \qquad r \partial_r \mathscr{Q} = \mathscr{Q} (\mathscr{P}^2 - 1 - \mathscr{Q}).
\end{equation}
We restrict attention to the region $\mathscr{P} > 0, \mathscr{Q} \geq 0$. Then this ODE system contains a line of fixed points at $\mathscr{Q} = 0$, each of which represents an exact generalized Kasner solution to the Einstein--scalar field system (with $\mathbf{F} = 0$). Moreover these fixed points are (orbitally) stable if $\mathscr{P} \geq 1$ and unstable otherwise. (Note that stability is meant in the direction $r \to 0$.)

The dynamics of the ODE system in the remaining region $\mathscr{P} > 0, \mathscr{Q} > 0$ may be described as a union of \emph{heteroclinic orbits} linking an unstable fixed point to a stable fixed point. In fact, the dynamics can be solved exactly due to the fact that $\mathscr{K} \coloneqq \mathscr{P} + \mathscr{P}^{-1} + \mathscr{Q} \mathscr{P}^{-1}$ turns out to be a conserved quantity of the system. As a consequence, these heteroclinic orbits, which we identify as the BKL bounces, link the unstable fixed point $(\mathscr{P} = \alpha, \mathscr{Q} = 0)$ to the stable fixed point $(\mathscr{P} = \alpha^{-1}, \mathscr{Q} = 0)$, for any $\alpha > 1$. The phase portrait for the ODE system is given below in Figure~\ref{fig:phase_portrait}.

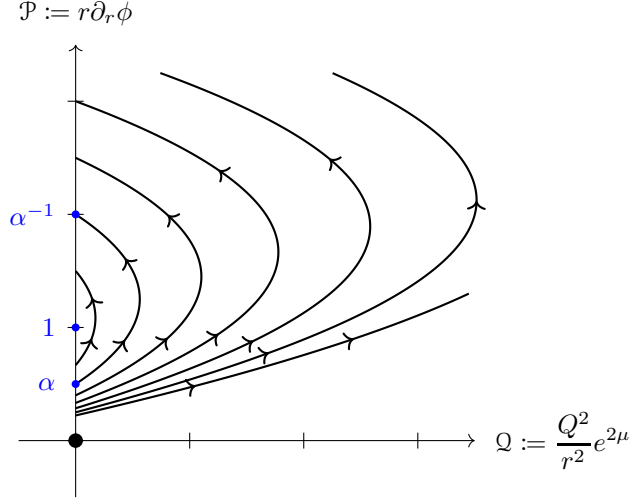
\begin{figure}[ht]
    \centering
    \begin{tikzpicture}[scale=1.5, decoration={markings, 
        mark=at position 0.3 with {\arrow{>}}, mark=at position 0.7 with{\arrow{>}}}]
        \draw[->] (-0.5,0) -- (3.5, 0) node[right, xshift=4pt] {$\mathscr{Q} \coloneqq \displaystyle{\frac{Q^2}{r^2} e^{2\mu}}$};
        \draw[->] (0,-0.5) -- (0, 3.5) node[above, yshift=4pt] {$\mathscr{P} \coloneqq r \partial_r \phi$};
        
        \foreach \pos in {0, 1, 2, 3}
          \draw[shift={(\pos,0)}] (0pt,2pt) -- (0pt,-2pt) node[below] {$ $};
        \foreach \pos in {0 ,1,2,3}
          \draw[shift={(0,\pos)}] (2pt,0pt) -- (-2pt,0pt) node[left] {$ $};
        
        \fill (0,0) circle (0.064cm);
        
        \foreach \alf in {1.5, 2, 2.5, 3}
            \draw[thick, variable=\p, domain=1/\alf:\alf, samples=100, postaction={decorate}]
                plot ({-(\p - 1/\alf)*(\p - \alf)}, {\p});
        \foreach \alf in {3.5, 4.0}
            \draw[thick, variable=\p, domain=1/\alf:3.25, samples=100, postaction={decorate}]
                plot ({-(\p - 1/\alf)*(\p - \alf)}, {\p});
        \draw[thick, variable=\p, domain=1/4.5: 1.3, samples=100, postaction={decorate}]
            plot ({-(\p - 1/4.5)*(\p - 4.5)}, {\p});
        
        \fill[blue] (0, 1) circle (1pt)
            node [left ,fill=white,xshift=-4pt] {$1$};
        \fill[blue] (0, 0.5) circle (1pt)
            node [left ,fill=white,xshift=-4pt] {$\alpha$};
        \fill[blue] (0, 2.0) circle (1pt)
            node [left ,fill=white,xshift=-4pt] {$\alpha^{-1}$};
    \end{tikzpicture}
    \captionsetup{justification = centering}
    \caption{Phase portrait showing the dynamics of $\mathscr{P}$ and $\mathscr{Q}$ towards $r = 0$ in the exactly spatially homogeneous case. A bounce is a transition from {\color{blue} $\alpha$} to {\color{blue} $\alpha^{-1}$}.}
    \label{fig:phase_portrait}
\end{figure}

We link this back to our conditions preceding Theorem~\ref{thm:global_rough}. In particular, that the requirement \eqref{eq:weak_subcriticality} of weak subcriticality means that even upon evolution of $\mathscr{P}$ and $\mathscr{Q}$ according to \eqref{eq:ode_pq}, the dynamics will remain in a bounded portion of the $(\mathscr{P}, \mathscr{Q})$-plane, and importantly this will remain true if we introduce small error terms in \eqref{eq:ode_pq}. This will be essential in both Step 2 and Step 3.

We now move to Step 2, which involves commuting \eqref{eq:ode_pq} with $\partial_x$. In our case, it is actually more convenient instead to find evolution equations for $\mathscr{M} = \partial_x \mu = \partial_x \log \mathscr{Q}$ and $\mathscr{N} = \partial_x \mathscr{P}$. The idea is that commutation will yield a \emph{linear} ODE system for $\mathscr{M}$ and $\mathscr{N}$ whose coefficients are functions of the (dynamical) $\mathscr{P}$ and $\mathscr{Q}$, namely the linear system
\begin{equation} \label{eq:ode_mn}
    r \partial_r \mathscr{M} = - \mathscr{Q} \mathscr{M} + \mathscr{P} \mathscr{N}, \qquad r \partial_r \mathscr{N} = - 2 \mathscr{P} \mathscr{Q} \mathscr{N} - \mathscr{Q} \mathscr{N}.
\end{equation}
One then uses \eqref{eq:ode_mn} to find estimates for $\mathscr{M}$ and $\mathscr{N}$ that are \emph{uniform} in the choice of dynamical orbit for $\mathscr{P}$ and $\mathscr{Q}$ (at least for orbits compatible with the boundedness property above). One finds that for any $\upgamma > 0$, one has\footnote{Actually, for the exact system \eqref{eq:ode_mn}, it can be shown that $\mathscr{M} = O((\log r)^2), \mathscr{N} = O(\log r)$, but since in the eventual analysis we encounter error terms it will be more straightforward to allow a little loss.} $\mathscr{M}, \mathscr{N} = O(r^{- \upgamma})$. 

The outcome of Step 2 is that heuristically each $\partial_x$-derivative comes with a loss of $r^{- \upgamma}$. Since one may choose $\upgamma < 1$, this confirms AVTD behaviour in the sense that $\partial_x$-derivatives cost less than a $\partial_r$-derivative. For instance, one expects $\partial_x \phi = O(r^{-\upgamma}), \partial_x (\mu - \lambda) = O(r^{-\upgamma})$, and $\partial_x^2 \phi = O(r^{- 2 \upgamma})$ in the equations \eqref{eq:mu_evol_intro}--\eqref{eq:phi_wave_intro}. Using this, as well as that $e^{\mu - \lambda} = O(r)$, one verifies that the terms involving a $\partial_x$-derivative, which we threw away in Step 1, are all of size $O(r^{4 - 2 \upgamma})$. So these are integrable with respect to $\frac{dr}{r}$ towards $r = 0$, and may be treated as negligible errors in the ODEs \eqref{eq:ode_pq}.

Step 3 concerns turning the above paragraph into rigorous bounds. In particular, one must overcome the issue of \emph{derivative loss} due to terms such as $\partial_x^2 \phi$. This is achieved via energy estimates; define for instance the $K$th order energy $\mathcal{E}^{(K)}_{\phi}(r)$ as:
\[
    \mathcal{E}^{(K)}_{\phi}(r) = \frac{1}{2} \int_{\mathbb{S}^1} \left( (r \partial_r \partial_x^K \phi(r, x) )^2 + r^2 e^{2(\mu - \lambda)} (\partial_x^{K+1}\phi(r, x))^2 + (\partial_x^K \phi (r, x))^2 \right) \, dx.
\]
To estimate $\mathcal{E}^{(K)}_{\phi}(r)$, we commute the wave equation \eqref{eq:phi_wave_intro} with $\partial_x^K$, yielding:
\begin{multline*}
    (r \partial_r)^2 \partial_x^K \phi = r^2 e^{2 (\mu - \lambda)} \partial_x^{K+2} \phi - \overbrace{\frac{Q^2}{r^2}e^{2\mu} \left( r \partial_r \partial_x^{K} \phi + (2 \partial_x^K \mu) (r \partial_r \phi) \right)}^{\mathrm{(I)}} \\
    + \underbrace{3 r^2 e^{2(\mu - \lambda)} \partial_x (\mu - \lambda) \partial_x^{K+1} \phi}_{(\mathrm{II})} - \underbrace{\frac{Q^2}{r^2} e^{2\mu} (2 \partial_x \mu) (r \partial_r \partial_x^{K-1} \phi)}_{(\mathrm{III})} + \cdots.
\end{multline*}
There are many more terms in the $\cdots$, but the terms present here will illustrate the ideas necessary in the energy estimate. A standard argument yields the derivative estimate
\[
    \left| r \frac{d}{dr} \mathcal{E}^{(K)}_{\phi}(r) \right| \leq \sqrt{ 2 \mathcal{E}^{(K)}_{\phi}(r) } \cdot \left( \| (\mathrm{I}) + (\mathrm{II}) + (\mathrm{III}) + \cdots \|_{L^2} \right).
\]

It thus suffices to estimate each of $( \mathrm{I} )$, $( \mathrm{II})$ and $(\mathrm{III})$ in $L^2$. Here we need to use Step 1 and Step 2. For the first term in $(\mathrm{I})$, we use that $\frac{Q^2}{r^2}e^{2\mu}$ is less than the maximum value of $\mathscr{Q}$ over all orbits, which is bounded, while $\|r \partial_r \partial_x^K \phi\|_{L^2} \leq \sqrt{2 \mathcal{E}^{(K)}_{\phi}(r)}$ by definition. For the second term in $(\mathrm{I})$, we also use that $r \partial_r \phi$ is less than the maximum value of $\mathscr{P}$, while $\partial_x^K \mu$ is controlled by (the square root) of some other $K$th order energy. We thus conclude that for some constant $A_*$ depending only on $\upeta$,
\[
    \| (\mathrm{I}) \|_{L^2} \leq A_* \sqrt{ 2 \mathcal{E}_{\phi}^{(K)} (r) }.
\]

For the expression $(\mathrm{II})$, we use that $\| r e^{(\mu - \lambda)} \partial_x^{K+1} \phi \|_{L^2} \leq \sqrt{2 \mathcal{E}_{\phi}^{(K)}(r)}$, while the prefactor $r e^{\mu - \lambda} \partial_x(\mu - \lambda)$ is of size $O(r^{2 - \upgamma})$ by Step 2. So there is a constant $C_{\upeta, \upzeta, K}$ depending on $\upeta$, the data and also\footnote{The dependence on the regularity index $K$ is because there are secretly more terms like $(\mathrm{II})$ hidden in $\cdots$.} $K$ so that
\[
    \| (\mathrm{II}) \|_{L^2} \leq C_{\upeta, \upzeta, K} \, r^{2 - \upgamma} \sqrt{ 2 \mathcal{E}_{\phi}^{(K)} (r) }.
\]
Finally, in $(\mathrm{III})$, by Step 1 and Step 2 we know that $\frac{Q^2}{r^2}e^{2\mu} (2 \partial_x \mu)$ is $O(r^{- \upgamma})$. This seems alarming, since this is not integrable with respect to $\frac{dr}{r}$ as $r \to 0$. But the other factor in $(\mathrm{III})$, $r \partial_r \partial_x^{K-1} \phi$ is not dependent on the $K$th order energy, but instead the $(K - 1)$th order energy $\sqrt{2 \mathcal{E}^{(K-1)}_{\phi}(r)}$. So:
\[
    \| (\mathrm{III}) \|_{L^2} \leq C_{\upeta, \upzeta, K} \, r^{- \upgamma} \sqrt{ 2 \mathcal{E}_{\phi}^{(K-1)} (r) }.
\]

Combining all of the above, we obtain the following derivative estimate for $K \geq 1$:
\begin{equation} \label{eq:energy_der_intro}
    \left| r \frac{d}{dr} \mathcal{E}^{(K)}_{\phi}(r) \right| \leq 2 ( A_* + C_{\upeta, \upzeta, K} \, r^{2 - \upgamma} ) \mathcal{E}^{(K)}_{\phi}(r)  + 2  C_{\upeta, \upzeta, K} \, r^{- \upgamma} \sqrt{ \mathcal{E}^{(K)}_{\phi} (r) } \cdot \sqrt{ \mathcal{E}^{(K-1)}_{\phi}(r)} + \cdots.
\end{equation}
When $K = 0$, the last term on the right hand side is absent. Because of the $2 A_*$, even the $0$th order energy $\mathcal{E}^{(0)}(r)$ will blow up as $r^{-2 A_*}$ as $r \to 0$. Furthermore, the appearance of $r^{- \upgamma}$ means that as $K$ increases the rate of blow up also increases. But in any case, one uses this expression to find an energy bound
\[
    \mathcal{E}^{(K)}(r) \leq D_{\upeta, K, \upzeta} \, r^{-2 A_* - 2 K \upgamma},
\]
where $D_{\upeta, K, \upzeta}$ depends on $\upeta$ and the data, as well as the regularity index $K$. It is crucial that $A_*$ is \emph{independent of $K$}. This is because applying an $L^2$-$L^{\infty}$ interpolation argument to the above bound, one shows e.g.~that $\partial_x^2 \phi = O(r^{-2 \gamma - \updelta})$, where $\updelta \to 0$ as $N \to \infty$, where $N$ is the maximum regularity index for which we perform the energy estimate.

To apply Steps 1 to 3 in the nonlinear problem, one uses a standard bootstrap argument; note we actually perform Step 3 first. That is, one first bootstraps $L^{\infty}$-bounds on $0$th order and $1$st order quantities, see \eqref{eq:bootstrap_maxwell}--\eqref{eq:bootstrap_d2x}, and uses this to derive the energy estimate \eqref{eq:energy_der_intro}. From the energy estimate, the interpolation argument is used to control expressions such as $\partial_x^2 \phi$, which are then treated as error terms in the ODE analysis of Step 1 and Step 2. The nonlinear ODE analysis then allows us to improve the bootstrap assumptions \eqref{eq:bootstrap_maxwell}--\eqref{eq:bootstrap_d2x}, completing both the proof of global existence (Theorem~\ref{thm:global_rough}) and first part of the bounce theorem (Theorem~\ref{thm:bounce_rough}). The remainder of Theorem~\ref{thm:bounce_rough} follows from more detailed ODE analysis.

\subsection{Comparison to results in Gowdy symmetry} \label{sub:intro_gowdy}

In the final part of the introduction, we make an explicit comparison between the present article and those of our companion article \cite{MeGowdyPaper}, since there is almost a direct translation between both the methods and the results of the two papers. 

Gowdy symmetric spacetimes have a long history in the mathematical literature, see \cite{RingstromGowdyReview} and references therein, and concern metrics of the form:
\begin{equation} \label{eq:gowdy}
    \mathbf{g} = - t^{-\frac{1}{2}}e^{\frac{\lambda}{2}} ( - dt^2 + d \theta^2) + t [ e^P (d \sigma + Q d \delta)^2 + e^{-P}d \delta^2],
\end{equation}
where $P, Q, \lambda: (0, +\infty) \times \mathbb{S}^1 \to \R$ depend only on the coordinates $t \in (0, + \infty)$ and $\theta \in \mathbb{S}^1$. The Einstein \emph{vacuum} equations \eqref{eq:einstein} with $\mathbf{T} = 0$ then take the form of a $1+1$-dimensional wave--transport system for $P, Q, \lambda$. Further, the boundary $\{ t = 0 \}$ is (generically) a spacelike singularity exhibiting curvature blowup; in other words Strong Cosmic Censorship \cite{RingstromGowdySCC2} is known for such spacetimes.

With this setup complete, we now provide our ``dictionary'' linking the current paper to \cite{MeGowdyPaper}:
\begin{itemize}
    \item
        Due to the reduction to $1+1$-dimensions, the Gowdy spacetimes have one dynamical time variable $t$ and spatial variable $\theta \in \mathbb{S}^1$, in comparison to $r$ and $x \in \mathbb{S}^1$ in our surface symmetric spacetimes. 

    \item
        Just as our surface symmetric spacetimes may be described as Kasner--like using Theorem~\ref{thm:asymp_smooth}, with the exponents given using $\Psi(x) = \lim_{r \to 0} r \partial_r \phi(r, x)$ and the correspondence \eqref{eq:kasnerlike_intro}, the $\{ t = 0 \}$ singularity of Gowdy spacetimes can also be described as Kasner--like using the asymptotic quantity $V(\theta) = \lim_{t \to 0} (- t \partial_t P(t, \theta))$ and the analogous correspondence
        \begin{equation} \label{eq:kasnerlike_intro_gowdy}
            p_1(\theta) = \frac{V^2(\theta) - 1}{V^2(\theta) + 3}, \quad p_2(\theta) = \frac{2 - 2 V(\theta)}{V^2(\theta) + 3}, \quad  p_3(\theta) = \frac{2 + 2 V(\theta)}{V^2(\theta) + 3}.
        \end{equation}
        Note that since our Gowdy spacetimes solve the Einstein \emph{vacuum} equations the exponents are distinct and verify the Kasner relations \eqref{eq:kasner_relation} with $p_{\phi} \equiv 0$.

        It was proved in \cite{RingstromGowdySCC1} that $V(\theta)$ always exists, and that for \emph{generic} solutions one has $0 < V(\theta) < 1$ for all $\theta \in \mathbb{S} \setminus S$, where $S$ is a finite set consisting of so-called ``spikes'' where $V$ fails to be continuous.

    \item
        There is a distinguished class of Gowdy spacetimes called the \emph{polarized Gowdy spacetimes} for which $Q$ in \eqref{eq:gowdy} \emph{vanishes identically}. For such polarized Gowdy spacetimes, the asymptotic quantity $V(\theta)$ is smooth and may be prescribed freely i.e.~no longer has the restriction $0 < V(\theta) < 1$.

        This can be held in direct analogy with the \emph{uncharged} Einstein--scalar field solutions in surface symmetry; in light of Theorem~\ref{thm:asymp_smooth}, at least if $\kappa \in \{ 0, +1\}$ then the function $\Psi(x)$ can similarly be prescribed freely. On the other hand, if $\mathbf{F} \neq 0$ then necessarily $|\Psi(x)| \geq 1$.

    \item
        Moving on to the results, \cite[Theorem 1.1]{MeGowdyPaper} and Theorem~\ref{thm:global_rough} both characterise initial data for the (unpolarised) Gowdy system at $t = t_0 > 0$ and the Einstein--Maxwell--scalar field system at $r = r_0 > 0$ such that one reaches a singularity at $\{ t = 0 \}$ and $\{ r = 0 \}$ respectively.

        In both cases, this class of initial data is engineered to be large enough to allow for spacetimes with nontrivial intermediate dynamics i.e.~between $0 < t < t_0$ or $0 < r < r_0$ suggestive of BKL bounces. For instance, in \cite[Theorem 1.1]{MeGowdyPaper} we allow $\dot{P}_D \coloneqq t \partial_t P(t_0, \theta)$ satisfying $0 < - \dot{P}_D < 2$.

    \item
        Indeed, \cite[Theorem 1.2]{MeGowdyPaper} is the analogue of our bounce result Theorem~\ref{thm:bounce_rough}, proving that on any causal curve $\gamma$ directed towards the singularity, certain quantities an ODE reminiscent of BKL bounces. That is, defining $\mathscr{P}_{Gowdy} = - t \partial_t P (\gamma(t))$ and $\mathscr{Q}_{Gowdy} = e^P t \partial_{\theta} Q (\gamma(t))$, \cite[Theorem 1.2]{MeGowdyPaper} derives a $2$-dimensional autonomous ODE for $(\mathscr{P}_{Gowdy}, \mathscr{Q}_{Gowdy})$, plus error terms.

        Just as a BKL bounce in the surface symmetric Einstein--Maxwell--scalar field system corresponds to a heteroclinic orbit $(\mathscr{P}, \mathscr{Q}) = ({\color{blue} \alpha}, 0) \mapsto (\mathscr{P}, \mathscr{Q}) = ({\color{blue} \alpha^{-1}}, 0)$ where $0 < {\color{blue} \alpha} < 1$, see Figure~\ref{fig:phase_portrait}, a BKL bounce in the Gowdy symmetric vacuum picture corresponds to a heteroclinic orbit $(\mathscr{P}_{Gowdy}, \mathscr{Q}_{Gowdy}) = ({\color{blue} \alpha}, 0) \mapsto (\mathscr{P}_{Gowdy}, \mathscr{Q}_{Gowdy}) = ({\color{blue} 2 - \alpha}, 0)$ where $1 < {\color{blue} \alpha} < 2$.

        Furthermore, via the correspondence \eqref{eq:kasnerlike_intro_gowdy}, the map $V(\theta) \mapsto 2 - \acute{V}(\theta)$ is exactly the bounce map \eqref{eq:kasner_relation_bounce} found by BKL (at least after switching the roles of $p_1$ and $p_2$).

    \item
        Finally, both this article and \cite{MeGowdyPaper} provide a description of BKL bounces as an \emph{instability mechanism}. Here, this is achieved via perturbations of \emph{uncharged} solutions of the Einstein--scalar field equations to \emph{charged} solutions of the Einstein--Maxwell--scalar field equations, see e.g.~Figure~\ref{fig:bounce_instability} and Corollary~\ref{cor:bounce}. In \cite{MeGowdyPaper}, this is achieved via perturbations of \emph{polarized Gowdy} solutions to generic \emph{unpolarized Gowdy} solutions, see Figure 1 and Corollary~2.3 of \cite{MeGowdyPaper}.

    \item
        There is, however, one feature of \cite{MeGowdyPaper} that does not appear in the present article, namely the aforementioned ``spikes''. Spikes correspond to special orbits of the ODE system converging to unstable fixed points, and generically occur at finitely many $\theta \in \mathbb{S}^1$ in Gowdy spacetimes.

        We comment that spikes do not occur for the Einstein--Maxwell--scalar field in surface symmetry since the rigidity of the electromagnetic field means that for $Q \neq 0$ it is impossible for any orbit to converge to an unstable fixed point. If one considered the Einstein--Maxwell--\emph{charged} scalar field system where $\phi$ is dynamically coupled to $\mathbf{F}$, then spikes would similarly become an issue.
\end{itemize}

\subsection*{Acknowledgements}

The author thanks Mihalis Dafermos for valuable advice in the writing of this manuscript. We also thank Igor Rodnianski and Hans Ringstr\"om for insightful discussions and suggestions.


\section{The Einstein--Maxwell--scalar field system in surface symmetry} \label{sec:equations}

\subsection{Surface symmetric spacetimes} \label{sub:eqns_sss}

A \emph{surface symmetric} spacetime contains $2$ linearly independent Killing vector fields, which generate a symmetry group whose orbits are rescaled copies of a surface $\Sigma$ of constant sectional curvature. Restricting to the \emph{trapped region} where each copy of $\Sigma$ is a trapped\footnote{Due to the time reversibility of Einstein's equations, we are agnostic about whether surfaces are future-trapped or past-trapped.} surface (or equivalently having a timelike area-radius function $r$), such spacetime metrics $\mathbf{g}$ may be written in the following form:
\begin{equation} \label{eq:surface_sym}
    \mathbf{g} = - e^{2 \mu} dr^2 + e^{2 \lambda} d x^2 + r^2 d \sigma_{\Sigma}.
\end{equation}

Here $d \sigma_{\Sigma}$ denotes the standard metric on the surface $\Sigma$, rescaled to have constant sectional curvature $\kappa \in \{1, 0 , -1\}$. That is, in some local coordinate chart,
\[
    d \sigma_{\Sigma} = d \theta^2 + \sin_{\kappa}^2 \theta d \varphi^2 = 
    \begin{cases}
        d \theta^2 + \sin^2 \theta \, d \varphi^2 & \quad \text{ if } \Sigma \text{ has universal cover } \mathbb{S}^2, \\[-0.2em]
        d \theta^2 + d \varphi^2 & \quad \text{ if } \Sigma \text{ has universal cover } \R^2, \\[-0.2em]
        d \theta^2 + \sinh^2 \theta \, d \varphi^2 & \quad \text{ if } \Sigma \text{ has universal cover } \mathbb{H}^2.
    \end{cases}
\]

If a surface symmetric spacetime $(\mathcal{M}, \mathbf{g})$ solves the Einstein equations \eqref{eq:einstein}, the matter fields $\mathbf{\Phi}$ must be surface symmetric in the following sense: the energy momentum tensor $\mathbf{T}[\mathbf{\Phi}]$ is described purely by (1) its projection onto the quotient manifold $\mathcal{Q}$, and (2) its trace with respect to $d \sigma_{\Sigma}$. 
In other words, the degrees of freedom of $\mathbf{T}_{\mu\nu}[\mathbf{\Phi}]$ are a two-tensor $T_{\mu\nu}$ on $\mathcal{Q}$ and a function $\mathbf{S}$ on $\mathcal{Q}$, given by
\[
    T_{\mu\nu} = \mathbf{T}_{\mu\nu} \text{ for } \mu, \nu \in \{ r, x \}, \quad \text{ and } \quad \mathbf{S} = \tr_{d \sigma_{\Sigma}} \mathbf{T} = \mathbf{T}_{\theta \theta} + \sin_{\kappa}^{-2} \theta \, \mathbf{T}_{\varphi \varphi}.
\]

Given this decomposition of the energy-momentum tensor $\mathbf{T}[\mathbf{\Phi}]$, one expresses the Einstein equations \eqref{eq:einstein} with respect to the coordinates of \eqref{eq:surface_sym} as follows, where $\tr_{\mathcal{Q}} T = - e^{-2\mu} T_{rr} + e^{-2 \lambda} T_{xx}$.
\begin{gather}
    \label{eq:lambda_evol_T}
    r \partial_r \lambda = r^2 T_{rr} - \frac{1}{2} \left( 1 + \kappa e^{2 \mu} \right), \\[0.8em]
    \label{eq:mu_evol_T}
    r \partial_r \mu = r^2 e^{2 (\mu - \lambda)} T_{xx} + \frac{1}{2} \left( 1 + \kappa e^{2 \mu} \right), \\[0.8em]
    \label{eq:mu_constraint_T}
    \partial_x \mu = r T_{xr}, \\[0.8em]
    \label{eq:mu_wave_T}
    \begin{split}
        (r \partial_r)^2 \mu - r^2 e^{2 (\mu- \lambda)} \partial_x^2 \mu = 
        & (r \partial_r \lambda) (r^2 \tr_{\mathcal{Q}} T + \kappa) e^{2\mu} + r \partial_r \left[ \left( r^2 \tr_{\mathcal{Q}} T + \kappa \right) e^{2 \mu} \right]  \\
        &\quad + r^2 e^{2 (\mu - \lambda)} \partial_x \mu \, \partial_x (\mu - \lambda) + r^2 e^{2 (\mu- \lambda)} \left[ T_{xx} - \frac{1}{2} \tr_{\mathcal{Q}} T e^{2 \lambda} - \frac{\mathbf{S}}{2} e^{2 \lambda} \right].
    \end{split}
\end{gather}
We refer the reader to \cite[Equations (1.3)--(1.6)]{ReinEinsteinVlasov} for the specific case where the matter field $\mathbf{\Phi}$ is Vlasov matter (though that our $r$ and $x$ are instead $t$ and $r$ in \cite{ReinEinsteinVlasov}). Compared to \cite{ReinEinsteinVlasov}, we have manipulated the equations somewhat so that (a) our canonical choice of time derivative is $r \partial_r$ and (b) equation \eqref{eq:mu_wave_T} has wave-like structure.

The claimed wave-like structure is made clear by the observation that with respect to the coordinates of \eqref{eq:surface_sym}, the wave equation \eqref{eq:wave} for $\phi$ is given by:
\begin{equation} \label{eq:phi_wave_T}
    (r \partial_r)^2 \phi - r^2 e^{2(\mu-\lambda)} \partial_x^2 \phi = e^{2 \mu} (r \partial_r \phi) \left( r^2 \tr_{\mathcal{Q}} T + \kappa \right) + r^2 e^{2(\mu-\lambda)} \partial_x \phi \, \partial_x (\mu - \lambda).
\end{equation}
Thus \eqref{eq:mu_wave_T} has similar structural features to \eqref{eq:phi_wave_T}, albeit with a more complicated right hand side.

\subsection{The Einstein--Maxwell--scalar field system} \label{sub:emsfss}

We now specialize the above to the Einstein--Maxwell--scalar field system $(\mathcal{M}, \mathbf{g}, \phi, \mathbf{F})$. The electromagnetic tensor $\mathbf{F}$ is represented by the following $2$-form, where $Q(r, x)$ is a function in $\mathcal{Q}$:
\begin{equation} \label{eq:maxwell_F}
    \mathbf{F} = \frac{2Q}{r^2} e^{\mu + \lambda} dr \wedge dx.
\end{equation}
For the Einstein--Maxwell--scalar field system in surface symmetry, Maxwell's equations \eqref{eq:maxwell} imply that $dQ = 0$, i.e.~that $Q$ is constant\footnote{For instance, one checks that $*\mathbf{F} = 2 Q d \theta \wedge \sin_{\kappa}\!\theta \, d \varphi$, while the second equation in \eqref{eq:maxwell} is equivalent to $d\! * \!\mathbf{F} = 0$. Note that Lemma~\ref{lem:maxwell_F} would fail, on the other hand, if the Maxwell equation has a source and $d\! * \!\mathbf{F} \neq 0$ e.g.~models involving a \emph{charged} scalar field.}: 
\begin{lemma} \label{lem:maxwell_F}
    Let $(\mathcal{M}, \mathbf{g}, \phi, \mathbf{F})$ be a surface symmetric solution to the Einstein--Maxwell--scalar field system \eqref{eq:einstein}--\eqref{eq:maxwell_em} with metric $\mathbf{g}$ given by \eqref{eq:surface_sym} and Maxwell tensor $\mathbf{F}$ given by \eqref{eq:maxwell_F}. Then $Q(r, x)$ is a constant.
\end{lemma}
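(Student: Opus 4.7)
The plan is to verify the two Maxwell equations \eqref{eq:maxwell} directly, using only the ansatz \eqref{eq:maxwell_F} together with the surface symmetry of $(\mathcal{M}, \mathbf{g})$. By surface symmetry, the functions $Q$, $\mu$, and $\lambda$ depend only on $(r, x)$, so the differential of the scalar coefficient $\tfrac{2Q}{r^2} e^{\mu + \lambda}$ involves only $dr$ and $dx$. Consequently
\[
d\mathbf{F} = d\!\left( \tfrac{2Q}{r^2} e^{\mu + \lambda} \right) \wedge dr \wedge dx = 0
\]
automatically, and the first Maxwell equation $d\mathbf{F} = 0$ imposes no constraint on $Q$. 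All the content of the lemma thus comes from the second Maxwell equation $\mathbf{D}^{\mu} \mathbf{F}_{\mu\nu} = 0$, which in $1+3$ dimensions is equivalent to $d(*\mathbf{F}) = 0$.

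To compute the Hodge dual, I would introduce an orthonormal coframe $\{e^0, e^1, e^2, e^3\} = \{e^\mu dr, \, e^\lambda dx, \, r \, d\theta, \, r \sin_\kappa\!\theta \, d\varphi\}$ compatible with \eqref{eq:surface_sym}, with respect to which $\mathbf{F} = \tfrac{2Q}{r^2} \, e^0 \wedge e^1$. Applying the Hodge star with the Lorentzian convention on $\mathbf{g}$ then yields
\[
*\mathbf{F} = 2 Q \sin_\kappa\!\theta \, d\theta \wedge d\varphi,
\]
exactly as noted in the footnote. Since $Q$ depends only on $(r, x)$, exterior differentiating gives
\[
0 = d(*\mathbf{F}) = 2 \, dQ \wedge \sin_\kappa\!\theta \, d\theta \wedge d\varphi = 2 \sin_\kappa\!\theta \left[ (\partial_r Q) \, dr + (\partial_x Q) \, dx \right] \wedge d\theta \wedge d\varphi.
\]
Since the three-forms $dr \wedge d\theta \wedge d\varphi$ and $dx \wedge d\theta \wedge d\varphi$ are pointwise linearly independent, this forces $\partial_r Q = \partial_x Q = 0$, whence $Q$ is constant on $\mathcal{Q}$.

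There is no substantive obstacle in this argument; the only point requiring minor care is the computation of $*\mathbf{F}$ with the correct signs, but the precise overall sign is immaterial for the conclusion since only the vanishing of $d(*\mathbf{F})$ is used. More conceptually, the rigidity of the electromagnetic field under surface symmetry collapses the full Maxwell system to the single scalar condition $dQ = 0$, which also explains why the same conclusion would fail once the Maxwell equation acquires a source term, as in the Einstein--Maxwell--charged scalar field model referenced earlier.
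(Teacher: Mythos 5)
Your proof is correct and follows exactly the route the paper itself indicates (in the footnote attached to the lemma): observe that $d\mathbf{F}=0$ is automatic, compute $*\mathbf{F} = 2Q\,\sin_{\kappa}\!\theta\, d\theta\wedge d\varphi$ in an orthonormal coframe, and conclude $dQ=0$ from $d(*\mathbf{F})=0$. The only difference is that you spell out the details the paper leaves implicit; there is nothing to correct.
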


We now specialize the remaining equations of Section~\ref{sub:eqns_sss} to the Einstein--Maxwell--scalar field model \eqref{eq:einstein}--\eqref{eq:maxwell_em} in the following proposition.

\begin{proposition}[Surface symmetric Einstein--Maxwell--scalar field system] \label{prop:emsfss_surface_sym} 
    Let $(\mathcal{M}, \mathbf{g}, \phi, \mathbf{F})$ be a surface symmetric solution to the Einstein--Maxwell--scalar field system \eqref{eq:einstein}--\eqref{eq:maxwell_em}, with metric $\mathbf{g}$ given by \eqref{eq:surface_sym} and Maxwell tensor $\mathbf{F}$ given by \eqref{eq:maxwell_F}. Then the dynamical quantities $(\lambda, \mu, \phi)$ satisfy the following equations for $(r, x) \in \mathcal{Q}$.
    \begin{gather}
        \label{eq:lambda_evol}
        r \partial_r \lambda = \frac{1}{2} \left( (r \partial_r \phi)^2 + r^2 e^{2(\mu - \lambda)} (\partial_x \phi)^2 - 1 - \kappa e^{2 \mu} + \frac{Q^2}{r^2} e^{2 \mu} \right), \\[0.8em]
        \label{eq:mu_evol}
        r \partial_r \mu = \frac{1}{2} \left( (r \partial_r \phi)^2 + r^2 e^{2(\mu - \lambda)} (\partial_x \phi)^2 + 1 + \kappa e^{2 \mu} - \frac{Q^2}{r^2} e^{2 \mu} \right), \\[0.8em]
        \label{eq:mu_constraint}
        \partial_x \mu = r \partial_r \phi \, \partial_x \phi, \\[0.8em]
        \label{eq:mu_wave}
        \begin{split}
            (r \partial_r)^2 \mu - r^2 e^{2 (\mu- \lambda)} \partial_x^2 \mu = 
            & (r \partial_r \lambda) \left( - \frac{Q^2}{r^2} + \kappa \right) e^{2\mu} + r \partial_r \left[ \left( - \frac{Q^2}{r^2} + \kappa \right) e^{2\mu} \right]  \\
            &\quad + r^2 e^{2 (\mu - \lambda)} \partial_x \mu \, \partial_x (\mu - \lambda) + 2 r^2 e^{2 (\mu- \lambda)} (\partial_x \phi)^2 - \frac{Q^2}{r^2} e^{2 \mu}, 
        \end{split} \\[0.8em]
        \label{eq:phi_wave}
        (r \partial_r)^2 \phi - r^2 e^{2(\mu-\lambda)} \partial_x^2 \phi = (r \partial_r \phi) \left( - \frac{Q^2}{r^2} + \kappa \right) e^{2\mu} + r^2 e^{2(\mu-\lambda)} \partial_x \phi \, \partial_x (\mu - \lambda).
    \end{gather}
    Furthermore $\mu - \lambda$ obeys the transport equation 
    \begin{equation} \label{eq:mulambda_evol}
        r \partial_r (\mu - \lambda) = 1 + \left( - \frac{Q^2}{r^2} + \kappa \right) e^{2\mu}.
    \end{equation}
\end{proposition}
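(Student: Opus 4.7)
The approach is to specialize the general surface-symmetric Einstein equations of Section~\ref{sub:eqns_sss} to scalar field plus Maxwell matter. Since \eqref{eq:lambda_evol_T}--\eqref{eq:mu_wave_T} and \eqref{eq:phi_wave_T} already express the dynamics in terms of the matter-side quantities $T_{\mu\nu}$, $\tr_{\mathcal{Q}} T$, and $\mathbf{S}$, the proof reduces to computing these components explicitly and substituting. First I would compute directly from \eqref{eq:wave_em} and the metric \eqref{eq:surface_sym}:
\begin{gather*}
    T_{rr}[\phi] = \tfrac{1}{2} (\partial_r \phi)^2 + \tfrac{1}{2} e^{2(\mu - \lambda)} (\partial_x \phi)^2, \quad T_{xx}[\phi] = \tfrac{1}{2} (\partial_x \phi)^2 + \tfrac{1}{2} e^{2(\lambda - \mu)} (\partial_r \phi)^2, \\
    T_{rx}[\phi] = \partial_r \phi \, \partial_x \phi, \qquad \tr_{\mathcal{Q}} T[\phi] = 0,
\end{gather*}
and analogously, using \eqref{eq:maxwell_em}, \eqref{eq:maxwell_F}, and Lemma~\ref{lem:maxwell_F} (which ensures $Q$ is a constant),
\[
    T_{rr}[\mathbf{F}] = \tfrac{Q^2}{2 r^4} e^{2\mu}, \quad T_{xx}[\mathbf{F}] = -\tfrac{Q^2}{2 r^4} e^{2\lambda}, \quad T_{rx}[\mathbf{F}] = 0, \quad \tr_{\mathcal{Q}} T[\mathbf{F}] = -\tfrac{Q^2}{r^4}.
\]
The scalars $\mathbf{S}[\phi]$ and $\mathbf{S}[\mathbf{F}]$ are then read off from the definition in Section~\ref{sub:eqns_sss}. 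In particular, $r^2 \tr_{\mathcal{Q}} T + \kappa = -\tfrac{Q^2}{r^2} + \kappa$, which is the universal combination appearing in every right-hand side of \eqref{eq:lambda_evol}--\eqref{eq:phi_wave}.

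Substituting the sums $T_{\mu\nu} = T_{\mu\nu}[\phi] + T_{\mu\nu}[\mathbf{F}]$ into the general equations \eqref{eq:lambda_evol_T}, \eqref{eq:mu_evol_T}, \eqref{eq:mu_constraint_T} then yields \eqref{eq:lambda_evol}, \eqref{eq:mu_evol}, \eqref{eq:mu_constraint} term by term; notice that \eqref{eq:mu_constraint} only uses $T_{rx}$, which is entirely contributed by $\phi$. Substitution into \eqref{eq:phi_wave_T} likewise produces \eqref{eq:phi_wave}, using the computed value of $r^2 \tr_{\mathcal{Q}} T + \kappa$. The transport equation \eqref{eq:mulambda_evol} for $\mu - \lambda$ is then an immediate consequence of subtracting \eqref{eq:lambda_evol} from \eqref{eq:mu_evol}: the quadratic-in-$\phi$ contributions cancel identically, leaving $r \partial_r (\mu - \lambda) = 1 + (-\tfrac{Q^2}{r^2} + \kappa) e^{2\mu}$.

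The main obstacle, really the only step requiring careful algebra, is the $\mu$ wave equation \eqref{eq:mu_wave}. One must assemble the bracketed combination $T_{xx} - \tfrac{1}{2} (\tr_{\mathcal{Q}} T) e^{2\lambda} - \tfrac{\mathbf{S}}{2} e^{2\lambda}$ on the right of \eqref{eq:mu_wave_T} and verify that, after multiplication by $r^2 e^{2(\mu - \lambda)}$, the scalar-field terms collapse to $2 r^2 e^{2(\mu - \lambda)} (\partial_x \phi)^2$ while the Maxwell terms combine with the prefactor $(r \partial_r \lambda)(r^2 \tr_{\mathcal{Q}} T + \kappa) e^{2\mu} + r \partial_r [(r^2 \tr_{\mathcal{Q}} T + \kappa) e^{2\mu}]$ to produce precisely the $-\tfrac{Q^2}{r^2} e^{2\mu}$ correction. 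This is purely bookkeeping: all the analytic content was already absorbed into the derivation of \eqref{eq:mu_wave_T} in Section~\ref{sub:eqns_sss}, so there is no further analytic subtlety beyond careful tracking of the powers of $r$, $e^{2\mu}$ and $e^{2\lambda}$.
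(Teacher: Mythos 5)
Your proposal is correct and follows essentially the same route as the paper: the paper's proof likewise just records the components $T_{rr}$, $T_{xx}$, $\mathbf{S}$ of $\mathbf{T}[\phi]$ and $\mathbf{T}[\mathbf{F}]$ in the gauge \eqref{eq:surface_sym} and substitutes them into \eqref{eq:lambda_evol_T}--\eqref{eq:phi_wave_T}, leaving the bookkeeping to the reader. Your computed components (including $T_{rx}[\mathbf{F}]=0$, $\tr_{\mathcal{Q}}T[\phi]=0$, $r^2\tr_{\mathcal{Q}}T+\kappa=-\tfrac{Q^2}{r^2}+\kappa$) are consistent with the target equations, and your observation that \eqref{eq:mulambda_evol} follows by subtracting \eqref{eq:lambda_evol} from \eqref{eq:mu_evol} is exactly right.
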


\begin{proof}
    The equations \eqref{eq:lambda_evol}--\eqref{eq:phi_wave} are derived directly from \eqref{eq:lambda_evol_T}--\eqref{eq:phi_wave_T}, upon insertion of the following expressions for the decomposition of the energy momentum tensor $\mathbf{T}_{\mu\nu} = \mathbf{T}_{\mu\nu} [\phi] + \mathbf{T}_{\mu\nu} [\mathbf{F}]$ with respect to the coordinates of \eqref{eq:surface_sym}:
    \begin{gather*}
        T_{rr}[\phi] = e^{2(\mu - \lambda)} T_{xx}[\phi] = \frac{1}{2} \left( (\partial_r \phi)^2 + e^{2(\mu - \lambda)} (\partial_x \phi)^2 \right), \qquad \mathbf{S}[\phi] = e^{-2\mu} (r \partial_r \phi)^2 - r^2 e^{-2 \lambda} (\partial_x \phi)^2, \\[0.4em]
        T_{rr}[\mathbf{F}] = - e^{2(\mu - \lambda)} T_{xx}[\mathbf{F}] = \frac{1}{2} \frac{Q^2}{r^2} e^{2\mu}, \qquad \mathbf{S}[\mathbf{F}] = \frac{4 Q^2}{r^2}.
    \end{gather*}
    We leave the details to the reader.
\end{proof}

\subsection{The initial value problem} \label{sub:ivp}

Prior to our main theorems in Section~\ref{sec:theorem}, we briefly outline local existence for the surface symmetric Einstein--Maxwell--scalar field system as given in Section~\ref{sub:emsfss}. See also statements similar to the following Proposition~\ref{prop:lwp} in \cite{ReinEinsteinVlasov} for the surface symmetric Einstein--Vlasov system and in \cite{TegankongNoutcheguemeRendall} for the surface symmetric Einstein--Vlasov--scalar field system.

However in the present article, we treat the equations differently from \cite{ReinEinsteinVlasov}, as to emphasize the wave-like nature of \eqref{eq:mu_evol}. As a result, though our system is reduced to a $2$-dimensional problem and one can apply the method of characteristics as in \cite{ReinEinsteinVlasov}, we clarify that local existence also holds in energy spaces $H^{s+1} \times H^s$, for $s > \frac{1}{2}$.

\begin{proposition}[Local well-posedness] \label{prop:lwp}
    Let $Q \in \R$ be a constant and $\kappa \in \{-1, 0, +1\}$. Consider solutions $(\phi, \mu, \lambda)$ to the surface symmetric Einstein--Maxwell--scalar field system \eqref{eq:lambda_evol}--\eqref{eq:phi_wave} for $(r, x) \in I \times \mathbb{S}^1$, where $I \subset (0, + \infty)$ is an interval.

    Consider data as follows: let $r_0 > 0$ and let $\phi_D, \mu_D, \lambda_D, \dot{\phi}_D, \dot{\mu}_D, \dot{\lambda}_D: \mathbb{S}^1 \to \R$ be functions with regularity to be specified shortly, satisfying the following three \underline{constraint equations}:
    \begin{gather*}
        \dot{\mu}_D = \frac{1}{2} \left( \dot{\phi}_D^2 + r_0^2 e^{2(\mu_D - \lambda_D)} (\partial_x \phi_D)^2 + 1 + \kappa e^{2\mu}_D - \frac{Q^2}{r_0^2} e^{2 \mu_D} \right), \\[0.6em]
        \dot{\lambda}_D = \frac{1}{2} \left( \dot{\phi}_D^2 + r_0^2 e^{2(\mu_D - \lambda_D)} (\partial_x \phi_D)^2 - 1 - \kappa e^{2\mu}_D + \frac{Q^2}{r_0^2} e^{2 \mu_D} \right), \\[0.6em]
        \partial_x \mu_D = \dot{\phi}_D \, \partial_x \phi_D.
    \end{gather*}
    Let the Banach space $X$ be either $C^{k+1} \times C^k$ for integers $k \geq 0$, or $H^{s+1} \times H^s$ for real numbers $s > \frac{1}{2}$. Then given $(f_D, \dot{f}_D) \in X$ for $f \in \{\phi, \mu, \lambda\}$, there exists a maximal interval $I$ containing $r_0$ and a unique solution $(\phi, \mu, \lambda)$ of the surface symmetric Einstein--Maxwell--scalar field system \eqref{eq:lambda_evol}--\eqref{eq:phi_wave} with 
    \begin{gather}
        (f, r \partial_r f) \in C(I, X) \text{ and } (f, r \partial_r f) |_{r = r_0} = (f_D, \dot{f}_D) \text{ for } f \in \{ \phi, \mu, \lambda \}. \label{eq:f_lwp}
    \end{gather}
\end{proposition}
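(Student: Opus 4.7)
The plan is to treat \eqref{eq:lambda_evol}--\eqref{eq:phi_wave} as a \textbf{reduced system} consisting of the two quasilinear wave equations \eqref{eq:mu_wave}, \eqref{eq:phi_wave} for $(\mu, \phi)$ coupled with the transport equation \eqref{eq:lambda_evol} for $\lambda$, and then to verify that the remaining constraints \eqref{eq:mu_evol} and \eqref{eq:mu_constraint} propagate from $r = r_0$. After the change of variable $\tau = \log r$, the principal part of each wave equation is $\partial_\tau^2 - a^2(\tau,x) \partial_x^2$ with $a = e^{\tau + \mu - \lambda}$, a standard quasilinear wave operator on $\R \times \mathbb{S}^1$; the right-hand sides are lower order, involving only first derivatives of $(\mu, \lambda, \phi)$ and undifferentiated $(\mu, \lambda)$.

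Local existence for this reduced system follows from a standard Picard iteration. The iteration map sends $(\mu^{(n)}, \phi^{(n)}, \lambda^{(n)}) \mapsto (\mu^{(n+1)}, \phi^{(n+1)}, \lambda^{(n+1)})$ by first freezing the principal coefficient to $e^{2(\mu^{(n)} - \lambda^{(n)})}$ in \eqref{eq:mu_wave}, \eqref{eq:phi_wave} and solving the resulting linear wave equations for $(\mu^{(n+1)}, \phi^{(n+1)})$, then integrating the transport equation \eqref{eq:lambda_evol} along lines of constant $x$ to obtain $\lambda^{(n+1)}$. Uniform bounds and contraction in $X$ follow from energy estimates for the linear wave equation together with commutator/product estimates; in the Sobolev case, the hypothesis $s > \tfrac{1}{2}$ ensures that $H^s(\mathbb{S}^1)$ embeds into $L^{\infty}$ and is an algebra, so nonlinear expressions such as $e^{2(\mu-\lambda)}(\partial_x \phi)^2$ are controlled by Moser-type estimates. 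The analogous argument using H\"older algebras handles the $C^{k+1} \times C^k$ case. This yields a unique solution on a small interval $I_0 \ni r_0$ with the desired regularity, and a standard blow-up-or-continue argument supplies the maximal interval $I$.

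The key structural step is constraint propagation. Define the residuals
\begin{align*}
    \mathcal{C}_{\mathrm{ev}} &:= r\partial_r \mu - \tfrac{1}{2}\bigl( (r\partial_r \phi)^2 + r^2 e^{2(\mu-\lambda)}(\partial_x \phi)^2 + 1 + \kappa e^{2\mu} - \tfrac{Q^2}{r^2}e^{2\mu} \bigr), \\
    \mathcal{C}_{\mathrm{mom}} &:= \partial_x \mu - r \partial_r \phi\, \partial_x \phi.
\end{align*}
By the constraint equations in the hypothesis, $\mathcal{C}_{\mathrm{ev}}|_{r = r_0} = \mathcal{C}_{\mathrm{mom}}|_{r = r_0} = 0$. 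Applying $r\partial_r$ to each residual and using the reduced system \eqref{eq:lambda_evol}, \eqref{eq:mu_wave}, \eqref{eq:phi_wave} to eliminate all second-order derivatives, one obtains a closed homogeneous linear first-order system for $(\mathcal{C}_{\mathrm{ev}}, \mathcal{C}_{\mathrm{mom}})$ whose coefficients are continuous functions of $(\phi, \mu, \lambda)$ and their first derivatives. Uniqueness for ODEs (Gr\"onwall) then forces $\mathcal{C}_{\mathrm{ev}} \equiv \mathcal{C}_{\mathrm{mom}} \equiv 0$ on all of $I \times \mathbb{S}^1$. The main obstacle is this last calculation: one must identify the precise algebraic cancellations — which ultimately reflect the twice-contracted Bianchi identity for the Einstein tensor — that produce a \emph{closed} homogeneous system for $(\mathcal{C}_{\mathrm{ev}}, \mathcal{C}_{\mathrm{mom}})$ \emph{without loss of regularity}. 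The rest (handling the quasilinear coefficient, continuation, uniqueness) is routine.
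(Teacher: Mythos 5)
Your overall strategy --- solve a reduced wave--transport system by iteration and then propagate the remaining equations as constraints via a homogeneous linear system --- is exactly the paper's, and your residuals $\mathcal{C}_{\mathrm{ev}}, \mathcal{C}_{\mathrm{mom}}$ coincide with the paper's $\mathscr{C}_1, \mathscr{C}_2$. There is, however, a genuine gap in your choice of transport equation. You evolve $\lambda$ by \eqref{eq:lambda_evol}, whose right-hand side contains $r^2 e^{2(\mu-\lambda)}(\partial_x\phi)^2$ and hence first derivatives of $\phi$. In the iteration this gives $r\partial_r\lambda^{(n+1)} \in H^s$ only, so $\lambda^{(n+1)}(r,\cdot)$ lies in $H^s$ rather than $H^{s+1}$: you lose one derivative on $\lambda$ relative to the space $X$ in which the scheme must close. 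This is fatal in two places: (a) the conclusion \eqref{eq:f_lwp} requires $\lambda \in C(I, H^{s+1})$; and (b) the frozen principal coefficient $e^{2(\mu^{(n)}-\lambda^{(n)})}$ is then only $H^s$, which for $s$ close to $\tfrac12$ is below Lipschitz, so the linear energy estimates underlying the Picard iteration fail (the same defect appears in the H\"older case: with $k=0$ the characteristics $\tfrac{dx}{dr} = \pm e^{\mu-\lambda}$ would have merely continuous, non-Lipschitz speed). You cannot invoke the constraints to restore the regularity of $\lambda^{(n)}$, since the iterates do not satisfy them. The paper avoids this by taking the transport equation to be \eqref{eq:mulambda_evol} for $\mu-\lambda$, whose right-hand side $1 + (-Q^2/r^2+\kappa)e^{2\mu}$ involves only the \emph{undifferentiated} $\mu \in H^{s+1}$, so $\lambda$ retains full regularity; the discarded equation \eqref{eq:lambda_evol} is then recovered a posteriori as \eqref{eq:mulambda_evol} subtracted from the propagated constraint \eqref{eq:mu_evol}.

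A second, more minor point: the propagated system for $(\mathcal{C}_{\mathrm{ev}}, \mathcal{C}_{\mathrm{mom}})$ is not an ODE system along constant-$x$ lines. It is a first-order \emph{hyperbolic} system: $r\partial_r\mathcal{C}_{\mathrm{ev}}$ contains the term $re^{\mu-\lambda}\partial_x\mathcal{C}_{\mathrm{mom}}$ and vice versa (this is where the momentum constraint enters the evolution of the Hamiltonian-type constraint). So ``uniqueness for ODEs'' is not the right tool; one should instead apply Gr\"onwall to the $L^2$ energy $\tfrac12\int_{\mathbb{S}^1}\bigl(\mathcal{C}_{\mathrm{ev}}^2 + \mathcal{C}_{\mathrm{mom}}^2\bigr)\,dx$, as the paper does, or integrate along the null characteristics. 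This is a one-line fix, but as written that step of the argument is incomplete.
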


\begin{proof} (Sketch)
    We prove local-wellposedness for this system by identifying several of the equations in \eqref{eq:lambda_evol}--\eqref{eq:phi_wave} as evolution equations for which one may apply standard local well-posedness, and the remaining equations as constraint equations which are true at $r = r_0$ and are then propagated.

    The equations we identify as evolution equations are \eqref{eq:mu_wave}, \eqref{eq:phi_wave} and \eqref{eq:mulambda_evol}. Note that \eqref{eq:mu_wave} and \eqref{eq:phi_wave} are quasilinear wave equations with respect to the metric $\mathbf{g}$ in \eqref{eq:surface_sym}, while \eqref{eq:mulambda_evol} is a transport equation whose transport vector field $r \partial_r$ is timelike with respect to $\mathbf{g}$. Thus:
    \begin{itemize}
        \item
            If $X = C^{k+1} \times C^k$, then one finds $(\phi, \mu, \lambda)$ solving \eqref{eq:mu_wave}--\eqref{eq:mulambda_evol} and achieving \eqref{eq:f_lwp} by applying the method of characteristics, with respect to the characteristic vector fields $L = \partial_r + e^{\mu - \lambda} \partial_x, \underline{L} = \partial_r - e^{\mu - \lambda} \partial_x$ for \eqref{eq:mu_wave}--\eqref{eq:phi_wave} and the timelike vector field $\partial_r = \frac{1}{2}( L + \underline{L})$ for \eqref{eq:mulambda_evol}.

        \item
            If $X = H^{s+1} \times H^s$ for $s > \frac{1}{2}$, then the existence and uniqueness of $(\phi, \mu, \lambda)$ solving \eqref{eq:mu_wave}--\eqref{eq:mulambda_evol} and achieving \eqref{eq:f_lwp} follows from the local well-posedness for systems of quasilinear wave equations, see for instance Hughes--Kato--Marsden \cite{HughesKatoMarsden}. This is easily extended to the wave--transport system \eqref{eq:mu_wave}--\eqref{eq:mulambda_evol}, particularly since the right hand side of the transport equation \eqref{eq:mulambda_evol} depends only on $\mu$ and not on any of its derivatives.
    \end{itemize}

    It remains to show that the remaining equations \eqref{eq:lambda_evol}--\eqref{eq:mu_constraint} are satisfied. In fact, since we already have \eqref{eq:mulambda_evol} it suffices to show that \eqref{eq:mu_evol}--\eqref{eq:mu_constraint} are satisfied. We do this via \emph{constraint propagation}. The key computation is that from \eqref{eq:lambda_evol}--\eqref{eq:mu_constraint}, if one defines:
    \[
        \mathscr{C}_1 = r \partial_r \mu - \frac{1}{2} \left( (r \partial_r \phi)^2 + r^2 e^{2(\mu - \lambda)} (\partial_x \phi)^2 + 1 + \kappa e^{2\mu} - \frac{Q^2}{r^2} e^{2\mu} \right), \quad \mathscr{C}_2 = \partial_x \mu - r \partial_r \phi \, \partial_x \phi,
    \]
    then one may derive the equations
    \[
        r \partial_r \mathscr{C}_1 = r e^{\mu - \lambda} \partial_x \mathscr{C}_2 + 2 \left( - \frac{Q^2}{r^2} + \kappa \right) e^{2\mu} \, \mathscr{C}_1, \quad r \partial_r \mathscr{C}_2 = r e^{\mu - \lambda} \partial_x \mathscr{C}_1 + 2 \left( 1 + \left( - \frac{Q^2}{r^2} + \kappa \right)  e^{2\mu} \right) \mathscr{C}_2.
    \]

    So $(\mathscr{C}_1, \mathscr{C}_2)$ obey a homogeneous first order linear hyperbolic system with coefficients determined by the previously found solution to \eqref{eq:lambda_evol}--\eqref{eq:mu_constraint}. Therefore, by considering an energy of the form $\mathcal{E}_{\mathscr{C}}(r) = \frac{1}{2} \int_{\mathbb{S}^1} (\mathscr{C}_1^2(r, x) + \mathscr{C}_2^2(r,x) ) \, dx$, integration by parts yields that
    \[
        \left| r \frac{d}{dr} \mathcal{E}_{\mathscr{C}}(r) \right| \leq C \cdot \mathcal{E}_{\mathscr{C}}(r)
    \]
    for some $C$ depending on the solution $(\phi, \mu, \lambda)$. Since the initial data constraints imply $\mathcal{E}_{\mathscr{C}}(r_0) = 0$, we have that $\mathcal{E}_{\mathscr{C}}(r) = 0$ for all $r \in I$. Therefore $\mathscr{C}_1, \mathscr{C}_2 \equiv 0$ and the remaining equations \eqref{eq:lambda_evol}--\eqref{eq:mu_constraint} hold as required.
\end{proof}

Along with the local well-posedness result of Proposition~\ref{prop:lwp}, we also include the following continuation criteria, showing that if the past endpoint of $I$ is not $0$, then the quantity $\mu$ must blow up.

\begin{lemma}[Continuation criterion] \label{lem:continuation}
    Let $(\phi, \mu, \lambda)$ be a solution of the surface symmetric Einstein--Maxwell--scalar field system \eqref{eq:lambda_evol}--\eqref{eq:phi_wave}, arising from initial data as in Proposition~\ref{prop:lwp}. Then if $r_p = \inf I \in [0, r_0)$ is the past endpoint of $I$, then either $r_p = 0$, or one has
    \begin{equation}
        \sup_{(r, x) \in (r_p, r_0) \times \mathbb{S}^1} \mu (r, x) = + \infty.
    \end{equation}
\end{lemma}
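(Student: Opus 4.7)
The plan is a standard bootstrap/contradiction argument. Suppose $r_p > 0$ and $\mu \leq M_* < \infty$ uniformly on $(r_p, r_0) \times \mathbb{S}^1$. I aim to show that under this hypothesis, $(f, r \partial_r f)$ for $f \in \{\phi, \mu, \lambda\}$ remains uniformly bounded in the norm of $X$ up to $r = r_p$. Once this is known, one may take a limit at $r = r_p$ to obtain data in $X$ to which Proposition~\ref{prop:lwp} applies, yielding an extension of $(\phi, \mu, \lambda)$ to a strictly larger interval and contradicting maximality of $I$.

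Step 1 extracts the consequence of the hypothesis for the characteristic speed $e^{\mu - \lambda}$ of the wave operators in \eqref{eq:mu_wave} and \eqref{eq:phi_wave}. The transport equation \eqref{eq:mulambda_evol} has pointwise right-hand side bounded by $1 + (Q^2/r_p^2 + 1) e^{2 M_*} =: C_0$ under the standing hypotheses, since $r \geq r_p > 0$, $\mu \leq M_*$ and $\kappa \in \{-1, 0, +1\}$. Integrating in $r$ from $r_0$ shows $|\mu - \lambda|$ is uniformly bounded on $(r_p, r_0) \times \mathbb{S}^1$, so $e^{\pm 2 (\mu - \lambda)}$ is uniformly bounded above and below by positive constants.

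Step 2 is a standard energy estimate for the coupled wave--transport system \eqref{eq:mu_wave}, \eqref{eq:phi_wave}, \eqref{eq:mulambda_evol}. For $X = H^{s+1} \times H^s$ with $s > 1/2$, one defines a hierarchy of energies $\mathcal{E}^{(s)}(r)$ built from Littlewood--Paley projections of $r \partial_r f$ and $r e^{\mu - \lambda} \partial_x f$ for $f \in \{\phi, \mu\}$, plus a zeroth-order $L^2$ term. Multiplying the commuted wave equations by $r \partial_r$ of the corresponding commuted unknown, integrating by parts in $x$, and invoking the Sobolev embedding $H^s \hookrightarrow L^\infty$ for $s > 1/2$ on $\mathbb{S}^1$ to bound the lower-order nonlinear terms, together with the uniform coefficient bounds from Step 1, yields a differential inequality of the form $|r \frac{d}{dr} \mathcal{E}^{(s)}(r)| \leq F(\mathcal{E}^{(s)}(r))$ for some locally bounded continuous function $F$. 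Gronwall gives a uniform bound on $\mathcal{E}^{(s)}(r)$ on $(r_p, r_0]$, and hence on the $H^{s+1} \times H^s$ norm of $(\phi, r \partial_r \phi)$ and $(\mu, r \partial_r \mu)$. The corresponding bound on $\lambda$ follows from $\lambda = \mu - (\mu - \lambda)$ combined with Step 1 and the transport equation \eqref{eq:mulambda_evol} differentiated up to order $s+1$. For $X = C^{k+1} \times C^k$, one argues analogously using the characteristic vector fields $L, \underline{L}$ from the proof of Proposition~\ref{prop:lwp} instead of integration by parts.

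The main obstacle is the quasilinearity of \eqref{eq:mu_wave}--\eqref{eq:phi_wave}: the principal part involves $e^{2 (\mu - \lambda)}$ which depends on the solution itself, so commuting with spatial derivatives produces top-order terms such as $(\partial_x^K (\mu - \lambda)) \cdot (\partial_x^2 \mu)$ whose naive energy pairing appears to lose a derivative. Taming such terms is precisely the content of the quasilinear local well-posedness theorem of Hughes--Kato--Marsden \cite{HughesKatoMarsden} already invoked in the proof of Proposition~\ref{prop:lwp}, via a symmetrization/commutator estimate argument; once Step 1 provides the $L^\infty_r L^\infty_x$ control of $e^{2 (\mu - \lambda)}$ needed for those estimates to close, the present continuation criterion follows essentially as a corollary.
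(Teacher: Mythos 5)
Your Step~1 is correct, and it is essentially the same observation the paper uses (the paper extracts the bound on $|\mu-\lambda|$ at the end of its argument rather than at the start, but integrating \eqref{eq:mulambda_evol} under the hypotheses $\mu\le M_*$ and $r\ge r_p>0$ is exactly right). The gap is in Step~2. A differential inequality of the form $|r\tfrac{d}{dr}\mathcal{E}^{(s)}(r)|\le F(\mathcal{E}^{(s)}(r))$ with $F$ merely locally bounded and continuous does \emph{not} yield a uniform bound on $(r_p,r_0]$: Gr\"onwall requires the right-hand side to be at most linear in $\mathcal{E}^{(s)}$ (or to satisfy an Osgood condition), and the model $r\tfrac{d}{dr}y=-y^2$ blows up at a finite positive $r$ for any positive data. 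Your inequality is genuinely superlinear as written, because the right-hand sides of \eqref{eq:mu_wave} and \eqref{eq:phi_wave} contain terms quadratic in first derivatives of the unknowns, e.g.\ $2r^2e^{2(\mu-\lambda)}(\partial_x\phi)^2$ and $r^2e^{2(\mu-\lambda)}\partial_x\phi\,\partial_x(\mu-\lambda)$; estimating the extra factor by the Sobolev embedding $H^s\hookrightarrow L^\infty$ turns the energy identity into something like $r\tfrac{d}{dr}\mathcal{E}\lesssim\mathcal{E}^{3/2}$, which only reproves local existence on a short interval and cannot reach $r_p$. For the same reason, the appeal to Hughes--Kato--Marsden does not rescue the argument: standard continuation criteria for quasilinear wave systems require uniform control of the solution in (roughly) $W^{1,\infty}$, not merely $L^\infty$ control of the coefficient $e^{2(\mu-\lambda)}$. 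Deriving that $W^{1,\infty}$ control from the single scalar hypothesis $\sup\mu<\infty$ is the actual content of the lemma, and your proposal assumes it rather than proving it.

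The missing step --- and the heart of the paper's proof --- is a direct a priori $L^\infty$ bound on first derivatives, obtained \emph{before} any high-order energy estimate. Rewriting \eqref{eq:phi_wave} in terms of the Riemann invariants $\partial_r\phi\mp e^{\mu-\lambda}\partial_x\phi$ gives transport equations along the null directions $\partial_r\pm e^{\mu-\lambda}\partial_x$ that are \emph{linear} in these invariants, with coefficients bounded by $r_p^{-1}(Q^2r_p^{-2}+1)e^{2M_*}$ under the standing hypothesis; integrating along characteristics and applying Gr\"onwall to the supremum over $x$ of the invariants yields $|\partial_r\phi|,\,|e^{\mu-\lambda}\partial_x\phi|\le A_\phi$. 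One then feeds this into the analogous characteristic system for $\mu$, where the quadratic source $(\partial_x\phi)^2$ has become a known bounded inhomogeneity, and finally controls $\partial_x(\mu-\lambda)$ from the $\partial_x$-commuted transport equation, using the monotonicity $r\partial_r(\mu+\lambda)\ge0$ (or your Step~1 together with $\mu\le M_*$) to bound $e^{\mu+\lambda}$. Only once these pointwise first-derivative bounds are in hand do the higher-order energy estimates become linear in the top-order energy, and only then does Gr\"onwall, or the persistence-of-regularity statement you wish to cite, close the argument.
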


\begin{proof}
    Suppose for contradiction that $r_p > 0$ and $\mu (r, x) \leq A < +\infty$ for all $(r, x) \in (r_p, r_0] \times \mathbb{S}^1$. By standard continuation criteria for the quasilinear wave--transport system \eqref{eq:mu_wave}-\eqref{eq:mulambda_evol} 
    it will suffice to show that $\partial_x f$ and $\partial_r f$ remain uniformly bounded for $(r, x) \in (r_p, r_0]$ for each of $f = \phi, \mu, \lambda$.

    The first observation is that if $\mu(r, x) \leq A$, then $e^{2\mu} \leq e^{2A}$ and $\frac{Q^2}{r^2} e^{2\mu} \leq Q^2 r_p^{-2} e^{2A}$. So the coefficients appearing in the wave equations \eqref{eq:phi_wave} and \eqref{eq:mu_wave} are uniformly bounded. We apply this first to the wave equation \eqref{eq:phi_wave} for $\phi$. This equation can be shown to be equivalent to:
    \[
        (\partial_r \pm e^{\mu - \lambda} \partial_x ) (\partial_r \phi \mp e^{\mu - \lambda} \partial_x \phi) = \frac{1}{r} \left( - \frac{Q^2}{r^2} + \kappa \right) e^{2\mu} (\partial_r \phi \pm e^{\mu - \lambda} \partial_x \phi).
    \]

    Therefore, if one defines 
    \[
        A(\tilde{r}) = \max \left \{ \sup_{x \in \mathbb{S}^1} | \partial_r \phi (\tilde{r}, x) + e^{\mu - \lambda} \partial_x \phi(\tilde{r}, x) |, \sup_{x \in \mathbb{S}^1} |\partial_r \phi(\tilde{r}, x) - e^{\mu - \lambda} \partial_x \phi (\tilde{r}, x)| \right \},
    \] 
    integrating the above integral curves of $\partial_r \pm e^{\mu - \lambda} \partial_x$ yields the integral estimate
    \[
        A(r) \leq A(r_0) + r_p^{-1} (Q^2 r_p^{-2} + 1) e^{2A} \int_{r}^{r_0} A(\tilde{r}) \, d \tilde{r}.
    \]
    Thus applying Gr\"onwall's inequality, there is a uniform bound $A(r) \leq A_{\phi}$ for all $r \in (r_p, r_0]$. By the definition of $A(r)$, we therefore have $|\partial_r \phi(r, x)| \leq A_{\phi}$ and $|e^{\mu - \lambda} \partial_x \phi(r, x)| \leq A_{\phi}$.

    We now play the same game with \eqref{eq:mu_wave}. There are more terms arising in this equation, and for instance that the $e^{2(\mu - \lambda)} (\partial_x \phi)^2$ in \eqref{eq:mu_wave} needs to be controlled by the previous estimate $|e^{\mu - \lambda} \partial_x \phi| \leq A_{\phi}$. Omitting the details, one finds $A_{\mu}$ so that $|\partial_r \mu(r, x)| \leq A_{\mu}$ and $|e^{\mu - \lambda} \partial_x \mu(r, x)| \leq A_{\mu}$ for $r \in (r_p, r_0]$.

    It remains to estimate derivatives of $\lambda$, or equivalently derivatives of $\mu - \lambda$. The expression $\partial_r (\mu - \lambda)$ may be estimated immediately using \eqref{eq:mulambda_evol}. We also commute \eqref{eq:mulambda_evol} with $\partial_x$, yielding:
    \[
        r \partial_r \partial_x (\mu - \lambda) = \left( - \frac{Q^2}{r^2} + \kappa \right) e^{2\mu} (2 \partial_x \mu) = 2 \left( - \frac{Q^2}{r^2} + \kappa \right) e^{\mu + \lambda} \cdot e^{\mu - \lambda} \partial_x \mu.
    \]
    Having estimated $|e^{\mu - \lambda} \partial_x \mu| \leq A_{\mu}$, it suffices to find an upper bound for $e^{\mu + \lambda}$. One does this by using \eqref{eq:lambda_evol}--\eqref{eq:mu_evol} to show that $r \partial_r (\mu + \lambda) \geq 0$, and therefore $e^{\mu + \lambda}$ decreases as $r$ decreases towards $r_p$. So $r \partial_r \partial_x (\mu - \lambda)$ is uniformly bounded, and thus one finds $A_{\lambda}$ such that $|\partial_x (\mu - \lambda)| \leq A_{\lambda}$.

    The final step is that the previous $\mu$ and $\phi$ estimates yielded $|e^{\mu - \lambda} \partial_x \phi| \leq A_{\phi}$ and $|e^{\mu - \lambda} \partial_x \mu| \leq A_{\mu}$. To conclude, one must remove the $e^{\mu - \lambda}$ prefactors, which is easy since we have an estimate for $|\partial_r (\mu - \lambda)|$ and therefore for $|\mu - \lambda|$ itself. So have uniform bounds for $|\partial_x f|$ and $|\partial_r f|$ for each of $f = \phi, \mu, \lambda$. By standard continuation criteria for wave--transport systems, this allows us to extend the solution beyond $r = r_p$, yielding the required contradiction.
\end{proof}


\section{Precise statement of the main theorems} \label{sec:theorem}

We now state the fully detailed versions of our main theorems, the global existence result Theorem~\ref{thm:global}  and the bounce result Theorem~\ref{thm:bounce}, corresponding to our rough Theorem~\ref{thm:global_rough} and Theorem~\ref{thm:bounce_rough} respectively. Before stating our results we shall introduce the various parameters $\upeta, \upgamma, \upzeta, N, \ldots$ that appear throughout, and carefully define the class of initial data to which our results apply.

\subsection{Setup of the initial data} \label{sub:thm_data}

\subsubsection{Notation and key parameters} \label{subsub:param}

In the statement and proof of Theorem~\ref{thm:global} and Theorem~\ref{thm:bounce} we encounter a variety of parameters. Here, we list all of the key parameters and explain briefly their role in the article.

\begin{itemize}
    \item
        The real number $\upeta \geq 2$ is a constant that captures both the maximum and minimum allowed size of the quantity $r \partial_r \phi$ (see already \eqref{eq:global_linfty}). Though $\upeta$ may be chosen arbitrarily large, many of the subsequent parameters will depend on $\upeta$, and all quantitative bounds will depend on the choice of $\upeta$.
        
    \item
        The positive real number $\upgamma$ is a small parameter representing the admissible blow up rate for first-order derivatives, for instance we always have $\partial_x \phi = O(r^{- \upgamma})$. One can use any $\upgamma$ with $0 < \upgamma \leq \frac{1}{100}$.

    \item
        The natural number $N$ represents the largest number of $\partial_x$-derivatives with which we commute our system \eqref{eq:lambda_evol}--\eqref{eq:phi_wave} in deriving energy estimates. That is, our \emph{top-order} energy estimate will control $r \partial_r \partial_x^N f$ and $r e^{\mu - \lambda} \partial_x^{N+1} f$ for $f \in \{\phi, \mu, \lambda\}$. Our choice of $N$ will \underline{depend on $\upeta$} and we expect $N \to \infty$ as $\upeta \to \infty$. Also $K$ will represent an integer with $0 \leq K \leq N$.

    \item
        The real number $\upzeta > 0$ will represent the maximum admissible size of initial data, see already \eqref{eq:data_energy_boundedness}. $\upzeta$ may be arbitrarily large so long as it is compensated by taking $r_*$ small, see below.

    \item
        The real number $C_* > 0$ will be used in the bootstrap argument, see Section~\ref{sub:bootstrap}. The eventual choice of $C_*$ will depend on $\upeta$ and $\upzeta$, though we do not make this explicit. The real number $A_* > 0$ will depend on $C_*$ and will represent the rate at which our energies may blow up.

    \item
        The real number $r_* > 0$ represents how close we require our initial data to be to $r = 0$ in order to obtain our results. That is, our main results will apply to initial data given at $r = r_0$ so long as $0 < r_0 \leq r_*$. Note that $r_*$ will depend on $\upeta$ and $\upzeta$ and we expect $r_* \to 0$ as $\upeta, \upzeta \to \infty$.

    \item
        Other constants, often denoted $C$, will be allowed to depend on all of the aforementioned parameters. We also use $\updelta$ to represent quantities depending on all these parameters (e.g.~$\upeta, \upzeta, r_*$) such that $\updelta \to 0$ as $r_* \to 0$. We often abuse notation and write for instance ``$\updelta + \updelta = \updelta$'' or ``$C_* \updelta = \updelta$'' etc.
\end{itemize}

As per usual, the notation $F \lesssim G$ will mean that $F \leq C G$ where $C$ is allowed to depend on all the parameters above. In cases where we wish to specift that $C$ depends only on a subset of the parameters, say $\upeta$ only, then we write $F \lesssim_{\upeta} G$.

\subsubsection{Initial \texorpdfstring{$L^{\infty}$}{L\_infty} and \texorpdfstring{$L^2$}{L\_2} bounds}

As in Proposition~\ref{prop:lwp}, given $Q \in \R$, $\kappa \in \{ -1, 0, +1 \}$ and $r_0 > 0$, initial data will be given by $(f, r \partial_r f)|_{r = r_0} = (f_D, \dot{f}_D)$ for $f \in \{\phi, \lambda, \mu\}$, where the functions $\phi_D, \mu_D, \lambda_D, \dot{\phi}_D, \dot{\mu}_D, \dot{\lambda}_D$ satisfy certain constraints.

We aim to characterize a large class of initial data for which our main results apply. For some $N$ depending on $\upeta$, this will include initial data with $(f_D, \dot{f}_D) \in H^{N+1} \times H^N$ that obeys the following three conditions:
\begin{itemize}
    \item (\emph{Weak subcriticality}) For $x \in \mathbb{S}^1$:
        \begin{equation} \label{eq:data_linfty}
            \upeta^{-1} \leq \dot{\phi}_D(x) = r \partial_r \phi(r_0, x) \leq \upeta, \qquad \frac{Q^2}{r_0^2} e^{2 \mu_D}(x) = \frac{Q^2}{r_0^2} e^{2 \mu} (r_0, x) \leq 1,    
        \end{equation}

    \item (\emph{Closeness to singularity}) For $x \in \mathbb{S}^1$:
        \begin{equation} \label{eq:data_linfty2}
            e^{2\mu_D} (x) = e^{2 \mu}(r_0, x) \leq \upzeta r_0, \qquad e^{2(\mu_D - \lambda_D)}(x) = e^{2(\mu - \lambda)}(r_0, x) \leq \upzeta r_0,
        \end{equation}

    \item (\emph{Energy boundedness}) The following $L^2$ bound holds:
            \begin{equation} \label{eq:data_energy_boundedness} 
                \frac{1}{2} \sum_{f \in \{ \phi, \lambda, \mu \}} \sum_{K = 0}^N \left(  \int_{\mathbb{S}^1} ( \dot{f}_D^2 + r_0^2 e^{2(\mu_D - \lambda_D)} (\partial_x^{K+1} f_D)^2 + r_0^{2 \upgamma} (\partial_x^K f_D)^2 \right) \,dx \bigg|_{r = r_0} \leq \upzeta.
            \end{equation}
\end{itemize}
Furthermore, we will require $r_*$ to be chosen sufficiently small depending on $\upeta$ and $\upzeta$.

\subsection{Theorem~\ref{thm:global}: Global existence}

Theorem~\ref{thm:global} is the precise statement of our global existence result, see Theorem~\ref{thm:global_rough} for the rough version. In the statement, we also include some more quantitative $L^2$ and $L^{\infty}$ bounds.

\begin{theorem}[Global existence] \label{thm:global}
    Consider initial data $(\phi_D, \mu_D, \lambda_D, \dot{\phi}_D, \dot{\mu}_D, \dot{\lambda}_D)$ at $r = r_0$ as in Proposition~\ref{prop:lwp}. Then there exists $r_* = r_*(\upeta, \upzeta) > 0$ such that if $0 < r_0 \leq r_*$ and the conditions \eqref{eq:data_linfty}--\eqref{eq:data_energy_boundedness} are satisfied, then the maximal interval of existence $I$ for the solution $(\phi, \mu, \lambda)$ to the surface symmetric Einstein--Maxwell--scalar field system \eqref{eq:lambda_evol}--\eqref{eq:phi_wave} satisfies $(0, r_0] \subset I$.

    Furthermore, for $(r, x) \in (0, r_0] \times \mathbb{S}^1$ one has the following $L^{\infty}$ bounds:
    \begin{equation} \label{eq:global_linfty}
        (4 \upeta)^{-1} \leq r \partial_r \phi(r, x) \leq 4 \upeta, \qquad \frac{Q^2}{r^2} e^{2\mu}(r, x) \leq 16 \upeta^2,
    \end{equation}
    and for $0 \leq K \leq N = N(\upeta)$, there exists $C = C(\upeta, \upzeta, K) > 0$ and $A_* = A_*(\upeta) > 0$ such that one has the $L^2$ energy bounds: for fixed $r > 0$,
    \begin{equation} \label{eq:global_energy}
        \sum_{f \in \{ \phi, \lambda, \mu \}} \left(  \int (r \partial_r \partial_x^K f)^2 + (r e^{\mu - \lambda}  \partial_x^{K+1} f)^2 + r^{-2 \upgamma}( \partial_x^K f)^2 \, dx\right) \leq C r^{- 2 A_* - 2 K \upgamma}.
    \end{equation}
\end{theorem}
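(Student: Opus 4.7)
The plan is a continuous bootstrap argument initiated at $r = r_0$ and run towards $r = 0$, combined with the continuation criterion (Lemma~\ref{lem:continuation}) to upgrade the bootstrap conclusion into genuine global existence. Concretely, let $r_b = \inf\{ r \in (r_p, r_0] : \text{the bootstrap assumptions hold on } [r, r_0]\}$, where the bootstrap assumptions are enlargements of \eqref{eq:global_linfty}--\eqref{eq:global_energy} (say with $4\upeta$ replaced by $C_* \upeta$ and the $L^2$ constants inflated by $C_*$), together with first-derivative $L^\infty$ bounds of the form $|\partial_x f| \lesssim r^{-\upgamma/2}$ for $f \in \{\phi, \mu, \lambda\}$. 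The task is to show $r_b = r_p$ and to observe that $r_p = 0$: indeed, on $(r_b, r_0]$, \eqref{eq:mu_evol} combined with the $L^\infty$ bootstrap forces $r \partial_r \mu$ to remain bounded, so $\mu(r,x)$ is controlled above by $\mu_D(x) + C \log(r_0/r)$, which together with \eqref{eq:data_linfty2} prevents the blow-up alternative in Lemma~\ref{lem:continuation}.

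The heart of the proof consists of improving each bootstrap assumption. Mirroring Step~1 of Section~\ref{sub:intro_proof}, along any constant-$x$ curve I would rewrite \eqref{eq:mu_evol} and \eqref{eq:phi_wave} as the ODE system
\[
r \partial_r \mathscr{P} = -\mathscr{P}\mathscr{Q} + \mathscr{E}_{\mathscr{P}}, \qquad r \partial_r \mathscr{Q} = \mathscr{Q}(\mathscr{P}^2 - 1 - \mathscr{Q}) + \mathscr{E}_{\mathscr{Q}},
\]
for $\mathscr{P} = r\partial_r \phi$ and $\mathscr{Q} = \frac{Q^2}{r^2} e^{2\mu}$, where the error terms $\mathscr{E}_*$ collect all contributions from $\partial_x$-derivatives and the terms proportional to $\kappa e^{2\mu}$; by the first-derivative bootstrap and \eqref{eq:data_linfty2} together with \eqref{eq:mulambda_evol} these are pointwise $O(r^{2-2\upgamma})$. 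A direct computation then shows that $\mathscr{K} = \mathscr{P} + \mathscr{P}^{-1} + \mathscr{P}^{-1}\mathscr{Q}$ satisfies $r\partial_r \mathscr{K} = O(r^{2-2\upgamma})$, so a Gr\"onwall estimate yields $\mathscr{K}(r) \leq \mathscr{K}(r_0)(1 + \updelta)$. In the region $\mathscr{P} > 0, \mathscr{Q} \geq 0$ level sets of $\mathscr{K}$ are compact, so this strictly improves the $L^\infty$ bootstrap \eqref{eq:global_linfty}.

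To improve the first-derivative bootstrap, I would commute the ODEs with $\partial_x$ to obtain a linear system for $\mathscr{M} = \partial_x \mu$ and $\mathscr{N} = \partial_x \mathscr{P}$ with coefficients depending on $\mathscr{P}, \mathscr{Q}$ (cf.~\eqref{eq:ode_mn}); the resulting principal matrix is effectively non-positive modulo $O(r^{2-2\upgamma})$ errors, giving $|\mathscr{M}|, |\mathscr{N}| \lesssim r^{-\upgamma/2}$ which beats the bootstrap. For the energy bootstrap I would commute \eqref{eq:mu_wave} and \eqref{eq:phi_wave} with $\partial_x^K$ for $0 \leq K \leq N$, multiply by $r \partial_r \partial_x^K f$, and integrate over $\mathbb{S}^1$ to derive, schematically,
\[
\left| r \frac{d}{dr} \mathcal{E}^{(K)}(r) \right| \leq 2(A_* + \updelta) \, \mathcal{E}^{(K)}(r) + 2 C \, r^{-\upgamma} \sqrt{\mathcal{E}^{(K)}(r) \cdot \mathcal{E}^{(K-1)}(r)} + \cdots,
\]
with $A_* = A_*(\upeta)$ coming from the bounded zeroth-order coefficients (essentially $\sup \mathscr{Q}$) and all higher derivative contributions absorbed into the small $\updelta$. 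Integrating inductively in $K$ from $r_0$ backwards yields $\mathcal{E}^{(K)}(r) \leq C r^{-2A_* - 2K\upgamma}$, improving \eqref{eq:global_energy}.

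The main obstacle is ensuring $A_*$ is independent of $K$, which is essential so that a Sobolev embedding $H^{K+1}(\mathbb{S}^1) \hookrightarrow C^K(\mathbb{S}^1)$ can convert the top-order energy bound into $L^\infty$ bounds on $\partial_x^2 \phi$, $\partial_x(\mu - \lambda)$, etc.~of the form $O(r^{-2\upgamma - \updelta})$ with $\updelta \to 0$ as $N \to \infty$. This demands careful bookkeeping of commutators like $[\partial_x^K, r^2 e^{2(\mu-\lambda)} \partial_x^2] \phi$: only the term in which $\partial_x^K$ lands wholly on $\phi$ contributes to $A_*$, while every other commutator term carries at least one extra $\partial_x$ on a metric coefficient, costing $r^{-\upgamma}$ but also supplying a factor $r^2 e^{2(\mu-\lambda)} = O(r^3)$ (from \eqref{eq:data_linfty2} and \eqref{eq:mulambda_evol}) so that the net contribution is absorbed into $\updelta$. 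These $L^\infty$ bounds then feed back to bound the error terms $\mathscr{E}_{\mathscr{P}}, \mathscr{E}_{\mathscr{Q}}$ a posteriori, closing the full bootstrap loop and giving $r_b = r_p$; together with the $\mu$-upper-bound argument above, this forces $r_p = 0$ and Lemma~\ref{lem:continuation} delivers $(0, r_0] \subset I$.
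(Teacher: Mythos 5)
Your overall architecture is the same as the paper's (Sections~\ref{sec:energy}--\ref{sec:ode} and the proof in Section~7): a bootstrap on low-order $L^{\infty}$ quantities, energy estimates with a $K$-independent exponent $A_*$ plus a mild $K$-dependent loss $r^{-2K\upgamma}$, Sobolev interpolation to recover pointwise control of second and third derivatives, the almost-conserved quantity $\mathscr{K}=\mathscr{P}+\mathscr{P}^{-1}+\mathscr{P}^{-1}\mathscr{Q}$ to improve the zeroth-order bounds, and a continuation argument to pass from the bootstrap region to $(0,r_0]$. Your explicit appeal to Lemma~\ref{lem:continuation} via an upper bound on $\mu$ is a slightly more explicit version of what the paper leaves to the ``standard continuity argument,'' and is fine.

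There is, however, a genuine gap in your improvement of the first-derivative bootstrap. You claim that for the linearized system \eqref{eq:ode_mn} satisfied by $\mathscr{M}=\partial_x\mu$ and $\mathscr{N}=r\partial_r\partial_x\phi$, ``the resulting principal matrix is effectively non-positive modulo $O(r^{2-2\upgamma})$ errors, giving $|\mathscr{M}|,|\mathscr{N}|\lesssim r^{-\upgamma/2}$.'' The matrix $\left(\begin{smallmatrix}-\mathscr{Q} & \mathscr{P}\\ -2\mathscr{P}\mathscr{Q} & -\mathscr{Q}\end{smallmatrix}\right)$ does have eigenvalues with non-positive real part, but it is badly non-normal: the off-diagonal entries $\mathscr{P}$ and $2\mathscr{P}\mathscr{Q}$ are of size $O(\upeta)$ and $O(\upeta^{3})$ on the bootstrap region, and they are \emph{not} error terms. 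A Gr\"onwall estimate in the Euclidean norm produces the cross term $\mathscr{P}\mathscr{M}\mathscr{N}$, which by itself yields growth of order $r^{-4\upeta}$ --- far worse than $r^{-\upgamma/2}$ and fatal to the scheme, since both the AVTD structure and the integrability of the ODE error terms $\mathscr{E}_{\mathscr{P}},\mathscr{E}_{\mathscr{Q}}$ require each $\partial_x$ to cost at most $r^{-\upgamma}$ with $\upgamma$ small and independent of $\upeta$. The paper's Lemma~\ref{lem:ode_variation} supplies two ingredients your proposal omits: (i) the conjugation $\mathscr{M}\mapsto b\mathscr{M}$ with $b=\upgamma/(10\upeta)$, which shrinks one off-diagonal entry to $b\mathscr{P}\le\upgamma/2$ at the cost of inflating the other to $2b^{-1}\mathscr{P}\mathscr{Q}$; and (ii) a case analysis on the size of $\mathscr{Q}$, where the bad set $B=\{\mathscr{Q}>\tfrac{1}{16}b\upeta^{-1}\}$ is shown to have bounded $\tfrac{dr}{r}$-measure because $\int\mathscr{Q}\,\tfrac{dr}{r}$ is controlled by the total variation of $\log\mathscr{P}$ through the first ODE. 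Only the combination gives $\int\|\mathbf{L}\|_{op}\,\tfrac{dr}{r}\le\upgamma\log(r_0/r)+C$ and hence the $r^{-\upgamma}$ bound; without it, Step~2 of your bootstrap improvement does not close. A secondary, fixable imprecision: the ODE errors $\mathscr{E}_{\mathscr{P}},\mathscr{E}_{\mathscr{Q}}$ contain $\partial_x^2\phi$ and $\partial_x^2(\mu-\lambda)$, so they cannot be bounded ``by the first-derivative bootstrap'' as you state mid-argument --- they require the energy estimates plus interpolation first, as you correctly acknowledge only at the end.
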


\begin{remark}
    Theorem~\ref{thm:global} is independent of the choice of sectional curvature $\kappa \in \{-1, 0, +1\}$, as well as the choice of $Q \in \R$, so long as the second equation in \eqref{eq:data_linfty} is satisfied. So Theorem~\ref{thm:global} applies equally to the surface symmetric Einstein--scalar field system and to the surface symmetric Einstein--Maxwell--scalar field system. We make distinctions between these two models in Theorem~\ref{thm:bounce}.
%
\end{remark}

\subsection{Theorem~\ref{thm:bounce}: BKL bounces}

Theorem~\ref{thm:bounce} is the precise statement of the bounce result, showing that certain quantities obey nonlinear ODEs (plus error terms) along timelike curves. See Theorem~\ref{thm:bounce_rough} for the rough version.

\begin{theorem}[Bounces] \label{thm:bounce}
    Let $(\phi, \mu, \lambda)$ be a solution to the surface symmetric Einstein--scalar field system \eqref{eq:lambda_evol}--\eqref{eq:phi_wave}, arising from initial data $(\phi_D, \mu_D, \lambda_D, \dot{\phi}_D, \dot{\mu}_D, \dot{\lambda}_D)$ obeying \eqref{eq:data_linfty}--\eqref{eq:data_energy_boundedness} and $0 < r_0 \leq r_*$ as in Theorem~\ref{thm:global}.

    Let $\gamma: (0, r_0] \to \mathcal{Q}$ be any timelike curve parameterized by $r$, and let $\mathscr{P}_{\gamma}(r) = r \partial_r \phi (\gamma(r))$ and $\mathscr{Q}_{\gamma}(r) = \frac{Q^2}{r^2}e^{2\mu} (\gamma(r))$, note $\mathscr{Q} \equiv 0$ if and only if $Q = 0$. Then there exist error terms $\mathscr{E}_{\mathscr{P}}(r), \mathscr{E}_{\mathscr{Q}}(r)$ depending on $\gamma$ but with $|\mathscr{E}_{\mathscr{P}}(r)|, |\mathscr{E}_{\mathscr{Q}}(r)| \leq r^{1/2}$ uniformly in the choice of $\gamma$, such that
    \begin{equation} \label{eq:bounce_ode}
        r \frac{d}{dr} \mathscr{P}_{\gamma} = - \mathscr{P}_{\gamma} \mathscr{Q}_{\gamma} + \mathscr{E}_{\mathscr{P}}, \qquad r \frac{d}{dr} \mathscr{Q}_{\gamma} = \mathscr{Q}_{\gamma} ( \mathscr{P}_{\gamma}^2 - 1 - \mathscr{Q}_{\gamma} - \mathscr{E}_{\mathscr{Q}} ).
    \end{equation}
    Furthermore, there exists $C = C(\upeta, \upzeta) > 0$ such that:
    \begin{enumerate}[(i)]
        \item \label{item:bounce_i}
            If $Q = 0$, then $\mathscr{P}_{\gamma}(r)$ converges to $\mathscr{P}_{\gamma, \infty}$ as $r \to 0$ and $|\mathscr{P}_{\gamma, \infty} - \mathscr{P}_{\gamma}(r_0)| \leq C r_0^{1/2}$.

        \item \label{item:bounce_ii}
            If $Q \neq 0$, then $\mathscr{Q}_{\gamma}(r)$ converges to $0$ as $r \to 0$, while $\mathscr{P}_{\gamma}(r)$ converges to $\mathscr{P}_{\gamma, \infty}$ as $r \to 0$, necessarily satisfying $\mathscr{P}_{\gamma, \infty} \geq 1$ as well as:
            \begin{equation} \label{eq:bounce_asymp}
                | \mathscr{P}_{\gamma, \infty} - \max \{ \mathscr{P}_{\gamma}(r_0), \mathscr{P}_{\gamma}(r_0)^{-1} \} | \leq C ( \mathscr{Q}_{\gamma}(r_0) + r_0^{1/2} ).
            \end{equation}
    \end{enumerate}
\end{theorem}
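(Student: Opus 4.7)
The plan is to proceed in three stages: derive the claimed ODE system with quantitative error bounds, analyze the resulting planar autonomous system via an approximately conserved quantity, and then extract the asymptotics in each case. For the first stage, along a causal curve $\gamma(r) = (r, x(r))$ the chain rule gives
\[
    r \tfrac{d}{dr} \mathscr{P}_\gamma = (r \partial_r)(r \partial_r \phi)\big|_\gamma + r \tfrac{dx}{dr}\, \partial_x(r\partial_r \phi)\big|_\gamma, \qquad r \tfrac{d}{dr} \log \mathscr{Q}_\gamma = -2 + 2 r \partial_r \mu\big|_\gamma + 2 r \tfrac{dx}{dr}\, \partial_x \mu\big|_\gamma.
\]
Substituting the wave equation \eqref{eq:phi_wave} for $(r \partial_r)^2 \phi$ and the evolution equation \eqref{eq:mu_evol} for $r \partial_r \mu$ identifies the principal terms $-\mathscr{P}_\gamma \mathscr{Q}_\gamma$ and $\mathscr{P}_\gamma^2 - 1 - \mathscr{Q}_\gamma$, with error contributions involving $\kappa e^{2\mu}$, spatial-derivative expressions such as $r^2 e^{2(\mu-\lambda)} \partial_x^2 \phi$, $r^2 e^{2(\mu-\lambda)} \partial_x \phi\, \partial_x(\mu-\lambda)$, $r^2 e^{2(\mu-\lambda)}(\partial_x \phi)^2$, and the transport contributions $r \tfrac{dx}{dr} \partial_x(\cdot)$, for which $|r \tfrac{dx}{dr}| \leq r e^{\mu - \lambda}$ by causality. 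Each such error is controlled by $r^{1/2}$ using: the $L^\infty$ bounds \eqref{eq:global_linfty}; the bounds $e^{2\mu}, e^{2(\mu-\lambda)} \lesssim \upzeta r$ propagated from \eqref{eq:data_linfty2} via \eqref{eq:mulambda_evol}--\eqref{eq:mu_evol}; the $L^\infty$ bounds on $\partial_x^j f$ arising from \eqref{eq:global_energy} combined with the Sobolev embedding $H^1(\mathbb{S}^1) \hookrightarrow L^\infty(\mathbb{S}^1)$; and the smallness of $r_*$ ensuring that the positive powers of $r$ from the metric weights dominate the negative powers $r^{-A_* - j \upgamma - \updelta}$ carried by spatial derivatives.

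The second stage rests on the observation that
\[
    \mathscr{K}_\gamma(r) \coloneqq \mathscr{P}_\gamma(r) + \mathscr{P}_\gamma(r)^{-1} + \mathscr{Q}_\gamma(r)\, \mathscr{P}_\gamma(r)^{-1}
\]
is \emph{exactly conserved} by the error-free part of \eqref{eq:bounce_ode}: a direct differentiation gives $r \tfrac{d}{dr} \mathscr{K}_\gamma = 0$ once the principal contributions of $r \tfrac{d}{dr}\mathscr{P}_\gamma$ and $r \tfrac{d}{dr}\mathscr{Q}_\gamma$ are cancelled. Retaining the errors yields $|r \tfrac{d}{dr} \mathscr{K}_\gamma| \lesssim r^{1/2}$, and integration against $\tfrac{dr}{r}$ on $[r, r_0]$ gives $|\mathscr{K}_\gamma(r) - \mathscr{K}_\gamma(r_0)| \lesssim r_0^{1/2}$, so the limit $\mathscr{K}_{\gamma, \infty} \coloneqq \lim_{r \to 0} \mathscr{K}_\gamma(r)$ exists.

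For the third stage, case (i) $Q = 0$ is immediate: $\mathscr{Q}_\gamma \equiv 0$, whence $r \tfrac{d}{dr}\mathscr{P}_\gamma = \mathscr{E}_\mathscr{P}$ and $|\mathscr{P}_\gamma(r) - \mathscr{P}_\gamma(r_0)| \leq \int_r^{r_0} \tilde{r}^{-1/2}\,d\tilde{r} \leq 2 r_0^{1/2}$. For case (ii) $Q \neq 0$ I reparametrise by $\tau = -\log r$, giving $\tfrac{d \mathscr{P}_\gamma}{d\tau} = \mathscr{P}_\gamma \mathscr{Q}_\gamma - \mathscr{E}_\mathscr{P}$. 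Since $\mathscr{P}_\gamma \mathscr{Q}_\gamma \geq 0$ and $|\mathscr{E}_\mathscr{P}| \lesssim e^{-\tau/2}$ is integrable in $\tau$, $\mathscr{P}_\gamma$ has bounded variation and converges to some $\mathscr{P}_{\gamma, \infty}$; furthermore $\int_0^\infty \mathscr{P}_\gamma \mathscr{Q}_\gamma\, d\tau < \infty$, which combined with $\mathscr{P}_\gamma \geq (4\upeta)^{-1}$ and the Lipschitz bound $|\tfrac{d\log \mathscr{Q}_\gamma}{d\tau}| \lesssim 1$ from \eqref{eq:bounce_ode}, forces $\mathscr{Q}_\gamma \to 0$. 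Taking the limit in $\mathscr{K}_\gamma$ gives $\mathscr{P}_{\gamma, \infty} + \mathscr{P}_{\gamma, \infty}^{-1} = \mathscr{K}_{\gamma, \infty} \geq 2$, so $\mathscr{P}_{\gamma, \infty}$ is one of the two roots $\mathscr{P}_\pm = \tfrac{1}{2}(\mathscr{K}_{\gamma, \infty} \pm \sqrt{\mathscr{K}_{\gamma, \infty}^2 - 4})$, satisfying $\mathscr{P}_- \mathscr{P}_+ = 1$ and $\mathscr{P}_- \leq 1 \leq \mathscr{P}_+$. The approximate monotonicity of $\mathscr{P}_\gamma$ in $\tau$ (nondecreasing up to $O(r_0^{1/2})$ error) then selects the larger root $\mathscr{P}_{\gamma, \infty} = \mathscr{P}_+ \geq 1$, and the identity $\mathscr{K}_{\gamma, \infty} - (\mathscr{P}_\gamma(r_0) + \mathscr{P}_\gamma(r_0)^{-1}) = O(\mathscr{Q}_\gamma(r_0) + r_0^{1/2})$ combined with the local Lipschitz continuity of $\mathscr{K} \mapsto \mathscr{P}_+(\mathscr{K})$ on the compact range of $\mathscr{K}$ yields the quantitative bound on $|\mathscr{P}_{\gamma, \infty} - \max\{\mathscr{P}_\gamma(r_0), \mathscr{P}_\gamma(r_0)^{-1}\}|$.

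The main obstacle is stage one, specifically the rigorous verification that $|\mathscr{E}_\mathscr{P}|$ and $|\mathscr{E}_\mathscr{Q}|$ are bounded by $r^{1/2}$ \emph{uniformly in the choice of $\gamma$}, which requires careful bookkeeping of the blow-up rates in \eqref{eq:global_energy} against the positive powers of $r$ extracted from the metric weights $e^{2\mu}, e^{2(\mu-\lambda)}$ and the causal bound $|\tfrac{dx}{dr}| \leq e^{\mu-\lambda}$. Once this uniform estimate is in hand, stages two and three form a standard phase-plane analysis organised around the approximate conservation of $\mathscr{K}_\gamma$.
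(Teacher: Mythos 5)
Your overall architecture matches the paper's: derive the perturbed planar ODEs along $\gamma$, exploit the near-conserved quantity $\mathscr{K}_\gamma = \mathscr{P}_\gamma + \mathscr{P}_\gamma^{-1} + \mathscr{Q}_\gamma\mathscr{P}_\gamma^{-1}$, and split into cases. Stage two and the convergence arguments in stage three are essentially sound (your $\tau$-reparametrisation proof that $\mathscr{Q}_\gamma \to 0$ is a valid variant of the paper's subsequence-plus-integrability argument). There are, however, two genuine gaps.

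First, in stage one you claim the error bounds $|\mathscr{E}_{\mathscr{P}}|, |\mathscr{E}_{\mathscr{Q}}| \leq r^{1/2}$ follow from \eqref{eq:global_energy} together with $H^1(\mathbb{S}^1)\hookrightarrow L^{\infty}(\mathbb{S}^1)$ and the smallness of $r_*$ letting the metric weights ``dominate'' $r^{-A_*-j\upgamma-\updelta}$. This cannot work: $A_* = A_*(\upeta)$ is large (in the paper it is of order $C_*^2$ with $C_* \geq 32\upeta^2$), while the metric weights $r^2 e^{2(\mu-\lambda)} \lesssim r^3$ supply only a fixed, bounded power of $r$; once the net exponent of $r$ is negative, shrinking $r_*$ makes matters worse, not better. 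The pointwise bounds actually needed, $\|\partial_x^k f\|_{L^{\infty}} + \|r\partial_r\partial_x^k f\|_{L^{\infty}} \leq \updelta r^{-1/2}$ for $k \leq 3$, are obtained in the paper (Lemma~\ref{lem:low_order_linfty}) by the Gagliardo--Nirenberg interpolation of Lemma~\ref{lem:interpolation} between the \emph{low-order $L^{\infty}$ bootstrap bounds} (which carry no $r^{-A_*}$) and the top-order $L^2$ energy, with $N = N(\upeta)$ chosen so large that the interpolation exponent $\alpha = k/(N - \tfrac{1}{2})$ tames the factor $r^{-A_*}$. This is not recoverable from the statement of Theorem~\ref{thm:global} by plain Sobolev embedding, and it is the essential technical ingredient your stage one is missing.

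Second, in stage three, case (ii), the selection of the larger root via ``approximate monotonicity of $\mathscr{P}_\gamma$ in $\tau$'' does not establish $\mathscr{P}_{\gamma,\infty} \geq 1$. Monotonicity only yields $\mathscr{P}_{\gamma,\infty} \geq \mathscr{P}_\gamma(r_0) - O(r_0^{1/2})$, and in the bounce regime $\mathscr{P}_\gamma(r_0) < 1$ \emph{both} roots satisfy this lower bound, since the smaller root is $\mathscr{P}_- \approx \mathscr{P}_\gamma(r_0)$ when $\mathscr{Q}_\gamma(r_0)$ is small; convergence to $\mathscr{P}_-$ (i.e.\ no bounce) is perfectly consistent with your inequality. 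The correct argument uses the $\mathscr{Q}$-equation: for $Q \neq 0$ one may divide by $\mathscr{Q}_\gamma > 0$ to get $r\frac{d}{dr}\log\mathscr{Q}_\gamma \to \mathscr{P}_{\gamma,\infty}^2 - 1$, and if this limit were negative then $\mathscr{Q}_\gamma$ would grow like a negative power of $r$ as $r \to 0$, contradicting $\mathscr{Q}_\gamma \to 0$. A further minor caveat: $\mathscr{K} \mapsto \mathscr{P}_+(\mathscr{K})$ has a square-root singularity at $\mathscr{K} = 2$ and so is not locally Lipschitz there; the final quantitative step should instead factor the difference $(\mathscr{P}_{\gamma,\infty} + \mathscr{P}_{\gamma,\infty}^{-1}) - (M + M^{-1})$ directly, using that both $\mathscr{P}_{\gamma,\infty}$ and $M = \max\{\mathscr{P}_\gamma(r_0), \mathscr{P}_\gamma(r_0)^{-1}\}$ lie in $[1, 4\upeta]$.
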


A corollary of Theorem~\ref{thm:bounce} is the following stability and instability statement regarding \emph{charged} perturbations of a class of \emph{uncharged} solutions to the surface symmetric Einstein--scalar field system. We assume that $\kappa \in \{0, +1\}$ so as to be able to apply Theorem~\ref{thm:asymp_smooth} for the unperturbed uncharged spacetime.

\begin{corollary}[Stability / Instability] \label{cor:bounce}
    Let $(\phi, \mu, \lambda)$ be a \underline{smooth} (i.e.~$C^{\infty}$) solution to the surface symmetric Einstein--scalar field system \eqref{eq:lambda_evol}--\eqref{eq:phi_wave} with $Q = 0$ and $\kappa \in \{0, +1\}$, arising from initial data given by $(\phi_D, \mu_D, \lambda_D, \dot{\phi}_D, \dot{\mu}_D, \dot{\lambda}_D)$ at $r = r_1 > 0$. By Theorem~\ref{thm:asymp_smooth}, there exists smooth $\Psi, \Xi: \mathbb{S}^1 \to \R$ such that $\phi(r, x) = \Psi(x) \log r + \Xi(x) + o(1)$ as $r \to 0$.

    Suppose further that $\Psi(x) > 0$ for all $x \in \mathbb{S}^1$. Then there exists $N \in \mathbb{N}$ such that for possibly charged perturbations of the initial data, i.e.~$|Q| \leq \varepsilon$, and $\|(\tilde{f}_D, \tilde{\dot{f}}_D) - (f_D, \dot{f}_D) \|_{H^{N+1}\times H^N} \leq \varepsilon$ for $f \in \{\phi, \mu, \lambda\}$, then for $\varepsilon$ sufficiently small the solution $(\tilde{\phi}, \tilde{\mu}, \tilde{\lambda})$ arising from data $(\tilde{\phi}_D, \tilde{\mu}_D, \tilde{\lambda}_D, \tilde{\dot{\phi}}_D, \tilde{\dot{\mu}}_D, \tilde{\dot{\lambda}}_D)$ and charge $Q$ has (a) global existence towards $r = 0$, (b) is such that for any $x \in \mathbb{S}^1$ the limit $\tilde{\Psi}(x) = \lim_{r \to 0} r \partial_r \phi(r, x)$ exists and moreover for any $\tilde{\varepsilon} > 0$, if $\varepsilon$ is chosen small enough depending on $\tilde{\varepsilon}$ then:
    \begin{enumerate}[(i)]
        \item $|\tilde{\Psi}(x) - \Psi(x)| \leq \tilde{\varepsilon}$ if $Q = 0$,
        \item $|\tilde{\Psi}(x) - \max \{ \Psi(x), \Psi(x)^{-1} \}| \leq \tilde{\varepsilon}$ if $Q \neq 0$.
    \end{enumerate}
\end{corollary}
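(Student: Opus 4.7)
My strategy is to reduce Corollary~\ref{cor:bounce} directly to Theorem~\ref{thm:global} and Theorem~\ref{thm:bounce}, by treating a well-chosen intermediate time $r_0 \in (0, r_1)$ as the ``initial'' time to which those theorems apply. The unperturbed smooth solution, by virtue of Theorem~\ref{thm:asymp_smooth} and the strict positivity of $\Psi$, produces at time $r_0$ data that sits comfortably inside the class \eqref{eq:data_linfty}--\eqref{eq:data_energy_boundedness}. Small perturbations (of the data at $r_1$ and of the charge $Q$) will then remain in that class. Applying Theorem~\ref{thm:bounce} along the constant-$x$ timelike curve $\gamma(r)=(r,x)$, for which $\mathscr{P}_\gamma(r) = r\partial_r \tilde\phi(r,x)$ and $\mathscr{P}_{\gamma,\infty}=\tilde\Psi(x)$, then yields both parts (i) and (ii).

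For the parameter setup: since $\Psi$ is smooth and strictly positive on the compact $\mathbb{S}^1$, the constant $\upeta_0 := 2\max(\|\Psi\|_{L^\infty}, \|\Psi^{-1}\|_{L^\infty})$ is finite. Fix $\upeta = \upeta_0$, let $N = N(\upeta)$ be as in Theorem~\ref{thm:global}, and apply Theorem~\ref{thm:asymp_smooth} with $k \geq N+2$ to the unperturbed $(\phi, \mu, \lambda)$. From \eqref{eq:phi_asymp}--\eqref{eq:lambda_asymp} one reads off that at $r_0$ taken sufficiently small we have $r\partial_r\phi(r_0, x) \in [\upeta_0^{-1}, \upeta_0]$ uniformly, $e^{2\mu}(r_0, x) \lesssim r_0^{\min_x\Psi^2 + 1} = o(r_0)$, $e^{2(\mu-\lambda)}(r_0, x) \lesssim r_0^2 = o(r_0)$, and that the $H^{N+1}\times H^N$ norm of $(f, r\partial_r f)$ at $r_0$ is bounded by some $\upzeta_0$ depending only on $\Psi, \Xi, M, \Lambda, N$ (using the $C^{N+1}$-convergence of the error terms). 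Set $\upzeta = 2\upzeta_0$ and shrink $r_0$ so that $r_0 \leq r_*(\upeta, \upzeta)$ and all of \eqref{eq:data_linfty}--\eqref{eq:data_energy_boundedness} hold for the unperturbed data at $r_0$ with strict margin.

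Next I transfer the perturbation from $r_1$ to $r_0$: the data-to-solution map at the fixed intermediate time $r_0$ is continuous in $H^{N+1}\times H^N$ on the compact interval $[r_0, r_1]$, by standard stability estimates for the quasilinear wave--transport system of Proposition~\ref{prop:emsfss_surface_sym} (applicable since the unperturbed $\mu$ is bounded on this interval). So for $|Q|\leq \varepsilon$ and $\|(\tilde f_D, \tilde{\dot f}_D) - (f_D, \dot f_D)\|_{H^{N+1}\times H^N}\leq \varepsilon$ small enough, the perturbed data at $r_0$ lies $O(\varepsilon)$-close to the unperturbed in $H^{N+1}\times H^N$, hence still inside \eqref{eq:data_linfty}--\eqref{eq:data_energy_boundedness} with parameters $\upeta, \upzeta$. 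Theorem~\ref{thm:global} then yields global existence for the perturbed solution down to $r=0$, proving (a), and Theorem~\ref{thm:bounce} yields existence of the limit $\tilde\Psi(x) = \lim_{r\to 0} r\partial_r\tilde\phi(r,x)$.

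For the asymptotic closeness in (b), in case $Q=0$ Theorem~\ref{thm:bounce}\ref{item:bounce_i} gives $|\tilde\Psi(x) - r\partial_r\tilde\phi(r_0,x)| \leq Cr_0^{1/2}$; combined with the triangle inequality and Sobolev embedding $H^N\hookrightarrow C^0$ I obtain
\begin{equation*}
|\tilde\Psi(x) - \Psi(x)| \leq Cr_0^{1/2} + \|r\partial_r\tilde\phi(r_0,\cdot) - r\partial_r\phi(r_0,\cdot)\|_{C^0} + \|r\partial_r\phi(r_0,\cdot) - \Psi\|_{C^0}.
\end{equation*}
In case $Q\neq 0$, Theorem~\ref{thm:bounce}\ref{item:bounce_ii} gives the analogous bound with the additional charge term $C\mathscr{Q}_\gamma(r_0) = C Q^2 r_0^{-2}e^{2\tilde\mu}(r_0, x) \leq C\varepsilon^2 r_0^{\min_x\Psi^2 - 1}$; continuity of $y\mapsto \max(y, y^{-1})$ at $y=\Psi(x) > 0$ then lets us replace $\max(\mathscr{P}_\gamma(r_0), \mathscr{P}_\gamma(r_0)^{-1})$ by $\max(\Psi(x), \Psi(x)^{-1})$ modulo the same three error terms. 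Given $\tilde\varepsilon$, I \emph{first} shrink $r_0$ so that $Cr_0^{1/2}$ and $\|r\partial_r\phi(r_0,\cdot) - \Psi\|_{C^0}$ are each $<\tilde\varepsilon/3$, and \emph{then} shrink $\varepsilon$ (now depending on $r_0$) so that the $O(\varepsilon)$ perturbation error and the charge term $C\varepsilon^2 r_0^{\min_x\Psi^2-1}$ are each $<\tilde\varepsilon/3$. The main obstacle is precisely this ordering of smallness: the charge-induced quantity $\mathscr{Q}_\gamma(r_0)\sim \varepsilon^2 r_0^{\min\Psi^2-1}$ can be large when $\min\Psi<1$ -- exactly the regime where a BKL bounce occurs -- so $\varepsilon$ must be chosen \emph{after} $r_0$ is fixed, forcing $\varepsilon \ll r_0^{(1-\min\Psi^2)/2}$. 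This is harmless but is precisely why the corollary is phrased as ``if $\varepsilon$ is small enough depending on $\tilde\varepsilon$''.
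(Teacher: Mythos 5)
Your proposal follows essentially the same route as the paper's proof: restart the evolution at a small intermediate time $r_0$ where the unperturbed solution (via Theorem~\ref{thm:asymp_smooth} and the positivity of $\Psi$ on compact $\mathbb{S}^1$) satisfies \eqref{eq:data_linfty}--\eqref{eq:data_energy_boundedness}, transfer the perturbation from $r_1$ to $r_0$ by Cauchy stability, and apply Theorems~\ref{thm:global} and~\ref{thm:bounce} along constant-$x$ timelike curves, fixing $r_0$ before shrinking $\varepsilon$. One small imprecision worth correcting: the unweighted $H^{N+1}\times H^N$ norm of the data at $r_0$ is \emph{not} uniformly bounded as $r_0 \to 0$ (since $\phi(r_0,\cdot) = \Psi \log r_0 + \cdots$ diverges logarithmically); what is bounded, and what \eqref{eq:data_energy_boundedness} actually requires, is the $r_0^{2\upgamma}$-weighted energy, whose weight absorbs the $(\log r_0)^2$ growth.
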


\begin{remark}
    We consider Corollary~\ref{cor:bounce} a stability / instability result because while stability holds in the sense that while for our perturbations we still have global existence towards $r = 0$ and preservation of features such as matter and curvature blow-up (i.e.~stability), if in the case where the original $\Psi(x)$ has $0 < \Psi(x_0) < 1$ for some $x_0 \in \mathbb{S}^1$, any perturbed spacetime with $Q \neq 0$ will have a corresponding $\tilde{\Psi}(x)$ with $\tilde{\Psi}(x_0) \approx \Psi(x_0)^{-1}$, which is ``far away'' from $\Psi(x_0)$.

    We do not include cases where $\Psi(x)$ can be $0$ in the unperturbed spacetime. In fact, in this context there are examples where one does even have stability in the sense of global existence towards $r = 0$, e.g.~the perturbation of Schwarzschild to Reissner-Nordstr\"om. It remains an open problem to study cases where $\Psi(x)$ is allowed to change sign; this is outside the scope of this article, given our dependence on $\upeta \geq 2$ and the weak subcriticality condition \eqref{eq:data_linfty}.
\end{remark}


\section{Bootstrap assumptions, energies and interpolation lemmas} \label{sec:energy}

Theorems~\ref{thm:global} and \ref{thm:bounce} will be proven simultaneously. As outlined at the end of Section~\ref{sub:intro_proof}, the proof will proceed via a bootstrap argument. That is, we suppose that a solution $(\phi, \mu, \lambda)$ to the surface symmetric Einstein--Maxwell--scalar field system \eqref{eq:lambda_evol}--\eqref{eq:phi_wave} exists in the interval $r \in [r_b, r_0]$, where $r_b > 0$ is a ``bootstrap time''.

We moreover suppose that the solution satisfies certain pointwise bootstrap assumptions in this interval $[r_b, r_0]$, namely \eqref{eq:bootstrap_maxwell}--\eqref{eq:bootstrap_d2x} below. The strategy is to then use these bootstrap assumptions to derive energy estimates, where energies are suitable $L^2$ type integrals to be defined in Section~\ref{sub:energy}. Though the energy estimates will blow up mildly towards $r = 0$, they will be used together with an ODE-based argument to eventually improve our $L^{\infty}$ bootstrap assumptions, completing the proof of Theorem~\ref{thm:global}. Further analysis of the ODEs will yield Theorem~\ref{thm:bounce}.

\subsection{The \texorpdfstring{$L^{\infty}$}{L\_infty} bootstrap assumptions} \label{sub:bootstrap}

As mentioned in Section~\ref{subsub:param}, $C_* = C_*(\upeta, \upzeta) > 0$ will represent a large real number to be chosen later in the argument. In the proof of Theorem~\ref{thm:global}, we make repeated reference to the following four low order $L^{\infty}$ bootstrap assumptions:
\begin{gather}
    \label{eq:bootstrap_maxwell} \tag{B1}
    \frac{Q^2}{r^2} e^{2 \mu} \leq C_*, \quad e^{2 \mu} + e^{2(\mu - \lambda)} \leq C_* r, \\[0.8em]
    \label{eq:bootstrap_rdr} \tag{B2}
    |r \partial_r \phi|, \, |r \partial_r \mu|, \, |r \partial_r \lambda| \leq C_*, \\[0.8em]
    \label{eq:bootstrap_dx} \tag{B3}
    |\partial_x \phi|, \, |\partial_x \mu|, \, |\partial_x \lambda| \leq C_* r^{- \upgamma}, \\[0.8em]
    \label{eq:bootstrap_d2x} \tag{B4}
    |r \partial_r \partial_x \phi|, \, |r \partial_r \partial_x \mu|, \, |r \partial_r \partial_x \lambda| \leq C_* r^{- \upgamma}.
\end{gather}
Using the assumptions on the initial data \eqref{eq:data_linfty}--\eqref{eq:data_energy_boundedness} and Sobolev embedding, one can choose $C_*$ depending on $\upzeta$ such that the \eqref{eq:bootstrap_maxwell}--\eqref{eq:bootstrap_d2x} will hold in a neighborhood of $r = r_0$. Thus we can always find some $r_b$ so that we can carry out our bootstrap argument in the interval $r \in [r_b, r_0]$.

\subsection{Energies} \label{sub:energy}

Below, we define the $L^2$ type quantities that will appear in our energy estimates of Section~\ref{sec:l2}. Since eventually have a hierarchy of energy estimates (see already Proposition~\ref{prop:energy_hierarchy}), we define a separate energy at each order of differentiability, up to a top-order energy with $N$ derivatives, with $N$ eventually chosen depending on $\upeta$.

\begin{definition} \label{def:energy}
    Let $0 \leq K \leq N$. Define the following $K$th order energies at fixed $r$:
    \begin{gather}
        \mathcal{E}^{(K)}_{\phi}(r) \coloneqq \frac{1}{2} \int \left( (r \partial_r \partial_x^K \phi)^2 + r^2 e^{2(\mu - \lambda)} (\partial_x^{K+1} \phi)^2 + (\partial_x^K \phi)^2 \right) dx, \\[0.8em]
        \mathcal{E}^{K)}_{\mu}(r) \coloneqq \frac{1}{2} \int \left( (r \partial_r \partial_x^K \mu)^2 + r^2 e^{2(\mu - \lambda)} (\partial_x^{K+1} \mu)^2 + (\partial_x^K \mu)^2 \right) dx, \\[0.8em]
        \mathcal{E}^{(K)}_{\lambda}(r) \coloneqq \frac{1}{2} \int \left( (r \partial_r \partial_x^K \lambda)^2 + r^2 e^{2(\mu - \lambda)} (\partial_x^{K+1} \lambda)^2 + (\partial_x^K \lambda)^2 \right) dx, \\[0.8em]
        \mathcal{E}^{(K)}(r) \coloneqq \mathcal{E}^{(K)}_{\phi}(r) + \mathcal{E}^{(K)}_{\mu}(r) + \mathcal{E}^{(k)}_{\lambda}(r).
    \end{gather}
\end{definition}

\begin{remark}
    Even at order $K=0$, the energy $\mathcal{E}_{\phi}^{(0)}(r)$ includes the lower order term $\phi^2$. Even when that Theorem~\ref{thm:asymp_smooth} applies, this will blow up as $O( (\log r)^2 )$. However, since we allow our energies to blow up towards $r = 0$, this $O((\log r)^2)$ blow up will not be of any concern. It will be useful to include the $(\partial_x^K \phi)^2$ term in $\mathcal{E}_{\phi}^{(K)}(r)$ since this is how we will eventually get pointwise control on $\partial_x$-derivatives of $\phi$ without the problematic weight $r e^{\mu - \lambda}$ which vanishes as $r \to 0$.
\end{remark}

\subsection{Sobolev--type inequalities}

We finish this section with two technical lemmas that will allow us to relate $L^2$ norms and $L^{\infty}$ norms of certain quantities. These will be useful in going from our low order $L^{\infty}$-norms to energy estimates, and vice versa.

\begin{lemma}[Sobolev interpolation inequality] \label{lem:interpolation}
    Let $N, K$ be integers with $0 \leq N < K$, and let $f: \mathbb{S}^1 \to \R$ be such that $\partial_x^K f \in L^2(\mathbb{S}^1)$. Then the following $L^{\infty}$--$L^2$ interpolation inequality holds:
    \begin{equation}
        \| \partial_{x}^N f \|_{L^{\infty}(\mathbb{S}^1)} \lesssim_{N, K} \| f \|_{L^{\infty}(\mathbb{S}^1)}^{1 - \alpha} \| \partial_x^K f \|_{L^2(\mathbb{S}^1)}^{\alpha},
        \qquad
        \text{ where } \alpha = \frac{N}{K - \frac{1}{2}}.
    \end{equation}
\end{lemma}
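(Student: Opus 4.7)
\medskip

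\noindent\textbf{Proof plan for Lemma~\ref{lem:interpolation}.}

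The inequality is a standard Gagliardo--Nirenberg interpolation between $L^\infty$ and $L^2$-Sobolev norms on $\mathbb{S}^1$. The case $N=0$ is trivial, so assume $N \geq 1$. My plan is to prove it via a dyadic (Littlewood--Paley) frequency decomposition together with Bernstein's inequality, optimizing a cutoff frequency. Since $\mathbb{S}^1$ is compact, the existence of a smooth Littlewood--Paley decomposition $f = P_{\leq M} f + \sum_{2^j > M} P_j f$ (say via a partition of unity in Fourier space applied to $\hat{f}(k)$) is entirely standard.

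The two ingredients I would use are: first, for a smoothed low-frequency cutoff $P_{\leq M}$ based on convolution with an $L^1$-normalised kernel, one has the pointwise bound
\[
\| \partial_x^N P_{\leq M} f \|_{L^\infty(\mathbb{S}^1)} \lesssim_N M^N \| f \|_{L^\infty(\mathbb{S}^1)},
\]
since each derivative costs a factor of $M$ and the smoothed cutoff is $L^\infty$-bounded. Second, Bernstein's inequality on $\mathbb{S}^1$ gives for each dyadic block
\[
\| \partial_x^N P_j f \|_{L^\infty(\mathbb{S}^1)} \lesssim_N 2^{j(N + 1/2)} \| P_j f \|_{L^2(\mathbb{S}^1)} \lesssim_{N,K} 2^{j(N + 1/2 - K)} \| \partial_x^K f \|_{L^2(\mathbb{S}^1)}.
\]
Summing the high-frequency tail, which converges geometrically because $K - N - \tfrac{1}{2} > 0$ (the hypothesis $K > N$ with $K, N \in \mathbb{Z}$ forces $K - N \geq 1$), yields
\[
\| \partial_x^N P_{> M} f \|_{L^\infty(\mathbb{S}^1)} \lesssim_{N,K} M^{N + 1/2 - K} \| \partial_x^K f \|_{L^2(\mathbb{S}^1)}.
\]

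Adding the two contributions and choosing $M$ to balance the two terms, namely $M^{K - 1/2} = \| \partial_x^K f \|_{L^2} / \| f \|_{L^\infty}$ (assuming both are nonzero; the degenerate cases are immediate), produces the claimed inequality with exponent $\alpha = N/(K - \tfrac{1}{2})$. The mean-zero part of $f$ and any constant mode require no special treatment since constants are annihilated by $\partial_x^N$ for $N \geq 1$.

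I do not expect any genuine obstacle: the computation is a textbook Gagliardo--Nirenberg argument, and the only minor care is in verifying that the dyadic sum giving the high-frequency bound converges, which is where the strict inequality $K > N$ (equivalently $K - N - \tfrac{1}{2} > 0$) is used. Alternatively, one could quote this directly from a standard reference on function spaces on the torus.
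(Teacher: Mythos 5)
Your proposal is correct. The paper does not actually prove this lemma --- it simply cites it as a standard Gagliardo--Nirenberg inequality from Nirenberg's lecture notes --- so any complete argument is necessarily ``different'' from what appears in the text. The classical route the citation points to is an elementary real-variable one: a mean-value/subinterval argument giving $\|\partial_x f\|^2 \lesssim \|f\|\,\|\partial_x^2 f\|$-type estimates, followed by a log-convexity induction in the differentiation order (this is in fact the same style of argument the author uses in Appendix~B to prove the weighted product estimate, via $F_j^2 \lesssim F_{j-1}F_{j+1}$). Your Littlewood--Paley/Bernstein route is an equally standard alternative and has the virtue of making the exponent transparent: the balance $M^N\|f\|_{L^\infty} \sim M^{N+\frac{1}{2}-K}\|\partial_x^K f\|_{L^2}$ immediately forces $\alpha = N/(K-\tfrac12)$, and the convergence of the high-frequency sum is exactly the condition $K > N + \tfrac12$, automatic for integers with $N < K$. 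The only points needing (routine) care, which you already flag, are the degenerate cases --- if $\partial_x^K f \equiv 0$ then periodicity forces $f$ constant, so both sides vanish for $N \geq 1$ --- and the regime where the optimal cutoff satisfies $M < 1$, where one takes $M = 1$ and observes that the resulting bound $\|\partial_x^K f\|_{L^2}$ is dominated by the right-hand side precisely because $\|\partial_x^K f\|_{L^2} \leq \|f\|_{L^\infty}$ there. No gap.
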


\begin{proof}
    This Gagliardo--Nirenberg type inequality is standard, see for instance \cite[Lecture II]{NirenbergElliptic}.
\end{proof}

\begin{lemma}[Weighted Sobolev product inequality] \label{lem:weightedl2}
    Let $M, N \geq 0$ be integers and let $K = M + N$. Let $w: \mathbb{S}^1 \to \R_{> 0}$ be a weight function and let $W = \log w$. Then for $f, g$ sufficiently regular one has
    \begin{multline}
        \| w \, \partial_x^M f \, \partial_x^N g \|_{L^2(\mathbb{S}^1)}
        \lesssim_{M, N} 
        \| f \|_{L^{\infty}(\mathbb{S}^1) }
        \left(
            \| w \, \partial_x^K g \|_{L^2(\mathbb{S}^1)} + \sum_{j=1}^{K-1} \| \partial_x W \|_{L^{\infty}(\mathbb{S}^1)}^j \| w \, \partial_x^{K - j} g \|_{L^2(\mathbb{S}^1)}
        \right)
        \\[-0.2em]
        + 
        \| g \|_{L^{\infty}(\mathbb{S}^1) }
        \left(
            \| w \, \partial_x^K f \|_{L^2(\mathbb{S}^1)} + \sum_{j=1}^{K-1} \| \partial_x W \|_{L^{\infty}(\mathbb{S}^1)}^j \| w \, \partial_x^{K - j} f \|_{L^2(\mathbb{S}^1)}
        \right).
    \end{multline}
\end{lemma}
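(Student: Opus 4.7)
The stated inequality is a weighted Moser--Kato--Ponce-type product estimate, generalizing the classical (unweighted) Moser inequality
\[ \| \partial_x^M u \cdot \partial_x^N v \|_{L^2(\mathbb{S}^1)} \lesssim_{M,N} \|u\|_{L^\infty} \| \partial_x^K v \|_{L^2} + \|v\|_{L^\infty} \| \partial_x^K u \|_{L^2}, \]
which itself follows from H\"older's inequality with exponents $p = 2K/M$, $q = 2K/N$ (so that $1/p + 1/q = 1/2$) together with the Gagliardo--Nirenberg interpolation of Lemma~\ref{lem:interpolation}. The plan is first to recall this unweighted estimate, then to import the weight $w$ by applying it to suitable modified arguments and carefully accounting for the derivatives of $w$.

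The proof proceeds by induction on $K = M + N$. The base case $K = 0$ reduces to $\|wfg\|_{L^2} \leq \|f\|_{L^\infty}\|wg\|_{L^2}$, and the edge cases $M = 0$ or $N = 0$ are likewise handled by pulling out an $L^\infty$ factor. For the main case $M, N \geq 1$, I apply the unweighted Moser inequality to the pair $(f, \widetilde g)$ with $\widetilde g \coloneqq wg$, obtaining
\[ \| \partial_x^M f \cdot \partial_x^N(wg) \|_{L^2} \lesssim \|f\|_{L^\infty} \|\partial_x^K(wg)\|_{L^2} + \|wg\|_{L^\infty} \|\partial_x^K f\|_{L^2}, \]
and analogously for $(g, \widetilde f = wf)$. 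The Leibniz expansion $\partial_x^N(wg) = w \partial_x^N g + \sum_{j=1}^N \binom{N}{j} (\partial_x^j w) \partial_x^{N-j} g$ (and similarly for $\partial_x^K(wg)$), together with the identity $\partial_x w = w \, \partial_x W$ iterated via Fa\`a di Bruno on $w = e^W$, produces the desired leading term $w\partial_x^N g$ (resp.\ $w\partial_x^K g$) plus correction terms of schematic form $w \cdot (\partial_x W)^j \partial_x^{N-j} g$, matching the $\sum_{j=1}^{K-1}$ sum on the RHS of the lemma.

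The main technical obstacle is that Fa\`a di Bruno expands $\partial_x^j w$ as $w$ times a Bell polynomial in $\partial_x W, \partial_x^2 W, \ldots, \partial_x^j W$, whereas the statement of the lemma permits only powers of $\partial_x W$. To reduce to $\partial_x W$ alone, my plan is to work inside the dual pairing $\int h \cdot w \partial_x^M f \, \partial_x^N g \, dx$ with $\|h\|_{L^2} \leq 1$, and to iterate integration by parts so that any factor $\partial_x^i W$ with $i \geq 2$ has a derivative moved onto a neighboring $\partial_x^\ell f$ or $\partial_x^m g$: this trades $\partial_x^i W \mapsto \partial_x^{i-1} W$ while redistributing a derivative in the $(f,g)$-factor, and iterating collapses every $W$-factor to $\partial_x W$. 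The resulting lower-order products fall under the induction hypothesis at order $K - j$, with the scalar factors $\|\partial_x W\|_{L^\infty}^j$ absorbed into the implicit constant, closing the induction and yielding the claimed bound with constants depending only on $M$ and $N$.
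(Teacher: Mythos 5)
Your overall strategy (reduce to the unweighted Moser/Gagliardo--Nirenberg product estimate and then account for the weight) is natural, but the mechanism you propose for taming the weight does not close, and that reduction is where the real content of the lemma lies. Two concrete problems. First, once you apply the unweighted estimate to the pair $(f, wg)$, its right-hand side contains the pure $L^2$ norm $\|\partial_x^K(wg)\|_{L^2}$; expanding this by Leibniz and Fa\`a di Bruno leaves terms of the form $\|w\,B_j(\partial_x W,\dots,\partial_x^j W)\,\partial_x^{K-j}g\|_{L^2}$ containing $\partial_x^i W$ with $i\ge 2$, and since these sit inside an $L^2$ norm rather than a pairing, there is nothing to integrate by parts against: they are simply not controlled by the right-hand side of the lemma, which involves only $\|\partial_x W\|_{L^\infty}$. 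Second, for the terms you do place inside a dual pairing $\int h\cdot(\cdots)\,dx$ with $\|h\|_{L^2}\le 1$, integrating by parts to convert $\partial_x^i W$ into $\partial_x^{i-1}W$ necessarily puts a derivative on $h$ as well, and $\partial_x h$ is not controlled. (One could instead bound the weight crudely via $\sup w/\inf w \le e^{2\pi\|\partial_x W\|_{L^{\infty}}}$, but that yields only exponential dependence on $\|\partial_x W\|_{L^\infty}$, which is useless for the application in Section 5, where $\partial_x W = \partial_x(\mu-\lambda) = O(r^{-\upgamma})$ and the polynomial dependence $\|\partial_x W\|_{L^\infty}^j$ is essential.)

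The paper's proof sidesteps all of this by never differentiating the weight more than once: it works throughout in $L^p(w^2\,dx)$ spaces, defines interpolation quantities $F_k$ that already incorporate the $\|\partial_x W\|_{L^\infty}^i$-weighted lower-order norms, and re-derives the Gagliardo--Nirenberg chain $F_k^2 \lesssim F_{k-1}F_{k+1}$ by a single integration by parts in which the only weight derivative produced is the single factor $2w^2\partial_x W$; H\"older and Young then conclude exactly as in the unweighted case. If you want to retain your black-box philosophy, you would need to reformulate the argument so that higher derivatives of $w$ never arise, which in practice forces you back to carrying the weight inside the measure as the paper does.
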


\begin{proof}
    This is a generalization of \cite[Lemma 6.16]{RingstromCauchy} where $\R^n$ is replaced by the unit circle $\mathbb{S}^1$ and we also introduce a weight function $w$. A complete proof of this lemma is provided in Appendix~\ref{app:weightedl2}.
\end{proof}


\section{The energy estimate hierarchy} \label{sec:l2}

We now derive energy estimates for $\mathcal{E}^{(K)}(r)$, at orders $0 \leq K \leq N$, where $N = N(\upeta)$ is chosen sufficiently large. The derivation of such energy estimates will involve commuting the equations \eqref{eq:lambda_evol}--\eqref{eq:phi_wave} with up to $K$ $\partial_x$-derivatives. See Section~\ref{sub:intro_proof} for an overview.

For the junk and lower order terms in the hierarchy (where the precise value of coefficients arising in the commuted equations are not crucial), we introduce the following schematic notation: expressions such as
\[
    \sum_{k_p + k_1 + \ldots + k_i = K} \partial_x^{k_p} f * \partial_x^{k_1} g * \cdots * \partial_x^{k_i} g
\]
will represent some linear combination of products of the form $\partial_x^{k_p} f \cdot \partial_x^{k_1} g \cdots \partial_x^{k_i} g$ such that $i \geq 1$, $k_p \geq 1$ and $k_j \geq 1$ for all $1 \leq j \leq i$ and $k_p + k_1 + \ldots + k_i = K$. We emphasize that \underline{unless explicitly stated otherwise}, in these schematic sums $i$ and $j$ will be positive integers, as are the indices $k_p, k_1, k_i$ etc. 
In the event that any index e.g.~$k_p$ is allowed to be $0$, this will be explicitly stated, and similarly if there are further constraints on any index.

\subsection{Energy estimates for the scalar field \texorpdfstring{$\phi$}{ϕ}} \label{sec:l2_phi}

\begin{proposition} \label{prop:phi_energy}
    Let $(\mu, \lambda, \phi, Q)$ be a solution to the Einstein--Maxwell--scalar field system \eqref{eq:lambda_evol}--\eqref{eq:phi_wave} obeying the bootstrap assumptions \eqref{eq:bootstrap_maxwell}--\eqref{eq:bootstrap_d2x}. Then there exists a constant $A_*$ depending only on $C_*$, as well as constants $C_1^{(K)}$ and $C_2^{(K)}$ depending on $C_*$ and the regularity index $K \in \{0, 1, \ldots, N \}$, such that
    \begin{equation} \label{eq:energy_der_phi}
        \left| r \frac{d}{dr} \mathcal{E}_{\phi}^{(K)} (r) \right| \leq
        2 A_* \mathcal{E}^{(K)}(r) + C^{(K)}_1 r^{1 - \upgamma} \mathcal{E}^{(K)}(r) + \sum_{k=0}^{K-1} C^{(K)}_2 r^{- 2\upgamma (K - k)} \mathcal{E}^{(k)}(r),
    \end{equation}
    where it is understood that the final term is absent if $K=0$.
\end{proposition}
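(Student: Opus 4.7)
The plan is to derive \eqref{eq:energy_der_phi} by commuting \eqref{eq:phi_wave} with $\partial_x^K$, multiplying by $2 r\partial_r\partial_x^K\phi$, and integrating in $x$ over $\mathbb{S}^1$. Integration by parts on the principal spatial term $r^2 e^{2(\mu-\lambda)}\partial_x^{K+2}\phi$ will produce the time derivative of the weighted gradient quantity $\int r^2 e^{2(\mu-\lambda)}(\partial_x^{K+1}\phi)^2\,dx$, up to a bulk term proportional to $1 + r\partial_r(\mu-\lambda)$ which by the transport equation \eqref{eq:mulambda_evol} and \eqref{eq:bootstrap_maxwell} is absolutely bounded by a $C_*$-dependent constant; this will contribute to $A_*$. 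The zeroth-order piece $(\partial_x^K\phi)^2$ of $\mathcal{E}^{(K)}_\phi$ is handled by Cauchy--Schwarz, absorbed into $A_*\mathcal{E}^{(K)}_\phi$. After commutation the source splits into three structurally distinct groups: (i) the $\partial_x^K$-expansion of the Maxwell-like term $(r\partial_r\phi)(\kappa-Q^2/r^2)e^{2\mu}$, whose bare coefficient $(\kappa-Q^2/r^2)e^{2\mu}$ is uniformly $O(C_*)$ by \eqref{eq:bootstrap_maxwell}; (ii) the wave-operator commutator $[\partial_x^K,r^2 e^{2(\mu-\lambda)}]\partial_x^2\phi$; and (iii) the full $\partial_x^K$ of the spatial gradient source $r^2 e^{2(\mu-\lambda)}\partial_x\phi\,\partial_x(\mu-\lambda)$, which already carries the small weight $r^2 e^{2(\mu-\lambda)}\lesssim C_* r$.

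Group (i) is the principal source of $2A_*\mathcal{E}^{(K)}(r)$. The two extremal distributions of $\partial_x^K$---all derivatives on $\phi$, or all on $e^{2\mu}$---give respectively $\int (\kappa-Q^2/r^2)e^{2\mu}(r\partial_r\partial_x^K\phi)^2\,dx \lesssim C_*\,\mathcal{E}^{(K)}_\phi$ and $\int (r\partial_r\phi)(\kappa-Q^2/r^2)e^{2\mu}\cdot 2e^{2\mu}\partial_x^K\mu \cdot r\partial_r\partial_x^K\phi\,dx \lesssim C_*^2\mathcal{E}^{(K)}$ by \eqref{eq:bootstrap_maxwell}--\eqref{eq:bootstrap_rdr}; crucially, both coefficients are $K$-independent. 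Intermediate distributions will be estimated using Lemma~\ref{lem:weightedl2} with weight $w=1$: each derivative displaced onto $\mu$ costs a factor $\|\partial_x\mu\|_{L^\infty}\lesssim C_* r^{-\upgamma}$ by \eqref{eq:bootstrap_dx}, producing exactly the lower-order tail $C_2^{(K)} r^{-2\upgamma(K-k)}\mathcal{E}^{(k)}$. For (iii) we apply Lemma~\ref{lem:weightedl2} with weight $w=re^{\mu-\lambda}$, whose logarithm satisfies $\|\partial_x W\|_{L^\infty}\lesssim C_* r^{-\upgamma}$ by \eqref{eq:bootstrap_dx}: the undisplaced top-order contribution yields $r^{1-\upgamma}\mathcal{E}^{(K)}$ (feeding $C_1^{(K)}$), while further displacements again generate $r^{-2\upgamma(K-k)}\mathcal{E}^{(k)}$ errors. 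Group (ii) is schematically $\sum r^2 e^{2(\mu-\lambda)}\partial_x^{k_p}(\mu-\lambda)\ast \cdots \ast \partial_x^{K+2-j}\phi$ with $k_p,j\geq 1$, and is treated identically using the same weight $re^{\mu-\lambda}$.

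The main obstacle is arranging the bookkeeping so that $A_*$ depends only on $C_*$ and \emph{not on $K$}, since subsequent iteration of the hierarchy will produce the blow-up rate $r^{-2A_*-2K\upgamma}$ for $\mathcal{E}^{(K)}$ in \eqref{eq:global_energy}, and $K$-independence of $A_*$ is precisely what permits an $L^\infty$--$L^2$ interpolation step (via Lemma~\ref{lem:interpolation}) to close the bootstrap with arbitrarily small loss once $N$ is taken large. The combinatorial observation which delivers this is that every $\partial_x$ forced by commutation onto a coefficient ($\mu$, $\lambda$, or $\mu-\lambda$) either (a) is matched against the small weight $r^2 e^{2(\mu-\lambda)}$ and so converts into a fully integrable $r^{1-\upgamma}$ gain, or (b) strictly reduces the $\phi$-derivative count by one, dropping us into some $\mathcal{E}^{(k)}$ with $k<K$ at the price $r^{-\upgamma(K-k)}$. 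Lemma~\ref{lem:weightedl2} is engineered to express precisely this dichotomy by placing the $L^\infty$ norm on the lowest-derivative factor and propagating the weight through $W=\log w$; the residual contributions with a pure $\mathcal{E}^{(K)}$ coefficient must then come from a finite, $K$-independent list of ``top-order'' terms, giving $A_* = A_*(C_*)$.
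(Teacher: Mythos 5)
Your proposal is correct and follows essentially the same route as the paper: commute with $\partial_x^K$, split the source into the top-order Maxwell-type terms (which give the $K$-independent $2A_*\mathcal{E}^{(K)}$), the commutator/gradient terms carrying the weight $r^2e^{2(\mu-\lambda)}$ (estimated via Lemma~\ref{lem:weightedl2} with $w=re^{\mu-\lambda}$, giving $C_1^{(K)}r^{1-\upgamma}\mathcal{E}^{(K)}$), and the unweighted lower-order terms (Lemma~\ref{lem:weightedl2} with $w=1$, giving the $r^{-2\upgamma(K-k)}\mathcal{E}^{(k)}$ tail). Your identification of the dichotomy responsible for the $K$-independence of $A_*$ matches the paper's bookkeeping exactly.
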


\begin{proof}
    For any $K \geq 1$, commuting the wave equation \eqref{eq:phi_wave} with $\partial_x^K$ yields
    \begin{align}
        \addtocounter{equation}{1}
        (r \partial_r)^2 \partial_x^K \phi - r^2 e^{2(\mu - \lambda)} \partial_x^2 \partial_x^K \phi 
        &= \left( - \frac{Q^2}{r^2} + \kappa \right) e^{2 \mu} \, (r \partial_r \partial_x^K \phi + 2 r \partial_r \phi \, \partial_x^K \mu )
        \tag{\theequation a} \label{eq:phi_wave_K_a} \\[0.6em]
        &\mkern-36mu + r^2 e^{2 (\mu - \lambda)} \sum_{\substack{k_p + k_1 + \cdots + k_i = K + 2}} \partial_x^{k_p} \phi * \partial_x^{k_1} (\mu - \lambda) * \cdots * \partial_x^{k_i} (\mu - \lambda)
        \tag{\theequation b} \label{eq:phi_wave_K_b} \\[0.6em]
        &\mkern-36mu + \left( - \frac{Q^2}{r^2} + \kappa \right) e^{2\mu} \sum_{\substack{0 \leq k_p < K, k_1 < K \\ k_p + k_1 + \cdots + k_i = K}} r \partial_r \partial_x^{k_p} \phi * \partial_x^{k_1} \mu * \cdots * \partial_x^{k_i} \mu.
        \tag{\theequation c} \label{eq:phi_wave_K_c}
    \end{align}
    Note that in the uncommuted case $K = 0$, the final term on the RHS of \eqref{eq:phi_wave_K_a} is absent, as is the term \eqref{eq:phi_wave_K_c}; we leave the derivation of \eqref{eq:energy_der_phi} in this case to the reader.

    The first line \eqref{eq:phi_wave_K_a} is the leading order contribution that gives rise to the $2 A_* \mathcal{E}^{(K)}(r)$ on the RHS of \eqref{eq:energy_der_phi}. It remains to estimate the remaining lines \eqref{eq:phi_wave_K_b} and \eqref{eq:phi_wave_K_c} in $L^2$. We start with \eqref{eq:phi_wave_K_c}. Using the bootstrap assumptions \eqref{eq:bootstrap_maxwell}--\eqref{eq:bootstrap_rdr} and the Sobolev product estimate Lemma~\ref{lem:weightedl2} with $w=1$, one finds that
    \begin{align*}
        \| \text{\eqref{eq:phi_wave_K_c}} \|_{L^2} 
        &\leq C_*^2 \, 
        \bigg \|  \sum_{\substack{0 \leq k_p < K, k_1 < K \\ k_p + k_1 + \cdots + k_i = K}} r \partial_r \partial_x^{k_p} \phi * \partial_x^{k_1} \mu * \cdots * \partial_x^{k_i} \mu \bigg \|_{L^2} \\[0.4em]
        &\lesssim_{C_*}
        \sum_{k=0}^{K-1} \left( \| r \partial_r \partial_x^k \phi \|_{L^2} + \| \partial_x^k \mu \|_{L^2} \right) \cdot \left( \| r \partial_r \partial_x \phi \|_{L^{\infty}} + \| \partial_x \mu \|_{L^{\infty}} \right)^{K-k}.
    \end{align*}
    As this is the first time we derive such an estimate, we include a more detailed explanation. In the first line, we use \eqref{eq:bootstrap_maxwell} to estimate first $\left| - \frac{Q^2}{r^2} e^{2\mu} + \kappa e^{2\mu} \right| \leq C_*$, and if $k_p = 0$ we get another $C_*$ from $r \partial_r \phi$ and \eqref{eq:bootstrap_rdr}.

    To get the second line, for summands such that there are $i+1$ terms in the product, note that the maximum number of $\partial_x$-derivatives landing on either $r \partial_r \phi$ or $\mu$ is $K - i$. Furthermore, (repeated) use of the product estimate Lemma~\ref{lem:weightedl2} allows us to put exactly this many $\partial_x$-derivatives on some term in the product, which we estimate in $L^2$, while the remaining $i$ terms in the product are estimated by either $\| \partial_x (r \partial_r \phi) \|_{L^{\infty}}$ or $\| \partial_x \mu \|_{L^{\infty}}$. By setting $k = K - i \in \{1, \ldots, k-1\}$, we obtain the desired estimate.

    By using the remaining bootstrap assumptions \eqref{eq:bootstrap_dx}--\eqref{eq:bootstrap_d2x} to estimate $ \| r \partial_r \partial_x \phi \|_{L^{\infty}} + \| \partial_x \mu \|_{L^{\infty}}$, as well as the definition of $\mathcal{E}^{(k)}(r)$, we therefore deduce that
    \begin{equation} \label{eq:phi_wave_K_c_est}
        \| \text{\eqref{eq:phi_wave_K_c}} \|_{L^2} \lesssim_{C_*, K} \sum_{k=0}^{K-1} r^{- \upgamma (K - k)} \sqrt{ \mathcal{E}^{(k)}(r) }.
    \end{equation}

    We move onto the $L^2$ estimate for \eqref{eq:phi_wave_K_b}. For this term we first use \eqref{eq:bootstrap_maxwell} to remove one of the factors of $e^{\mu - \lambda}$, but for the remaining factor of $e^{\mu - \lambda}$ we use Lemma~\ref{lem:weightedl2} with $w = r e^{\mu - \lambda}$. In particular, $\partial_x W = \partial_x \log w = \partial_x (\mu - \lambda)$, and one gets
    \begin{align*}
        \| \text{\eqref{eq:phi_wave_K_b}} \|_{L^2} 
        &\leq r C_* \, 
        \bigg \| r e^{\mu - \lambda} \sum_{\substack{k_p + k_1 + \cdots + k_i = K + 2}} \partial_x^{k_p} \phi * \partial_x^{k_1} (\mu - \lambda) * \cdots * \partial_x^{k_i} (\mu - \lambda) \bigg \|_{L^2} \\[0.6em]
        &\lesssim_{C_*}
        r \sum_{k=1}^{K+1} \left( \| r e^{\mu - \lambda} \partial_x^k \phi \|_{L^2} + \| r e^{\mu - \lambda} \partial_x^k (\mu - \lambda) \|_{L^2} \right) \cdot \left( \| \partial_x \phi \|_{L^{\infty}} + \| \partial_x (\mu - \lambda) \|_{L^{\infty}} \right)^{K+2-k}.
    \end{align*}
    It is important we applied Lemma~\ref{lem:weightedl2} in such a way that the top order term $\partial_x^{K+1} \phi$ has the weight $r e^{\mu - \lambda}$ in front to match with the energy $\mathcal{E}^{(K)}(r)$. Fortunately, the weighted version of Lemma~\ref{lem:weightedl2} means that any additional factors of $\| \partial_x(\mu - \lambda) \|_{L^{\infty}}$ reduce the number of derivatives on either $r e^{\mu - \lambda} \partial_x^k \phi$ or $r e^{\mu - \lambda} \partial_x^k (\mu - \lambda)$, and we obtain the above.

    Using the bootstrap assumption \eqref{eq:bootstrap_dx} and the definition of $\mathcal{E}^{(k)}(r)$, we thus deduce
    \begin{equation} \label{eq:phi_wave_K_b_est}
        \| \text{\eqref{eq:phi_wave_K_b}} \|_{L^2} \lesssim_{C_*, K} \sum_{k=0}^{K} r^{1 - \upgamma (K + 1 - k)} \sqrt{ \mathcal{E}^{(k)}(r) }.
    \end{equation}
    The additional power of $r$ means that the top order term $\sqrt{\mathcal{E}^{(K)}(r)}$ appears with a weight that is suitably integrable as $r \downarrow 0$, and thus does not contribute to the $A_* \mathcal{E}^{(K)}(r)$ on the RHS of \eqref{eq:energy_der_phi}.

    Now, to conclude, for $0 \leq K \leq N$ we write
    \begin{align*}
        r \partial_r \left( \frac{1}{2} \left( (r \partial_r \partial_x^K \phi)^2 + r^2 e^{2(\mu-\lambda) } (\partial_x^{K+1} \phi)^2 + (\partial_x^K \phi)^2 \right) \right) 
        & \\[0.6em]
        & \mkern-360mu = (r \partial_r \partial_x^K \phi) \left[ (r \partial_r)^2 \partial_x^K \phi - r^2 e^{2(\lambda - \mu)} \partial_x^{K+2} \phi + \partial_x^K \phi \right] + r^2 e^{2(\mu - \lambda)} (\partial_x^{K+1} \phi)^2 \\[0.6em]
        & \mkern-300mu + \partial_x \left( r^2 e^{2(\mu - \lambda)} r \partial_r \partial_x^K \phi \cdot \partial_x^{K+1} \phi \right) - 2 r^2 e^{2(\mu - \lambda)} \partial_x(\mu-\lambda) \cdot r \partial_r \partial_x^K \phi \cdot \partial_x^{K+1} \phi.
    \end{align*}

    Integrating over $x \in \mathbb{S}^1$ so that the first term in the last line vanishes, and inserting the commuted wave equation, one has the following identity for the $r \frac{d}{dr}$ derivative of the energy:
    \begin{align*}
        r \frac{d}{dr} \mathcal{E}^{(K)}_{\phi}(r) 
        &= \int_{\mathbb{S}^1} \left[ (r \partial_r  \partial_x^{K} \phi) \left( \text{\eqref{eq:phi_wave_K_a}} + \partial_x^K \phi \right) + r^2 e^{2(\mu - \lambda)} (\partial_x^{K+1} \phi)^2 \right] \, dx \\[0.6em]
        &\quad + \int_{\mathbb{S}^1} \left[ (r \partial_r \partial_x^K \phi) \left( \text{\eqref{eq:phi_wave_K_b}} + \text{\eqref{eq:phi_wave_K_c}} - 2 r^2 e^{2(\mu - \lambda)} \partial_x (\mu - \lambda) \partial_x^{K+1} \phi \right) \right] \, dx.
    \end{align*}
    From the bootstrap assumptions \eqref{eq:bootstrap_maxwell}--\eqref{eq:bootstrap_rdr}, one may estimate $\| \text{\eqref{eq:phi_wave_K_a}} \|_{L^2} \leq 10 C_*^2 \sqrt{2 \mathcal{E}^{(K)}(r)}$. Using Cauchy--Schwarz, the first integral in the above expression can thus be bounded as:
    \[
        \left| \int_{\mathbb{S}^1} \left[ (r \partial_r  \partial_x^{K} \phi) \left( \text{\eqref{eq:phi_wave_K_a}} + \partial_x^K \phi \right) + r^2 e^{2(\mu - \lambda)} (\partial_x^{K+1} \phi)^2 \right] \, dx \right|
        \leq 2 (10 C_*^2 + 9) \mathcal{E}^{(K)}(r).
    \]

    On the other hand, the latter integral can be estimated using \eqref{eq:phi_wave_K_b_est} and \eqref{eq:phi_wave_K_c_est} -- the expression involving $r^2 e^{2(\mu - \lambda)} \partial_x(\mu - \lambda)$ can be bounded in the same way as \eqref{eq:phi_wave_K_b} -- one yields that for some $C_1^{(K)}, C_2^{(K)} > 0$,
    \begin{multline*}
        \left | \int_{\mathbb{S}^1} \left[ (r \partial_r \partial_x^K \phi) \left( \text{\eqref{eq:phi_wave_K_b}} + \text{\eqref{eq:phi_wave_K_c}} - 2 r^2 e^{2(\mu - \lambda)} \partial_x (\mu - \lambda) \partial_x^{K+1} \phi \right) \right] \, dx \right| \\[0.6em]
        \leq \left( C_1^{(K)} r^{1 - \upgamma} \sqrt{\mathcal{E}^{(K)}(r)} + \sqrt{C_2^{(K)}} \sum_{k=0}^{K-1} r^{-\upgamma(K - k)} \sqrt{\mathcal{E}^{(k)}(r)} \right) \cdot \sqrt{\mathcal{E}^{(K)}(r)}.
    \end{multline*}
    Combining these and applying Young's inequality, Proposition~\ref{prop:phi_energy} follows, with $A_* = 10 C_*^2 + 10$.
\end{proof}

\subsection{Energy estimates for the metric variables \texorpdfstring{$\mu$}{μ} and \texorpdfstring{$\lambda$}{λ}}

\begin{proposition} \label{prop:mu_energy}
    Let $(\mu, \lambda, \phi, Q)$ be a solution to the Einstein--Maxwell--scalar field system \eqref{eq:lambda_evol}--\eqref{eq:phi_wave} obeying the bootstrap assumptions \eqref{eq:bootstrap_maxwell}--\eqref{eq:bootstrap_d2x}. Then there exists a constant $A_*$ depending only on $C_*$, as well as constants $C_1^{(K)}$ and $C_2^{(K)}$ depending on $C_*$ and the regularity index $K \in \{0, 1, \ldots, N \}$, such that
    \begin{equation} \label{eq:energy_der_mu}
        \left| r \frac{d}{dr} \mathcal{E}_{\mu}^{(K)} (r) \right| \leq
        2 A_* \mathcal{E}^{(K)}(r) + C^{(K)}_1 r^{1 - \upgamma} \mathcal{E}^{(K)}(r) + 
        \begin{cases}
            \sum_{k=0}^{K-1} C^{(K)}_2 r^{- 2 \upgamma (K - k)} \mathcal{E}^{(k)}(r), & \text{ if } K \neq 0, \\
            2 A_* & \text{ if } K = 0.
        \end{cases}
    \end{equation}
\end{proposition}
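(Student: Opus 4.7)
The plan is to mirror the proof of Proposition~\ref{prop:phi_energy}, now using the $\mu$-wave equation \eqref{eq:mu_wave} in place of the $\phi$-wave equation \eqref{eq:phi_wave}. First I would commute \eqref{eq:mu_wave} with $\partial_x^K$ and organize the resulting right-hand side into four schematic families: (a) a leading-order linear piece of the form $\bigl(-Q^2/r^2+\kappa\bigr)e^{2\mu}\bigl(r\partial_r\partial_x^K\lambda + r\partial_r\partial_x^K\mu + 2(r\partial_r\lambda)\,\partial_x^K\mu\bigr)$ coming from distributing $\partial_x^K$ through the first two terms of the right-hand side of \eqref{eq:mu_wave}; (b) a quasilinear wave-type piece of the schematic form $r^2 e^{2(\mu-\lambda)}\sum \partial_x^{k_p} f * \partial_x^{k_1}(\mu-\lambda) * \cdots * \partial_x^{k_i}(\mu-\lambda)$ with $f \in \{\mu, \phi\}$, arising both from the commutator with the principal part $r^2 e^{2(\mu-\lambda)}\partial_x^2\mu$ and from the terms $r^2 e^{2(\mu-\lambda)}\partial_x\mu\,\partial_x(\mu-\lambda)$ and $2r^2 e^{2(\mu-\lambda)}(\partial_x\phi)^2$; (c) lower-order commutator remainders analogous to \eqref{eq:phi_wave_K_c}; and (d) the isolated zeroth-order term $-(Q^2/r^2)e^{2\mu}$ together with its $\partial_x^K$-derivatives.

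I would then write out the standard derivative identity for $\mathcal{E}_\mu^{(K)}$ by multiplying the commuted equation by $r\partial_r\partial_x^K\mu$, integrating over $\mathbb{S}^1$, and integrating by parts once in $x$ on the principal $\partial_x^{K+2}\mu$ term, exactly as at the end of Proposition~\ref{prop:phi_energy}. Family (a) is controlled pointwise by the bootstrap bounds \eqref{eq:bootstrap_maxwell}--\eqref{eq:bootstrap_rdr}, yielding an $L^2$-bound by $\sqrt{2\mathcal{E}^{(K)}(r)}$ and, after Cauchy--Schwarz against $r\partial_r\partial_x^K\mu$, the $2A_*\mathcal{E}^{(K)}$ contribution in \eqref{eq:energy_der_mu}. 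Family (b) is handled via Lemma~\ref{lem:weightedl2} with weight $w = re^{\mu-\lambda}$ exactly as in the derivation of \eqref{eq:phi_wave_K_b_est}, except that I additionally spend one factor of $e^{\mu-\lambda}$ using \eqref{eq:bootstrap_maxwell} to extract the $O(r^{1-\upgamma})$ weight; this yields the $C_1^{(K)} r^{1-\upgamma}\mathcal{E}^{(K)}$ term. Family (c) is estimated via Lemma~\ref{lem:weightedl2} with $w=1$ together with \eqref{eq:bootstrap_dx}--\eqref{eq:bootstrap_d2x}, producing the sum over $k<K$ as in \eqref{eq:phi_wave_K_c_est}.

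The key difference from Proposition~\ref{prop:phi_energy}, and the main obstacle, lies in family (d). When $K\geq 1$, commuting $\partial_x^K$ through $(Q^2/r^2)e^{2\mu}$ always lands at least one derivative on $e^{2\mu}$, yielding schematic products $(Q^2/r^2)e^{2\mu}\sum\partial_x^{k_1}\mu*\cdots*\partial_x^{k_i}\mu$ with $\sum k_j = K$ and each $k_j\geq 1$; by \eqref{eq:bootstrap_maxwell} and \eqref{eq:bootstrap_dx} these can be absorbed into the lower-order tail of \eqref{eq:energy_der_mu}. When $K=0$, however, the bare term $-(Q^2/r^2)e^{2\mu}$ survives the commutation; by \eqref{eq:bootstrap_maxwell} it is pointwise bounded by $C_*$, and when multiplied by $r\partial_r\mu$ (itself pointwise bounded by $C_*$) and integrated over $\mathbb{S}^1$ it produces an additive constant of size $\lesssim C_*^2$. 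This is precisely the source of the $2A_*$ additive summand in the $K=0$ branch of \eqref{eq:energy_der_mu}. The principal bookkeeping task is verifying that this is the \emph{only} family yielding a non-$\mathcal{E}^{(K)}$ constant contribution, which follows from the observation that every other non-derivative factor already appears inside family (a), multiplied by a top-order $r\partial_r\partial_x^K f$ that is absorbed into $\sqrt{\mathcal{E}^{(K)}(r)}$.
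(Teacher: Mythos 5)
Your proposal is correct and follows essentially the same route as the paper: commute \eqref{eq:mu_wave} with $\partial_x^K$, estimate the linear leading terms pointwise via \eqref{eq:bootstrap_maxwell}--\eqref{eq:bootstrap_rdr}, handle the quasilinear terms via Lemma~\ref{lem:weightedl2} with $w = re^{\mu-\lambda}$ after spending one factor of $e^{\mu-\lambda}$ to extract $r^{1-\upgamma}$, treat the commutator remainders with the unweighted product estimate, and isolate the surviving derivative-free $Q^2 r^{-2}e^{2\mu}$ term as the source of the additive constant at $K=0$ (the paper merely rewrites the first two right-hand-side terms of \eqref{eq:mu_wave} as $(-Q^2/r^2+\kappa)e^{2\mu}\,r\partial_r(\lambda+2\mu)+Q^2r^{-2}e^{2\mu}$ before commuting, which is cosmetic). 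Two bookkeeping slips, neither fatal: first, for $K\geq1$ the summand of your family (d) in which all $K$ derivatives land on $e^{2\mu}$ as a single factor $\partial_x^K\mu$ is a top-order term, controlled by $\|\partial_x^K\mu\|_{L^2}\leq\sqrt{2\mathcal{E}^{(K)}(r)}$, and must be absorbed into the $2A_*\mathcal{E}^{(K)}$ coefficient rather than the tail $\sum_{k<K}\mathcal{E}^{(k)}$; second, the term $r\partial_r\bigl[(-Q^2/r^2+\kappa)e^{2\mu}\bigr]$ also produces a derivative-free piece $2Q^2r^{-2}e^{2\mu}$ surviving at $K=0$, so the bare $-Q^2r^{-2}e^{2\mu}$ is not literally the only source of the additive constant --- the net derivative-free contribution is $+Q^2r^{-2}e^{2\mu}$, still bounded by $C_*$ via \eqref{eq:bootstrap_maxwell}.
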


\begin{proof}
    We consider the equation \eqref{eq:mu_wave} as a wave equation for $\mu$. As in the proof of Proposition~\ref{prop:phi_energy}, we first commute \eqref{eq:mu_wave} with $\partial_x^K$. Instead of doing this directly, we first rewrite \eqref{eq:mu_wave} as:
    \begin{equation*}
        \begin{split}
            (r \partial_r)^2 \mu - r^2 e^{2 (\mu- \lambda)} \partial_x^2 \mu = 
            &\left( -\frac{Q^2}{r^2} + \kappa \right) e^{2 \mu} r \partial_r (\lambda + 2 \mu) + r^2 e^{2 (\mu - \lambda)} \partial_x \mu \, \partial_x (\mu - \lambda) + 2 r^2 e^{2 (\mu- \lambda)} (\partial_x \phi)^2 + \frac{Q^2}{r^2} e^{2 \mu}, 
        \end{split}
    \end{equation*}

    Now commuting this equation with $\partial_x^K$, one yields for $1 \leq K \leq N$:
    \begin{align}
        (r \partial_r)^2 \partial_x^K \mu - r^2 e^{2(\mu - \lambda)} \partial_x^2 \partial_x^K \mu  \nonumber \addtocounter{equation}{1} \\[0.6em]
        &\mkern-60mu = \left( - \frac{Q^2}{r^2} + \kappa \right) e^{2 \mu} (r \partial_r \partial_x^K (\lambda + 2 \mu) + 2 r \partial_r (\lambda + 2 \mu) \, \partial_x^K \mu ) + \frac{Q^2}{r^2} e^{2\mu} \, \partial_x^K \mu
        \tag{\theequation a} \label{eq:mu_wave_K_a} \\[0.6em]
        &\mkern-36mu + r^2 e^{2 (\mu - \lambda)} \sum_{\substack{k_p + k_1 + \cdots + k_i = K + 2}} \partial_x^{k_p} \mu * \partial_x^{k_1} (\mu - \lambda) * \cdots * \partial_x^{k_i} (\mu - \lambda)
        \tag{\theequation b} \label{eq:mu_wave_K_b} \\[0.6em]
        &\mkern-36mu + r^2 e^{2 (\mu - \lambda)} \sum_{\substack{k_p + k_q + k_1 + \cdots + k_i = K + 2}} \partial_x^{k_p} \phi * \partial_x^{k_q} \phi * \partial_x^{k_1} (\mu - \lambda) \cdots * \partial_x^{k_i} (\mu - \lambda)
        \tag{\theequation c} \label{eq:mu_wave_K_c} \\[0.6em]
        &\mkern-36mu + \left( - \frac{Q^2}{r^2} + \kappa \right) e^{2\mu} \sum_{\substack{k_p \geq 0, k_1 < K \\ k_p + k_1 + \cdots + k_i = K}} r \partial_r \partial_x^{k_p} (\lambda + 2 \mu) * \partial_x^{k_1} \mu * \cdots * \partial_x^{k_i} \mu
        \tag{\theequation d} \label{eq:mu_wave_K_d} \\[0.6em]
        &\mkern-36mu + \frac{Q^2}{r^2} e^{2\mu} \sum_{\substack{1 \leq k_1 < K \\ k_1 + \cdots k_i = K}} \partial_x^{k_1} \mu * \cdots * \partial_x^{k_i} \mu.
        \tag{\theequation e} \label{eq:mu_wave_K_e}.
    \end{align}
    Note for $K = 0$, the terms \eqref{eq:mu_wave_K_d} and \eqref{eq:mu_wave_K_e} are absent, as is the expression involving $r \partial_r ( \lambda + 2 \mu) \partial_x^K \mu$ in \eqref{eq:mu_wave_K_a}. We leave the details in this case to the reader, highlighting that the (uncommuted) term $\frac{Q^2}{r^2} e^{2\mu}$ in \eqref{eq:mu_wave_K_a} is responsible for the final term $A_*$ arising in \eqref{eq:energy_der_mu}.

    We now proceed exactly as in the proof of Proposition~\ref{prop:phi_energy}. We first bound the terms \eqref{eq:mu_wave_K_b} and \eqref{eq:mu_wave_K_c}. Just as we did for \eqref{eq:phi_wave_K_b} in the proof of Proposition~\ref{prop:phi_energy}, we apply Lemma~\ref{lem:weightedl2} with $w = r e^{\mu - \lambda}$, to get
    \begin{equation} \label{eq:mu_wave_K_bc_est}
        \| \text{\eqref{eq:mu_wave_K_b}} \|_{L^2} + \| \text{\eqref{eq:mu_wave_K_c}} \|_{L^2} \lesssim_{C_*, K} \sum_{k=0}^K r^{1 - \upgamma(K + 1 - k)} \sqrt{ \mathcal{E}^{(k)} (r)}.
    \end{equation}
    Moving onto \eqref{eq:mu_wave_K_d} and \eqref{eq:mu_wave_K_e}, we now apply Lemma~\ref{lem:weightedl2} with $w=1$. The result is
    \begin{equation} \label{eq:mu_wave_K_de_est}
        \| \text{\eqref{eq:mu_wave_K_d}} \|_{L^2} + \| \text{\eqref{eq:mu_wave_K_e}} \|_{L^2} \lesssim_{C_*, K} \sum_{k=0}^{K-1} r^{- \upgamma(K - k)} \sqrt{ \mathcal{E}^{(k)} (r)}.
    \end{equation}
    We highlight that \eqref{eq:mu_wave_K_de_est} does not have the top order term $\mathcal{E}^{(K)}(r)$ on the RHS, while \eqref{eq:mu_wave_K_bc_est} does feature this top order term but with a favourable weight $r^{1 - \upgamma}$.
    
    Now, just as in the proof of Proposition~\ref{prop:phi_energy} one can use the $\partial_x^K$-commuted wave equation for $\mu$ to derive the following derivative identity for $\mathcal{E}^{(K)}_{\mu}(r)$:
    \begin{align*}
        r \frac{d}{dr} \mathcal{E}^{(K)}_{\mu}(r) 
        &= \int_{\mathbb{S}^1} \left[ (r \partial_r  \partial_x^{K} \mu) \left( \text{\eqref{eq:mu_wave_K_a}} + \partial_x^K \mu \right) + r^2 e^{2(\mu - \lambda)} (\partial_x^{K+1} \mu)^2 \right] \, dx \\[0.6em]
        &\quad + \int_{\mathbb{S}^1} \left[ (r \partial_r \partial_x^K \mu) \left( \text{\eqref{eq:mu_wave_K_b}} + \text{\eqref{eq:mu_wave_K_c}} + \text{\eqref{eq:mu_wave_K_d}} + \text{\eqref{eq:mu_wave_K_e}} - 2 r^2 e^{2(\mu - \lambda)} \partial_x (\mu - \lambda) \partial_x^{K+1} \mu \right) \right] \, dx.
    \end{align*}
    Using the structure of \eqref{eq:mu_wave_K_a} and the bootstrap assumptions \eqref{eq:bootstrap_maxwell}--\eqref{eq:bootstrap_rdr}, one finds $\| \text{\eqref{eq:mu_wave_K_a}} \|_{L^2} \leq 20 C_*^2 \sqrt{2 \mathcal{E}^{(K}(r)}$, and therefore
    \[
        \left| \int_{\mathbb{S}^1} \left[ (r \partial_r  \partial_x^{K} \mu) \left( \text{\eqref{eq:mu_wave_K_a}} + \partial_x^K \mu \right) + r^2 e^{2(\mu - \lambda)} (\partial_x^{K+1} \mu)^2 \right] \, dx \right|
        \leq 2 (20 C_*^2 + 19) \mathcal{E}^{(K)}(r).
    \]

    Furthermore, combining \eqref{eq:mu_wave_K_bc_est} and \eqref{eq:mu_wave_K_de_est} yields that for some $C_1^{(K)}, C_2^{(K)} > 0$:
    \begin{multline*}
        \left | \int_{\mathbb{S}^1} \left[ (r \partial_r \partial_x^K \mu) \left( \text{\eqref{eq:mu_wave_K_b}} + \text{\eqref{eq:mu_wave_K_c}} + \text{\eqref{eq:mu_wave_K_d}} + \text{\eqref{eq:mu_wave_K_e}}  - 2 r^2 e^{2(\mu - \lambda)} \partial_x (\mu - \lambda) \partial_x^{K+1} \mu \right) \right] \, dx \right| \\[0.6em]
        \leq \left( C_1^{(K)} r^{1 - \upgamma} \sqrt{\mathcal{E}^{(K)}(r)} + \sqrt{C_2^{(K)}} \sum_{k=0}^{K-1} r^{-\upgamma(K - k)} \sqrt{\mathcal{E}^{(k)}(r)} \right) \cdot \sqrt{\mathcal{E}^{(K)}(r)}.
    \end{multline*}
    Inserting these bounds into the derivative identity, we get \eqref{eq:energy_der_mu} with $A_* = 20(C_*^2 + 1)$.
\end{proof}

It remains to prove the energy estimate for $\lambda$. In light of Proposition~\ref{prop:mu_energy}, it will be enough to prove a corresponding estimate for the modified energy $\mathcal{E}_{\mu - \lambda}^{(K)}(r)$, defined by:
\[
    \mathcal{E}^{(K)}_{\mu - \lambda}(r) \coloneqq \frac{1}{2} \int_{\mathbb{S}^1} \left( (r \partial_r \partial_x^K (\mu - \lambda))^2 + r^2 e^{2(\mu - \lambda)} (\partial_x^{K+1} (\mu - \lambda))^2 + (\partial_x^K (\mu - \lambda))^2 \right) dx.
\]
With this definition, we have the following derivative estimate for the energy:

\begin{proposition} \label{prop:mulambda_energy}
    Let $(\mu, \lambda, \phi, Q)$ be a solution to the Einstein--Maxwell--scalar field system \eqref{eq:lambda_evol}--\eqref{eq:phi_wave} obeying the bootstrap assumptions \eqref{eq:bootstrap_maxwell}--\eqref{eq:bootstrap_d2x}. Then there exists a constant $A_*$ depending only on $C_*$, as well a constant $C_2^{(K)}$ depending on $C_*$ and the regularity index $K \in \{0, 1, \ldots, N \}$, such that
    \begin{equation} \label{eq:energy_der_mulambda}
        \left| r \frac{d}{dr} \mathcal{E}_{\mu-\lambda}^{(K)} (r) \right| \leq
        2 A_* \mathcal{E}^{(K)}(r) + 
        \begin{cases}
            \sum_{k=0}^{K-1} C^{(K)}_2 r^{- 2 \upgamma (K - k)} \mathcal{E}^{(k)}(r), & \text{ if } K \neq 0, \\
            2 A_* & \text{ if } K = 0.
        \end{cases}
    \end{equation}
\end{proposition}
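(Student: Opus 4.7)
The proof proposal is as follows. The key structural feature to exploit is that, unlike the wave equations for $\phi$ and $\mu$ used in Propositions~\ref{prop:phi_energy} and \ref{prop:mu_energy}, the quantity $\mu - \lambda$ obeys only the \emph{transport equation} \eqref{eq:mulambda_evol}. Commuting \eqref{eq:mulambda_evol} with $\partial_x^K$ yields, for $K \geq 1$, an identity of the schematic form
\[
    r \partial_r \partial_x^K (\mu - \lambda) = \Bigl( - \frac{Q^2}{r^2} + \kappa \Bigr) e^{2\mu} \cdot 2 \partial_x^K \mu + \Bigl( - \frac{Q^2}{r^2} + \kappa \Bigr) e^{2\mu} \sum_{\substack{1 \leq k_1 < K \\ k_1 + \cdots + k_i = K}} \partial_x^{k_1}\mu * \cdots * \partial_x^{k_i}\mu,
\]
while for $K = 0$ the RHS is just $1 + ( - Q^2/r^2 + \kappa) e^{2 \mu}$, which is bounded pointwise by $1 + C_*$ via \eqref{eq:bootstrap_maxwell}. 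Crucially, after commutation, all appearances of $r\partial_r$ on the LHS can be eliminated in favour of spatial $\partial_x$-derivatives of $\mu$ multiplied by coefficients controlled by the bootstrap assumptions.

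Differentiating $\mathcal{E}_{\mu-\lambda}^{(K)}(r)$ produces four contributions, mirroring the three pieces of the energy together with the weight $r^2 e^{2(\mu-\lambda)}$:
\begin{align*}
    r \frac{d}{dr} \mathcal{E}_{\mu-\lambda}^{(K)}(r)
    &= \int \Bigl[ (r\partial_r \partial_x^K(\mu-\lambda)) \cdot (r\partial_r)^2 \partial_x^K(\mu-\lambda) + (\partial_x^K(\mu-\lambda)) r\partial_r \partial_x^K(\mu-\lambda) \Bigr] dx \\
    &\quad + \int \bigl[ 2 + 2 r\partial_r(\mu-\lambda) \bigr] r^2 e^{2(\mu-\lambda)} (\partial_x^{K+1}(\mu-\lambda))^2 \, dx \\
    &\quad + \int 2 r^2 e^{2(\mu-\lambda)} \, \partial_x^{K+1}(\mu - \lambda) \cdot r \partial_r \partial_x^{K+1}(\mu-\lambda) \, dx.
\end{align*}
In each integral, I would substitute the commuted transport identity for the appearances of $r \partial_r \partial_x^{K}(\mu-\lambda)$, $r \partial_r \partial_x^{K+1}(\mu-\lambda)$ and $(r\partial_r)^2 \partial_x^K(\mu-\lambda)$ (the last requiring an additional application of $r\partial_r$ to the commuted identity, which only generates more factors bounded by $C_*$ via \eqref{eq:bootstrap_maxwell}--\eqref{eq:bootstrap_rdr}).

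The top-order contribution comes from the third integral. By the commuted identity, $r\partial_r \partial_x^{K+1}(\mu-\lambda)$ contains a top-order piece $2(-Q^2/r^2 + \kappa) e^{2\mu} \, \partial_x^{K+1}\mu$ with coefficient bounded by $2 C_*$. Cauchy--Schwarz then gives
\[
    \left| \int 2 r^2 e^{2(\mu - \lambda)} \partial_x^{K+1}(\mu - \lambda) \cdot 2(-Q^2/r^2 + \kappa) e^{2\mu} \partial_x^{K+1} \mu \, dx \right|
    \lesssim_{C_*} \sqrt{\mathcal{E}_{\mu - \lambda}^{(K)}(r)} \, \sqrt{\mathcal{E}_{\mu}^{(K)}(r)},
\]
which by Young's inequality contributes to the $2 A_* \mathcal{E}^{(K)}(r)$ term. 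Here the factor $r^2 e^{2(\mu-\lambda)}$ precisely matches the weight in $\mathcal{E}^{(K)}_{\mu}$, and this is why \emph{no} wave-type loss of the form $C_1^{(K)} r^{1-\upgamma} \mathcal{E}^{(K)}(r)$ appears, in contrast to Propositions~\ref{prop:phi_energy} and \ref{prop:mu_energy}. The lower-order pieces in $r\partial_r \partial_x^{K+1}(\mu-\lambda)$ are products of $\partial_x^{k_j}\mu$'s, and applying Lemma~\ref{lem:weightedl2} with $w = re^{\mu-\lambda}$ together with the bootstrap assumptions \eqref{eq:bootstrap_dx}--\eqref{eq:bootstrap_d2x} yields bounds of the form $\sum_{k=0}^{K-1} r^{-\upgamma(K-k)} \sqrt{\mathcal{E}^{(k)}(r)}$, producing the second term of \eqref{eq:energy_der_mulambda} after Young's inequality.

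The other integrals are handled similarly: the first and second are estimated using the pointwise bounds on $r \partial_r(\mu - \lambda)$ and $(r\partial_r)^2 \partial_x^K (\mu-\lambda)$ coming from the commuted identity and the bootstrap assumptions, together with Cauchy--Schwarz. In the $K = 0$ case, the ``$1$'' in \eqref{eq:mulambda_evol} survives commutation and generates a contribution of the form $C \sqrt{\mathcal{E}^{(0)}_{\mu-\lambda}(r)}$ when paired against $(\mu - \lambda)$ in the energy; Young's inequality then converts this to a pure constant, producing the $2 A_*$ in the $K = 0$ branch of \eqref{eq:energy_der_mulambda} (and this is the only place where the lower order $(\partial_x^K(\mu-\lambda))^2$ term in $\mathcal{E}^{(K)}_{\mu-\lambda}$ is essential). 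The main obstacle is organizing the algebra cleanly so that the top-order matching between $r^2 e^{2(\mu-\lambda)} \partial_x^{K+1}(\mu-\lambda)$ and $r e^{\mu-\lambda} \partial_x^{K+1}\mu$ absorbs into $\mathcal{E}^{(K)}_\mu$ via Cauchy--Schwarz; once this is done, the absence of a $C_1^{(K)} r^{1-\upgamma}\mathcal{E}^{(K)}(r)$ term in \eqref{eq:energy_der_mulambda} reflects the transport (rather than wave) nature of \eqref{eq:mulambda_evol}.
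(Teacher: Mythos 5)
Your proposal is correct and follows essentially the same route as the paper: commute the transport equation \eqref{eq:mulambda_evol} with $\partial_x^K$, observe that the top-order term $r^2 e^{2(\mu-\lambda)}\,\partial_x^{K+1}(\mu-\lambda)\cdot(-Q^2/r^2+\kappa)e^{2\mu}\,\partial_x^{K+1}\mu$ pairs the weighted quantities appearing in $\mathcal{E}^{(K)}_{\mu-\lambda}$ and $\mathcal{E}^{(K)}_{\mu}$ so that Cauchy--Schwarz and Young give the $2A_*\mathcal{E}^{(K)}$ term with no $r^{1-\upgamma}$ loss, handle the lower-order products via Lemma~\ref{lem:weightedl2} and \eqref{eq:bootstrap_dx}--\eqref{eq:bootstrap_d2x}, and let the surviving ``$1$'' produce the constant in the $K=0$ branch. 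The only differences from the paper are organizational (the paper writes evolution equations for the weighted quantities $re^{\mu-\lambda}\partial_x^{K+1}(\mu-\lambda)$ and $r\partial_r\partial_x^K(\mu-\lambda)$ directly rather than differentiating the energy and separating out the weight derivative) plus harmless factor-of-$2$ slips in your energy-derivative identity and a loosely phrased reference to ``pointwise'' bounds on $(r\partial_r)^2\partial_x^K(\mu-\lambda)$, which for $K\geq 2$ must of course be $L^2$ bounds with pointwise-bounded coefficients.
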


\begin{proof}
    Here we instead commute the transport equation \eqref{eq:mulambda_evol} with $\partial_x^K$, yielding for $1 \leq K \leq N$:
    \begin{equation} \label{eq:mulambda_evol_K}
        r \partial_r \partial_x^K ( \mu - \lambda ) = \left( - \frac{Q^2}{r^2} + \kappa \right) e^{2\mu} \, \partial_x^K \mu + \left( - \frac{Q^2}{r^2} + \kappa \right) e^{2\mu}\sum_{\substack{k_1 < K \\ k_1 + \cdots + k_i = K}} \partial_x^{k_1} \mu * \cdots * \partial_x^{k_i} \mu,
    \end{equation}
    while for $K = 0$ the final term is not present but there is an additional $1$ (see the RHS of \eqref{eq:mulambda_evol}).

    From this equation we can derive further equations, namely
    \begin{multline} \label{eq:mulambda_evol_K_2}
        r \partial_r (r e^{\mu - \lambda} \partial_x^{K+1} ( \mu - \lambda )) = \left( - \frac{Q^2}{r^2} + \kappa \right) e^{2\mu} \, r e^{\mu - \lambda} \partial_x^{K+1} \mu + (1 + r \partial_r(\mu - \lambda)) \, r e^{\mu - \lambda} \partial_x^{K+1} (\mu - \lambda) \\[0.6em]
        + \left( - \frac{Q^2}{r^2} + \kappa \right) e^{2\mu} \, r e^{\mu - \lambda} \sum_{\substack{k_1 < K + 1 \\ k_1 + \cdots + k_i = K + 1}} \partial_x^{k_1} \mu * \cdots * \partial_x^{k_i} \mu,
    \end{multline}
    as well as
    \begin{multline} \label{eq:mulambda_evol_K_3}
        r \partial_r (r \partial_r \partial_x^{K} ( \mu - \lambda )) = \left( - \frac{Q^2}{r^2} + \kappa \right) e^{2\mu} \, r \partial_r \partial_x^K \mu + 
        \left( \frac{2Q^2}{r^2} e^{2\mu} + 2 r \partial_r \mu \left( - \frac{Q^2}{r^2} + \kappa \right) e^{2 \mu} \right) \partial_x^K \mu +
        r \partial_r \partial_x^K (\mu - \lambda)
        \\[0.6em]
        + \left( \frac{2Q^2}{r^2} e^{2\mu} + 2 r \partial_r \mu \left( - \frac{Q^2}{r^2} + \kappa \right) e^{2 \mu} \right) \sum_{\substack{k_1 < K \\ k_1 + \cdots + k_i = K}} \partial_x^{k_1} \mu * \cdots * \partial_x^{k_i} \mu \\[0.6em]
        + \left( - \frac{Q^2}{r^2} + \kappa \right) e^{2\mu} \, \sum_{\substack{k_1 < K \\ k_1 + \cdots + k_i = K}} r \partial_r \partial_x^{k_1} \mu * \cdots * \partial_x^{k_i} \mu,
    \end{multline}

    From \eqref{eq:mulambda_evol_K_3}, \eqref{eq:mulambda_evol_K_2} and \eqref{eq:mulambda_evol_K}, repeated use of the bootstrap assumptions \eqref{eq:bootstrap_maxwell}--\eqref{eq:bootstrap_d2x} and the weighted product estimate Lemma~\ref{lem:weightedl2} yields that for some $C_2^{(K)} > 0$,
    \begin{multline*}
        \| r \partial_r \partial_x^K (\mu - \lambda) \|_{L^2} + \| r \partial_r (r e^{\mu - \lambda} \partial_x^{K+1} (\mu - \lambda)) \|_{L^2} + \| r \partial_r (r \partial_r \partial_x^K (\mu - \lambda)) \|_{L^2}
        \\[0.6em] 
        \leq (A_* - 1) \sqrt{2 \mathcal{E}^{(K)}(r)} + \sqrt{C_2^{(K)}} \sum_{k=0}^{K-1} r^{- \upgamma(K - k)} \sqrt{\mathcal{E}^{(k)}(r)}.
    \end{multline*}
    Proposition~\ref{prop:mulambda_energy} then follows immediately from this and Young's inequality.
\end{proof}

\subsection{The energy hierarchy} \label{sub:energy_induction}

We now use Propositions~\ref{prop:phi_energy}, \ref{prop:mu_energy} and \ref{prop:mulambda_energy} together with the initial data bound \eqref{eq:data_energy_boundedness}, to show that the total energy of order $K$, $\mathcal{E}^{(K)}(r)$, grows at most polynomially in $r$ as $r \downarrow 0$, and moreover that the rate of blow-up depends only mildly in $K$. 

\begin{proposition} \label{prop:energy_hierarchy}
    Let $(\mu, \lambda, \phi, Q)$ be a solution to the Einstein--Maxwell--scalar field system \eqref{eq:lambda_evol}--\eqref{eq:phi_wave} in the interval $r \in [r_b, r_0]$, such that the solution obeys the bootstrap assumptions \eqref{eq:bootstrap_maxwell}--\eqref{eq:bootstrap_d2x}. Assuming also the bound \eqref{eq:data_energy_boundedness} for the initial data, then there exist a constant $A_*$ depending only on $C_*$, as well as constants $C^{(K)}$ depending on $C_*$, $K$ and $\upzeta$, such that for $0 \leq K \leq N$, the total energy $\mathcal{E}^{(K)}(r)$ satisfies the bound: 
    \begin{equation} \label{eq:energy_hierarchy}
        \mathcal{E}^{(K)}(r) \leq C^{(K)} \, r^{- 2 A_* - 2 K \upgamma}.
    \end{equation}
\end{proposition}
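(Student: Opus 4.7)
The plan is to prove Proposition~\ref{prop:energy_hierarchy} by induction on the regularity index $K$, combining the three energy differential inequalities from Propositions~\ref{prop:phi_energy}, \ref{prop:mu_energy} and \ref{prop:mulambda_energy} with a Gr\"onwall-type argument. The first step is to work with the modified total energy $\tilde{\mathcal{E}}^{(K)}(r) := \mathcal{E}^{(K)}_{\phi}(r) + \mathcal{E}^{(K)}_{\mu}(r) + \mathcal{E}^{(K)}_{\mu - \lambda}(r)$, since Proposition~\ref{prop:mulambda_energy} controls $\mathcal{E}^{(K)}_{\mu - \lambda}$ directly while $\mathcal{E}^{(K)}_{\lambda}$ does not satisfy an analogous estimate. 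Using the pointwise inequality $|\partial_x^K \lambda|^2 \leq 2|\partial_x^K \mu|^2 + 2|\partial_x^K (\mu-\lambda)|^2$ and its analogues for the other terms in the definition of $\mathcal{E}^{(K)}_{\lambda}$, one obtains $\mathcal{E}^{(K)}(r) \lesssim \tilde{\mathcal{E}}^{(K)}(r) \lesssim \mathcal{E}^{(K)}(r)$, so the two quantities are comparable and it suffices to prove \eqref{eq:energy_hierarchy} for $\tilde{\mathcal{E}}^{(K)}$.

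Summing the inequalities \eqref{eq:energy_der_phi}, \eqref{eq:energy_der_mu} and \eqref{eq:energy_der_mulambda}, and absorbing the mild $r^{1-\upgamma}$ contribution into the leading term (legitimate since $r \leq r_0 \leq r_*$ is small), one derives a combined inequality of the form
\begin{equation*}
    \left| r \frac{d}{dr} \tilde{\mathcal{E}}^{(K)}(r) \right| \leq 2 A_* \tilde{\mathcal{E}}^{(K)}(r) + \sum_{k=0}^{K-1} \tilde{C}^{(K)}_2 r^{- 2 \upgamma (K - k)} \mathcal{E}^{(k)}(r) + \mathbf{1}_{K = 0} \cdot 4 A_*,
\end{equation*}
where (after relabeling) $A_*$ depends only on $C_*$ and the second sum is absent when $K = 0$. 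The base case $K = 0$ of the induction then reduces to the scalar Gr\"onwall inequality $|r \tilde{\mathcal{E}}^{(0)\prime}(r)| \leq 2 A_* \tilde{\mathcal{E}}^{(0)}(r) + 4 A_*$ with initial condition $\tilde{\mathcal{E}}^{(0)}(r_0) \lesssim \upzeta$ coming from the data bound \eqref{eq:data_energy_boundedness}; integration on $[r, r_0]$ with respect to $\frac{dr}{r}$ and the integrating factor $r^{2A_*}$ yields $\tilde{\mathcal{E}}^{(0)}(r) \leq C^{(0)} r^{-2 A_*}$.

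For the inductive step, assume $\mathcal{E}^{(k)}(r) \leq C^{(k)} r^{-2 A_* - 2 k \upgamma}$ for all $0 \leq k < K$. Substituting into the lower-order sum gives
\begin{equation*}
    r^{-2 \upgamma (K-k)} \mathcal{E}^{(k)}(r) \leq C^{(k)} r^{-2A_* - 2 K \upgamma},
\end{equation*}
so these terms collectively behave as a forcing of size $D^{(K)} r^{-2A_* - 2K\upgamma}$ for some $D^{(K)}$ depending on $C_*$, $K$ and $\upzeta$. The inequality for $\tilde{\mathcal{E}}^{(K)}(r)$ then reads $|r (\tilde{\mathcal{E}}^{(K)})'(r)| \leq 2 A_* \tilde{\mathcal{E}}^{(K)}(r) + D^{(K)} r^{-2A_* - 2K\upgamma}$. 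Introducing $g(r) = r^{2A_* + 2K\upgamma} \tilde{\mathcal{E}}^{(K)}(r)$ reduces this to a Gr\"onwall problem with bounded forcing, and integration from $r_0$ down to $r$ (noting $g(r_0) \lesssim \upzeta$ via \eqref{eq:data_energy_boundedness}) yields $g(r) \leq C^{(K)}$, which is \eqref{eq:energy_hierarchy}.

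The main obstacle is the bookkeeping which ensures that the exponent $2A_*$ in front of $r$ is \emph{independent of $K$}; this is essential for the forthcoming $L^2$-to-$L^\infty$ interpolation via Lemma~\ref{lem:interpolation}, since only $K$-independent blow-up at leading order permits recovery of the pointwise bootstrap assumptions \eqref{eq:bootstrap_maxwell}--\eqref{eq:bootstrap_d2x} for $\partial_x$-derivatives up to a loss of $r^{-\upgamma - \updelta}$ with $\updelta \to 0$ as $N \to \infty$. In particular, we rely crucially on the fact that the coefficient of $\mathcal{E}^{(K)}(r)$ in each of Propositions~\ref{prop:phi_energy}--\ref{prop:mulambda_energy} depends only on $C_*$ (and not on $K$), so that the combined leading exponent $2 A_*$ inherits this $K$-independence; all $K$-dependent losses are confined to the additional $2 K \upgamma$ factor, and are introduced only through the lower-order inductive contributions.
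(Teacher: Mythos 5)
Your argument is correct and follows essentially the same route as the paper: sum the three differential inequalities of Propositions~\ref{prop:phi_energy}--\ref{prop:mulambda_energy}, pass to the integrating factor $r^{2A_*}$ with $A_*$ independent of $K$, and induct on $K$ via Gr\"onwall, using \eqref{eq:data_energy_boundedness} at $r=r_0$. The only presentational differences are that you make explicit the (correct) comparability $\mathcal{E}^{(K)} \simeq \mathcal{E}^{(K)}_{\phi} + \mathcal{E}^{(K)}_{\mu} + \mathcal{E}^{(K)}_{\mu-\lambda}$, which the paper leaves implicit, and that you absorb the $C_1^{(K)} r^{1-\upgamma}\mathcal{E}^{(K)}$ term into the leading coefficient (which requires $r_*$ small depending on $N$, consistent with $N = N(\upeta)$ and $r_* = r_*(\upeta,\upzeta)$), whereas the paper retains this term and bounds the resulting Gr\"onwall factor $\exp\left(\int_r^{r_0} C_1^{(K)}\tilde{r}^{-\upgamma}\,d\tilde{r}\right)$ by a constant that is absorbed into $C^{(K)}$ without altering the exponent.
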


\begin{proof}
    Combining Propositions~\ref{prop:phi_energy}, \ref{prop:mu_energy} and \ref{prop:mulambda_energy}, it is straightforward to show that for some $A_* > 0$ depending only on $C_*$ and constants $C^{(K)}_1, C^{(K)}_2 > 0$ (which are allowed to differ from those of the previous propositions), one has the following derivative estimate:
    \begin{equation*}
        \left| r \frac{d}{dr} \mathcal{E}^{(K)}(r) \right| \leq
        2 A_* \mathcal{E}^{(K)}(r) + C^{(K)}_1 r^{1 - \upgamma} \mathcal{E}^{(K)}(r) +
        \begin{cases}
            C^{(K)}_2 \sum_{k = 0}^{K-1} r^{- 2 \upgamma(K-k)} \mathcal{E}^{(k)}(r), & \text{ if } K \neq 0, \\
            2 A_*, & \text{ if } K = 0.
        \end{cases}
    \end{equation*}
    Since we shall integrate towards $r = 0$, the derivative estimate we actually use is the following:
    \begin{equation*}
        r \frac{d}{dr} \mathcal{E}^{(K)}(r) \geq
        - 2 A_* \mathcal{E}^{(K)}(r) - C^{(K)}_1 r^{1 - \upgamma} \mathcal{E}^{(K)}(r) -
        \begin{cases}
            C^{(K)}_2 \sum_{k = 0}^{K-1} r^{- 2 \upgamma(K-k)} \mathcal{E}^{(k)}(r), & \text{ if } K \neq 0, \\
            2 A_*, & \text{ if } K = 0.
        \end{cases}
    \end{equation*}

    In fact, using the integrating factor $r^{2 A_*}$, which is \emph{crucially independent of $K$}, we write:
    \begin{equation} \label{eq:energy_der}
        r \frac{d}{dr} \left( r^{2 A_*} \mathcal{E}^{(K)}(r) \right) \geq
        - C^{(K)}_1 r^{1 - \upgamma} \left( r^{2 A_*} \mathcal{E}^{(K)}(r) \right) -
        \begin{cases}
            C^{(K)}_2 \sum_{k = 0}^{K-1} r^{- 2 \upgamma(K-k)} \left( r^{2 A_*} \mathcal{E}^{(k)}(r) \right), & \text{ if } K \neq 0, \\
            r^{2 A_*} A_*, & \text{ if } K = 0.
        \end{cases}
    \end{equation}
     
    We will now use \eqref{eq:energy_der} and induction on $K \in \{0, \ldots, N \}$ to show that 
    \begin{equation} \label{eq:energy_hierarchy_induction}
        r^{2 A_*} \mathcal{E}^{(K)}(r) \lesssim r^{- 2 \upgamma K},
    \end{equation}
    where the implied constant is now allowed to depend on $C_*$, $K$ and $\upzeta$. This is equivalent to \eqref{eq:energy_hierarchy}. Note that the dependence on $\upzeta$ comes from the fact that the initial data bound \eqref{eq:data_energy_boundedness} implies that for $A_* \geq \upgamma$,
    \begin{equation} \label{eq:energy_hierarchy_init}
        \sum_{k=0}^N r_0^{2 A_*} \mathcal{E}^{(k)}(r_0) \leq \upzeta.
    \end{equation}

    We begin with the base case $K = 0$. Applying Gr\"onwall's inequality to \eqref{eq:energy_der} for $K = 0$, we obtain that for any $r \in [r_b, r_0]$, one has
    \[
        r^{2 A_*} \mathcal{E}^{(0)}(r) \leq \exp(F^{(0)}(r_0, r)) \cdot r_0^{2 A_*} \mathcal{E}^{(0)}(r_0) + \int^{r_0}_r \exp( F^{(0)} (\tilde{r}, r)) \tilde{r}^{2 A_*} 2 A_* \frac{d \tilde{r}}{\tilde{r}},
    \]
    \[
        \text{ where } \qquad F^{(0)}(s_a, s_b) = \int^{s_a}_{s_b} C_1^{(0)} \tilde{r}^{1 - \upgamma} \frac{d \tilde{r}} {\tilde{r}}.
    \]
    Since $F^{(0)}(s_a, s_b)$ is uniformly bounded for $s_a, s_b \in [r_b, r_0]$, it follows from the initial data bound \eqref{eq:energy_hierarchy_init} that \eqref{eq:energy_hierarchy_induction} holds for $K = 0$.

    Moving onto the induction step, assume that \eqref{eq:energy_hierarchy_induction} holds for $0 \leq K < \bar{K} \leq N$; we wish to prove it also holds for $K = \bar{K}$. Applying Gr\"onwall's inequality to \eqref{eq:energy_der} for $K = \bar{K}$, we have that
    \[
        r^{2 A_*} \mathcal{E}^{(\bar{K})}(r) \leq \exp(F^{(\bar{K})}(r_0, r)) \cdot r_0^{2 A_*} \mathcal{E}^{(\bar{K})}(r_0) + \int^{r_0}_r \exp( F^{(\bar{K})} (\tilde{r}, r)) C_2^{(\bar{K})} \sum_{k=0}^{\bar{K}-1} \tilde{r}^{-2\upgamma(\bar{K}-k)} \tilde{r}^{2 A_*} \mathcal{E}^{(k)}(\tilde{r}) \frac{d \tilde{r}}{\tilde{r}},
    \]
    \[
        \text{ where } \qquad F^{(\bar{K})}(s_a, s_b) = \int^{s_a}_{s_b} C_1^{(\bar{K})} \tilde{r}^{1 - \upgamma} \frac{d \tilde{r}} {\tilde{r}}.
    \]
    Since $F^{(K)}(s_a, s_b)$ is uniformly bounded for $s_a, s_b \in [r_b, r_0]$, it follows from the initial data bound \eqref{eq:energy_hierarchy_init} and the inductive hypothesis for $\tilde{r}^{2 A_*} \mathcal{E}^{(k)}(\tilde{r})$ that
    \[
        r^{A_*} \mathcal{E}^{(\bar{K})}(r) \lesssim \upzeta + \int_r^{r_0} \sum_{k=0}^{\bar{K}-1} \tilde{r}^{-2\upgamma(\bar{K} - k)} \cdot \tilde{r}^{- 2 \upgamma k} \frac{d \tilde{r}}{\tilde{r}} \lesssim r^{-2 \upgamma \bar{K}}
    \]
    as required. This completes the proof of the proposition.
\end{proof}

\subsection{Derivation of bounce ODEs}

We now apply Proposition~\ref{prop:energy_hierarchy} together with the Sobolev interpolation of Lemma~\ref{lem:interpolation} to provide $L^{\infty}$ bounds for low order derivatives of $\phi$, $\mu$ and $\lambda$, so that we may treat certain equations in the Einstein--Maxwell--scalar field system \eqref{eq:mu_evol}--\eqref{eq:phi_wave} as ODEs without worrying about losing derivatives.

\begin{lemma} \label{lem:low_order_linfty}
    Let $(\mu, \lambda, \phi, Q)$ be as in Proposition~\ref{prop:energy_hierarchy}. Then for $N$ chosen sufficiently large there exists a family of constants $\updelta = \updelta ( C_*, \upzeta, r_*) > 0$ with $\updelta \downarrow 0$ as $r_* \downarrow 0$, such that for any $0 \leq k \leq 3$, $f \in \{ \phi, \mu, \lambda \}$, and $r \in [r_b, r_0]$ one has
    \begin{equation} \label{eq:low_order_linfty}
        \| \partial_x^k f (r, \cdot) \|_{L^{\infty}} + \| r \partial_r \partial_x^k f (r, \cdot) \|_{L^{\infty}} \leq \updelta r^{-1/2}.
    \end{equation}
\end{lemma}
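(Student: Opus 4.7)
The plan is to combine the $L^2$ energy bounds from Proposition~\ref{prop:energy_hierarchy} with the Sobolev interpolation inequality of Lemma~\ref{lem:interpolation}, thereby converting high-order $L^2$ control into low-order $L^{\infty}$ control. The key structural feature used is that the exponent $A_*$ appearing in $\|\partial_x^K f(r,\cdot)\|_{L^2}, \|r\partial_r \partial_x^K f(r,\cdot)\|_{L^2} \lesssim r^{-A_* - K\upgamma}$ (for all $0 \leq K \leq N$ and $f \in \{\phi, \mu, \lambda\}$) is \emph{independent of $K$}, so each additional $\partial_x$-derivative only costs an extra factor of $r^{-\upgamma}$.

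First I would establish crude $L^{\infty}$ bounds on the undifferentiated quantities. From the bootstrap \eqref{eq:bootstrap_rdr} one has $\|r\partial_r f(r,\cdot)\|_{L^{\infty}(\mathbb{S}^1)} \leq C_*$ directly. Integrating $r\partial_r f$ in $r$ from $r_0$ down to $r$ along curves of constant $x$ yields
\[
    \|f(r, \cdot)\|_{L^{\infty}(\mathbb{S}^1)} \leq \|f_D\|_{L^{\infty}(\mathbb{S}^1)} + C_* \log(r_0 / r),
\]
where $\|f_D\|_{L^{\infty}(\mathbb{S}^1)}$ is controlled by the data bound \eqref{eq:data_energy_boundedness} together with the embedding $H^1(\mathbb{S}^1) \hookrightarrow L^{\infty}(\mathbb{S}^1)$. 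In particular, for any $\sigma > 0$, one has $\|f(r, \cdot)\|_{L^{\infty}} \lesssim r^{-\sigma}$ provided $r_*$ is taken sufficiently small depending on $C_*, \upzeta, \sigma$.

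Next, since $\partial_x$ commutes with $r\partial_r$, for $0 \leq k \leq 3$ and any integer $K$ with $k < K \leq N$, Lemma~\ref{lem:interpolation} applied separately to $f(r, \cdot)$ and to $g \coloneqq r\partial_r f(r, \cdot)$ (the latter enjoying the cleaner $L^{\infty}$ input $\|g\|_{L^{\infty}} \leq C_*$) gives
\[
    \|\partial_x^k f(r, \cdot)\|_{L^{\infty}} + \|r\partial_r \partial_x^k f(r, \cdot)\|_{L^{\infty}} \lesssim r^{-\sigma(1-\alpha) - \alpha(A_* + K\upgamma)}, \qquad \alpha = \frac{k}{K - 1/2}.
\]
The $r$-exponent tends to $-\sigma - k\upgamma$ as $K \to \infty$. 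Fixing first $\sigma$ small and then $N$ large in terms of $A_*$ (hence in terms of $C_*$, $\upeta$ and $\upzeta$), one may select $K \leq N$ so that the exponent is at least $-\tfrac{1}{2} + \varepsilon$ for some $\varepsilon = \varepsilon(C_*, \upgamma) > 0$; this step relies only on the numerical inequality $(k+1)\upgamma < \tfrac{1}{2}$ for $k \leq 3$, which is comfortable since $\upgamma \leq \tfrac{1}{100}$. Using $r \leq r_*$ to write $r^{\varepsilon} \leq r_*^{\varepsilon}$, the resulting bound takes the form $(C r_*^{\varepsilon}) \, r^{-1/2}$; setting $\updelta \coloneqq C r_*^{\varepsilon}$ then achieves the claim, with $\updelta \to 0$ as $r_* \downarrow 0$.

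The main obstacle is the tension between the (potentially large) exponent $A_*$ in the energy hierarchy and the target loss $r^{-1/2}$; this is reconciled only because $A_*$ is $K$-independent, so its contribution to the interpolated exponent is diluted by the factor $\alpha = O(1/K)$ once $K$ is chosen large. Beyond this observation the argument is essentially bookkeeping: one chooses $\sigma$, then $K$, then $N \geq K$, then $r_*$, each depending only on the previous parameters together with $\upeta$ and $\upzeta$.
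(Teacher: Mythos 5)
Your proposal is correct and follows essentially the same route as the paper: interpolate (via Lemma~\ref{lem:interpolation}) between the bootstrap bound $\|r\partial_r f\|_{L^\infty}\le C_*$ and the top-order energy bound of Proposition~\ref{prop:energy_hierarchy}, exploiting that $A_*$ is $K$-independent so that the exponent loss is diluted by $\alpha=O(1/K)$, then absorb the surplus positive power of $r$ into $\updelta=Cr_*^{\varepsilon}$. The only (harmless) variation is that for $\|\partial_x^k f\|_{L^\infty}$ you interpolate directly on $f$ using the crude $O(r^{-\sigma})$ bound, whereas the paper integrates the already-established bound on $\|r\partial_r\partial_x^k f\|_{L^\infty}$ from the data at $r=r_0$.
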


\begin{proof}
    The energy estimate \eqref{eq:energy_hierarchy} for $K = N$, together with the definition of $\mathcal{E}^{(K)}(r)$ (see Definition~\ref{def:energy}), implies the following $L^2$-estimates for $f \in \{ \phi, \mu, \lambda \}$:
    \begin{equation} \label{eq:l2_top}
        \| \partial_x^N f \|_{L^2}^2 + \| r \partial_r \partial_x^N f \|_{L^2}^2 \leq 2 C^{(N)} r^{-2 A_* - 2 N \upgamma}.
    \end{equation}
    Note that while $C^{(N)}$ depends on $N$, the number $A_*$ does not, and we later choose $N$ depending on $A_*$.

    We now interpolate between \eqref{eq:l2_top} and the low-order $L^{\infty}$ bootstrap assumptions \eqref{eq:bootstrap_rdr}--\eqref{eq:bootstrap_dx}. Applying Lemma~\ref{lem:interpolation}, for $f \in \{ \phi, \mu, \lambda \}$ and $0 \leq k \leq 3$ one finds:
    \[
        \| r \partial_r \partial_x^k f (r, \cdot ) \|_{L^{\infty}} \lesssim_{N} \| r \partial_r f (r, \cdot) \|_{L^{\infty}}^{1-\alpha} \| r \partial_r \partial_x^N f (r, \cdot) \|_{L^2}^{\alpha}, \qquad \text{ where } \alpha = \frac{k}{N - \frac{1}{2}}.
    \]
    Inserting the bound \eqref{eq:l2_top} and the bootstrap assumption \eqref{eq:bootstrap_rdr}, one finds
    \[
        \| r \partial_r \partial_x^k f (r, \cdot) \|_{L^{\infty}} \lesssim_{N, C_*} \left( r^{- 2 A_* - 2 N \upgamma } \right)^{\frac{\alpha}{2}} = r^{- k \upgamma} \cdot \left( r^{- 2 A_* - \upgamma} \right)^{\frac{\alpha}{2}}.
    \]

    Since we chose $\upgamma < \frac{1}{100}$, we have $r^{- k \upgamma} < r^{- 1/8}$. Furthermore, for $N$ chosen sufficiently large (depending on $A_*$ and $k$) one can guarantee $(r^{- 2 A_* - \upgamma})^{\frac{\alpha}{2}} < r^{- 1/8}$. Thus for this choice of $N$ there exists a constant $C_{N, C_*}$ such that
    \begin{equation} \label{eq:interp_rdr}
        \| r \partial_r \partial_x^k f (r, \cdot) \|_{L^{\infty}} \leq C_{N, C_*} r^{- 1/ 4}.
    \end{equation}
    To get a similar estimate for $\| \partial_x^k f (r, \cdot) \|_{L^{\infty}}$ we first apply the Lemma~\ref{lem:interpolation} to the initial data assumption \eqref{eq:data_energy_boundedness}, yielding $ \| r \partial_r \partial_x^k f (r_0, \cdot) \|_{L^{\infty}} \lesssim \upzeta r^{-\upgamma} \lesssim \upzeta r^{- 1/4}$. Then simply integrating \eqref{eq:interp_rdr} from $r = r_0$, there exists a constant $C_{N, C_*, \upzeta}$ so that
    \begin{equation} \label{eq:interp_base} 
        \| \partial_x^k f (r, \cdot) \|_{L^{\infty}} \leq C_{N, C_*, \upzeta} r^{- 1/ 4}.
    \end{equation}

    Combining \eqref{eq:interp_rdr} and \eqref{eq:interp_base}, and redefining the constant $C_{N, C_*, \upzeta}$ appropriately, one has
    \begin{equation*} 
        \| \partial_x^k f (r, \cdot) \|_{L^{\infty}} + \| r \partial_r \partial_x^k f (r, \cdot) \|_{L^{\infty}} \leq ( C_{N, C_*, \upzeta} r^{1/4} ) \cdot r^{-1/2}.
    \end{equation*}
    Recalling that $r \leq r_0 \leq r_*$, letting $\updelta = C_{N, C_*, \upzeta} \cdot r_*^{1/4}$ completes the proof of the lemma.
\end{proof}

\begin{corollary} \label{cor:ode}
    Let $(\mu, \lambda, \phi, Q)$ be as in Proposition~\ref{prop:energy_hierarchy}. Then for $N$ chosen sufficiently large there exists a family of constants $\updelta = \updelta(C_*, \upzeta, r_*) > 0$, with $\updelta \downarrow 0$ as $r_* \downarrow 0$, such that for all $(r, x) \in [r_b, r_0] \times \mathbb{S}^1$ and all $a \in [-1, 1]$, one has:
    \begin{gather}
        \left| (r \partial_r + a r e^{\mu - \lambda} \partial_x) (r \partial_r \phi) + (r \partial_r \phi) \frac{Q^2}{r^2} e^{2\mu} \right| \, (r, x) \leq \updelta r^{1/2}, \label{eq:P_ode_err} \\[0.6em]
        \left| (r \partial_r + a r e^{\mu - \lambda} \partial_x) \left( \frac{Q^2}{r^2} e^{2\mu} \right) - \frac{Q^2}{r^2}e^{2\mu} \left( (r \partial_r \phi)^2 - 1 - \frac{Q^2}{r^2} e^{2\mu} \right) \right| \, (r, x) \leq \frac{Q^2}{r^2} e^{2 \mu} \cdot \updelta r^{1/2}. \label{eq:Q_ode_err}
    \end{gather}
    Furthermore, for the same $\updelta = \updelta(C_*, \upzeta, r_*) > 0$ one has:
    \begin{gather}
        \left| (r \partial_r + a r e^{\mu - \lambda} \partial_x) (\partial_x \mu) + \frac{Q^2}{r^2} e^{2\mu} \partial_x \mu - (r \partial_r \phi) (r \partial_r \partial_x \phi) \right| \, (r, x) \leq \updelta r^{1/2}, \label{eq:M_ode_err} \\[0.6em]
        \left| (r \partial_r + a r e^{\mu - \lambda} \partial_x) (r \partial_r \partial_x \phi) + \frac{2 Q^2}{r^2}e^{2\mu} (r \partial_r \phi) \partial_x \mu + \frac{Q^2}{r^2} e^{2\mu} r \partial_r \partial_x \phi \right| \, (r, x) \leq \updelta r^{1/2}. \label{eq:N_ode_err}
    \end{gather}
\end{corollary}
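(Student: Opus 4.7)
The strategy is to derive each of the four inequalities directly from the surface symmetric Einstein--Maxwell--scalar field equations of Proposition~\ref{prop:emsfss_surface_sym} (respectively \eqref{eq:phi_wave} for \eqref{eq:P_ode_err}, \eqref{eq:mu_evol} for \eqref{eq:Q_ode_err}, and the $\partial_x$-commuted wave equation \eqref{eq:phi_wave} together with the constraint \eqref{eq:mu_constraint} for \eqref{eq:M_ode_err}--\eqref{eq:N_ode_err}), isolating in each case the exact ``main'' nonlinear terms claimed on the LHS, and then showing that every remaining term is $O(r)$ (or better) in $L^\infty$ via the bootstrap assumptions \eqref{eq:bootstrap_maxwell}--\eqref{eq:bootstrap_d2x} and the sharper interpolation bound \eqref{eq:low_order_linfty} of Lemma~\ref{lem:low_order_linfty}.

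First, for \eqref{eq:P_ode_err}, rearranging \eqref{eq:phi_wave} gives $r\partial_r(r\partial_r\phi) + (r\partial_r\phi)\tfrac{Q^2}{r^2}e^{2\mu} = r^2 e^{2(\mu-\lambda)}\partial_x^2\phi + (r\partial_r\phi)\kappa e^{2\mu} + r^2 e^{2(\mu-\lambda)}\partial_x\phi\,\partial_x(\mu-\lambda)$; adding $a r e^{\mu-\lambda}\partial_x(r\partial_r\phi)$ to both sides and using $e^{2\mu}, e^{2(\mu-\lambda)} \leq C_* r$ from \eqref{eq:bootstrap_maxwell}, together with the pointwise bounds from Lemma~\ref{lem:low_order_linfty} for $\partial_x^2\phi$, $\partial_x\phi$, $\partial_x(\mu-\lambda)$ and $r\partial_r\partial_x\phi$, every RHS term is bounded by $C r$ for $C = C(C_*)$. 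Similarly, for \eqref{eq:Q_ode_err} one simply computes $r\partial_r(Q^2 r^{-2}e^{2\mu}) = -2\mathscr{Q} + 2\mathscr{Q}\, r\partial_r\mu$ and $\partial_x\mathscr{Q} = 2\mathscr{Q}\partial_x\mu$, then substitutes \eqref{eq:mu_evol} to obtain $(r\partial_r + a r e^{\mu-\lambda}\partial_x)\mathscr{Q} - \mathscr{Q}((r\partial_r\phi)^2 - 1 - \mathscr{Q}) = \mathscr{Q}\bigl(r^2 e^{2(\mu-\lambda)}(\partial_x\phi)^2 + \kappa e^{2\mu} + 2a r e^{\mu-\lambda}\partial_x\mu\bigr)$, and each factor inside the parentheses is again $O(r)$ by the same mechanism.

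For the first-derivative statements \eqref{eq:M_ode_err}--\eqref{eq:N_ode_err}, the plan is to commute the equations with $\partial_x$: for \eqref{eq:M_ode_err} we apply $r\partial_r$ to the momentum constraint \eqref{eq:mu_constraint} to get $r\partial_r\partial_x\mu = r\partial_r(r\partial_r\phi)\,\partial_x\phi + r\partial_r\phi \cdot r\partial_r\partial_x\phi$, then substitute \eqref{eq:phi_wave} for $r\partial_r(r\partial_r\phi)$; the $-\mathscr{Q}\,r\partial_r\phi \cdot \partial_x\phi = -\mathscr{Q}\partial_x\mu$ contribution and the $r\partial_r\phi \cdot r\partial_r\partial_x\phi$ contribution match the main terms on the LHS exactly, with all other contributions being of the form (power of $r$)$\times$(low-order quantity) bounded by $C r$. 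For \eqref{eq:N_ode_err}, we instead commute the wave equation \eqref{eq:phi_wave} itself with $\partial_x$, producing the cross-terms $-\mathscr{Q}\,r\partial_r\partial_x\phi$ and $-2\mathscr{Q}(r\partial_r\phi)\partial_x\mu$ on the RHS which move to become the main terms on the LHS of \eqref{eq:N_ode_err}; the remaining terms, including $r^2 e^{2(\mu-\lambda)}\partial_x^3\phi$ and $\kappa e^{2\mu}\,r\partial_r\partial_x\phi$, are controlled as before using $e^{2\mu} \leq C_* r$ and $\|r\partial_r\partial_x^k\phi\|_{L^\infty} \leq \updelta r^{-1/2}$ from Lemma~\ref{lem:low_order_linfty} (applicable since $k \leq 2$).

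In every case, the point is that the combination of a factor of $r$ coming from the metric weights \eqref{eq:bootstrap_maxwell} with a factor of $r^{-1/2}$ from \eqref{eq:low_order_linfty} yields a net $r^{1/2}$, and the overall constant $C$ (which grows with $C_*$ and polynomially in $\upzeta$) is absorbed into $\updelta$ by writing $C \cdot r^{1/2} = (C r_*^{1/4}) \cdot r^{1/4} \cdot r^{1/2} \leq \updelta r^{1/2}$ after redefining $\updelta$ to incorporate the factor $r_*^{1/4}$, which still tends to $0$ as $r_* \downarrow 0$. The main subtlety, and the only place where the argument is not purely mechanical, is the handling of the terms $\kappa e^{2\mu}\partial_x\mu$ in \eqref{eq:M_ode_err} and $\kappa e^{2\mu}(r\partial_r\phi)\partial_x\mu$ in \eqref{eq:N_ode_err}, which are \emph{exactly} at the critical size $C\updelta r^{1/2}$; these are the tight terms that force the statement to lose the power $r^{1/2}$ rather than something stronger like $r^{1-\upgamma}$, but they are nonetheless absorbable because the relevant $\updelta$ can be made as small as desired by shrinking $r_*$.
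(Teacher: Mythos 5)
Your proposal is correct and follows essentially the same route as the paper: each inequality is read off directly from the field equations (or their single $\partial_x$-commutation), with the exact main terms isolated and all remaining terms bounded in $L^\infty$ by combining the bootstrap weights $e^{2\mu}, e^{2(\mu-\lambda)} \leq C_* r$ with the interpolated bounds $\updelta r^{-1/2}$ of Lemma~\ref{lem:low_order_linfty}, then absorbing constants into a redefined $\updelta$; your identification of the $\kappa e^{2\mu}$-terms as the ones saturating the $r^{1/2}$ rate matches the paper's computation for \eqref{eq:N_ode_err}. Your derivation of \eqref{eq:M_ode_err} via the $r\partial_r$-differentiated constraint \eqref{eq:mu_constraint} (rather than the $\partial_x$-differentiated evolution equation \eqref{eq:mu_evol}) is a harmless variant of a step the paper leaves to the reader, and the only blemish is the garbled identity ``$C\cdot r^{1/2} = (Cr_*^{1/4})\cdot r^{1/4}\cdot r^{1/2}$'' in your last paragraph, whose intended meaning (smallness of $\updelta$ via $r \leq r_* $) is nonetheless clear and correct.
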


\begin{proof}
    This follows immediately from Lemma~\ref{lem:low_order_linfty} and the equations \eqref{eq:mu_evol} and \eqref{eq:phi_wave}. We derive the equations \eqref{eq:Q_ode_err} and \eqref{eq:N_ode_err}, leaving the remaining equations to the reader. For \eqref{eq:Q_ode_err}, using \eqref{eq:mu_evol} we derive:
    \begin{align*}
        r \partial_r \left( \frac{Q^2}{r^2} e^{2 \mu} \right) 
        &= \frac{Q^2}{r^2} e^{2 \mu} \left( (r \partial_r \phi)^2 + r^2 e^{2(\mu - \lambda)} (\partial_x \phi)^2 + \left( - \frac{Q^2}{r^2} + \kappa \right) e^{2 \mu} - 2 \right) \\[0.6em]
        &= \frac{Q^2}{r^2} e^{2 \mu} \left( (r \partial_r \phi)^2 - 1 - \frac{Q^2}{r^2}e^{2\mu} \right)
        + \frac{Q^2}{r^2} e^{2 \mu} \left( r^2 e^{2 (\mu - \lambda)} (\partial_x \phi)^2 + \kappa e^{2 \mu} \right).
    \end{align*}

    Noting there is also an $are^{\mu - \lambda} \partial_x$-derivative on the left hand side of \eqref{eq:Q_ode_err}, it thus suffices to show that
    \begin{equation} \label{eq:Q_ode_err_bound}
        r^2 e^{2 (\mu - \lambda)} (\partial_x \phi)^2 + |\kappa e^{2\mu}| + 2 |a r e^{\mu - \lambda} \partial_x \mu | \leq \updelta r^{1/2}.
    \end{equation}
    Inserting the bootstrap assumptions \eqref{eq:bootstrap_maxwell} to bound $e^{\mu - \lambda}$ and $e^{2 \mu}$, and Lemma~\ref{lem:low_order_linfty} to bound $\partial_x \phi$ and $\partial_x \mu$ ,
    \[
        r^2 e^{2 (\mu - \lambda)} (\partial_x \phi)^2 + |\kappa e^{2 \mu}| + 2 |a r e^{\mu - \lambda} \partial_x \mu | \leq C_*^2 r^3 ( \updelta r^{-1/2} )^2 + C_* r + 2 C_* r^{3/2} (\updelta r^{-1/2}).
    \]
    Since the powers of $r$ appearing on the right hand side all exceed $r^{1/2}$, it is clear that upon redefining $\updelta$ that one has the estimate \eqref{eq:Q_ode_err_bound}.

    We move onto \eqref{eq:N_ode_err}. For this, we commute the wave equation \eqref{eq:phi_wave} once with $\partial_x$, then add an additional term of $a r e^{\mu - \lambda} r \partial_r \partial_x^2 \phi$, to yield
    \begin{multline*}
        (r \partial_r)^2 \partial_x \phi + a r e^{\mu - \lambda} r \partial_r \partial_x^2 \phi + 2 \frac{Q^2}{r^2} e^{2 \mu} (\partial_x \mu) (r \partial_r \phi) + \frac{Q^2}{r^2} e^{2\mu} r \partial_r \partial_x \phi = 
        \\[0.6em]
        r^2 e^{2(\mu - \lambda)} \left( \partial_x^3 \phi + 3 \partial_x(\mu - \lambda) \partial_x^2 \phi + \partial_x^2(\mu - \lambda) \partial_x \phi \right) + \kappa e^{2\mu} \left( r \partial_r \partial_x \phi + 2 (\partial_x \mu) (r \partial_r \phi) \right) + a r e^{\mu - \lambda} r \partial_r \partial_x^2 \phi.
    \end{multline*}
    Using the bootstrap assumptions \eqref{eq:bootstrap_maxwell}--\eqref{eq:bootstrap_rdr} and Lemma~\ref{lem:low_order_linfty}, the right hand side may be bounded by
    \[
        C_* r^2 ( \updelta r^{-1/2} + (\updelta r^{-1/2})^2 ) + C_* r ( \updelta r^{-1/2} + 2 C_* \updelta r^{-1/2}) + C_* r^2 \updelta r^{-1/2}.
    \]

    Since we are allowed to redefine $\updelta = C_* \updelta$, this quantity is bounded by $\updelta r^{1/2}$, as required.
\end{proof}

\section{Low order ODE analysis} \label{sec:ode}

Here, we now study ODE systems of the type found in Corollary~\ref{cor:ode}, with the eventual aim of improving the bootstrap assumptions \eqref{eq:bootstrap_maxwell}--\eqref{eq:bootstrap_d2x}. As explained in Section~\ref{sub:intro_proof}, the first step is studying a nonlinear ODE system derived from \eqref{eq:P_ode_err}--\eqref{eq:Q_ode_err}, and then studying the linearization of this ODE system, derived from \eqref{eq:M_ode_err}--\eqref{eq:N_ode_err}.

\subsection{The BKL bounce ODE}

\begin{lemma} \label{lem:ode_bounce}
    For $r_0 \leq 1$, let $\mathscr{P}, \mathscr{Q}: [r_b, r_0] \subset \R_{>0}\to \R_{\geq 0}$ satisfy the following ODEs, where the error terms $\mathscr{E}_i$ obey $|\mathscr{E}_i| \leq \updelta r^{1/2}$ for $i = 1, 2$:
    \begin{gather}
        \label{eq:p_eq}
        r \partial_r \mathscr{P} = - \mathscr{P} \mathscr{Q} + \mathscr{E}_1, \\[0.5em]
        \label{eq:q_eq}
        r \partial_r \mathscr{Q} = \mathscr{Q} ( \mathscr{P}^2 - 1 - \mathscr{Q} + \mathscr{E}_2).
    \end{gather}
    Suppose furthermore that for some $\upeta > 2$, one has $\upeta^{-1} \leq \mathscr{P}(r_0) \leq \upeta$ and $\mathscr{Q}(r_0) \leq 1$. Then for $\updelta$ chosen sufficiently small (depending on $\upeta$) the solution obeys the following bounds for $r \in [r_b, r_0]$.
    \begin{equation} \label{eq:pq_bounds}
        (4 \upeta)^{-1} \leq \mathscr{P}(r) \leq 4 \upeta, \qquad \mathscr{Q}(r) \leq 16 \upeta^2.
    \end{equation}
\end{lemma}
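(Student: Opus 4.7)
The plan is to exploit a near-conservation law for the system \eqref{eq:p_eq}--\eqref{eq:q_eq}, closed via a continuity argument. The key observation is that in the unperturbed case ($\mathscr{E}_1 \equiv \mathscr{E}_2 \equiv 0$) the quantity
\[
    \mathscr{K}(r) \coloneqq \mathscr{P}(r) + \mathscr{P}(r)^{-1} + \mathscr{Q}(r) \mathscr{P}(r)^{-1}
\]
is a constant of motion, as follows by direct computation using \eqref{eq:p_eq}--\eqref{eq:q_eq}: the three nonzero contributions to $r \partial_r \mathscr{K}$ cancel exactly. Moreover $\mathscr{K}$ simultaneously controls $\mathscr{P}$ and $\mathscr{Q}$ since the trivial bounds $\mathscr{P}, \mathscr{P}^{-1}, \mathscr{Q} \mathscr{P}^{-1} \leq \mathscr{K}$ show that $\mathscr{K}(r) \leq K_0$ forces $\mathscr{P}(r) \in [K_0^{-1}, K_0]$ and $\mathscr{Q}(r) \leq K_0 \mathscr{P}(r) \leq K_0^2$. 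By the initial hypotheses $\mathscr{K}(r_0) \leq 3 \upeta$, and so establishing \eqref{eq:pq_bounds} reduces to proving $\mathscr{K}(r) \leq 4 \upeta$ for all $r \in [r_b, r_0]$.

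I would therefore set up a standard bootstrap on the condition $\mathscr{K}(r) \leq 4 \upeta$ on a subinterval $[r_1, r_0] \subset [r_b, r_0]$, which holds near $r_0$ by continuity. Under this bootstrap, the exact cancellation above together with \eqref{eq:p_eq}--\eqref{eq:q_eq} leaves only the error contributions:
\[
    r \partial_r \mathscr{K} = \left( 1 - \mathscr{P}^{-2}(1 + \mathscr{Q}) \right) \mathscr{E}_1 + \mathscr{P}^{-1} \mathscr{Q} \, \mathscr{E}_2.
\]
Inserting $\mathscr{P} \geq (4 \upeta)^{-1}$, $\mathscr{Q} \leq 16 \upeta^2$ from the bootstrap, together with $|\mathscr{E}_i| \leq \updelta r^{1/2}$, yields $|r \partial_r \mathscr{K}| \leq C(\upeta) \updelta r^{1/2}$ for an explicit $C(\upeta) > 0$. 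Integrating against $\tfrac{dr}{r}$ from $r$ up to $r_0$, and using $r_0 \leq 1$, gives
\[
    |\mathscr{K}(r) - \mathscr{K}(r_0)| \leq C(\upeta) \updelta \int_r^{r_0} \tilde r^{-1/2} \, d \tilde r \leq 2 C(\upeta) \updelta.
\]
Choosing $\updelta$ small enough depending on $\upeta$ so that $2 C(\upeta) \updelta \leq \upeta$, this yields $\mathscr{K}(r) \leq \mathscr{K}(r_0) + \upeta \leq 4 \upeta$, strictly improving the bootstrap, and a standard continuity argument then extends this bound to all of $[r_b, r_0]$.

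I anticipate no substantial obstacle; the conservation of $\mathscr{K}$ for the unperturbed problem does essentially all of the work, reducing the argument to an elementary Gr\"onwall-style integration. The only mild subtlety is that $\mathscr{K}$ requires $\mathscr{P} > 0$ to be well-defined, but this is automatic from the initial condition $\mathscr{P}(r_0) \geq \upeta^{-1}$ and is in any case implied by the bootstrap bound $\mathscr{P} \geq (4\upeta)^{-1}$. Beyond its role here, the conserved quantity $\mathscr{K}$ is the rigorous manifestation of the heteroclinic orbits portrayed in Figure~\ref{fig:phase_portrait}, and I would expect it to be the central tool for establishing the finer asymptotic relation \eqref{eq:bounce_asymp} in Theorem~\ref{thm:bounce}, where the precise value of $\mathscr{K}(r_0)$ determines which level curve the orbit lies on and hence the value of $\mathscr{P}_{\gamma, \infty}$.
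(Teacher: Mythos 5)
Your proposal is correct and follows essentially the same route as the paper: a continuity/bootstrap argument on \eqref{eq:pq_bounds} closed via the approximately conserved quantity $\mathscr{K} = \mathscr{P} + \mathscr{P}^{-1} + \mathscr{Q}\mathscr{P}^{-1}$, whose derivative $r\partial_r \mathscr{K} = \mathscr{E}_1(1 - \mathscr{P}^{-2} - \mathscr{P}^{-2}\mathscr{Q}) + \mathscr{P}^{-1}\mathscr{Q}\,\mathscr{E}_2$ you compute identically, and which the paper likewise uses again to prove \eqref{eq:bounce_asymp}. The only cosmetic point is that to genuinely \emph{strictly} improve the bootstrap you should take $2C(\upeta)\updelta \leq \upeta/2$ (giving $\mathscr{K} \leq \tfrac{7}{2}\upeta < 4\upeta$, as the paper does) rather than $\leq \upeta$, which only reproduces the assumed bound.
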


\begin{proof}
    We proceed by using a continuity / bootstrap argument, with the bootstrap assumption being \eqref{eq:pq_bounds}. So we assume \eqref{eq:pq_bounds} holds on an interval $[\tilde{r}, r_0] \subset [r_b, r_0]$, then show we may actually improve upon \eqref{eq:pq_bounds}.

    For the improvement step, we use an approximately conserved quantity of the ODE (which would be exactly conserved if $\mathscr{E}_i \equiv 0 $); let $\mathscr{K}$ be defined by
    \begin{equation} \label{eq:ode_k}
        \mathscr{K} \coloneqq \mathscr{P} + \mathscr{P}^{-1} + \mathscr{Q} \mathscr{P}^{-1}.
    \end{equation}
    From \eqref{eq:p_eq} and \eqref{eq:q_eq}, one checks that $r \partial_r \mathscr{K} = \mathscr{E}_1 (1 - \mathscr{P}^{-2} - \mathscr{P}^{-2} \mathscr{Q}) + \mathscr{P}^{-1} \mathscr{Q} \mathscr{E}_2$. Thus assuming the bootstrap assumption \eqref{eq:pq_bounds}, $r \partial_r \mathscr{K}$ may be bounded by
    \[
        |r \partial_r \mathscr{K} | \leq (1 + 16 \upeta^2 + 64 \upeta^3 + 256 \upeta^4) \updelta r^{1/2}.
    \]

    Thus for $\updelta = \updelta(\upeta)$ chosen sufficiently small, one may guarantee that $\int^{r_0}_{r} |\partial_r \mathscr{K}(\tilde{r})| d \tilde{r} \leq \frac{1}{2} \upeta$. Using also the initial data assumptions $\upeta^{-1} \leq \mathscr{P}(r_0) \leq \upeta$ and $\mathscr{Q}(r_0) \leq 1$, we have
    \[
        \mathscr{K}(r) \leq \mathscr{K}(r_0) + \frac{1}{2} \upeta = \mathscr{P}(r_0) + \mathscr{P}^{-1}(r_0) + \mathscr{Q} \mathscr{P}^{-1}(r_0) + \frac{1}{2} \upeta \leq \frac{7}{2} \upeta.
    \]
    Hence for all $r$, one has
    \[
        \mathscr{P}(r) + \mathscr{P}^{-1}(r) + \mathscr{Q} \mathscr{P}^{-1}(r) \leq \frac{7}{2} \upeta,
    \]
    from which one can read off $\mathscr{P}(r) + \mathscr{P}(r) \leq \frac{7}{2} \upeta$ and $\mathscr{Q}(r) \leq \left( \frac{7}{2} \upeta \right)^2$, improving upon the bootstrap assumption \eqref{eq:pq_bounds}. By a standard continuity argument, \eqref{eq:pq_bounds} therefore holds in the entire interval $[r_b, r_0]$.
\end{proof}

\subsection{The equations of variation}

\begin{lemma} \label{lem:ode_variation}
    For $r_0 < 1$, let $\mathscr{P}, \mathscr{Q}: [r_b, r_0] \to \R_{\geq 0}$ satisfy the assumptions of Lemma~\ref{lem:ode_bounce}. Further, let $\mathscr{M}, \mathscr{N}: [r_b, r_0] \to \R$ obey the following ODEs, where $|\mathscr{E}_i| \leq \updelta r^{1/2}$ for $i=1,2,3,4$:
    \begin{gather}
        r \partial_r \mathscr{M} = - \mathscr{Q} \mathscr{M} + \mathscr{P} \mathscr{N} + \mathscr{E}_3, \label{eq:m_eq} \\[0.5em]
        r \partial_r \mathscr{N} = - 2 \mathscr{P} \mathscr{Q} \mathscr{M} - \mathscr{Q} \mathscr{N} + \mathscr{E}_4. \label{eq:n_eq}
    \end{gather}
    For $0 < \upgamma < \frac{1}{100}$ fixed and some $\upzeta > 0$, impose the following conditions at initial data: $|\mathscr{M}(r_0)| + |\mathscr{N}(r_0)| \leq \upzeta r_0^{- \upgamma}$. Then for $\updelta$ chosen sufficiently small depending on $\upeta$, $\upgamma$ and $\upzeta$, there exists a constant $D > 0$ depending on the same constants $\upeta$, $\upgamma$ and $\upzeta$, such that for all $r \in [r_b, r_0]$, one has
    \begin{equation} \label{eq:mn_bounds}
        |\mathscr{M} (r)| + |\mathscr{N} (r)| \leq D r^{-\upgamma}.
    \end{equation}
\end{lemma}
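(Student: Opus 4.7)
The system \eqref{eq:m_eq}--\eqref{eq:n_eq} is, under the identification $\mathscr{M} \leftrightarrow \delta\log\mathscr{Q}/2$ and $\mathscr{N} \leftrightarrow \delta\mathscr{P}$, precisely the linearization of the bounce ODE \eqref{eq:p_eq}--\eqref{eq:q_eq} of Lemma~\ref{lem:ode_bounce}, so my plan is to exploit the integrability of that ODE. Specifically, the exact conserved quantity $\mathscr{K} = \mathscr{P} + \mathscr{P}^{-1} + \mathscr{Q}\mathscr{P}^{-1}$ used in the proof of Lemma~\ref{lem:ode_bounce} yields, upon linearization, the approximately conserved quantity
\[
  \mathcal{K}(r) := \frac{2\mathscr{Q}}{\mathscr{P}}\,\mathscr{M} + \frac{\mathscr{P}^2 - 1 - \mathscr{Q}}{\mathscr{P}^2}\,\mathscr{N}.
\]
A direct computation using \eqref{eq:p_eq}--\eqref{eq:n_eq} shows that the coefficients of $\mathscr{M}$ and $\mathscr{N}$ in $r\partial_r\mathcal{K}$ cancel identically, leaving $|r\partial_r \mathcal{K}| \lesssim_{\upeta} \updelta\, r^{1/2}(|\mathscr{M}| + |\mathscr{N}| + 1)$.

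I would then run a continuity/bootstrap argument under the ansatz $|\mathscr{M}(r)| + |\mathscr{N}(r)| \leq D\, r^{-\upgamma}$. Integrating the above inequality gives $|\mathcal{K}(r) - \mathcal{K}(r_0)| \lesssim_{\upeta} \updelta D r_0^{1/2-\upgamma}$, so $|\mathcal{K}(r)|$ is uniformly bounded by $C(\upeta,\upzeta)\, r_0^{-\upgamma}$ provided $\upgamma < 1/2$ and $\updelta$ is small. On the set $\{\mathscr{Q} \neq 0\}$ I use $\mathcal{K}$ to eliminate $\mathscr{M}$ from \eqref{eq:n_eq}, producing the scalar equation
\[
  r\partial_r \mathscr{N} = (\mathscr{P}^2 - 1 - 2\mathscr{Q})\,\mathscr{N} - \mathscr{P}^2 \mathcal{K} + \mathrm{err}.
\]
Because $\mathscr{P}^2 - 1 - \mathscr{Q} = r\partial_r\log\mathscr{Q}$ modulo errors, the natural integrating factor is $\mu(r) = \mathscr{Q}(r_0)\mathscr{Q}(r)^{-1}\exp(-\int_r^{r_0}\mathscr{Q}\,d\tilde r/\tilde r)$, and the crucial uniform estimate $\int_r^{r_0}\mathscr{Q}\,d\tilde r/\tilde r \leq C(\upeta)$ follows by integrating $r\partial_r\log\mathscr{P} = -\mathscr{Q} + O(\updelta r^{1/2})$ and invoking the pointwise bounds \eqref{eq:pq_bounds} on $\log\mathscr{P}$.

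Variation of parameters then yields $\mathscr{N}(r) = \mu(r)^{-1}\mathscr{N}(r_0) + \mu(r)^{-1}\int_r^{r_0}\mu(\tilde r)[\mathscr{P}^2\mathcal{K} - \mathrm{err}]\,d\tilde r/\tilde r$. Although the integrand $\mu(\tilde r)$ is of order $\mathscr{Q}(\tilde r)^{-1}$ and can be large when $\mathscr{Q}$ is small, the prefactor $\mu(r)^{-1}$ compensates; a careful evaluation---either via the explicit heteroclinic structure of the orbits from Lemma~\ref{lem:ode_bounce} or by splitting $[r_b, r_0]$ according to the size of $\mathscr{Q}$---gives $|\mathscr{N}(r)| \lesssim_{\upeta,\upzeta,\upgamma} r^{-\upgamma}$. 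With $\mathscr{N}$ controlled, \eqref{eq:m_eq} becomes a scalar ODE for $\mathscr{M}$ with bounded integrating factor $\exp(\int_r^{r_0}\mathscr{Q}\,d\tilde r/\tilde r)$, so that $|\mathscr{M}(r)| \lesssim_{\upeta} |\mathscr{M}(r_0)| + \int_r^{r_0}|\mathscr{P}\mathscr{N}|\,d\tilde r/\tilde r \lesssim r^{-\upgamma}$, closing the bootstrap for $D$ sufficiently large.

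The \emph{main obstacle} is the scalar-ODE step for $\mathscr{N}$: the integrating factor $\mu^{-1}$ vanishes where $\mathscr{Q}$ is small (most notably as $r \to 0$ in the post-bounce regime $\mathscr{P}^2 > 1$), so the explicit representation of $\mathscr{N}$ is the difference of two potentially singular contributions, and demonstrating that the cancellation is sharp enough to deliver an $r^{-\upgamma}$ bound requires a delicate estimation that uses the fine structure of the bounce orbits near the stable generalized Kasner fixed point towards which they converge.
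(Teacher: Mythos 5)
Your identification of the linearized conserved quantity is correct: with $\delta\mathscr{Q}=2\mathscr{Q}\mathscr{M}$ and $\delta\mathscr{P}=\mathscr{N}$, the quantity $\mathcal{K}=\frac{2\mathscr{Q}}{\mathscr{P}}\mathscr{M}+\frac{\mathscr{P}^2-1-\mathscr{Q}}{\mathscr{P}^2}\mathscr{N}$ is indeed the linearization of $\mathscr{K}$ from Lemma~\ref{lem:ode_bounce}, and a direct check confirms that the $\mathscr{M}$- and $\mathscr{N}$-coefficients in $r\partial_r\mathcal{K}$ cancel, leaving only $\mathscr{E}_i$-errors. However, the proposal does not close, and the gap is precisely the one you flag at the end. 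A single approximately conserved scalar cannot determine the two unknowns $(\mathscr{M},\mathscr{N})$, and the second piece of information you propose --- the scalar ODE $r\partial_r\mathscr{N}=(\mathscr{P}^2-1-2\mathscr{Q})\mathscr{N}-\mathscr{P}^2\mathcal{K}+\mathrm{err}$ --- reintroduces the difficulty in a worse form. Its homogeneous solution from $r_0$ down to $r$ is, up to the bounded factor $\exp(\int_r^{r_0}\mathscr{Q}\,\tfrac{d\tilde r}{\tilde r})\leq C(\upeta)$, equal to $\mathscr{Q}(r)/\mathscr{Q}(r_0)$; along a bounce orbit with $\mathscr{Q}(r_0)$ arbitrarily small this ratio is unbounded (it reaches $\sim\sup\mathscr{Q}/\mathscr{Q}(r_0)$ at the peak of the bounce), and it is \emph{not} uniform in the choice of orbit, which the lemma requires. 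The source term $-\mathscr{P}^2\mathcal{K}$ contains an $\mathscr{N}$-proportional piece of exactly the size needed to cancel this growth (when $\mathscr{Q}$ is small, $\mathscr{P}^2\mathcal{K}\approx(\mathscr{P}^2-1)\mathscr{N}$), so your variation-of-parameters formula expresses $\mathscr{N}$ as the difference of two individually uncontrolled quantities. Establishing that this cancellation yields $r^{-\upgamma}$ is not a technical refinement of your argument --- it \emph{is} the lemma, and no mechanism for it is supplied. A further sign of trouble: recovering $\mathscr{M}$ from $\mathcal{K}$ and $\mathscr{N}$ requires dividing by $\mathscr{Q}$, which vanishes as $r\to0$; you sidestep this by returning to \eqref{eq:m_eq} (which is fine), but only after the unproven $\mathscr{N}$ bound.

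For comparison, the paper avoids the conserved quantity entirely here. It conjugates by the rescaling $\mathscr{M}\mapsto b\mathscr{M}$ with $b=\upgamma/(10\upeta)$, writes the system in matrix form \eqref{eq:mn_matrix}, and estimates the operator norm of the coefficient matrix in two regimes: when $\mathscr{Q}(r)\lesssim b\upeta^{-1}$ the norm is $O(\upgamma)$, and the exceptional set where $\mathscr{Q}$ is bounded below has $\tfrac{dr}{r}$-measure at most $C(\upeta,\upgamma)$, because there $r\partial_r\mathscr{P}\leq -c(\upeta,\upgamma)<0$ while $\mathscr{P}$ stays in $[(4\upeta)^{-1},4\upeta]$ by \eqref{eq:pq_bounds}. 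Gr\"onwall then gives $\exp(\upgamma\log(r_0/r)+C)\lesssim r^{-\upgamma}$ directly and uniformly over orbits. If you want to salvage your route, you would need a genuine second solution of the homogeneous system (not just the one functional $\mathcal{K}$), or an argument playing the role of the paper's bad-set measure bound; as written, the proof is incomplete.
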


\begin{proof}
    Let $b = \frac{\upgamma}{10 \upeta}$ be a small parameter used for notational convenience. We rewrite the system \eqref{eq:m_eq}--\eqref{eq:n_eq} in the following matrix form, upon conjugation by a linear transformation $\mathscr{M} \mapsto b \mathscr{M}$:
    \begin{equation} \label{eq:mn_matrix}
        r \partial_r \begin{bmatrix} b \mathscr{M} \\ \mathscr{N} \end{bmatrix} =
        \underbrace{
            \begin{bmatrix} - \mathscr{Q} & b \mathscr{P} \\ - 2 b^{-1} \mathscr{P} \mathscr{Q} & - \mathscr{Q} \end{bmatrix}
        }_{\eqqcolon \mathbf{L}}
        \begin{bmatrix} b \mathscr{M} \\ \mathscr{N} \end{bmatrix}
        +
        \begin{bmatrix} b \mathscr{E}_3 \\ \mathscr{E}_4 \end{bmatrix}.
    \end{equation}

    Our goal is to bound the operator norm of the matrix $\mathbf{L} = \mathbf{L}(r)$, as a function of $r \in [r_b, r_0]$. As $b$ is chosen small, such a bound on the operator norm will depend on the size of $\mathscr{Q} = \mathscr{Q}(r)$. Note our definition of the operator norm will be with respect to the $\ell^{\infty}$-norm on $\R^2$, i.e.~for a $2 \times 2$ matrix $\mathbf{M}$ we write 
    \[
        \| \mathbf{M} \|_{op} \coloneqq \sup_{ \mathrm{x} \in \R^2 \setminus \{0\}} \frac{ \| \mathbf{M} \mathrm{x} \|_{\ell^{\infty}}} { \| \mathrm{x} \|_{\ell^{\infty}}}.
    \]

    \medbreak \noindent
    \underline{Case 1: $\mathscr{Q}(r) \leq \frac{1}{16} b \upeta^{-1}$:} 
    \medbreak \noindent
    When this holds it can be immediately checked from this bound and \eqref{eq:pq_bounds} that each of the matrix elements of $\mathbf{L}(r)$ are bounded in absolute value by $\frac{\upgamma}{2}$. Since for a $2 \times 2$ matrix $\mathbf{M}$ we have:
    \[
        \| \mathbf{M} \|_{op} \leq 2 \max_{i,j} | \mathbf{M}_{ij} |, 
    \]
    this implies that
    \begin{equation} \label{eq:L_op_small}
        \| \mathbf{L}(r) \|_{op} \leq \upgamma \qquad \text{ whenever } \mathscr{Q}(r) \leq \frac{1}{16} b \upeta^{-1}.
    \end{equation}

    \medbreak \noindent
    \underline{Case 2: $\mathscr{Q}(r) > \frac{1}{16} b \upeta^{-1}$:} 
    \medbreak \noindent
    With no additional smallness for $\mathscr{Q}(r)$ the only bounds we have for $\mathbf{L}(r)$ will come from the bound \eqref{eq:pq_bounds} from Lemma~\ref{lem:ode_bounce}. By bounding each of the matrix elements of $\mathbf{L}(r)$ using \eqref{eq:pq_bounds}, one may deduce $\| \mathbf{L}(r) \|_{op} \leq 256 b^{-1} \upeta^3$.

    What is needed in this case is that we can control the size of the set $B = \{ r \in [r_b, r_0] : \mathscr{Q}(r) > \frac{1}{16} b \upeta^{-1} \}$, at least for $\updelta$ sufficiently small. To justify this, suppose $\updelta < \frac{1}{128} b \upeta^{-2}$. Then using \eqref{eq:pq_bounds} and the equation \eqref{eq:p_eq} for $r \partial_r \mathscr{P}$, one has that for $r \in B$
    \[
        r \partial_r \mathscr{P} \leq - (4 \eta)^{-1} \cdot \frac{1}{16} b \upeta^{-1} + \updelta \leq - \frac{1}{128} b \upeta^{-2}.
    \]
    But from \eqref{eq:pq_bounds}, $\mathscr{P}(r)$ is bounded between $(4 \upeta)^{-1}$ and $4 \upeta$. Furthermore, for all $r \in [r_b, r_0]$, not necessarily in $B$, one has $r \partial_r \mathscr{P} (r) \leq \updelta r^{1/2} < \frac{1}{128}b \upeta^{-2} r^{1/2}$. So
    \begin{align*}
        4 \upeta - (4 \upeta)^{-1} \geq \int_r^{r_0} - r \partial_r \mathscr{P} (\tilde{r}) \, \frac{d\tilde{r}}{\tilde{r}}
        &= \int_{\tilde{r} \in B} - r \partial_r P(\tilde{r}) \, \frac{d \tilde{r}}{\tilde{r}} + \int_{r \not\in B} - r \partial_r P (\tilde{r}) \, \frac{d \tilde{r}}{\tilde{r}}  \\
        &\geq \nu(B) \cdot \frac{1}{128} b \upeta^{-2} - \frac{1}{64} b \upeta^{-2} r_0^{1/2}.
    \end{align*}
    Here $\nu(B)$ is the measure of the set $B$ with respect to the measure $\frac{dr}{r}$.
    Using the smallness of $b$, this leads to the estimate $\nu(B) \leq 512 b^{-1} \upeta^3$. 

    Collecting all this information, we have that
    \begin{gather} \label{eq:L_op_big}
        \| \mathbf{L}(r) \|_{op} \leq 256 b^{-1} \upeta^3 \qquad \text{ whenever } \mathscr{Q} > \frac{1}{16} b \upeta^{-1},
        \\[0.5em]
        \text{where } B = \left \{ r \in [r_b, r_0] : \mathscr{Q}(r) > \frac{1}{16} b \upeta^{-1} \right \} \text{ has } \nu(B) \leq 512 b^{-1} \upeta^3. 
        \nonumber
    \end{gather}

    \medbreak 
    We now use \eqref{eq:L_op_small} and \eqref{eq:L_op_big} to complete the proof. Integrating \eqref{eq:mn_matrix} and applying $\| \cdot \|_{\ell^{\infty}}$ norms on vectors, one finds the integral inequality:
    \[
        \left \| \begin{bmatrix} b \mathscr{M} \\ \mathscr{N} \end{bmatrix} \right \|_{\ell^{\infty}} \mkern-18mu (r) \leq 
        \left \| \begin{bmatrix} b \mathscr{M} \\ \mathscr{N} \end{bmatrix} \right \|_{\ell^{\infty}} \mkern-18mu (r_0)
        + \bigintsss_r^{r_0}
        \left \| \begin{bmatrix} b \mathscr{E}_3 \\ \mathscr{E}_4 \end{bmatrix} \right \|_{\ell^{\infty}} \mkern-18mu (\tilde{r}) \frac{d \tilde{r}}{\tilde{r}}
        +  \bigintsss_r^{r_0}
        \| \mathbf{F} (\tilde{r}) \|_{op} \left \| \begin{bmatrix} b \mathscr{M} \\ \mathscr{N} \end{bmatrix} \right \|_{\ell^{\infty}} \mkern-18mu (\tilde{r}) \frac{d \tilde{r}}{\tilde{r}}.
    \]

    By using the initial data assumption $|\mathscr{M}(r_0)| + |\mathscr{N}(r_0)| \leq \upzeta r_0^{- \upgamma}$ along with the error bound $| \mathscr{E}_i | \leq \updelta r^{1/2}$, there exists some constant $D_1$ depending on $\upeta$, $\upgamma$ and $\upzeta$ such that
    \[
        \left \| \begin{bmatrix} b \mathscr{M} \\ \mathscr{N} \end{bmatrix} \right \|_{\ell^{\infty}} \mkern-18mu (r) \leq 
        D_1 r_0^{- \upgamma}
        +  \bigintsss_r^{r_0}
        \| \mathbf{F} (\tilde{r}) \|_{op} \left \| \begin{bmatrix} b \mathscr{M} \\ \mathscr{N} \end{bmatrix} \right \|_{\ell^{\infty}} \mkern-18mu (\tilde{r}) \frac{d \tilde{r}}{\tilde{r}}.
    \]
    Thus Gr\"onwall's inequality in integral form implies that
    \[
        \left \| \begin{bmatrix} a \mathscr{M} \\ \mathscr{N} \end{bmatrix} \right \|_{\ell^{\infty}} \mkern-18mu (r) \leq 
        D_1 r_0^{-\upgamma} \cdot \exp ( \int^{r_0}_r \| \mathbf{F}(\tilde{r}) \|_{op} \frac{d \tilde{r}}{\tilde{r}} ).
    \]

    We now apply \eqref{eq:L_op_small} and \eqref{eq:L_op_big} to estimate the integral inside the $\exp$. Using the characterisation of the set $R$ in the case of \eqref{eq:L_op_big}, the result is that there exists some $D_2$ depending on $\upeta$ and $\upgamma$ such that
    \[
        \int^{r_0}_r \| \mathbf{F}(\tilde{r}) \|_{op} \frac{d \tilde{r}}{\tilde{r}} \leq \int_{\tilde{r} \in B} 126b^{-1} \upeta^3 \frac{d \tilde{r}}{\tilde{r}} + \int_{\tilde{r} \not \in B} \upgamma \, \frac{d \tilde{r}}{\tilde{r}}  \leq 126 b^{-1} \upeta^3 \nu(B) + \int^{r_0}_r \upgamma \, \frac{d \tilde{r}}{\tilde{r}} = \upgamma \log( \frac{r_0}{r} ) + D_2.
    \]
    Inserting this into the above inequality and using that $b = \frac{\upgamma}{10 \upeta}$ has explicit dependence on $\upgamma$ and $\upeta$ we get \eqref{eq:mn_bounds}, as required.
\end{proof}

\section{Global existence towards \texorpdfstring{$r = 0$}{t=0}}

\subsection{Proof of Theorem~\ref{thm:global}}

The proof of Theorem~\ref{thm:global} will follow from a bootstrap argument. By local existence (Proposition~\ref{prop:lwp}), there exists some $r_b \in (0, r_0)$ such that a solution to the Einstein--Maxwell--scalar field system \eqref{eq:mu_evol}--\eqref{eq:phi_wave} exists in the interval $r \in [r_b, r_0]$, which moreover satisfies the bootstrap assumptions \eqref{eq:bootstrap_maxwell}--\eqref{eq:bootstrap_d2x}. 

We show that assuming this, we are able to improve upon the bootstrap assumptions, for instance by showing that \eqref{eq:bootstrap_maxwell}--\eqref{eq:bootstrap_d2x} hold with $C_*$ replaced by $C_*/2$. By a standard continuity argument, one can allow $r_b$ to be any real number in the interval $(0, r_0)$, giving global existence towards $r = 0$.

For this purpose, let $(\phi, \mu, \lambda, Q)$ be a solution of the Einstein--Maxwell--scalar field system in our bootstrap region $r \in [r_b, r_0]$ with the initial data assumptions \eqref{eq:data_linfty}, \eqref{eq:data_linfty2} and \eqref{eq:data_energy_boundedness}. Then the results of Section~\ref{sec:l2} all apply, in particular Proposition~\ref{prop:energy_hierarchy}, Lemma~\ref{lem:low_order_linfty} and Corollary~\ref{cor:ode}. For convenience, let us record the result of Lemma~\ref{lem:low_order_linfty} again here: for $f \in \{ \phi, \mu, \lambda \}$ and $0 \leq k \leq 3$, there exists a family of constants $\updelta = \updelta( C_*, \upzeta, r_* )$ with $\updelta \downarrow 0$ as $r_* \downarrow 0$ so that
\begin{equation} \label{eq:low_order_linfty_copy}
    \| \partial_x^k f (r, \cdot) \|_{L^{\infty}} + \| r \partial_r \partial_x^k f (r, \cdot) \|_{L^{\infty}} \leq \updelta r^{-1/2}.
\end{equation}

The improvement of the bootstrap assumptions will be proceed in the following steps:

\medbreak \noindent
\underline{Step 1: ODE analysis on timelike curves} 
\medbreak \noindent
The first step will be to use Corollary~\ref{cor:ode} and the results of Section~\ref{sec:ode} to provide $L^{\infty}$ bounds for the following $4$ key quantities:
\[
    r \partial_r \phi, \quad \frac{Q^2}{r^2} e^{2\mu}, \quad \partial_x \mu, \quad r \partial_r \partial_x \phi.
\]

For this purpose, let $\gamma: [r_b, r_0] \to \mathcal{Q}$ with $\gamma(r) = (r, x(r))$ be a $C^1$ \emph{past-directed timelike curve}, one example being a curve of constant $x$. In view of the metric \eqref{eq:surface_sym}, $\gamma$ being timelike implies that $|\frac{dx}{dr}| \leq e^{\mu - \lambda}$. For such a curve $\gamma(r)$, let us define
\begin{gather}
    \mathscr{P}(r) \coloneqq r \partial_r \phi ( \gamma(r)), \quad \mathscr{Q}(r) \coloneqq \frac{Q^2}{r^2} e^{2\mu} (\gamma(r)), 
    \label{eq:PQ_def} \\[0.6em]
    \mathscr{M}(r) \coloneqq \partial_x \mu ( \gamma(r)), \quad \mathscr{N}(r) \coloneqq r \partial_r \partial_x \phi ( \gamma (r)).
\end{gather}

Note the initial data assumption \eqref{eq:data_linfty} implies that $\upeta^{-1} \leq \mathscr{P}(r_0) \leq \upeta$ and $\mathscr{Q}(r_0) \leq 1$. Furthermore, the assumption \eqref{eq:data_energy_boundedness}, together with Sobolev embedding on $\mathbb{S}^1$, implies there is a constant $C_{\mathbb{S}^1}$ independent of all parameters such that $|\mathscr{M}(r_0)|, |\mathscr{N}(r_0)| \leq C_{\mathbb{S}^1} \upzeta r_0^{-\upgamma}$. Furthermore, from \eqref{eq:P_ode_err}--\eqref{eq:Q_ode_err} of Corollary~\ref{cor:ode} with $a = a(r) = \frac{dx}{dr} \cdot (e^{\mu - \lambda})^{-1}$, one has
\begin{gather}
    \left| r \frac{d}{dr} \mathscr{P}(r) + \mathscr{P}(r) \mathscr{Q}(r) \right| \, (r, x) \leq \updelta r^{1/2}, \label{eq:P_ode_err_timelike} \\[0.6em]
    \left| r \frac{d}{dr} \mathscr{Q}(r) - \mathscr{Q}(r) \left( \mathscr{P}(r)^2 - 1 - \mathscr{Q}(r) \right) \right| \, (r, x) \leq \mathscr{Q}(r) \cdot \updelta r^{1/2}. \nonumber
\end{gather}

This is exactly what we need to apply the nonlinear ODE results of Lemma~\ref{lem:ode_bounce}. Therefore we have $(4 \upeta)^{-1} \leq \mathscr{P}(r) \leq 4 \upeta$ and $\mathscr{Q}(r) \leq 16 \upeta^2$. This is true for all timelike curves in our bootstrap spacetime, including all constant $x$ curves, therefore for all $(r, x) \in [r_b, r_0] \times \mathbb{S}^1$, one has
\begin{equation} \label{eq:pq_improvement}
    (4 \upeta)^{-1} \leq r \partial_r \phi (r, x) \leq 4 \upeta, \qquad \frac{Q^2}{r^2} e^{2\mu} (r, x) \leq 16 \upeta^2.
\end{equation}

We move onto $\mathscr{M}(r)$ and $\mathscr{N}(r)$. Using now \eqref{eq:M_ode_err}--\eqref{eq:N_ode_err} from Corollary~\ref{cor:ode}, we also have
\begin{gather*}
    \left| r \frac{d}{dr} \mathscr{M}(r) + \mathscr{Q}(r) \mathscr{M}(r) - \mathscr{P}(r) \mathscr{N}(r) \right| \, (r, x) \leq \updelta r^{1/2}, \\[0.6em]
    \left| r \frac{d}{dr} \mathscr{N}(r) + 2 \mathscr{P}(r) \mathscr{Q}(r) \mathscr{M}(r) + \mathscr{Q}(r) \mathscr{N}(r) \right| \, (r, x) \leq \updelta r^{1/2}.
\end{gather*}
Together with the initial data assumption $|\mathscr{M}(r_0)|, |\mathscr{N}(r_0)| \leq C_{\mathbb{S}^1} \upzeta r_0^{-\upgamma}$, this is what we need to apply Lemma~\ref{lem:ode_variation} -- note the change from $\upzeta$ to $C_{\mathbb{S}^1} \upzeta$ is harmless. Therefore for some $D > 0$ depending on $\upeta$, $\upgamma$, $\upzeta$ but not on $C_*$, we have $|\mathscr{M} (r)| + |\mathscr{N} (r)| \leq D r^{-\upgamma}$. This is true for all timelike curves $\gamma(r)$, so for all $(r, x) \in [r_b, r_0] \times \mathbb{S}^1$,
\begin{equation} \label{eq:mn_improvement}
    | \partial_x \mu (r, x) | \leq D r^{-\upgamma}, \qquad |r \partial_r \partial_x \phi (r, x) | \leq D r^{- \upgamma}.
\end{equation}

\medbreak \noindent
\underline{Step 2: Improving the bootstrap assumption \eqref{eq:bootstrap_maxwell}} 
\medbreak \noindent
By choosing $C_* \geq 32 \upeta^2$ and using \eqref{eq:pq_improvement}, we have already improved the first half of \eqref{eq:bootstrap_maxwell}. However, in order to improve the remaining parts of \eqref{eq:bootstrap_maxwell}, it will be necessary to upgrade the pointwise bound on $\frac{Q^2}{r^2} e^{2\mu}$ to an integrated bound on timelike curves.

More precisely, let $\gamma: [r_b, r_0] \to \R$ be a timelike curve as in Step 1. We rewrite the ODE-type estimate \eqref{eq:P_ode_err_timelike} in the following fashion, using crucially the lower bound for $r \partial_r \phi$ in \eqref{eq:pq_improvement}:
\[
    \left| r \frac{d}{dr} \log \mathscr{P}(r) + \mathscr{Q}(r) \right| \, (r, x) \leq 4 \upeta \updelta r^{1/2}.
\]
We now integrate this with respect to the differential $\frac{dr}{r}$. One finds:
\begin{equation*}
    \left| \int_r^{r_0} \mathscr{Q}(\tilde{r}) \frac{d\tilde{r}}{\tilde{r}} \right| \leq \left| \big [ \log \mathscr{P}(r) \big ]_r^{r_0} \right| + 4 \upeta \updelta \leq \log (16 \upeta^2) + 4 \upeta \updelta,
\end{equation*}
where the second inequality follows from the upper and lower bounds in \eqref{eq:pq_improvement}. We apply this estimate to the case where $\gamma(r) = (r, x)$ is a timelike curve of constant $x$; we have for $D_1 = \log(16\upeta^2) + 4 \upeta \updelta$ (which is crucially independent of $C_*$) the estimate
\begin{equation} \label{eq:Q_integral}
    \left| \int_r^{r_0} \frac{Q^2}{r^2} e^{2\mu} (\tilde{r}, x) \frac{d \tilde{r}}{\tilde{r}} \right| \leq D_1.
\end{equation} 

To apply this to improve bounds for $e^{2\mu}$ and $e^{2(\mu - \lambda)}$, we use the equations \eqref{eq:lambda_evol} and \eqref{eq:mu_evol} to derive that
\begin{gather}
    r \partial_r (r^{-1} e^{2\mu}) = \left( (r \partial_r \phi)^2 + r^2 e^{2(\mu - \lambda)} (\partial_x \phi)^2 + \kappa e^{2\mu} - \frac{Q^2}{r^2} e^{2\mu} \right) r^{-1} e^{2\mu} \geq \left( - e^{2\mu} - \frac{Q^2}{r^2} e^{2\mu} \right) r^{-1}e^{2\mu}, \label{eq:reu_eq} \\[0.6em]
    r \partial_r (r^{-1} e^{2(\mu - \lambda)}) = \left( 1 + 2 \kappa e^{2\mu} - \frac{2 Q^2}{r^2} e^{2\mu} \right) r^{-1} e^{2(\mu - \lambda)} \geq 2 \left( - e^{2\mu} - \frac{Q^2}{r^2} e^{2\mu} \right) r e^{2(\mu - \lambda)}. \label{eq:reum_eq}
\end{gather}

Using the bootstrap assumption \eqref{eq:bootstrap_maxwell} and that $r_* \geq r_0$ may be chosen small, let us choose $r_*$ so that alongside \eqref{eq:Q_integral}, we have that $\int^{r_0}_r e^{2\mu} (\tilde{r}, x) \frac{d \tilde{r}}{\tilde{r}} \leq 1$. Then integrating \eqref{eq:reu_eq} and \eqref{eq:reum_eq} along constant $x$-curves, as well as applying the initial data assumption \eqref{eq:data_linfty2},
\begin{gather*}
    r^{-1} e^{2\mu} (r, x) \leq e^{D_1 + 1} r_0^{-1} e^{2\mu}(r_0, x) \leq e^{D_1 + 1} \upzeta, \\[0.6em]
    r^{-1} e^{2(\mu - \lambda)} (r, x) \leq e^{2(D_1 + 1)} r_0^{-1} e^{2(\mu - \lambda)} (r_0, x) \leq e^{2(D_1 + 1)} \upzeta.
\end{gather*}
Thus for $C_*$ chosen larger than $2 e^{2(D_1 + 1)} \upzeta$, we may improve the latter part of \eqref{eq:bootstrap_maxwell}.

\medbreak \noindent
\underline{Step 3: Improving the bootstrap assumption \eqref{eq:bootstrap_rdr}} 
\medbreak \noindent

The bootstrap assumption for $|r \partial_r \phi|$ in \eqref{eq:bootstrap_rdr} is already improved due to \eqref{eq:pq_improvement}, so long as we choose $C_* \geq 8 \upeta$. To improve $|r \partial_r \lambda|$ and $|r \partial_r \mu|$, we use \eqref{eq:lambda_evol} and \eqref{eq:mu_evol}. Using Lemma~\ref{lem:low_order_linfty} to estimate the term involving $\partial_x \phi$, and using both \eqref{eq:pq_improvement} and $e^{2\mu} \leq e^{D_1 + 1} \upzeta r$ and $e^{2(\mu - \lambda)} \leq e^{2(D_1 + 1)} \upzeta r$ from Step 2, 
\begin{align*}
    | r \partial_r \lambda | 
    &\leq \frac{1}{2} \left( (r \partial_r \phi)^2 + r^2 e^{2(\mu - \lambda)} (\partial_x \phi)^2 + 1 + e^{2\mu} + \frac{Q^2}{r^2} e^{2\mu} \right) \\[0.6em]
    &\leq \frac{1}{2} \left( 16 \upeta^2 + e^{2 (D_1 + 1)} \upzeta \updelta^2 r^{3 - 2 \upgamma} + 1 + e^{D_1 + 1} \upzeta r  + 16 \upeta^2 \right).
\end{align*}
By $r \leq r_0 \leq 1$, the right hand side is bounded independently of $C_*$, and thus by choosing $C_* \geq 2 \cdot \mbox{RHS}$, we have improved the bootstrap assumption for $|r \partial_r \lambda|$. The improvement for $|r \partial_r \mu|$ follows from an identical calculation.

\medbreak \noindent
\underline{Step 4: Improving the bootstrap assumptions \eqref{eq:bootstrap_dx} and \eqref{eq:bootstrap_d2x}} 
\medbreak \noindent

We complete the proof by improving the remaining bootstrap assumptions \eqref{eq:bootstrap_dx} and \eqref{eq:bootstrap_d2x}. We start with \eqref{eq:bootstrap_d2x}, noting that the bootstrap assumption for $|r \partial_r \partial_x \phi|$ is already improved due to \eqref{eq:mn_improvement}, so long as $C_* \geq 2D$.
To improve the bootstrap assumptions for $|r \partial_r \partial_x \lambda|$ and $|r \partial_r \partial_x \mu|$, we differentiate \eqref{eq:lambda_evol} and \eqref{eq:mu_evol} with respect to $x$, then combine several earlier estimates. Differentiating \eqref{eq:lambda_evol} yields
\[
    r \partial_r \partial_x \lambda = (r \partial_r \phi) (r \partial_r \partial_x \phi) + r^2 e^{2(\mu - \lambda)} \partial_x (\mu - \lambda) (\partial_x \phi)^2 + r^2 e^{2(\mu - \lambda)} \partial_x \phi \, \partial_x^2 \phi - \kappa e^{2\mu} \partial_x \mu + \frac{Q^2}{r^2} e^{2\mu} \partial_x \mu.
\]
Therefore using all of \eqref{eq:pq_improvement}, \eqref{eq:mn_improvement}, Lemma~\ref{lem:low_order_linfty} and the estimates for $e^{2\mu}$ and $e^{2(\mu - \lambda)}$ in Step 2, one yields
\[
    |r \partial_r \partial_x \lambda| \leq (4 \upeta D + e^{2(D_1 + 1)} \upzeta D^3 r^{3 - 2 \upgamma} + e^{2(D_1+ 1)} \upzeta D^2 r^{3 - \upgamma} + e^{D_1 + 1} \upzeta D r + 16\upeta^2 D ) r^{- \upgamma}.
\]
One can carry out an identical calculation for $|r \partial_r \partial_x \mu|$. Therefore for some $D_2$ independent of $C_*$, one has
\begin{equation} \label{eq:b4_improvement}
    |r \partial_r \partial_x \phi|, |r \partial_r \partial_x \lambda|, |r \partial_r \partial_x \mu| \leq D_2 r^{- \upgamma}.
\end{equation}

Finally, to improve \eqref{eq:bootstrap_dx}, we integrate \eqref{eq:b4_improvement} from $r = r_0$, and use the initial data assumption \eqref{eq:data_energy_boundedness} -- together with Sobolev embedding -- to get $|\partial_x \phi (r_0)| + |\partial_x \mu (r_0)| + |\partial_x \lambda (r_0)| \leq C_{\mathbb{S}^1} \upzeta r_0^{-\upgamma}$. Thereby integrating \eqref{eq:b4_improvement} yields:
\begin{equation} \label{eq:b3_improvement}
    |\partial_x \phi|, |r \partial_r \partial_x \lambda|, |r \partial_r \partial_x \mu| \leq (D_2 \upgamma^{-1} + C_{\mathbb{S}^1} \upzeta) r^{- \upgamma}.
\end{equation}

Thus upon choosing $C_* \geq 2 ( D_2 \upgamma^{-1} + C_{\mathbb{S}^1} \upzeta)$, \eqref{eq:b4_improvement} and \eqref{eq:b3_improvement} improve the remaining bootstrap assumptions \eqref{eq:bootstrap_d2x} and \eqref{eq:bootstrap_dx}. By the standard bootstrap argument, this concludes the proof of Theorem~\ref{thm:global}. (The two estimates \eqref{eq:global_linfty} and \eqref{eq:global_energy} follow from \eqref{eq:pq_improvement} and Proposition~\ref{prop:energy_hierarchy} respectively.) \qed


\section{BKL bounces}

\subsection{Proof of Theorem~\ref{thm:bounce}}

The assumptions of Theorem~\ref{thm:bounce} mean that our global existence result Theorem~\ref{thm:global} applies, as well as all the results of Section~\ref{sec:l2}. In particular Corollary~\ref{cor:ode} applies, and one derives the ODEs \eqref{eq:bounce_ode} as in the proof of Theorem~\ref{thm:global}, see e.g.~\eqref{eq:P_ode_err_timelike}; note here we allow $\updelta = 1$. 

It remains to prove the convergence results (\ref{item:bounce_i}) and (\ref{item:bounce_ii}). The case (\ref{item:bounce_i}) is easy, since $Q = 0$ implies $\mathscr{Q}_{\gamma} \equiv 0$, and the first equation in \eqref{eq:bounce_ode} yields $|r \partial_r \mathscr{P}_{\gamma}| \leq r^{1/2}$. Since $r^{-1/2}$ is integrable towards $r = 0$, this immediately yields that $\mathscr{P}_{\gamma}(r) \to \mathscr{P}_{\gamma, \infty}$ as $r \to \infty$ with $|\mathscr{P}_{\gamma}(r) - \mathscr{P}_{\gamma, \infty}| \leq 2 r^{1/2}$.

We move onto the more interesting (\ref{item:bounce_ii}). We first show that $\mathscr{Q}_{\gamma}(r) \to 0$ as $r \to 0$. For this purpose, we rearrange the first equation in \eqref{eq:bounce_ode} to yield
\begin{equation*}
    \mathscr{Q}_{\gamma} = - \mathscr{P}_{\gamma}^{-1} \cdot r \frac{d}{dr} \mathscr{P}_{\gamma}(r) + \mathscr{P}_{\gamma}^{-1} \cdot \mathscr{E}_{\mathscr{P}}.
\end{equation*}
Using from Theorem~\ref{thm:bounce} that $(4 \upeta)^{-1} \leq \mathscr{P}_{\upgamma} \leq 4 \upeta$, one thus yields that
\begin{equation} \label{eq:q_int}
    \int_{r}^{r_0} \mathscr{Q}_{\upgamma}(\tilde{r}) \, \frac{d\tilde{r}}{\tilde{r}} \leq \log( \frac{\mathscr{P}_{\gamma}(r)}{\mathscr{P}_{\gamma}(r_0)}) + 4 \upeta r_0^{1/2} \leq \log( 16 \upeta^2 ) + 4 \upeta r_0^{1/2}.
\end{equation}

In particular, this integral is finite, and it follows that $\mathscr{Q}_{\upgamma}(r)$ must tend to $0$ as $r \to 0$ in an averaged sense. In particular there is a sequence $\{ r_k \}$ with $r_k \to 0$ such that $\mathscr{Q}_{\gamma}(r_k) \to 0$ as $k \to \infty$. To upgrade this sequential convergence to convergence of $\mathscr{Q}_{\gamma}(r)$, we use the second equation in \eqref{eq:bounce_ode}. Since $\mathscr{P}_{\upgamma}^2 - 1 - \mathscr{Q} - \mathscr{E}_{\mathscr{Q}}$ is bounded, there is some constant $C$ such that upon integrating the second equation in \eqref{eq:bounce_ode}, for $0 < r < r_k$,
\begin{equation*}
    |\mathscr{Q}_{\gamma}(r) - \mathscr{Q}_{\gamma}(r_k)| \leq C \int_{r}^{r_k} \mathscr{Q}_{\gamma}(\tilde{r}) \frac{d \tilde{r}}{\tilde{r}}.
\end{equation*}
Since we have already established that the integral in \eqref{eq:q_int} is finite, the right hand side of this inequality tends to $0$ as $k \to \infty$. It thus follows that $\mathscr{Q}_{\gamma}(r) \to 0$ as $r \to 0$.

To show that $\mathscr{P}_{\gamma}(r)$ converges as $r \to 0$, we again use the boundedness of the integral \eqref{eq:q_int}. Using this alongside the boundedness of $\mathscr{P}_{\gamma}(r)$, one finds that the right hand side of the first equation in \eqref{eq:bounce_ode} is integrable towards $r = 0$ with respect to $\frac{dr}{r}$. Thus there exists some $\mathscr{P}_{\gamma, \infty}$ so that $\mathscr{P}_{\gamma}(r) \to \mathscr{P}_{\gamma, \infty}$ as $r \to 0$. To show that $\mathscr{P}_{\gamma, \infty} \geq 1$, the second equation in \eqref{eq:bounce_ode} can be rearranged to
\[
    \mathscr{Q}_{\gamma}^{-1} \cdot r \frac{d}{dr} \mathscr{Q}_{\gamma}(r) = \mathscr{P}_{\gamma}^2 - 1 - \mathscr{Q}_{\gamma} - \mathscr{E}_{\mathscr{Q}} \to \mathscr{P}_{\gamma, \infty}^2 - 1 \quad \text{ as } r \to 0.
\]
In particular, for $\mathscr{Q}_{\gamma}(r)$ to remain bounded as $r \to 0$ it must be that the right hand side of this is nonnegative i.e.~$\mathscr{P}_{\gamma}^2 \geq 1$. 

Finally, to show \eqref{eq:bounce_asymp}, we use the almost conserved quantity $\mathscr{K}$ encountered in the proof of Lemma~\ref{lem:ode_bounce}. Recall from this proof that defining $\mathscr{K}_{\gamma} = \mathscr{P}_{\gamma} + \mathscr{P}_{\gamma}^{-1} + \mathscr{Q}_{\gamma} \mathscr{P}_{\gamma}^{-1}$, one can show
\[
    \left| r \frac{d}{dr} \mathscr{K}_{\gamma} \right| \lesssim r^{1/2}.
\]
Using the convergence of $\mathscr{P}_{\gamma}$ and $\mathscr{Q}_{\gamma}$, it therefore holds that there exists $\mathscr{K}_{\gamma, \infty} = \mathscr{P}_{\gamma, \infty} + \mathscr{P}_{\gamma, \infty}^{-1}$ so that $\mathscr{K}_{\gamma}(r) \to \mathscr{K}_{\gamma, \infty}$ as $r \to 0$ and moreover,
\[
    \left| \mathscr{K}_{\gamma}(r) - \mathscr{K}_{\gamma, \infty} \right| \lesssim r^{1/2}.
\]
Substituting $r = r_0$, one therefore finds that
\[
    \left| (\mathscr{P}_{\gamma, \infty} + \mathscr{P}_{\gamma, \infty}^{-1}) - (\mathscr{P}_{\gamma}(r_0) + \mathscr{P}_{\gamma}^{-1}(r_0) + \mathscr{Q}_{\gamma}(r_0) \mathscr{P}_{\gamma}^{-1}(r_0)) \right| \lesssim r_0^{1/2}.
\]
Using this and $\mathscr{P}_{\gamma, \infty} \geq 1$, the estimate \eqref{eq:bounce_asymp} follows by solving a quadratic inequality. Note, of course, that \eqref{eq:bounce_asymp} is most interesting in the case that $\mathscr{Q}_{\gamma}(r_0)$ is small. This completes the proof of Theorem~\ref{thm:bounce}. \qed

\subsection{Proof of Corollary~\ref{cor:bounce}}

Finally, we apply Theorems~\ref{thm:global} and \ref{thm:bounce} to prove this stability / instability result. Let $(\phi, \mu, \lambda, Q)$ be as stated, so that by Theorem~\ref{thm:asymp_smooth} there exist smooth functions $\Psi(x)$, $\Xi(x)$, $M(x)$ and $\Lambda(x)$ on $\mathbb{S}^1$ so that:
\begin{gather*}
    \phi(r, x) = \Psi(x) \log r + \Xi (x) + \mathrm{Err}_{\phi}(r, x), \\[0.4em]
    \mu(r, x) = \tfrac{1}{2} (\Psi(x)^2 + 1) \log r + M (x) + \mathrm{Err}_{\mu}(r, x), \\[0.4em]
    \lambda(r, x) = \tfrac{1}{2} (\Psi(x)^2 - 1) \log r + \Lambda(x) + \mathrm{Err}_{\lambda}(r, x),
\end{gather*}
where the error terms and their $r \partial_r$-derivatives tend to $0$ as $r \to 0$ in the $C^{\infty}$ topology.

Since we also assume that $\Psi(x) > 0$ for all $x \in \mathbb{S}^1$ and $\mathbb{S}^1$ is compact, there exists some $\upeta \geq 2$ such that for some $0 < \tilde{r}_1 \leq r_1$, one has $(\upeta / 2)^{-1} \leq r \partial_r \phi(r, x) \leq \upeta / 2$ for all $0 < r < \tilde{r}_1$. Let $N$ be chosen (depending on $\upeta$) as in Theorem~\ref{thm:global}, $\upgamma < \frac{1}{100}$ be fixed, and consider the energy: at fixed $r > 0$
\[
    \mathcal{E}(r) = \frac{1}{2} \sum_{f \in \{ \phi, \lambda, \mu \}} \sum_{K = 0}^N \left( \int_{\mathbb{S}^1} (r \partial_r \partial_x^K f)^2 + r^2 e^{2(\mu - \lambda)} (\partial_x^{K+1} f)^2 + r^{2 \upgamma} (\partial_x^K f)^2 \, dx \right).
\]
By the asymptotics of Theorem~\ref{thm:asymp_smooth}, one has that 
\[
    \mathcal{E}(r) \to \frac{1}{2} \sum_{K = 0}^N \int_{S^1} \left( (\partial_x^K \Psi)^2 + (\partial_x^K (\tfrac{1}{2}(\Psi^2 + 1)))^2  + ( \partial_x^K (\tfrac{1}{2}(\Psi^2 - 1)))^2 \right) \, dx \quad \text{ as } r \to 0.
\]
So letting $\upzeta$ be four times this limit, and making $\tilde{r}_1$ smaller if necessary, it follows that $\mathcal{E}(r) \leq \upzeta/2$ for $0 < r < \tilde{r}_1$. Similarly using Theorem~\ref{thm:asymp_smooth} and increasing $\upzeta$ if necessary, one can guarantee $r^{-1} e^{2(\mu - \lambda)} \leq \upzeta/2$ and $r^{-1} e^{2\mu} \leq \upzeta/2$ for $0 < r < \tilde{r}_1$.

By these considerations, there will exist some $0 < r_0 < \tilde{r}_1$ so that if we abuse notation and \emph{redefine} the objects $(\phi_D, \mu_D, \lambda_D, \dot{\phi}_D, \dot{\mu}_D, \dot{\lambda}_D)$ as
\[
    (\phi_D, \mu_D, \lambda_D, \dot{\phi}_D, \dot{\mu}_D, \dot{\lambda}_D) = (\phi, \mu, \lambda, r \partial_r \phi, r \partial_r \mu, r \partial_r \lambda)|_{r = r_0},
\]
then $(\phi_D, \mu_D, \lambda_D, \dot{\phi}_D, \dot{\mu}_D, \dot{\lambda}_D)$ will obey the three assumptions \eqref{eq:data_linfty}, \eqref{eq:data_linfty2} and \eqref{eq:data_energy_boundedness} required for the applications of our results Theorem~\ref{thm:global} and Theorem~\ref{thm:bounce}, and in fact satisfy these assumptions with $\upeta$ and $\upzeta$ replaced by $\upeta/2$ and $\upzeta/2$ respectively. We also assume that $r_0$ is chosen small enough to satisfy $r_0 < r_*(\upeta, \upzeta)$.

We now prove Corollary~\ref{cor:bounce} with the initial data set given by this $(\phi_D, \mu_D, \lambda_D, \dot{\phi}_D, \dot{\mu}_D, \dot{\lambda}_D)$ at $r = r_0$ rather than the $r = r_1$ initial data stated. This is not an issue, as a standard Cauchy stability argument yields that perturbations of size $\varepsilon$ at $r = r_1$ correspond to perturbations of size $\tilde{\varepsilon}$ at $r = r_0$, with $\tilde{\varepsilon} \to 0$ as $\varepsilon \to 0$.

The remainder of the argument is then a direct application of Theorem~\ref{thm:global} and Theorem~\ref{thm:bounce}. Indeed, as stated in the corollary let $|Q| \leq \varepsilon$, and $\|(\tilde{f}_D, \tilde{\dot{f}}_D) - (f_D, \dot{f}_D) \|_{H^{N+1}\times H^N} \leq \varepsilon$ for $f \in \{\phi, \mu, \lambda\}$. Then for $\varepsilon$ sufficiently small, it follows that the perturbed data $(\tilde{\phi}_D, \tilde{\mu}_D, \tilde{\lambda}_D, \tilde{\dot{\phi}}_D, \tilde{\dot{\mu}}_D, \tilde{\dot{\lambda}}_D)$ still obeys the assumptions \eqref{eq:data_linfty}, \eqref{eq:data_linfty2} and \eqref{eq:data_energy_boundedness} for our choice of $\upeta$ and $\upzeta$. (Note the unperturbed background used $\upeta / 2$ and $\upzeta / 2$ so that the perturbed spacetime satisfies the required bounds with $\upeta$ and $\upzeta$.) 

Thus one can apply Theorem~\ref{thm:global} to the perturbed data, showing that (a) the resulting spacetime still features global existence towards $r = 0$. For (b), we use Theorem~\ref{thm:bounce}, applied to timelike curves $\gamma: r \mapsto (r, x)$ with constant $x$-coordinate. Then $\Psi(x)$ is then just $\mathscr{P}_{\gamma, \infty}$ for the original spacetime while $\tilde{\Psi}(x)$ is the same thing for the perturbed spacetime (which still exists due to Theorem~\ref{thm:bounce}), denoted by $\tilde{\mathscr{P}}_{\gamma, \infty}$.

To conclude, note that when $Q = 0$, by Theorem~\ref{thm:bounce}(\ref{item:bounce_i}) one has
\[
    \tilde{\mathscr{P}}_{\gamma, \infty} = \tilde{\dot{\phi}}_D + O(r_0^{1/2}) = \dot{\phi}_D + O(\varepsilon) + O(r_0^{1/2}) = \mathscr{P}_{\gamma, \infty} + O(\varepsilon) + O(r_0^{1/2}).
\]
By the aforementioned Cauchy stability argument, we are allowed to take $r_0$ as small as required, in particular we may choose the $O(r_0^{1/2})$ term to be less than $\frac{1}{2} \tilde{\varepsilon}$. Then choose $\varepsilon$ so that the $O(\varepsilon)$ term is also smaller than $\frac{1}{2} \tilde{\varepsilon}$, while maintaining that the size of the perturbation at $r = r_1$ is still less than $\varepsilon$. This yields (i).

When $Q \neq 0$, the argument is almost identical, except that by Theorem~\ref{thm:bounce}(\ref{item:bounce_ii}), one instead has
\[
    \tilde{\mathscr{P}}_{\gamma, \infty} = \max \{ \tilde{\dot{\phi}}_D, \tilde{\dot{\phi}}_D^{-1} \} + O(r_0^{1/2}) + O\left( \frac{Q^2}{r^2} e^{2\mu} (r_0)\right) = \max\{\mathscr{P}_{\gamma, \infty}, \mathscr{P}_{\gamma, \infty}^{-1}\} + O(\varepsilon) + O(r_0^{1/2}),
\]
where we used that $|Q| \leq \varepsilon$. (ii) therefore follows from Cauchy stability considerations as before. \qed

\appendix

\section{Asymptotics at high regularity} \label{app:asymp}

In Appendix~\ref{app:asymp}, we prove Theorem~\ref{thm:asymp_smooth}, regarding asymptotics of $(\phi, \mu, \lambda, Q)$ solving the surface symmetric Einstein--Maxwell--scalar system \eqref{eq:mu_evol}--\eqref{eq:phi_wave} near $\{ r = 0 \}$. Recall that we either take the case $Q = 0$ and $\kappa \in \{ 0, +1 \}$, or take \hyperlink{assump1}{Assumption 1} and \hyperlink{assump2}{Assumption 2} as given. We proceed in a series of lemmas.

\begin{lemma} \label{lem:expdecay}
    In either of the cases considered, there exist constants $C > 0$ and $\alpha > 0$ such that $\mu$ satisfies
    \begin{equation} \label{eq:expdecay}
        \frac{Q^2}{r^2} e^{2\mu} + e^{2\mu} \leq C r^{\alpha}.
    \end{equation}
    In particular, by the contnuation criterion Lemma~\ref{lem:continuation} one has global existence towards $\{ r = 0 \}$. 

    Furthermore, we also have the following estimates for $e^{\mu + \lambda}$:
    \begin{equation} \label{eq:expdecay2}
        \frac{Q^2}{r^2} e^{\mu + \lambda} + e^{\mu + \lambda} \leq C r^{\alpha - 1}.
    \end{equation}
\end{lemma}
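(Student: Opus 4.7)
The plan is to first exploit the strict monotonicity of $\mu$ built into \eqref{eq:mu_evol} to obtain \eqref{eq:expdecay}, then use the transport equation \eqref{eq:mulambda_evol} for $\mu - \lambda$ to extract \eqref{eq:expdecay2}. My first step will be to show that, in both cases considered, there is a constant $c > 0$ with $r \partial_r \mu (r, x) \geq c$ for all $x \in \mathbb{S}^1$ and all $r$ sufficiently small. If $\mathbf{F} = 0$ and $\kappa \in \{0, +1\}$, every term on the right of \eqref{eq:mu_evol} is non-negative and one takes $c = \tfrac{1}{2}$. Otherwise \hyperlink{assump2}{Assumption 2} yields $|r\partial_r\mu| \geq \tfrac{1}{2}(2+\alpha)$ whenever $\mathbf{F} \neq 0$, and the sign must be positive, for otherwise $\mu \to +\infty$ as $r \downarrow 0$, contradicting \hyperlink{assump1}{Assumption 1}; this gives $c = \tfrac{1}{2}(2+\alpha)$. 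The remaining possibility $\mathbf{F} = 0$, $\kappa = -1$ uses \hyperlink{assump1}{Assumption 1}: since $e^{2\mu} \to 0$ uniformly, the contribution $\kappa e^{2\mu}$ in \eqref{eq:mu_evol} is eventually dominated by the constant term, yielding $r\partial_r\mu \geq \tfrac{1}{4}$.

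Integrating $r \partial_r \mu \geq c$ from $r_0$ downwards then gives $e^{2\mu}(r, x) \leq C r^{2c}$. In the first case, this boundedness of $\mu$ combined with Lemma~\ref{lem:continuation} secures global existence; in the other this is part of \hyperlink{assump1}{Assumption 1}. Choosing $\alpha$ as either $2c$ (when $Q = 0$) or $2c - 2$ (matching the $\alpha$ of \hyperlink{assump2}{Assumption 2} when $Q \neq 0$), the first estimate \eqref{eq:expdecay} follows, since $\frac{Q^2}{r^2}e^{2\mu} \leq C Q^2 r^{2c-2}$ whenever $Q \neq 0$. For the second estimate, I would feed \eqref{eq:expdecay} back into \eqref{eq:mulambda_evol} to yield $r \partial_r (\mu - \lambda) = 1 + O(r^{\alpha})$. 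As $\alpha > 0$, $\tilde{r}^{\alpha - 1}$ is integrable near zero, so integration gives $(\mu - \lambda)(r, x) = \log r + O(1)$ and hence $e^{\lambda - \mu} \leq C r^{-1}$. Multiplying, $e^{\mu + \lambda} = e^{2\mu} \cdot e^{\lambda - \mu} \leq C r^{2c - 1}$, which is $\leq Cr^{\alpha - 1}$ in both cases (using $r \leq r_0 \leq 1$ when $Q \neq 0$, and directly since $\alpha \leq 2c$ when $Q = 0$); similarly $\frac{Q^2}{r^2} e^{\mu + \lambda} \leq C r^{\alpha - 1}$ when $Q \neq 0$.

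The main subtlety will be the bookkeeping across the sub-cases: ensuring the exponent $\alpha > 0$ can be chosen consistently for both parts of the lemma and matches the parameter from \hyperlink{assump2}{Assumption 2} when applicable, and carefully verifying the sign of $r \partial_r \mu$ in the charged case where that assumption only supplies an absolute value. Beyond this, the analytic content reduces to two elementary ODE integrations.
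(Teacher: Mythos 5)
Your proposal is correct and follows essentially the same route as the paper: integrate the lower bound on $r\partial_r\mu$ coming from \eqref{eq:mu_evol} (or from Assumption 2 when $Q \neq 0$, with Assumption 1 handling the $\kappa = -1$ case), then combine the resulting decay of $e^{2\mu}$ with the transport equation \eqref{eq:mulambda_evol} to get $e^{\lambda-\mu} \lesssim r^{-1}$ and multiply. Your explicit justification that the sign in Assumption 2 must be positive (via Assumption 1) is a detail the paper leaves implicit, but it does not change the argument.
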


\begin{proof}
    If $Q = 0$, by \eqref{eq:mu_evol} we have $r \partial_r \mu \geq \frac{1}{2}(1 + \kappa e^{2\mu})$. Thus, by integrating in the direction $r \downarrow 0$, if $\kappa \in \{ 0, +1 \}$, $e^{2\mu} \lesssim r$ is immediate. If instead $\kappa = -1$ we use \hyperlink{assump1}{Assumption 1} which tells us that $e^{2\mu}$ tends to $0$ uniformly in $r$. Therefore we can still show $e^{2\mu} \lesssim r^{1/2}$. So \eqref{eq:expdecay} holds with $\alpha = \frac{1}{2}$.

    If, on the other hand $Q \neq 0$, then we use \hyperlink{assump2}{Assumption 2}, which says that $r \partial_r \mu \geq \frac{1}{2}(2 + \alpha)$. Directly integrating this towards $r \downarrow 0$, the estimate \eqref{eq:expdecay} follows for $\alpha$ as in Assumption 2.

    The second estimate, \eqref{eq:expdecay2}, comes from combining the negative of \eqref{eq:mulambda_evol}, i.e.
    \[
        r \partial_r (\lambda - \mu) = - 1 - \kappa e^{2\mu} + \frac{Q^2}{r^2} e^{2\mu},
    \]
    with \eqref{eq:expdecay} to get $e^{\lambda - \mu} \lesssim r^{-1}$. We then multiply this with \eqref{eq:expdecay} to get \eqref{eq:expdecay2}.
\end{proof}

\begin{lemma}[Wave estimates] \label{lem:waveupper}
    Let $f$ obey the following equation, for $F$ some inhomogeneity
    \begin{equation} \label{eq:wave_f}
        (r \partial_r)^2 f - r^2 e^{2(\mu - \lambda)} \partial_x^2 f = \left( - \frac{Q^2}{r^2} + \kappa \right) e^{2\mu} r \partial_r f + r^2 e^{2(\mu - \lambda)} \partial_x(\mu - \lambda) \partial_x f + F. 
    \end{equation}
    Then for any $\upgamma > 0$ satisfying $\upgamma < \min \{ \frac{\alpha}{2}, \frac{1}{2} \}$ where $\alpha > 0$ is as in Lemma~\ref{lem:expdecay}, one has:
    \begin{multline} \label{eq:wave_f_est}
        | (r \partial_r + r e^{\mu - \lambda} \partial_x) f |(r, x) + | (r \partial_r - r e^{\mu - \lambda} \partial_x) f | (r, x) + r^{\upgamma} |f|(r, x) \\
        \lesssim
        r_0 \sup_{x \in \mathbb{S}^1} | \partial_r f(r_0, x) | + r_0 \sup_{x \in \mathbb{S}^1} |e^{\mu - \lambda} \partial_x f(r_0, x)| + r_0^{\upgamma} \sup_{x \in \mathbb{S}^1} |f(r_0, x)| + \int^{r_0}_{r} \sup_{\substack{\tilde{r} \leq s \leq r_0 \\ x \in \mathbb{S}^1}} |F(s, x)| \, \frac{d \tilde{r}}{\tilde{r}}.
    \end{multline}
\end{lemma}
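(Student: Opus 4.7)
The plan is to use the method of characteristics applied to the null decomposition of the wave operator. Introduce the null vector fields $L = r\partial_r + re^{\mu-\lambda}\partial_x$ and $\underline{L} = r\partial_r - re^{\mu-\lambda}\partial_x$, and set $u = \underline{L}f$, $v = Lf$. Using the wave equation \eqref{eq:wave_f} together with the identity $1 + r\partial_r(\mu-\lambda) = 2+a$ (from \eqref{eq:mulambda_evol}), where $a = (-Q^2/r^2+\kappa)e^{2\mu}$, a direct computation of $L\underline{L}f$ and $\underline{L}Lf$ yields the null transport system
\begin{gather*}
Lu = (a+1)u - v + F, \qquad \underline{L}v = (a+1)v - u + F.
\end{gather*}
By Lemma~\ref{lem:expdecay}, $|a| \lesssim r^\alpha$, so $a/s$ is integrable near $s=0$; integrating \eqref{eq:mulambda_evol} also gives $e^{\mu-\lambda} \lesssim r$, so the characteristic speed $re^{\mu-\lambda} \lesssim r^2$ is small.

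To handle the order-one coupling, I would exploit the cancellation structure obtained by adding and subtracting the two equations:
\[
Lu + \underline{L}v = a(u+v) + 2F, \qquad Lu - \underline{L}v = (a+2)(u-v),
\]
where $u+v = 2r\partial_r f$ (symmetric mode) and $u-v = -2re^{\mu-\lambda}\partial_x f$ (antisymmetric mode). The symmetric combination has small reactive coefficient $a$ and is controlled by Gronwall; the antisymmetric combination is additionally suppressed by the smallness of $re^{\mu-\lambda}$, reflecting the approximate spatial homogeneity of the problem -- in the exactly homogeneous case $|u-v| \lesssim r^2$ and $|u+v|$ stays bounded as $r\to 0$. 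Combining integration of the null ODEs along $L$- and $\underline{L}$-characteristics with the rescaled variables $\tilde u = u/r$, $\tilde v = v/r$ (so the self-coefficient reduces from $a+1$ to $a$, which is integrable against $ds/s$), together with a bootstrap that uses the cancellation structure to close the Gronwall loop pointwise, should yield the claimed bound for $|Lf| + |\underline{L}f|$.

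For the bound on $r^\gamma|f|$: using $r\partial_r f = (u+v)/2$ and integrating from $r_0$ to $r$ gives $|f(r,x) - f(r_0, x)| \lesssim \log(r_0/r) \cdot (\text{RHS of \eqref{eq:wave_f_est}})$, and multiplying by $r^\gamma$ together with the elementary inequality $r^\gamma \log(r_0/r) \lesssim r_0^\gamma$ for $0 < r \leq r_0$ completes the argument.

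The main obstacle is the pointwise-in-$x$ estimate of $|u|$ and $|v|$ sketched above: the reactive coupling has order-one coefficient, so a naive Gronwall argument on the $L^\infty_x$ suprema yields polynomial-in-$r_0/r$ growth that is incompatible with the uniform bound claimed. The resolution relies on combining the integrability of $|a|$ against $ds/s$ with the damping of the antisymmetric mode $u-v$ via the smallness of the characteristic speed, which together reduce the analysis to the essentially spatially homogeneous problem where $u$ and $v$ converge to a common bounded limit.
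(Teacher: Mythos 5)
Your setup is correct and matches the paper's: the null fields $L, \underline{L}$, the transport system $Lu=(a+1)u-v+F$, $\underline{L}v=(a+1)v-u+F$, and the rescaling $\tilde u=u/r$, $\tilde v=v/r$ which turns the self-coefficient $a+1$ into $a$ (integrable against $ds/s$ by Lemma~\ref{lem:expdecay}) while leaving the off-diagonal coupling at order one: $L\tilde u = a\tilde u - \tilde v + F/r$, and symmetrically for $\tilde v$. The part of your argument deducing the $r^{\upgamma}|f|$ bound from the bound on $r\partial_r f$ via $r^{\upgamma}\log(r_0/r)\lesssim_{\upgamma} r_0^{\upgamma}$ is also fine.

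The gap is in what you call the ``main obstacle'' and its resolution. First, the obstacle is illusory: naive Gr\"onwall on $A(r)=\max\{\sup_x|\tilde u|,\sup_x|\tilde v|,\sup_x r^{-1+\upgamma}|f|\}$ gives a factor $\exp\bigl(\int_r^{r_0}(1+C\tilde r^{\alpha}+\tilde r^{\upgamma})\tfrac{d\tilde r}{\tilde r}\bigr)\lesssim r_0/r$, i.e.\ exactly \emph{one} power of $r_0/r$ — and this is precisely absorbed by the weights in \eqref{eq:wave_f_est}, whose left side is $|r\tilde v|+|r\tilde u|$ and whose right side is $r_0\sup|\partial_r f(r_0,\cdot)|+\cdots$. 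That is, the claim is equivalent to $rA(r)\lesssim r_0A(r_0)+\int_r^{r_0}\sup|F|\,\tfrac{d\tilde r}{\tilde r}$, which is what the ``naive'' Gr\"onwall delivers; this is the paper's entire proof. Second, your proposed workaround does not work as stated: $Lu+\underline{L}v=a(u+v)+2F$ and $Lu-\underline{L}v=(a+2)(u-v)$ are correct identities, but neither left-hand side is the derivative of $u\pm v$ along a single vector field (one would need $(L+\underline{L})(u+v)=Lu+\underline{L}v+\underline{L}u+Lv$, and the cross-derivatives $\underline{L}u$, $Lv$ are not controlled by the system), so these combinations cannot be integrated along characteristics, and the ``bootstrap using the cancellation structure'' is not an argument. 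Since the decisive step of your proof is deferred to this invalid mechanism, the proposal as written does not establish the lemma; replacing it with the direct Gr\"onwall bookkeeping above closes the proof.
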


\begin{proof}
    We introduce the notation $\sup_{H_{\tilde{r}}} |f|$ to denote $\sup_{x \in \mathbb{S}^1} |f(\tilde{r}, x)|$, and define
    \[
        A(\tilde{r}) = \max \left \{ \sup_{H_{\tilde{r}}} |( \partial_r + e^{\mu - \lambda}\partial_x ) f |, \, \sup_{H_{\tilde{r}}} | (\partial_r - e^{\mu - \lambda} \partial_x) f |, \, \sup_{H_{\tilde{r}}} r^{- 1 + \upgamma} |f| \right \} 
    \]
    We aim to derive an integral inequality for $A(r)$ to which we may apply Gr\"onwall's inequality. Using the wave equation \eqref{eq:wave_f} for $f$, as well as the transport equation \eqref{eq:mulambda_evol}, one may derive the following:
    \begin{gather*}
        (r \partial_r - r e^{\mu - \lambda} \partial_x) (\partial_r + e^{\mu - \lambda} \partial_x) f = - (\partial_r - e^{\mu - \lambda} \partial_x ) f + \left( - \frac{Q^2}{r^2} + \kappa \right) e^{2\mu} (\partial_r + e^{\mu - \lambda} \partial_x ) f + \frac{1}{r} F.
    \end{gather*}

    Therefore using Lemma~\ref{lem:expdecay}, one has
    \[
        | (r \partial_r - r e^{\mu - \lambda} \partial_x) (\partial_r + e^{\mu - \lambda} \partial_x) f (\tilde{r}, x) | \geq - A (\tilde{r}) - C \tilde{r}^{\alpha} A (\tilde{r}) - \frac{1}{\tilde{r}} \sup_{H_{\tilde{r}}} |F|.
    \]
    The idea is now simply to integrate this towards $r = 0$, via the characteristic (null) vector field $\underline{L} = r \partial_r - r e^{\mu - \lambda} \partial_x$. Indeed, let $\gamma_{\bar{L}}: (0, r_0] \to \mathcal{Q}$ be a curve whose tangent vector is $\underline{L}$ and which is parameterized by $r$. That is, $\gamma_{\underline{L}}(r) = (r, X(r))$ with $\frac{dX}{dr} = - e^{\mu - \lambda}$. Then integrating the above yields
    \[
        |((\partial_r + e^{\mu - \lambda} \partial_x) f) (\gamma_{\underline{L}}(r)) | \leq
        |((\partial_r + e^{\mu - \lambda} \partial_x) f) (\gamma_{\underline{L}}(r_0)) | + \int^{r_0}_r (1 + C \tilde{r}^{\alpha} ) A (\tilde{r}) \, \frac{d \tilde{r}}{\tilde{r}} + \int_r^{r_0} \frac{1}{\tilde{r}^2} \sup_{H_{\tilde{r}}} |F| \, d \tilde{r}.
    \]

    Since this is true for all curves $\gamma_{\underline{L}}$ which are integral curves of $\underline{L}$, we therefore have that for all $x \in \mathbb{S}^1$:
    \[
        |(\partial_r + e^{\mu - \lambda} \partial_x) f)(r, x)| \leq
        \sup_{H_{r_0}} |(\partial_r + e^{\mu - \lambda} \partial_x) f| + \int^{r_0}_r (1 + C \tilde{r}^{\alpha} ) A (\tilde{r}) \, \frac{d \tilde{r}}{\tilde{r}} + \int_r^{r_0} \frac{1}{\tilde{r}^2} \sup_{H_{\tilde{r}}} |F| \, d \tilde{r}.
    \]
    On the other hand, a similar computation involving integral curves of $L = r \partial_r + r e^{\mu - \lambda} \partial_x$ yields:
    \[
        |(\partial_r - e^{\mu - \lambda} \partial_x) f)(r, x)| \leq
        \sup_{H_{r_0}} |(\partial_r - e^{\mu - \lambda} \partial_x) f| + \int^{r_0}_r (1 + C \tilde{r}^{\alpha} ) A (\tilde{r}) \, \frac{d \tilde{r}}{\tilde{r}} + \int_r^{r_0} \frac{1}{\tilde{r}^2} \sup_{H_{\tilde{r}}} |F| \, d \tilde{r}.
    \]
    Finally, since $ \partial_r (r^{- 1 + \upgamma} f) = - (1 - \upgamma) r^{- 2 + \upgamma} f + \frac{1}{2} r^{- 1 + \upgamma} ( \partial_r + e^{\mu - \lambda} \partial_x) f + \frac{1}{2} r^{- 1 + \upgamma} ( \partial_r - e^{\mu - \lambda}\partial_x)f$,
    \[
        r^{- 1 + \upgamma} |f| (r, x) \leq \sup_{H_{r_0}} ( r^{- 1 + \upgamma} |f| ) + \int^{r_0}_{r} \left( 1 - \upgamma + \tilde{r}^{\upgamma} \right) A(\tilde{r}) \, \frac{d \tilde{r}}{\tilde{r}}.
    \]

    Combining all of these, we deduce that
    \begin{align*}
        A(r) 
        &\leq A(r_0) + \int^{r_0}_r (1 + C \tilde{r}^{\alpha} + \tilde{r}^{\upgamma} ) A(\tilde{r}) \frac{d \tilde{r}}{\tilde{r}} + \int^{r_0}_r \frac{1}{\tilde{r}^2} \sup_{H_{\tilde{r}}} |F| \, d\tilde{r} \\
        &\leq A(r_0) + \int^{r_0}_r (1 + C \tilde{r}^{\alpha} + \tilde{r}^{\upgamma} ) A(\tilde{r}) \frac{d \tilde{r}}{\tilde{r}} + \frac{1}{r} \sup_{\substack{\tilde{r} \leq s \leq r_0 \\ x \in \mathbb{S}^1}} |F(s, x)|.
    \end{align*}
    One now applies Gr\"onwall's inequality together with $\int^{r_0}_r (1 + C \tilde{r}^{\alpha} + \tilde{r}^{\upgamma}) \frac{d\tilde{r}}{\tilde{r}} \leq \log( \frac{r_0}{r} ) + \log D$ where $D > 0$ is some constant, to find that 
    \[
        r A(r) \lesssim r_0 A(r_0) + \int^{r_0}_{r} \frac{1}{\tilde{r}} \sup_{\substack{\tilde{r} \leq s \leq r_0\\ x \in \mathbb{S}^1}}|F(s, x)| \, d \tilde{r},
    \]
    which is equivalent to \eqref{eq:wave_f_est}.
\end{proof}

\begin{lemma}[Upper bounds at all orders] \label{lem:linfty_upper}
    For $(\phi, \lambda, \mu)$ as in Theorem~\ref{thm:asymp_smooth}, for all $0 \leq j \leq k$ there exists a constant $C_j$ such that
    \begin{equation} \label{eq:linfty_smooth}
        \max_{f \in \{\phi, \mu, \lambda \}} |r \partial_r \partial_x^j f| + 
        \max_{f \in \{\phi, \mu, \lambda \}} |r e^{\mu - \lambda} \partial_x^{j+1} f| \leq C_j. 
    \end{equation}
    Moreover $C_j$ depends only on up to $j$ $\partial_x$-derivatives of $r \partial_r \phi$, $r \partial_r \mu$ and $r \partial_r \lambda$ and up to $j+1$ $\partial_x$-derivatives of $\phi$, $\mu$ and $\lambda$ at data i.e.~at $r = r_0$.
\end{lemma}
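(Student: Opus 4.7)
The plan is to prove the estimates by induction on $j$, carrying along at each step an auxiliary bound $r^\upgamma |\partial_x^j f| \leq C_j$ which Lemma~\ref{lem:waveupper} provides as a by-product and which will be essential to control source terms at the next level. I will apply Lemma~\ref{lem:waveupper} to $\partial_x^j \phi$ and $\partial_x^j \mu$, which each solve an equation of the form \eqref{eq:wave_f} after commutation with $\partial_x^j$, and will handle $\lambda$ via the transport equation \eqref{eq:mulambda_evol}. The parameter $\upgamma$ is fixed small in terms of $k$ and the constant $\alpha$ from Lemma~\ref{lem:expdecay}. Throughout I make use of the asymptotics $e^{\mu-\lambda} \sim r$ as $r \downarrow 0$, which follow from integrating \eqref{eq:mulambda_evol} together with Lemma~\ref{lem:expdecay}.

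For the base case $j = 0$, equation \eqref{eq:phi_wave} is already of the form \eqref{eq:wave_f} with $F \equiv 0$, and Lemma~\ref{lem:waveupper} immediately yields the $\phi$-bounds. For $\mu$, I would first rewrite \eqref{eq:mu_wave} using $r\partial_r(\lambda + \mu) = (r\partial_r\phi)^2 + r^2 e^{2(\mu-\lambda)}(\partial_x\phi)^2$ (obtained by summing \eqref{eq:lambda_evol} and \eqref{eq:mu_evol}), so that the resulting source $F_0^\mu$ depends only on already-controlled $\phi$-quantities and on $(-Q^2/r^2 + \kappa)e^{2\mu} = O(r^\alpha)$; applying Lemma~\ref{lem:waveupper} then delivers the $\mu$-bounds. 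The $\lambda$-bounds follow from $\lambda = \mu - (\mu-\lambda)$ together with direct integration of \eqref{eq:mulambda_evol}. For the inductive step, commuting \eqref{eq:phi_wave} and \eqref{eq:mu_wave} with $\partial_x^j$ will put $\partial_x^j\phi$ and $\partial_x^j\mu$ each into the form \eqref{eq:wave_f}, with sources having the same schematic structure as in the proofs of Propositions~\ref{prop:phi_energy} and \ref{prop:mu_energy}: every term carries a prefactor of either $(-Q^2/r^2+\kappa)e^{2\mu} = O(r^\alpha)$ or $r^2 e^{2(\mu-\lambda)} = O(r^4)$, multiplied by a product of $\partial_x$-derivatives of $\phi$, $\mu$ and $\mu-\lambda$ of total order at most $j+2$. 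The quantity $\partial_x^{j+1}(\mu-\lambda)$ is handled in parallel by directly integrating the $\partial_x^{j+1}$-commutator of \eqref{eq:mulambda_evol}.

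The hard part will be controlling those source terms that carry a \emph{top-order} factor, such as $\partial_x^{j+2}\phi$, $\partial_x^{j+1}\mu$ or $\partial_x^{j+1}(\mu-\lambda)$; these are a priori only bounded by negative powers of $r$ via the very quantities we are trying to estimate (for instance $|\partial_x^{j+1}\mu| \lesssim r^{-2}|re^{\mu-\lambda}\partial_x^{j+1}\mu|$, and $|\partial_x^j\mu| \lesssim r^{-\upgamma}\|r^\upgamma\partial_x^j\mu\|_\infty$ using the auxiliary bound). The key observation is that the decay prefactors always over-compensate: the representative bad term $r^2 e^{2(\mu-\lambda)}\partial_x\phi \cdot \partial_x^{j+1}(\mu-\lambda)$ is controlled by $r^{2-\upgamma}|re^{\mu-\lambda}\partial_x^{j+1}(\mu-\lambda)|$, while $(-Q^2/r^2+\kappa)e^{2\mu}(r\partial_r\phi)\partial_x^j\mu$ is controlled by $r^{\alpha-\upgamma}\|r^\upgamma\partial_x^j\mu\|_\infty$, both of which are integrable against $d\tilde r/\tilde r$ for $\upgamma$ chosen sufficiently small. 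Setting
\[
M_j(r) := \sup_{s \in [r, r_0],\, x \in \mathbb{S}^1,\, f \in \{\phi,\mu,\lambda,\mu-\lambda\}} \bigl(|r\partial_r \partial_x^j f|(s,x) + |re^{\mu-\lambda}\partial_x^{j+1}f|(s,x) + s^\upgamma |\partial_x^j f|(s,x)\bigr),
\]
the collected estimates will assemble into an inequality of the form $M_j(r) \leq C_{\mathrm{data}} + C\int_r^{r_0} W(\tilde r) M_j(\tilde r)\, d\tilde r/\tilde r$ with $W$ integrable against $d\tilde r/\tilde r$ on $(0, r_0]$, and with $C_{\mathrm{data}}$ depending on at most $j$ derivatives of $r\partial_r f|_{r=r_0}$ and $j+1$ derivatives of $f|_{r=r_0}$; Gr\"onwall's inequality then completes the induction.
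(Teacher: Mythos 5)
Your proposal is correct and follows essentially the same route as the paper's proof: apply the wave estimate of Lemma~\ref{lem:waveupper} to the $\partial_x^j$-commuted equations \eqref{eq:phi_wave} and \eqref{eq:mu_wave}, control $\partial_x^{j+1}(\mu-\lambda)$ via the commuted transport equation \eqref{eq:mulambda_evol} and the constraint \eqref{eq:mu_constraint}, use the prefactors $(\tfrac{Q^2}{r^2}+1)e^{2\mu}=O(r^{\alpha})$ and $e^{\mu-\lambda}=O(r)$ to render the top-order source terms integrable against $d\tilde r/\tilde r$, and close with Gr\"onwall. The only difference is organizational: you run a single induction with one combined Gr\"onwall quantity $M_j$ for all of $\phi,\mu,\lambda,\mu-\lambda$, whereas the paper treats $\phi$ at orders $0$ and $1$ explicitly (with the $(\mu-\lambda)$ bounds expressed in terms of the $\phi$-quantity $B(r)$) and then handles $\mu$, $\lambda$ and higher orders by the same scheme.
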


\begin{proof}
    For the sake of brevity, we will focus on proving the part of \eqref{eq:linfty_smooth} corresponding to $f = \phi$, and mention at the end how one gets the corresponding estimates for $\mu$ and $\lambda$.

    Due to \eqref{eq:phi_wave}, $\phi$ satisfies the wave equation \eqref{eq:wave_f} with $F = 0$. Therefore Lemma~\ref{lem:waveupper} immediately yields that for some $C_0 > 0$ depending only on the $C^0$ norm of $r \partial_r \phi$ and $\partial_x \phi$ at data, one has
    \begin{equation} \label{eq:linfty_smooth_0}
        |r \partial_r \phi| + |r e^{\mu - \lambda} \partial_x \phi| \leq C_0.
    \end{equation}

    The most challenging step is to get the estimate \eqref{eq:linfty_smooth} for $j = 1$. For convenience, we further introduce the notation $\sup_{\mathcal{D}_{\tilde{r}, r_0}} |f| = \sup_{\tilde{r} \leq s \leq r_0, x \in \mathbb{S}^1} |F(s, x)|$, and define the quantity
    \[
        B(\tilde{r}) = \max \left \{ \sup_{\mathcal{D}_{\tilde{r}, r_0}} |r \partial_r \partial_x \phi|, \, \sup_{\mathcal{D}_{\tilde{r}, r_0}} |r e^{\mu - \lambda} \partial_x^2 \phi|, \, \sup_{\mathcal{D}_{\tilde{r}, r_0}} r^{\upgamma} |\partial_x \phi| \right\}.
    \]
    It will be necessary to first find some preliminary estimates for $\partial_x(\mu - \lambda)$ and $\partial_x^2(\mu - \lambda)$. To do so, we use the transport equation \eqref{eq:mulambda_evol} for $\mu - \lambda$ and differentiate in $x$, to get:
    \begin{equation} \label{eq:mulambda_evol_x}
        r \partial_r \partial_x (\mu - \lambda) = \left( - \frac{Q^2}{r^2} + \kappa \right) e^{2\mu} (2 \partial_x \mu) = \left( - \frac{Q^2}{r^2} + \kappa \right) e^{2\mu} 2 (r \partial_r \phi) (\partial_x \phi).
    \end{equation}
    Note that the second equality followed from the constraint equation \eqref{eq:mu_constraint}.

    We then estimate the right hand side of \eqref{eq:mulambda_evol_x} by:
    \[
        \left| \left( - \frac{Q^2}{r^2} + \kappa \right) e^{2 \mu} 2 (r \partial_r \phi) (\partial_x \phi) \right|
        \leq 
        \left| r^{-1} \left( \frac{Q^2}{r^2} + 1 \right) e^{\mu + \lambda} \cdot 2 r \partial_r \phi \cdot r e^{\mu - \lambda} \partial_x \phi \right|
        \leq
        r^{-1} 2 C_0^2 \cdot \left( \frac{Q^2}{r^2} + 1 \right) e^{\mu + \lambda},
    \]
    for $C_0$ as in \eqref{eq:linfty_smooth_0}. Then using \eqref{eq:expdecay2} in Lemma~\ref{lem:expdecay}, we have that $|r \partial_r \partial_x(\mu - \lambda)| \lesssim r^{-2 + \alpha}$. Integrating, we get
    \begin{equation} \label{eq:mulambda_x}
        | \partial_x ( \mu - \lambda ) | \lesssim r^{-2 + \alpha}.
    \end{equation}

    Differentiating \eqref{eq:mulambda_evol_x} once again in $x$, we get
    \[
        r \partial_r \partial_x^2 (\mu - \lambda) = \left( - \frac{Q^2}{r^2} + \kappa \right) e^{2\mu} \left[ 4(r \partial_r \phi)^2 (\partial_x \phi)^2 + 2 (r \partial_r \partial_x \phi) (\partial_x \phi) + 2 ( r \partial_r \phi) (\partial_x^2 \phi) \right].
    \]
    Using the definition of $B(r)$ and similar methods to that of the estimate for \eqref{eq:mulambda_x}, we can estimate $|r \partial_r \partial_x^2 (\mu - \lambda)| \lesssim r^{- 2 + \alpha} [ r^{- \upgamma} B(r) + B(r) ]$. Since $B(r)$ increases as $r \to 0$, we integrate this towards $r = 0$, and get
    \begin{equation} \label{eq:mulambda_xx}
        | \partial_x^2 ( \mu - \lambda ) | \lesssim r^{- 2 + \alpha - \upgamma} B(\tilde{r}) + \sup_{H_{r_0}} |\partial_x^2 ( \mu - \lambda )|.
    \end{equation}

    We now commute the wave equation \eqref{eq:phi_wave} with $\partial_x$ to get that
    \begin{multline*}
        (r \partial_r)^2 \partial_x \phi + r^2 e^{2(\mu - \lambda)} \partial_x^3 \phi = \left( - \frac{Q^2}{r^2} + \kappa \right) e^{2\mu} (r \partial_r \partial_x \phi) + r^2 e^{2(\mu - \lambda)} \partial_x (\mu - \lambda) \partial_x^2 \phi \\
        + 2 r^2 e^{2(\mu - \lambda)} \partial_x (\mu - \lambda) \partial_x^2 \phi + r^2 e^{2(\mu - \lambda)} \partial_x^2 (\mu - \lambda) \partial_x \phi + 2 r^2 e^{2(\mu - \lambda)} ( \partial_x (\mu - \lambda) )^2 \partial_x \phi \\ + \left( - \frac{Q^2}{r^2} + \kappa \right) e^{2\mu} (2 \partial_x \mu) ( r \partial_r \phi).
    \end{multline*}
    Denote the latter two lines of this commuted equation by $F^{(1)}$. To estimate $F^{(1)}$, we will use \eqref{eq:mulambda_x}, \eqref{eq:mulambda_xx}, and finally also $e^{\mu - \lambda} \lesssim r$ (which follows easily from \eqref{eq:mulambda_evol} and \eqref{eq:expdecay}). We have therefore
    \begin{gather*}
        |r^2 e^{2 (\mu - \lambda)} \partial_x (\mu - \lambda) \partial_x^2 \phi | \lesssim r^{1 + \alpha} e^{\mu - \lambda} |\partial_x^2 \phi| \lesssim r^{\alpha} B(r), \\
        |r^2 e^{2(\mu - \lambda)} (\partial_x(\mu - \lambda))^2 \partial_x \phi | \lesssim r^4 \cdot r^{2(-2+\alpha)} |\partial_x \phi| \lesssim r^{2\alpha - \upgamma} B(r), \\
        |r^2 e^{2(\mu - \lambda)} \partial_x^2 (\mu - \lambda) \partial_x \phi | = |r e^{\mu - \lambda} \partial_x^2(\mu - \lambda)| \cdot |r e^{\mu - \lambda} \partial_x \phi| \lesssim r^{\alpha - \upgamma} B(r),
    \end{gather*}
    while by Lemma~\ref{lem:expdecay}, the constraint equation \eqref{eq:mu_evol} and \eqref{eq:linfty_smooth_0} we have:
    \begin{gather*}
        \left| \left( - \frac{Q^2}{r^2} + \kappa \right) e^{2\mu} (2 \partial_x \mu) (r \partial_r \phi) \right| \lesssim \left( \frac{Q^2}{r^2} + 1 \right) e^{2\mu} \cdot |\partial_x \phi | \lesssim r^{\alpha - \upgamma} B(r).
    \end{gather*}

    The upshot is that $|F^{(1)}(r)| \lesssim r^{\alpha - \upgamma} B(r)$. Therefore using the commuted equation and Lemma~\ref{lem:waveupper}, one shows that $B(r) \lesssim B(r_0) + \int^r_{r_0} \tilde{r}^{\alpha - \upgamma - 1} B(\tilde{r}) d \tilde{r}$. Since $0 < \upgamma < \alpha$, one may apply Gr\"onwall's inequality to show that $B(r)$ is uniformly bounded by some constant $C_1$, thereby \eqref{eq:linfty_smooth} holds for $j=1$ and $f = \phi$.

    There are two things left to do: get estimates for $\mu$ and $\lambda$, and then get estimates for higher derivatives i.e.~$j \geq 1$. For $\mu$, one uses the wave equation for $\mu$, equation \eqref{eq:mu_wave}. Even for the uncommuted equation \eqref{eq:mu_wave}, there is already an inhomogeneity in the application of Lemma~\ref{lem:waveupper}. Fortunately, this is easily dealt with, either using Lemma~\ref{lem:expdecay}, or using $\partial_x \phi \leq C_1 r^{- \upgamma}$. Commuting the equation \eqref{eq:mu_wave} once with $\partial_x$ and applying similar methods to before, we obtain the desired estimates at orders $j = 0, 1$ for $\mu$.

    Next, for $\lambda$, we actually already have estimates for $\mu - \lambda$, see for instance \eqref{eq:mulambda_evol_x}, \eqref{eq:mulambda_x} and \eqref{eq:mulambda_xx}. These can easily be used to get the desired estimates for $\lambda$ at orders $j = 0, 1$. Finally, for the higher derivatives one simply has to commute the equations \eqref{eq:phi_wave}, \eqref{eq:mu_wave} and \eqref{eq:mulambda_evol} with further $\partial_x$-derivatives, and carry out the same approach. We remark that the commuted equations already appear in Section~\ref{sec:l2}, and that in the context of Theorem~\ref{thm:asymp_smooth} all inhomogeneous terms that appear can be easily controlled given we already have the estimates at orders $j = 0, 1$. Details are left to the reader.
\end{proof}

\begin{lemma}[Completion of Theorem~\ref{thm:asymp_smooth}]
    For $(\phi, \lambda, \mu)$ as in Theorem~\ref{thm:asymp_smooth}, there exists a function $\Psi: \mathbb{S}^1 \to \R$ such that the following convergence holds strongly as $r \to 0$, 
    \begin{equation} \label{eq:rdrf}
        r \partial_r \phi(r, \cdot) \to \Psi(\cdot), \quad
        r \partial_r \mu(r, \cdot) \to \frac{1}{2}( \Psi(\cdot)^2 + 1 ), \quad
        r \partial_r \lambda(r, \cdot) \to \frac{1}{2}( \Psi(\cdot)^2 - 1 ) \quad \text{ in } C^{k-1}.
    \end{equation}
    Moreover, there exist further functions $\Xi, M, \Lambda: \mathbb{S}^1 \to \R$ such that
    \begin{equation} \label{eq:rdrf-}
        \phi(r, \cdot) - r \partial_r \phi(r, \cdot) \log r \to \Xi(\cdot), \quad
        \mu(r, \cdot) -  r \partial_r \mu(r, \cdot) \log r \to M(\cdot), \quad
        \lambda(r, \cdot) - r \partial_r \lambda(r, \cdot) \log r \to \Lambda(\cdot) \quad \text{ in } C^{k-1}.
    \end{equation}
    Finally, the asymptotic constraint equation $d M = \Psi d \, \Xi$ holds.
\end{lemma}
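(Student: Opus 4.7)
The plan is to exploit the uniform bounds of Lemma~\ref{lem:linfty_upper} together with the decay from Lemma~\ref{lem:expdecay} to show that, for $f \in \{\phi, \mu, \lambda\}$ and $0 \leq j \leq k-1$, one has $(r\partial_r)^2 \partial_x^j f = O(r^\delta)$ uniformly in $x$ for some $\delta = \delta(\alpha, \upgamma) > 0$. Since $r^{\delta-1}$ is integrable on $(0, r_0]$, this forces $r \partial_r \partial_x^j f$ to satisfy a Cauchy criterion and converge uniformly in $x$, yielding convergence of $r\partial_r f$ in $C^{k-1}$. The limit of $r \partial_r \phi$ will define $\Psi$. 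The limits of $r\partial_r \mu$ and $r\partial_r \lambda$ are then identified \emph{directly} by passing to the limit in the evolution equations \eqref{eq:mu_evol} and \eqref{eq:lambda_evol}: every term on the right apart from $(r\partial_r \phi)^2$ and the constant $\pm 1$ vanishes as $r \to 0$ by Lemma~\ref{lem:expdecay} combined with $e^{\mu-\lambda} \lesssim r$ (which follows from integrating \eqref{eq:mulambda_evol}, since its right-hand side is $1 + O(r^\alpha)$). This will give \eqref{eq:rdrf}.

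\textbf{Key decay estimate.} To prove the pointwise $O(r^\delta)$ bound I would commute \eqref{eq:phi_wave} and \eqref{eq:mu_wave} with $\partial_x^j$ for $0 \leq j \leq k-1$ and verify that every resulting term carries at least one prefactor among $r^2 e^{2(\mu-\lambda)}$, $(Q^2/r^2)e^{2\mu}$, $e^{2\mu}$, or $\partial_x^m(\mu-\lambda)$ for some $m \geq 1$. Lemma~\ref{lem:expdecay} together with the estimates \eqref{eq:mulambda_x}, \eqref{eq:mulambda_xx} (and their higher-order analogues obtained by iteratively differentiating \eqref{eq:mulambda_evol_x}) ensure each such prefactor supplies a positive power of $r$, while the remaining factors are bounded uniformly by Lemma~\ref{lem:linfty_upper}. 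The net contribution is $O(r^\delta)$ uniformly in $x$.

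\textbf{Sub-leading asymptotics.} For \eqref{eq:rdrf-}, I would exploit the identity
\[ r\partial_r\bigl(\partial_x^j f - r\partial_r\partial_x^j f \cdot \log r\bigr) = - \log r \cdot (r\partial_r)^2 \partial_x^j f. \]
The right-hand side is $O(r^\delta |\log r|)$ by the previous step, and $r^{\delta-1}|\log r|$ is integrable near $r=0$, so $\partial_x^j f - r\partial_r \partial_x^j f \cdot \log r$ converges uniformly in $x$. The limits for $f = \phi, \mu, \lambda$ define $\Xi, M, \Lambda$, respectively, and re-packaging using the already established limits of $r\partial_r \partial_x^j f$ produces the error estimates on $\mathrm{Err}_\phi, \mathrm{Err}_\mu, \mathrm{Err}_\lambda$ in $C^{k-1}$. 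Finally, to derive $dM = \Psi \, d\Xi$, I would take $r \to 0$ in the Hamiltonian constraint \eqref{eq:mu_constraint}: inserting the expansions \eqref{eq:phi_asymp}--\eqref{eq:mu_asymp} gives both sides of $\partial_x \mu = r \partial_r \phi \cdot \partial_x \phi$ in the form $\Psi \partial_x \Psi \log r + (\text{bounded part}) + o(1)$; the logarithmic pieces match identically, and comparing the bounded-in-$r$ parts gives $\partial_x M = \Psi \, \partial_x \Xi$. The only subtlety is that the cross-term $(r\partial_r\phi - \Psi) \, \partial_x \Psi \log r = O(r^\delta \log r)$ must vanish in the limit, which it does for $\delta > 0$.

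\textbf{Main obstacle.} The principal technical hurdle lies in the decay estimate for the commuted equations: after $\partial_x^j$, derivatives $\partial_x^m(\mu-\lambda)$ with $1 \leq m \leq j+1$ appear, and a priori only $\partial_x(\mu-\lambda) = O(r^{-2+\alpha})$ is available -- a \emph{negative} power of $r$. One must verify that every such dangerous factor is always paired with a compensating $r^2 e^{2(\mu-\lambda)} = O(r^4)$ or $e^{2\mu} = O(r^\alpha)$ arising from the structure of \eqref{eq:phi_wave}--\eqref{eq:mu_wave}, so that the net product still decays as a positive power of $r$. This is a schematic inspection of the commuted equations, largely parallel to (and generalizing) the computation already performed in the proof of Lemma~\ref{lem:linfty_upper}, but must be carried out systematically at all orders $j \leq k-1$.
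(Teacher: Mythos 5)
Your proposal is correct and follows essentially the same route as the paper: bound $(r\partial_r)^2\partial_x^j f$ by a quantity integrable against $dr/r$ (the paper gets $\bigl(r^2 + r^4(1+|\log r|)^2 + r^{\alpha}\bigr)(1+|\log r|)^{j}$, which is your $O(r^{\updelta})$), identify the limits of $r\partial_r\mu$ and $r\partial_r\lambda$ from the first-order equations \eqref{eq:mu_evol}--\eqref{eq:lambda_evol}, use the identity $r\partial_r\bigl(f-(r\partial_r f)\log r\bigr)=(r\partial_r)^2 f\cdot\log r$ for the subleading terms, and pass to the limit in the constraint \eqref{eq:mu_constraint}. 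Your expansion-matching derivation of $dM=\Psi\,d\Xi$ is algebraically equivalent to the paper's identity $\partial_x\bigl(\mu-(r\partial_r\mu)\log r\bigr)=(r\partial_r\phi)\,\partial_x\bigl(\phi-(r\partial_r\phi)\log r\bigr)-(r\partial_r)(r\partial_r\phi)\,(\partial_x\phi)\log r$, including the observation that the cross-term vanishes.
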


\begin{proof}
    Let us first study the convergence of $r \partial_r \phi$. To do so, use the wave equation \eqref{eq:phi_wave}, which says
    \begin{equation} \label{eq:phi_wave_copy}
        (r \partial_r) (r \partial_r \phi) = r^2 e^{2(\mu - \lambda)} \partial_x^2 \phi + r^2 e^{2(\mu - \lambda)} \partial_x (\mu - \lambda) \partial_x \phi + \left( - \frac{Q^2}{r^2} + \kappa \right) e^{2\mu} r \partial_r \phi.
    \end{equation}
    By Lemma~\ref{lem:linfty_upper} for $j = 1$, noting that integrating the bound for $|r \partial_r \partial_x f|$ in \eqref{eq:linfty_smooth} yields that $|\partial_x f | \lesssim 1+ |\log r|$, and also using Lemma~\ref{lem:expdecay} as well as the bound $e^{\mu - \lambda} \lesssim r$ (found by integrating \eqref{eq:mulambda_evol} and Lemma~\ref{lem:expdecay}),
    \begin{align*}
        |(r \partial_r) (r \partial_r \phi)| 
        &\leq r^2 e^{2(\mu - \lambda)} |\partial_x^2 \phi| + r^2 e^{2(\mu - \lambda)} |\partial_x (\mu - \lambda)| |\partial_x \phi| + \left( \frac{Q^2}{r^2} + 1 \right) e^{2\mu} |r \partial_r \phi| \\
        &\lesssim r^2 + r^4 (1 + |\log r|)^2 + r^{\alpha}.
    \end{align*}
    Crucially, the right hand side is integrable with respect to the differential $\frac{dr}{r}$ as $r \to 0$, thus $r \partial_r \phi(r, \cdot)$ converges to some function $\Psi(\cdot)$ as $r \to 0$, at least in the $C^0$ norm.

    To upgrade $C^0$ convergence to $C^{k-1}$ convergence, one differentiates \eqref{eq:phi_wave_copy} $k-1$ times with $\partial_x$. Using Lemma~\ref{lem:linfty_upper} for all orders, and noting that extra powers of $(1 + |\log r|)$ will arise when $\partial_x$ hits either $e^{2(\mu - \lambda)}$ or $e^{2\mu}$, one obtains:
    \[
        |(r \partial_r) (r \partial_r \partial_x^k \phi)| \lesssim (r^2 + r^4 (1 + |\log r|)^2 + r^{\alpha}) (1 + |\log r|)^{k-1}.
    \]
    The right hand side remains integrable with respect to $\frac{dr}{r}$, hence the convergence of $r \partial_r \phi(r, \cdot)$ to $\Psi(r, \cdot)$ is actually in $C^k$, as required.

    Next, we move to the convergence of $r \partial_r \mu$ and $r \partial_r \lambda$. From \eqref{eq:mu_evol}
    \[
        r \partial_r \mu = \frac{1}{2} \left( (r \partial_r \phi)^2 + 1 \right) + \frac{1}{2} r^2 e^{2(\mu - \lambda)} (\partial_x \phi)^2 + \frac{1}{2} \left( - \frac{Q^2}{r^2} + \kappa \right) e^{2\mu}.
    \]
    By Lemma~\ref{lem:linfty_upper} for $j=1$ and Lemma~\ref{lem:expdecay} respectively, the latter two terms on the right hand side decay to $0$ as $r^4 (1 + |\log r|)^2$ and $r^{\alpha}$ as $r \to 0$. This remains true even after one takes $j$ $\partial_x$-derivatives, for $0 \leq j \leq k - 1$ (though as before we introduce an additional factor of $(1 + |\log r|)^j$. Since we already have convergence of $r \partial_r \phi$ to $\Psi$, the convergence for $r \partial_r \mu$ follows immediately. The convergence of $r \partial_r \lambda$ is analogous, using \eqref{eq:lambda_evol}.

    To get \eqref{eq:rdrf-}, we use that
    \[
        (r \partial_r) (\phi - (r \partial_r \phi) \log r) = (r \partial_r)(r \partial_r \phi) \cdot \log r.
    \]
    But we can compute the right hand side from \eqref{eq:phi_wave_copy}, and therefore we can bound $|(r \partial_r) [\partial_x^j(\phi - (r \partial_r \phi))]|$ by $(r^2 + r^4(1 + |\log r |)^2 + r^{\alpha}) (1 + |\log r|)^j \log r$ for all $0 \leq j \leq k-1$. This remains integrable with respect to $\frac{dr}{r}$ towards $r = 0$, hence there is some $\Xi(\cdot)$ so that $\phi(r, \cdot) - r \partial_r \phi(r, \cdot) \log r \to \Xi(\cdot)$ in $C^{k-1}$.

    The analogous convergence for $\mu - r \partial_r \mu \log r$ and $\lambda - r \partial_r \lambda \log r$ is similar, though perhaps a little more involved, since it involves taking an $r \partial_r$-derivative of \eqref{eq:lambda_evol}--\eqref{eq:mu_evol} in order to bound $(r \partial_r)(r \partial_r \mu) \cdot \log r$ and $(r \partial_r) (r \partial_r \lambda) \cdot \log r$. We leave the details to the reader.

    Finally, we explain how to get the asymptotic momentum constraint $d M = \Psi d \, \Xi$. Using the constraint equation \eqref{eq:mu_constraint} for $\partial_x \mu$, we have:
    \begin{align*}
        \partial_x \left( \mu - (r \partial_r \mu) \log r \right) 
        &= (r \partial_r \phi) (\partial_x \phi) - (r \partial_r) [ (r \partial_r \phi) (\partial_x \phi) ] \cdot \log r \\
        &= (r \partial_r \phi) \cdot \partial_x \left( \phi - (r \partial_r \phi) \log r \right) - (r \partial_r)(r \partial_r \phi) \cdot (\partial_x \phi) \log r.
    \end{align*}
    By the $C^1$ convergence in \eqref{eq:rdrf}, the left hand side and the first term on the right hand side converge to $\partial_x M$ and $\Psi \partial_x \Xi$ respectively as $r \to 0$. Finally, by \eqref{eq:phi_wave_copy} and its bounds the final term is bounded by $(r^2 + r^4(1 + |\log r|)^2 + r^{\alpha}) (\log r)^2$, which converges to $0$ as $r \to 0$. This completes the proof of the asymptotic momentum constraint.
\end{proof}

\section{Proof of the weighted product estimate} \label{app:weightedl2}

In Appendix~\ref{app:weightedl2} we provide a proof of the weighted $L^2$ product estimate Lemma~\ref{lem:weightedl2}. By a simple density argument, it suffices to assume that $f$ and $g$ are smooth, and for convenience we work in $L^p$--spaces with respect to the measure $w^2 dx$ i.e.~with
\begin{equation*}
    \| f \|_{L^p(w^2 dx)} = \left( \int_{\mathbb{S}^1} |f(x)|^p w^2(x) \, dx \right)^{\frac{1}{p}}
\end{equation*}

For $0 \leq k \leq K$, we shall moreover define $p_k = \frac{2K}{k}$ (where $p_0 = \infty$) and the quantities $F_k$ and $G_k$ as:
\begin{gather*}
    F_0 = \| f \|_{L^{\infty}(w^2 dx)}, \qquad F_k = \sum_{i=0}^{k-1} \| \partial_x^{k-i} f \|_{L^{p_k}(w^2 dx)} \| \partial_x W \|_{L^{\infty}(w^2 dx)}^i \text{ for } 1 \leq k \leq K, \\[0.3em]
    G_0 = \| g \|_{L^{\infty}(w^2 dx)}, \qquad G_k = \sum_{i=0}^{k-1} \| \partial_x^{k-i} f \|_{L^{p_k}(w^2 dx)} \| \partial_x W \|_{L^{\infty}(w^2 dx)}^i \text{ for } 1 \leq k \leq K.
\end{gather*}
With this notation, Lemma~\ref{lem:weightedl2} is reduced to proving that
\begin{equation} \label{eq:weightedl2_0}
    \| \partial_x^M f \, \partial_x^N g \|_{L^2(w^2 dx)} \lesssim_{M, N} F_0 G_k + G_0 F_k.
\end{equation}

In order to prove such an estimate, our first goal will be to show that for $0 \leq k \leq K$, one has
\begin{equation} \label{eq:weightedl2_interp}
    F_k \lesssim F_0^{\frac{K-k}{K}} F_K^{\frac{k}{K}}.
\end{equation}
We do this in several steps. Firstly, for $0 < k < K$ one may use the identity 
\begin{multline*}
    \partial_x (w^2 |\partial_x^k f|^{p_k - 2} \, \partial_x^k f \cdot \partial_x^{k-1} f) = w^2 |\partial_x^k f|^{p_k} 
    \\[0.2em] + w^2 (p_k-1) |\partial_x^k f|^{p_k-3} \, \partial_x^{k+1} f \cdot \partial_x^k f \cdot \partial_x^{k-1} f + 2 w^2 \partial_x W \cdot |\partial_x^k f|^{p_k-2} \partial_x^k f \cdot \partial_x^{k-1} f.
\end{multline*}

Integrating this identity over $x \in \mathbb{S}^1$, one thereby deduces that
\[
    \int_{\mathbb{S}^1} |\partial_x^k f|^{p_k} w^2 \,dx \lesssim
    \int_{\mathbb{S}^1} |\partial_x^k f|^{p_k-2} \, |\partial_x^{k+1} f| \, |\partial_x^{k-1} f| w^2 \, dx + \int_{\mathbb{S}^1} |\partial_x^k f|^{p_k - 1} \, |\partial_x^{k-1} f| \, |\partial_x W| w^2\,dx,
\]
and an application of H\"olders inequality to the two integrals on the right hand side yields (all $L^p$ spaces are with respect to the measure $w^2 dx$):
\[
    \| \partial_x^k f\|_{L^{p_k}}^{p_k} \lesssim \| \partial_x^k f \|_{L^{p_k}}^{p_k-2} \| \partial_x^{k-1} f \|_{L^{p_{k-1}}} \left( \| \partial_x^{k+1} f \|_{L^{p_{k+1}}} + \| \partial_x^k f \|_{L^{p_{k+1}}} \| \partial_x W \|_{L^{\infty}} \right).
\]
Thus by the definition of $F_k$ one concludes that
\[
    \| \partial_x^k f \|_{L^{p_k}}^2 \lesssim F_{k-1} F_{k+1}.
\]

On the other hand, a standard $L^p$ interpolation estimate (where the implied constants are importantly independent of the measure associated to the $L^p$ space), one has that for $1 \leq j \leq k - 1$,
\[
    \| \partial_x^{k-j} f \|_{L^{p_k}}^2 \| \partial_x W \|_{L^{\infty}}^{2j} \lesssim \| \partial_x^{k-j} f \|_{L^{p_{k-1}}} \| \partial_x W \|_{L^{\infty}}^{j-1} \cdot \|\partial_x^{k-j} f \|_{L^{p_{k+1}}} \| \partial_x W \|_{L^{\infty}}^{j+1} \leq F_{k-1} F_{k+1}.
\]
Combining the above two equations, one therefore has that $F_{k}^2 \lesssim F_{k-1} F_{k+1}$. 

It is straightforward to go from this to \eqref{eq:weightedl2_interp}. For instance, assuming that all the $F_j$ are nonzero the inequality $F_j^2 \lesssim F_{j-1} F_{j+1}$ implies there is some constant $C \in \R$ such that
\begin{align*}
    - \frac{K - k}{K} \log F_0 - \frac{k}{K} \log F_K + F_k 
    &= \sum_{j=1}^{K-1} \left( - \log \frac{1}{2} F_{j-1} - \frac{1}{2} \log F_{j+1} + \log F_j \right) \cdot \left( \frac{\min\{ j(K-k), (K-j)k\}}{K}\right) \\
    &\leq C,
\end{align*}
which immediately yields \eqref{eq:weightedl2_interp}.

Finally, in order to deduce \eqref{eq:weightedl2_0} from \eqref{eq:weightedl2_interp} (and the analogous inequality for the $G_k$s) we simply apply H\"older's inequality followed by \eqref{eq:weightedl2_interp}, as follows:
\[
    \| \partial^M_x f \, \partial^N_x g \|_{L^2} \leq \| \partial^M_x f \|_{L^{p_M}} \| \partial^N_x g \|_{L^{p_N}} \leq F_M G_N \leq (F_0 G_K)^{\frac{N}{K}} (F_K G_0)^{\frac{M}{K}}.
\]
The conclusion thus follows from Young's inequality. \qed

\bibliography{bibliography_master.bib}
\bibliographystyle{abbrvnat_mod}

\end{document}